\documentclass[11pt]{article}

\usepackage{float, amssymb, amsmath, amsthm, graphicx, bbm, multirow, siunitx, placeins, authblk, geometry, fullpage}
\usepackage[round]{natbib}

\usepackage[dvipsnames]{xcolor}
\usepackage{diagbox}
\usepackage{tikz}
\usetikzlibrary{arrows.meta}
\usetikzlibrary{decorations.pathreplacing}
\usepackage[shortlabels]{enumitem}
\setlist{nolistsep}

\usepackage{array}
\newcolumntype{C}[1]{>{\centering\arraybackslash}p{#1}}

\usepackage{algorithm}
\usepackage{algpseudocode}
\algnewcommand\algorithmicinput{\textbf{INPUT:}}
\algnewcommand\INPUT{\item[\algorithmicinput]}
\algnewcommand\algorithmicoutput{\textbf{OUTPUT:}}
\algnewcommand\OUTPUT{\item[\algorithmicoutput]}

\usepackage{xr-hyper}
\usepackage{hyperref}[]
\hypersetup{
    colorlinks=true,
    linkcolor=blue,
    filecolor=magenta,      
    urlcolor=cyan,
      citecolor=blue,
    }

\usepackage{cleveref}

\newtheorem{theorem}{Theorem}
\newtheorem{lemma}[theorem]{Lemma}
\newtheorem{proposition}[theorem]{Proposition}
\newtheorem{corollary}[theorem]{Corollary}

\newtheorem{definition}{Definition}
\newtheorem{remark}{Remark}
\newtheorem{assumption}{Assumption}

\makeatletter
\newcommand{\neutralize}[1]{\expandafter\let\csname c@#1\endcsname\count@}
\makeatother

\allowdisplaybreaks

\DeclareMathOperator*{\argmin}{arg\,min}

\usepackage{mathtools}

\def\E{\mathbb{E}}
\def\P{\mathbb{P}}

\def\R{\mathbb{R}}

\title{Transfer learning for piecewise-constant mean estimation: \\ Optimality, $\ell_1$- and $\ell_0$-penalisation}
\author{Fan Wang and Yi Yu}
\affil{Department of Statistics, University of Warwick}
\begin{document}

\maketitle

\begin{abstract}
We study transfer learning for estimating piecewise-constant signals when source data, which may be relevant but disparate, are available in addition to the target data. We first investigate transfer learning estimators that respectively employ $\ell_1$- and $\ell_0$-penalties for unisource data scenarios and then generalise these estimators to accommodate multisources. To further reduce estimation errors, especially when some sources significantly differ from the target,  we introduce an informative source selection algorithm. We then examine these estimators with multisource selection and establish their minimax optimality. Unlike the common narrative in the transfer learning literature that the performance is enhanced through large source sample sizes, our approaches leverage higher observational frequencies and accommodate diverse frequencies across multiple sources. Our theoretical findings are supported by extensive numerical experiments, with the code available online\footnote{\url{https://github.com/chrisfanwang/transferlearning}}.  

\end{abstract}

\section{Introduction}\label{sec-intro}

Consider an unknown signal vector $f = (f_1, \ldots, f_{n_0})^{\top} \in \R^{n_0}$, observed with additive noise that
\begin{equation}\label{model-target}
   y_i =  f_i +\epsilon_i, \quad i =1, \ldots, n_0,
\end{equation}
where $\{ \epsilon_i\}_{i = 1}^{n_0}$ are mutually independent mean-zero random variables and $f$ possesses a piecewise-constant pattern, i.e.~there exists a set of change points 
\begin{equation}\label{def-S}
 \mathcal{S} = \big\{i \in \{1, \dots, n_0-1 \}\colon f_i \neq f_{i+1}  \big\},
\end{equation}
with cardinality $|\mathcal{S}| = s_0$.  

Widely used in signal processing and machine learning literature,  central to the model \eqref{model-target} is the estimation of piecewise-constant signals and localising change points. For estimation, numerous methods have been proposed and investigated, including $\ell_1$-penalised estimators \citep[e.g.][]{lin2017sharp, ortelli2019prediction, guntuboyina2020adaptive} and $\ell_0$-penalised estimators \citep[e.g.][]{fan2018approximate, shen2022phase}. For change point detection, the methods can be categorised into two types, as summarised in \cite{cho2021data}:~global optimisation methods, for example, $\ell_1$-penalised estimators \citep[e.g.][]{lin2017sharp} and $\ell_0$-penalised estimators \citep[e.g.][]{wang2020univariate}, and local testing methods, for example, wild binary segmentation algorithm \citep{fryzlewicz2014wild} and moving sum procedures \citep{chu1995mosum}.

With the explosion of data collected and stored, we increasingly encounter scenarios where additional data are available.  These data may share similar albeit different patterns from our target data.  It is therefore vital to understand how one can utilise the additional information.  To be specific, we consider additional data from $K \in \mathbb{N}^{*}$ source studies $\{y_i^{(k)} \}_{i=1, k = 1}^{n_k, K}$, with 
\begin{equation}\label{model-aux}
y^{(k)}_i = f_i^{(k)} + \epsilon^{(k)}_i, \quad i \in \{1, \dots, n_k\}, \quad k \in \{1, \dots, K\},
\end{equation}
where $\{\epsilon_i^{(k)}\}_{i = 1, k = 1}^{n_k, K}$ are mutually independent mean-zero random variables, and for $k \in \{1, \dots, K\}$, $f^{(k)} = (f_1^{(k)}, \ldots, f_{n_k}^{(k)})^{\top} \in \mathbb{R}^{n_k}$ are unknown signal vectors. These vectors are different from but related to the target signal $f$ introduced in \eqref{model-target}.  In this paper, we are in particular interested in cases where $f^{(k)}$'s have higher observational frequencies than $f$ and where they are not necessarily piecewise-constant.

As a motivating example, consider studying Hungary's Gross Domestic Product (GDP), which is a key economic indicator and is officially released quarterly. 
In addition, monthly released industrial production (IP) data from Hungary and Hungary's neighbours -  Slovakia and Romania  - are also available.   IP has long served as a reliable indicator of GDP growth trends in many economies. Considering the economic structural similarities among these three nations (all being Eastern European countries with a strong focus on manufacturing and industrial sectors), it is hence worth considering enhancing Hungary's GDP trend estimation by leveraging the higher-frequency IP datasets.  More analysis can be found in \Cref{sec-real-data}.

The growing demand to utilise different sources to improve estimation fosters the research in transfer learning in machine learning \citep[e.g.][]{torrey2010transfer}.  Specific areas of applications include natural language processing \citep{daume2009frustratingly}, computer vision \citep{pan2009survey}  and health informatics \citep{tajbakhsh2016convolutional}.  Owing to its successes in applications, transfer learning has attracted much recent attention in statistics and has been studied in various problems. \cite{cai2019transfer} and \cite{reeve2021adaptive} considered nonparametric classification, \cite{cai2022transfer}  explored nonparametric regression, \cite{bastani2021predicting} studied high-dimensional linear regression models and \cite{cai2023transfer} investigated functional data analysis.

In the aforementioned studies, the improvement achieved through transfer learning relies on all sources being beneficial for transfer. In some applications, however,  identifying truly informative sources may not be straightforward. Blindly transferring from arbitrary sources might even worsen the performance compared to only using the target data. In such complex scenarios,  high-dimensional linear regression models have been investigated by \cite{li2022transfer} and generalised linear models by \cite{tian2022transfer} and \cite{li2023estimation}.

In this paper, we study transfer learning for piecewise-constant mean estimation.  We focus on situations where, in addition to the target data, one or more source datasets are available, with some sources that may be substantially different from the target. 
To illustrate the gains and losses of transfer learning, we present a simulation study to numerically compare estimators. These include estimators only using the target data ($\ell_1$ and $\ell_0$), transfer learning estimators using a single informative source ($\ell_1$-T-$1$ and $\ell_0$-T-$1$), multiple informative sources ($\ell_1$-T-$\mathcal{A}$ and $\ell_0$-T-$\mathcal{A}$), estimated informative multisources ($\ell_1$-T-$\widehat{\mathcal{A}}$ and $\ell_0$-T-$\widehat{\mathcal{A}}$) and all sources ($\ell_1$-T-$[K]$ and $\ell_0$-T-$[K]$). We show the results in \Cref{fig:all}; see \Cref{sec-simulation} for details on the simulation settings.
We can clearly see that transfer learning estimators using informative unisource enhance the estimation performance, with further improvement observed when using estimated informative multisources. The
best estimation performance is achieved by utilising predefined informative multisources.  In contrast, transfer learning estimators using all sources perform worse than those only using the target data.

\begin{figure}[t]
        \centering
        \includegraphics[width=0.85 \textwidth]{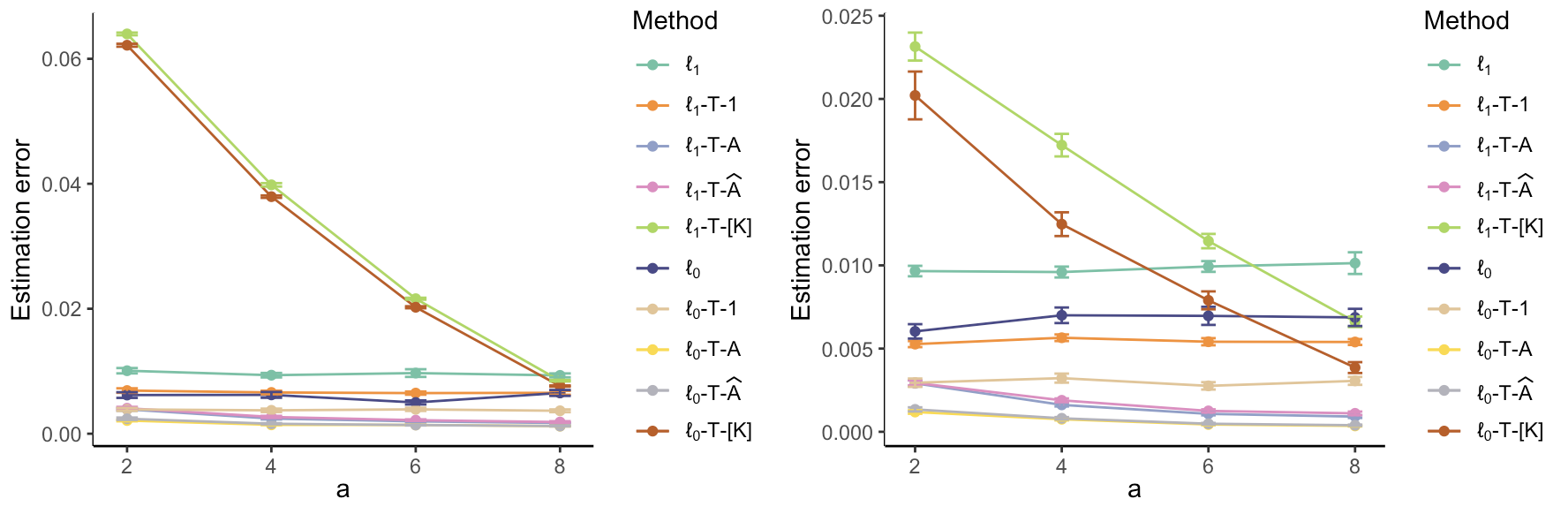}
        \caption{Estimation results with details in Scenario 1 in \Cref{sec-simulation}, the number of source datasets  $K = 10$ and the cardinality of the informative set $a \in \{2, 4, 6, 8\}$. Left panel: Configuration 1.    Right panel: Configuration 2.}
        \label{fig:all}
\end{figure}

\subsection{List of contributions}
The main contributions of this paper are summarised as follows.

Firstly, to the best of our knowledge, this is the first study focusing on the transfer learning framework in the context of estimating piecewise-constant signals.  We provide a comprehensive analysis, characterised by: (i) exploring both uni- and multisource scenarios in Sections~\ref{sec-one-source} and~\ref{sec-multiple-source}; (ii) introducing and evaluating both $\ell_1$- and $\ell_0$-penalised estimators in Sections~\ref{sec-l_1} and \ref{sec-l_0}; (iii) addressing cases that employ all multisources for transfer in \Cref{sec-ora-estimators}, as well as a more complex setting, where some sources may significantly deviate from the target and beneficial sources remain unidentified in \Cref{sec-det-estimators}; and (iv) discussing a range of extensions including affine transformation in quantifying the difference between source and target, as well as using target data in different ways in \Cref{sec-extensions}.

Secondly, our work addresses challenges in the following aspects.
\begin{itemize}
    \item Different from the majority of the existing transfer learning literature \citep[e.g.][]{li2022transfer, tian2022transfer, li2023estimation}, where the improvement of performance is achieved through larger sample sizes of sources, we focus on leveraging higher and potentially different observational frequencies of sources.  Through our approach, we elucidate the direct relationship between the final estimation guarantees and the varying observational frequencies. \cite{cai2023transfer} also examined the higher-frequency framework in the context of functional data analysis.  Allowing for different observational frequencies across multisources, we discover some interesting phenomena that were overlooked in \cite{cai2023transfer}, as discussed in Sections~\ref{sec-optional-selction} and~\ref{sec-num-sup}.
    \item Although the target signal is assumed to be piecewise-constant, the source signals are allowed to possess arbitrary patterns.
\end{itemize}

Thirdly, we introduce transferred estimators respectively utilising $\ell_1$- and $\ell_0$-penalties, and outline their associated theoretical estimation error bounds. We illustrate that the transferred $\ell_0$-penalised estimators are theoretically superior to their $\ell_1$-penalised counterparts, but sacrifice some computational efficiency. This observation resonates with the conventional findings on $\ell_1$- and $\ell_0$-penalisation \citep[see e.g.][]{fan2018approximate}.

Fourthly, we introduce an algorithm to identify informative sources.  We validate its estimation consistency in a non-asymptotic framework, detailed in \Cref{sec-det-estimators}. Following this, we propose data-driven estimators, which are shown to be minimax optimal, see \Cref{sec-minimax}. 

Lastly, we explore extensions by allowing for general affine transformations between the source and target in \Cref{sec-aff} and investigate the effects of utilising target data in unisource scenarios in \Cref{sec:target-uni}.
The theoretical findings in this paper are validated through extensive numerical experiments in \Cref{sec-num}.

\subsection{Notation and organisation}

For any $a \in \mathbb{N}$, denote $[a] = \{1, \ldots, a\}$ with $[0] = \emptyset$, and $[0:a] = \{0\} \cup [a]$. For any set $\mathcal{M}$, let $\vert \mathcal{M}\vert$ denote its cardinality. 

Given $n \in \mathbb{N}^{*}$, for any vector $v \in \R^n$, $\| v\|_2$, $\| v\|_1$ and $\| v\|_0$ represent its $\ell_2$-, $\ell_1$- and $\ell_0$-norms, respectively. We also define $\| v\|_{1/n} = n^{-1/2}\| v \|_2$.  For any set $\emptyset \neq \mathcal{M} \subseteq  [n]$, let a sub-vector of $v$ be $v_\mathcal{M} = (v_i, i \in \mathcal{M})^{\top} \in \mathbb{R}^{\vert \mathcal{M} \vert}$; and $v_{\mathcal{M}} = 0$, if $\mathcal{M} = \emptyset$.  Let the subscript $-\mathcal{M} = [n] \setminus \mathcal{M}$.  

For any matrix $Q\in \R^{n \times m}$, let $Q_{i, j}$ denote the $(i, j)$ entry of $Q$, $(i, j) \in [n] \times [m]$. Let $\|Q\|$ and $\| Q\|_{\mathrm{F}}$ be the spectral and Frobenius norms of $Q$, respectively.  For any set $\emptyset \neq \mathcal{M} \subseteq  [n]$, let $Q_{\mathcal{M}} \in \R^{\vert \mathcal{M} \vert \times m}$ be a submatrix of $Q$ only  containing rows indexed by $\mathcal{M}$.  Let $D \in \mathbb{R}^{(n_0-1) \times n_0}$ be the difference operator, defined as 
\begin{equation}\label{def-D}
        D_{i, j} = \mathbbm{1}\{i = j\} - \mathbbm{1}\{j-i = 1\}, \quad  (i, j) \in [n_0-1] \times [n_0].
\end{equation}

For any $n, m \in \mathbb{N}^{*}$,   let the alignment operators $P^{n, m} \in \R^{n \times m}$ and $\widetilde{P}^{n, m} \in \R^{n \times m}$ be defined as
 \begin{equation}\label{def-P}
   (P^{n, m})_{i, j}= 
      \mathbbm{1}_{ \{ \lceil(j-1)  n/ m \rceil +1 \leq i  \leq \lceil j n / m \rceil \}  },  \quad (i, j) \in [n] \times [m].
 \end{equation}
and with $m \geq n$,
 \begin{equation}\label{def-P-alt}      
      (\widetilde{P}^{n, m})_{i, j}  = \frac{\mathbbm{1}_{ \{ \lceil(i-1)m/n  \rceil  + 1 \leq j \leq \lceil im/n \rceil  \} }}{\lceil im/n \rceil - \lceil(i-1)m/n\rceil},
   \quad (i, j) \in [n] \times [m],
 \end{equation}
respectively.

For any $\sigma>0$, a mean-zero random variable $Z$ is said to be $\sigma$-sub-Gaussian distributed if its Olicz-$\psi_2$-norm $\| Z\|_{\psi_2} = \inf \{t>0:~\E\{\exp(Z^2 /t^2)\} \leq 2\} \leq \sigma$.

In the sequel, we refer to a single source as unisource and multiple sources as multisource.  The unisource and multisource cases are studied in Sections~\ref{sec-one-source} and \ref{sec-multiple-source}, respectively.  Extensions and numerical results are collected in Sections~\ref{sec-extensions} and \ref{sec-num}, with conclusions in \Cref{sec-conclusion}.

\section{Transfer learning with unisource data}\label{sec-one-source}

In this section, we investigate the unisource scenario, where a single source dataset $\{ y_i^{(1)}\}_{i=1}^{n_1}$, is available in addition to the target dataset $\{y_i \}_{i=1}^{n_0}$. 
We measure the discrepancy between the source and target by $n_1^{-1}\| \delta \|_2$, where 
\begin{equation}\label{delta-unisource}
    \delta = f^{(1)} -  P^{n_1, n_0} f  \in \mathbb{R}^{n_1} 
\end{equation}
with $P^{n_1, n_0} \in \R^{n_1 \times n_0}$ defined in \eqref{def-P}.
Consider, for instance, the example in \Cref{sec-intro} on Hungary's quarterly GDP data, with additional access to Hungary's monthly IP data.    
The operator  $P^{n_1, n_0}$ aligns the quarterly GDP data to the monthly IP data to compare datasets despite different observational frequencies.  Since both datasets are in the form of the percentage change compared to the same period in the previous year, the discrepancy vector $\delta$ represents the difference between the percentage change of the aligned quarterly GDP data and the monthly IP data, and the dimension-normalised $\ell_2$-norm provides a measure of this difference. 

Recalling the key message of this paper is to leverage the higher observational frequency of source data, in this section, we focus on the case that $n_1 \geq n_0$.  The complement case $n_1 < n_0$ is considered in \Cref{sec:target-uni} for completeness.

To recover piecewise-constant signals, the $\ell_1$- and $\ell_0$-penalisation are, arguably, the most popular methods.   In the transfer learning context, we study $\ell_1$- and $\ell_0$-penalised estimators in Sections~\ref{sec-l_1} and \ref{sec-l_0}, respectively.  Discussions on the trade-off between the potential theoretical advantages of $\ell_0$-penalised estimators and the computational efficiency of $\ell_1$-penalised estimators can be found in \Cref{sec-1-dis}.

\subsection{\texorpdfstring{Transferred $\ell_1$-penalised estimators}{Transferred l1-penalised estimators}}\label{sec-l_1}

The $\ell_1$-penalisation method aims to encourage model sparsity, by imposing an $\ell_1$-penalty $\|D \cdot \|_1$, with the difference operator $D$ defined in \eqref{def-D}.  In signal processing and statistics literature, such methods have been heavily exploited and referred to as total variation denoising \citep[e.g.][]{rudin1992nonlinear} or fused lasso \citep[e.g.][]{tibshirani2005sparsity}.  
A significant advantage of $\ell_1$-penalisation methods is their convexity, which allows for the exact minimisation within a linear time frame \citep[e.g.][]{johnson2013dynamic}.  

With the target data $\{y_i\}_{i=1}^{n_0}$ in \eqref{model-target} and unisource data $\{y_i^{(1)}\}_{i=1}^{n_1}$ in \eqref{model-aux}, we consider a transfer learning estimator with an $\ell_1$-penalty, namely the unisource-transferred $\ell_1$-penalised estimator, i.e.
\begin{align}\label{estimator_l_1-1}
  \widehat{f} = \widehat{f}(\lambda) = \argmin_{ \theta \in \R^{n_0}} \bigg\{  \frac{1}{2n_0} \Big\|\widetilde{P}^{n_0, n_1} y^{(1)} - \theta \Big\|_2^2   +\lambda \|D   \theta \|_1  \bigg\}, 
\end{align}
where $\widetilde{P}^{n_0, n_1} \in \R^{n_0, n_1}$ is defined in \eqref{def-P-alt}, $D \in \R^{(n_0 - 1) \times n_0}$ in \eqref{def-D} and $\lambda > 0$ is a tuning parameter.

To establish the estimation error bound of $\widehat{f}$, we first introduce the minimal length condition.

\begin{assumption}[Minimal length]\label{ass-fusedlasso}
For the target model defined in \eqref{model-target},  let $\mathcal{S}$ be the set of change points defined in \eqref{def-S}, with $|\mathcal{S}| = s_0 \in \mathbb{N}$. Denote $\mathcal{S} = \{ t_1, \dots, t_{s_0}\}$, for positive $s_0$ and $\mathcal{S} = \emptyset$ otherwise.  With $t_0 = 0$ and $t_{s_0+1} = n_0$, for $i \in [s_{0}+1]$, define $n_i^{(0)} = t_i - t_{i-1}$, $n_{\max}^{(0)} = \max_{i  \in [s_{0}+1]} n_i$ and $n_{\min}^{(0)} = \min_{i  \in [s_{0}+1]} n_i$.  Assume that there exist absolute constants $c_{\max} \geq c_{\min} > 0$, such that   $c_{\min} n^{(0)}_{\min} \leq  n^{(0)}_{\max} \leq c_{\max} n^{(0)}_{\min}$.
\end{assumption}

This condition is also adopted in  \cite{ortelli2019prediction}, \cite{vandegeer2020logistic} and \cite{guntuboyina2020adaptive}. See \Cref{remark_l_1_one} for more discussions.

\begin{theorem}\label{theorem_l_1-1}
Let the target data $\{y_i\}_{i= 1}^{n_0}$ be from \eqref{model-target} satisfying \Cref{ass-fusedlasso}, and  unisource data $\{y_i^{(1)}\}_{i=1}^{n_1}$ be from \eqref{model-aux} with $n_1 \geq n_0$.  Assume that $\{ \epsilon_i^{(1)}\}_{i=1}^{n_1}$ are mutually independent mean-zero $C_{\sigma}$-sub-Gaussian distributed with an absolute constant $C_{\sigma} >0$. Let $\widehat{f}$ denote the estimator defined in \eqref{estimator_l_1-1}, with tuning parameter 
\begin{align}\label{tuning-parameter-1}
   \lambda =  C_\lambda \{(s_0+1)n_1\}^{-1/2},   
\end{align}
where $C_{\lambda} > 0$ is an absolute constant.  It holds with probability at least $1 - n_0^{-c}$ that 
\begin{align}\label{upper_bound_l_1}
    \big\|\widehat{f} - f   \big\|_{1/n_0}^2  \leq   C \frac{(s_{0}+1) \big\{1 + \log \big( n_{0}/ (s_0+1)\big) \big\} +  \| \delta\|_2^2 }{ n_1 }, 
\end{align}
where $\delta \in \R^{n_1}$ is defined in  \eqref{delta-unisource}, and $C, c>0$ are absolute constants. 
\end{theorem}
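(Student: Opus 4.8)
The plan is to recast the transferred estimator as an ordinary fused lasso run on a reduced-resolution pseudo-observation vector, and then to combine the classical fast-rate analysis for total variation denoising with a separate, purely deterministic control of the source--target discrepancy. First I would set $\tilde y = \widetilde{P}^{n_0, n_1} y^{(1)}$, so that $\widehat f$ is exactly the minimiser of $\frac{1}{2n_0}\|\tilde y - \theta\|_2^2 + \lambda\|D\theta\|_1$, i.e.\ a standard fused lasso on $n_0$ pseudo-observations. Writing $y^{(1)} = f^{(1)} + \epsilon^{(1)}$ and $f^{(1)} = P^{n_1, n_0} f + \delta$, the key algebraic fact is that the block-averaging operator $\widetilde{P}^{n_0, n_1}$ is a left inverse of the replication operator $P^{n_1, n_0}$, namely $\widetilde{P}^{n_0, n_1} P^{n_1, n_0} = I_{n_0}$, since both are induced by the same partition of $[n_1]$ into the blocks $\{j : \lceil(i-1)n_1/n_0\rceil + 1 \leq j \leq \lceil in_1/n_0\rceil\}$. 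Hence $\tilde y = f + \widetilde{P}^{n_0, n_1}\delta + \tilde\epsilon$ with $\tilde\epsilon = \widetilde{P}^{n_0, n_1}\epsilon^{(1)}$, a piecewise-constant-signal-plus-noise model carrying an additional deterministic bias $\widetilde{P}^{n_0, n_1}\delta$.

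Next I would pin down the effective noise level. Each coordinate of $\tilde\epsilon$ is an average of $b_i$ independent $C_\sigma$-sub-Gaussian variables over a block of size $b_i \in \{\lfloor n_1/n_0\rfloor, \lceil n_1/n_0\rceil\}$, hence is $C_\sigma/\sqrt{b_i}$-sub-Gaussian, so $\tilde\epsilon$ behaves as a sub-Gaussian vector with level $\sigma_{\mathrm{eff}} \asymp C_\sigma\sqrt{n_0/n_1}$. This variance reduction by the factor $n_0/n_1$ is precisely what upgrades the usual $1/n_0$ rate to the advertised $1/n_1$ rate; one also checks that the prescribed $\lambda = C_\lambda\{(s_0+1)n_1\}^{-1/2}$ is, after accounting for the $1/n_0$ normalisation in the loss, the standard fast-rate fused lasso choice $\sigma_{\mathrm{eff}}\sqrt{(s_0+1)^{-1}\log(\cdot)}$ evaluated at this effective noise level.

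I would then start from the optimality of $\widehat f$ and compare to $f$: the basic inequality gives $\frac{1}{2n_0}\|\widehat f - f\|_2^2 \leq \frac{1}{n_0}\langle \widetilde{P}^{n_0, n_1}\delta + \tilde\epsilon, \widehat f - f\rangle + \lambda(\|Df\|_1 - \|D\widehat f\|_1)$, and treat the two inner-product contributions separately. The deterministic bias term is handled by Cauchy--Schwarz and Young's inequality, $\frac{1}{n_0}\langle\widetilde{P}^{n_0, n_1}\delta, \widehat f - f\rangle \leq \frac{1}{4n_0}\|\widehat f - f\|_2^2 + \frac{1}{n_0}\|\widetilde{P}^{n_0, n_1}\delta\|_2^2$, where the contraction property of block-averaging, $\|\widetilde{P}^{n_0, n_1}\delta\|_2^2 \leq b_{\min}^{-1}\|\delta\|_2^2 \lesssim (n_0/n_1)\|\delta\|_2^2$, produces exactly the $\|\delta\|_2^2/n_1$ term once the quadratic part is absorbed into the left-hand side.

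The remaining and principal task is to bound $\frac{1}{n_0}\langle\tilde\epsilon, \widehat f - f\rangle + \lambda(\|Df\|_1 - \|D\widehat f\|_1)$ by the fast rate $(s_0+1)\{1+\log(n_0/(s_0+1))\}/n_1$. Here I would invoke the total variation denoising analysis under the minimal-length condition (\Cref{ass-fusedlasso}), in the spirit of \citet{ortelli2019prediction} and \citet{guntuboyina2020adaptive}: using that $Df$ is supported on the $s_0$ change points, one decomposes $\widehat f - f$ relative to the tangent space of piecewise-constant signals with breaks at $\mathcal{S}$, controls the empirical process through sub-Gaussian maximal inequalities (the union bound over candidate jump patterns yields both the $\log(n_0/(s_0+1))$ factor and the $1 - n_0^{-c}$ probability), and uses the minimal-length condition to lower bound the relevant compatibility/interpolation constant. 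The only departure from the classical statement is that the noise level is $\sigma_{\mathrm{eff}}^2 \asymp C_\sigma^2 n_0/n_1$ rather than a constant, which converts the usual $\sigma^2(s_0+1)\log(\cdot)/n_0$ into $(s_0+1)\log(\cdot)/n_1$. I expect this empirical-process and compatibility step to be the main obstacle; combining it with the bias bound and inserting the stated $\lambda$ then yields \eqref{upper_bound_l_1}.
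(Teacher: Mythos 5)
Your proposal is correct and follows essentially the same route as the paper's proof: the same basic inequality built on the left-inverse identity $\widetilde{P}^{n_0,n_1}P^{n_1,n_0}=I_{n_0}$, the same reduction of the effective noise level to order $\sqrt{n_0/n_1}$ by block averaging, the same Cauchy--Schwarz/Young treatment of the bias term via the contraction $\|\widetilde{P}^{n_0,n_1}\delta\|_2^2\lesssim (n_0/n_1)\|\delta\|_2^2$, and the same appeal to the Ortelli--van de Geer effective-sparsity/projection machinery (with the minimal-length condition controlling the compatibility constant) for the main empirical-process term. The only minor imprecision is your parenthetical description of $\lambda$ as containing a $\sqrt{\log(\cdot)}$ factor: in this framework the tuning parameter is $\lambda\asymp\sigma_{\mathrm{eff}}\sqrt{n^{(0)}_{\max}/n_0^2}\asymp\{(s_0+1)n_1\}^{-1/2}$ with no logarithm, the log entering only through the maximal inequality in the error bound itself.
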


The proof of \Cref{theorem_l_1-1} is deferred to \Cref{app_1}.

\begin{remark}\label{remark_l_1_one}
\Cref{ass-fusedlasso} requires that all change points in the target model are equally spaced.  To understand the role of \Cref{ass-fusedlasso}, we elaborate on the estimation error bound in its absence.
Based on our analysis of the proof of \Cref{theorem_l_1-1}, with $n_{\max}^{(0)}$ defined in \Cref{ass-fusedlasso}, if  
\begin{align}\label{remark_1-lambda}
    \lambda =  C_{\lambda} \{n_{\max}^{(0)} /( n_1 n_0)\}^{1/2}, 
\end{align}
it holds with probability at least $1 - n_0^{-c}$ that   
\begin{equation}\label{upper_bound_l_1_1}
    \big\|\widehat{f} - f   \big\|_{1/n_0}^2  
     \leq    C \frac{   n_{\max}^{(0)} /\widetilde{n}_{\min}^{(0)}    (s_{0} +1) \big\{ 1+ \log \big(n_{\max}^{(0)} \big)\big\} +  \|\delta\|_2^2 }{n_1},  
\end{equation}
with
\[
   \widetilde{n}_{\min}^{(0)}  = \begin{cases}
       n_0, & s_0 = 0, \\
       \min_{i \in   \{ i \in \{2, \dots, s_{0} \}\colon \mathrm{sign} ( ( D f )_{t_i}) \neq \mathrm{sign} ( ( D f )_{t_{i-1}})  \} \cup \{1,  s_{0}+1 \}} n_i, & \mbox{otherwise}.   \end{cases}
\]
The quantity $\widetilde{n}_{\min}^{(0)}$ represents the minimal distance between change points of the target signals where the change direction alternates, i.e.~transitions from an uptrend to a downtrend or vice versa.  Comparing \eqref{upper_bound_l_1_1} and \eqref{upper_bound_l_1}, it is evident that, at least in our current proofs, without \Cref{ass-fusedlasso}, the upper bound in \eqref{upper_bound_l_1} may suffer a deterioration of order up to $n_0$ in the worst cases.

\cite{ortelli2019prediction} provided a possible relaxation that instead assuming $\widetilde{n}_{\min}^{(0)} \asymp n_{\max}^{(0)}$.  For monotone $f$, this alternative requires $\min \{n_1^{(0)}, n_{s_0+1}^{(0)}\} \asymp n^{(0)}_{\max}$, which is weaker than \Cref{ass-fusedlasso}; otherwise, the two assumptions coincide. Under this relaxed condition and following the proof of \Cref{theorem_l_1-1}, we have that for $\lambda$ in \eqref{remark_1-lambda}, it holds with probability at least $1 - n_0^{-c}$ that  
\begin{equation}\label{upper_bound_l_1_2}
    \big\|\widehat{f} - f   \big\|_{1/n_0}^2  
     \leq    C \frac{    (s_{0} +1) \big\{ 1+ \log \big(n_{\max}^{(0)} \big)\big\} +  \|\delta\|_2^2 }{n_1}. 
\end{equation}
Comparing \eqref{upper_bound_l_1_2} with \eqref{upper_bound_l_1}, we see that this relaxed condition only results in a logarithmic factor deterioration. 
\end{remark}

When working solely with the target data, \cite{guntuboyina2020adaptive} showed that estimation error in terms of $\|\cdot\|_{1/n_0}^2$-loss for $\ell_1$-penalised estimators is of order $(s_0+1)\log \big(n_0/(s_0 +1) \big)/n_0$ under the minimal length condition. This rate is minimax optimal, suggested by a matching lower bound \citep{padilla2017dfs}. 
Under the condition $n_1 \geq n_0$,  if $n_1^{-1}\| \delta\|_2^2 \leq (s_0+1)  \log\big(n_0/(s_0+1)\big) /n_0 $, for the unisource-transferred $\ell_1$-penalised estimator, \Cref{theorem_l_1-1} offers a sharper upper bound and is also minimax optimal, as it matches the lower bound shown in \Cref{theorem_minimax} in \Cref{sec-minimax}.  
This suggests that when the discrepancy level between target and source signals is sufficiently small, and the observational frequency of the source data is higher than that of target data, leveraging information from the source data can improve the estimation performance.

The estimation error bound in \eqref{upper_bound_l_1_2} resonates with the typical structure of estimation errors in transfer learning literature \citep[e.g.][]{cai2019transfer, bastani2021predicting, tian2022transfer}, encompassing two elements:~a fluctuation term and a bias term.  The fluctuation term, $ (s_0 + 1) \{1 + \log(n_0/(s_0+1))\}/ n_1$, reflects the improvement in estimation by transferring from the source with a high observational frequency. The bias term, $n_1^{-1}\|\delta\|_2^2$,  acts as a dimension-normalised measure of the distance between target and source signals, serving as the inherent cost for the transfer process.  It is important to emphasise that \Cref{theorem_l_1-1} does not require the source signal vector $f^{(1)}$ or the difference vector $\delta$ to follow piecewise-constant patterns. Furthermore, there are no constraints on~$n_1^{-1}\|\delta\|_2^2$, the normalised squared-$\ell_2$ distance between the source and target signals. 

Different from the majority in transfer learning literature that achieves improvement through large source sample sizes, we emphasise the high observational frequency from the source data.  Similar emphasis is also noted in \cite{cai2023transfer} in the context of functional data analysis. 
 A more in-depth comparison with their framework will be provided in \Cref{sec-ora-estimators}.

Observe that the estimator $\widehat{f}$ is independent of the target data.  In \Cref{sec-multiple-source}, we do not incorporate the target data for multisource scenarios either, unless we need to identify beneficial sources for transfer.
This approach stems from our assumption that the source data have a higher observational frequency compared to the target data. Concurrently, as corroborated in \Cref{sec-minimax}, the estimation error rate is optimal.
As a byproduct, our theoretical framework does not make any assumptions on the errors of the target model. This indicates that employing transfer learning can not only improve estimation but also be robust against heavy-tailed, temporal dependent or heterogeneous target noise random variables. Despite being minimax optimal, we concur with the sentiment of not using target data.  In Section \ref{sec:target-uni} and Appendix \ref{sec:target-multi}, we propose and study methods incorporating target data in unisource and multisource scenarios, respectively.

\subsection{\texorpdfstring{Transferred $\ell_0$-penalised estimators}{Transferred l0-penalised estimators}}\label{sec-l_0}

With the $\ell_0$-sparsity assumptions, $\ell_1$-penalties can be seen as a convex relaxation of $\ell_0$-penalties, which are of the form $\|D \cdot\|_0$, see \eqref{def-D} for $D$.  Despite the increased computational complexity, in the line graphs, $\ell_0$-penalised convex optimisation can still be solved in polynomial time \citep[e.g.][]{friedrich2008complexity}.  Trading off some computational efficiency, for problems on piecewise-constant signals, $\ell_0$-penalisation enjoys its superior theoretical performance \citep[e.g.~for change point localisation, see][]{wang2020univariate}.

We replace the $\ell_1$-penalty in \eqref{estimator_l_1-1} with an $\ell_0$-penalty.  The counterpart of~\eqref{estimator_l_1-1} is 
\begin{equation}\label{estimator_l_0-1}
    \widetilde{f} = \widetilde{f}(\widetilde{\lambda}) = \argmin_{\theta \in \R^{n_0}} \bigg\{  \frac{1}{2n_0} \Big\|\widetilde{P}^{n_0, n_1} y^{(1)} - \theta \Big\|_2^2   + \widetilde{\lambda} \|D   \theta \|_0  \bigg\},  
\end{equation}
where $\widetilde{P}^{n_0, n_1} \in \R^{n_0, n_1}$ is defined in \eqref{def-P-alt}, $\widetilde{\lambda} > 0$ is a tuning parameter, and $D \in \R^{(n_0 - 1) \times n_0}$ is defined in \eqref{def-D}.  To investigate the potentially different performances of $\ell_1$- and $\ell_0$-penalisation in the transfer learning framework, we present the following theorem, as a counterpart of \Cref{theorem_l_1-1}.

\begin{theorem}\label{theorem_l_0-1}
Let the target data $\{y_i\}_{i= 1}^{n_0}$ be from \eqref{model-target} and  unisource data $\{y_i^{(1)}\}_{i=1}^{n_1}$ be from \eqref{model-aux} with $n_1 \geq n_0$.  Assume that $\{ \epsilon_i^{(1)}\}_{i=1}^{n_1}$ are mutually independent mean-zero $C_{\sigma}$-sub-Gaussian distributed with an absolute constant $C_{\sigma} >0$.  Let $\widetilde{f}$ be defined in \eqref{estimator_l_0-1}, with tuning parameter 
\begin{align}\label{tuning-parameter-0} 
  \widetilde{\lambda} = C_{\widetilde{\lambda}}\frac{ 1 + \log \big(n_0/(s_0+1) \big)}{n_1 },  
 \end{align}
where $C_{\widetilde{\lambda}}> 0$ is an absolute constant.  It holds with probability at least $1 - n_0^{-c}$ that 
 \begin{align}\label{upper_bound_l_0} 
    \big\|\widetilde{f} - f  \big\|_{1/n_0}^2  
     \leq &    C\frac{  (s_{0}+1) \big\{1+\log \big(n_0/(s_0+1) \big) \big\}+\|\delta\|_2^2 }{n_1},
 \end{align}
 where $\delta \in \R^{n_1}$ is defined in  \eqref{delta-unisource}, and $C, c>0$ are absolute constants. 
\end{theorem}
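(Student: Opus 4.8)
The plan is to reduce \eqref{estimator_l_0-1} to an $\ell_0$-penalised denoising (Potts) problem for $f$ and then run the standard basic-inequality argument, with the effective noise split into a deterministic bias part and an averaged stochastic part. The decisive structural fact is that the two alignment operators compose to the identity: a direct computation from \eqref{def-P} and \eqref{def-P-alt} gives $\widetilde{P}^{n_0,n_1}P^{n_1,n_0} = I_{n_0}$, because $P^{n_1,n_0}$ replicates each coordinate of an $n_0$-vector over a consecutive block of source indices while $\widetilde{P}^{n_0,n_1}$ averages over exactly that block. Writing $f^{(1)} = P^{n_1,n_0}f + \delta$ from \eqref{delta-unisource}, this yields
\[
  \widetilde{P}^{n_0,n_1}y^{(1)} = f + \bar\delta + \bar\epsilon, \qquad \bar\delta = \widetilde{P}^{n_0,n_1}\delta, \quad \bar\epsilon = \widetilde{P}^{n_0,n_1}\epsilon^{(1)}.
\]
Let $b_i$ be the size of the $i$-th averaging block; since $n_1 \ge n_0$ one has $n_1/(2n_0) \le b_i$, so by Jensen/Cauchy--Schwarz $\|\bar\delta\|_2^2 \le C (n_0/n_1)\|\delta\|_2^2$, and the coordinates $\bar\epsilon_i = b_i^{-1}\sum_{j}\epsilon^{(1)}_j$ are independent and $C_\sigma b_i^{-1/2}$-sub-Gaussian, so for every interval $I\subseteq[n_0]$ the normalised sum $|I|^{-1/2}\sum_{i\in I}\bar\epsilon_i$ is $O(\sigma\sqrt{n_0/n_1})$-sub-Gaussian.

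Next I set up the basic inequality. Let $\widetilde f$ be any minimiser, $h = \widetilde f - f$, $k = \|D\widetilde f\|_0$, and $w = \bar\delta + \bar\epsilon$. Comparing the objective at $\widetilde f$ and at $f$ (where $\|Df\|_0 = s_0$) and expanding the squared loss gives
\[
  \frac{1}{2n_0}\|h\|_2^2 \le \frac{1}{n_0}\langle \bar\delta, h\rangle + \frac{1}{n_0}\langle \bar\epsilon, h\rangle + \widetilde\lambda(s_0 - k).
\]
The bias term is immediate from Cauchy--Schwarz, Young's inequality and Step 1: $\frac{1}{n_0}\langle\bar\delta,h\rangle \le \frac{1}{8n_0}\|h\|_2^2 + \frac{C}{n_1}\|\delta\|_2^2$.

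The stochastic term is the main obstacle. Since $\widetilde f$ has $k$ change points and $f$ has $s_0$, the difference $h$ is piecewise constant on a partition $\cI$ of $[n_0]$ into at most $m := k + s_0 + 1$ consecutive intervals. Writing $h^{(I)}$ for the value of $h$ on $I$ and applying Cauchy--Schwarz interval-wise,
\[
  \langle\bar\epsilon, h\rangle = \sum_{I\in\cI} h^{(I)}\sum_{i\in I}\bar\epsilon_i \le \|h\|_2\Big(\sum_{I\in\cI}\frac{(\sum_{i\in I}\bar\epsilon_i)^2}{|I|}\Big)^{1/2}.
\]
The crux is a partition-uniform control of the second factor. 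For a fixed partition into $m$ intervals, $\sum_{I}(\sum_{i\in I}\bar\epsilon_i)^2/|I|$ is a sum of $m$ independent squared sub-Gaussians each of scale $O(\sigma^2 n_0/n_1)$, hence sub-exponential; a Bernstein bound combined with a union bound over the $\binom{n_0-1}{m-1}\le (en_0/m)^m$ partitions into $m$ intervals — whose logarithm is of order $m\log(n_0/m)$ — and a further union over $m\in[n_0]$ shows that, with probability at least $1 - n_0^{-c}$, simultaneously for every partition,
\[
  \sum_{I\in\cI}\frac{(\sum_{i\in I}\bar\epsilon_i)^2}{|I|} \le C\frac{n_0}{n_1}\sigma^2\,|\cI|\big\{1 + \log(n_0/|\cI|)\big\}.
\]
It is this refined $|\cI|\{1+\log(n_0/|\cI|)\}$ scaling — as opposed to the cruder $|\cI|\log n_0$ from a naive maximum over individual intervals — that produces the sharp $\log(n_0/(s_0+1))$ factor in \eqref{upper_bound_l_0}; obtaining it is the technical heart of the argument (and parallels the change-point analyses cited, adapted here to the averaged noise $\bar\epsilon$). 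Feeding this into the display and applying Young's inequality absorbs a further $\frac{1}{8n_0}\|h\|_2^2$ into the left-hand side.

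Finally I balance against the penalty. Combining the above leaves
\[
  \frac{1}{4n_0}\|h\|_2^2 \le \frac{C\sigma^2}{n_1}\,m\{1 + \log(n_0/m)\} + \frac{C}{n_1}\|\delta\|_2^2 + \widetilde\lambda(s_0 - k).
\]
Since $m = k + s_0 + 1 \ge s_0 + 1$, the log factor is bounded by $1 + \log(n_0/(s_0+1))$; writing $L := \{1+\log(n_0/(s_0+1))\}/n_1$ and $\widetilde\lambda = C_{\widetilde\lambda}L$, the right-hand side becomes $L\big[(C\sigma^2 - C_{\widetilde\lambda})k + C'(s_0+1)\big] + C\|\delta\|_2^2/n_1$. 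Choosing $C_{\widetilde\lambda}\ge C\sigma^2$ renders the coefficient of the random, unknown quantity $k$ non-positive, eliminating its influence, and multiplying through by $4n_0$ while recalling $\|\cdot\|_{1/n_0}^2 = n_0^{-1}\|\cdot\|_2^2$ yields \eqref{upper_bound_l_0}. I note that, unlike \Cref{theorem_l_1-1}, this route never invokes a minimal-length condition such as \Cref{ass-fusedlasso}, reflecting the known robustness advantage of $\ell_0$-penalisation; the entire gain rests on the partition-uniform concentration of Step~3.
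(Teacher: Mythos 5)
Your proposal is correct and follows essentially the same route as the paper's proof: the identity $\widetilde{P}^{n_0,n_1}P^{n_1,n_0}=I_{n_0}$ and the block-averaging bounds are the paper's Lemmas~\ref{lemma-P}--\ref{lemma-alignment-delta}, your partition-uniform bound on $\sum_{I\in\cI}(\sum_{i\in I}\bar\epsilon_i)^2/|I|$ is exactly the paper's Lemma~\ref{lemm_l_0} applied to the orthogonal projection onto piecewise-constant vectors over $\widetilde{\mathcal{S}}\cup\mathcal{S}$ (the two quantities coincide), and the absorption of the unknown $k=\|D\widetilde f\|_0$ via the penalty is the same cancellation the paper performs in its Step 2. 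No substantive differences.
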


The proof of \Cref{theorem_l_0-1} can be found in \Cref{app_2}. Comparing \Cref{theorem_l_0-1} with \Cref{theorem_l_1-1}, we can see that the $\ell_0$- and $\ell_1$-penalised estimators have the same orders of estimation error bounds, but the performance of the $\ell_1$-penalised estimator depends on a minimal length condition.  
Both Theorems~\ref{theorem_l_1-1} and \ref{theorem_l_0-1} require that $n_1 \geq n_0$, this condition will be relaxed in \Cref{sec:target-uni}.
\citet{fan2018approximate} showed that when solely using the target data, the $\ell_0$-penalised estimator  achieves a minimax optimal estimation error bound of order $(s_0+1) \log\big(n_0/(s_0+1) \big)/n_0$, identical to the $\ell_1$-penalised estimator under a minimal length condition.

\subsection{\texorpdfstring{Comparison of transferred $\ell_1$- and $\ell_0$-penalised estimators}{Comparison of transferred l1- and l0-penalised estimator}}\label{sec-1-dis}

We provide a comprehensive comparison between $\ell_1$- and $\ell_0$-penalised estimators within a transfer learning framework. We focus on aspects including theoretical performance, tuning parameters and computational complexities.

The theoretical performance of the transferred $\ell_0$-penalised estimator does not rely on the minimal length condition, in contrast to the $\ell_1$-penalised estimator. To be specific, without this condition, the $\ell_1$-penalised estimator experiences a deterioration of order $n_0$ in the worst cases, see \Cref{remark_l_1_one}, based on the current proofs.

The tuning parameters for both transferred $\ell_1$- and $\ell_0$-penalised estimators, defined respectively in \eqref{tuning-parameter-1} and \eqref{tuning-parameter-0}, exhibit dependency on the number of change points, $s_0$, in target signals. This unsatisfactory dependency is, however, commonly seen in the literature on piecewise-constant signals \citep[e.g.][]{ortelli2019prediction, guntuboyina2020adaptive, fan2018approximate}, to prompt theoretical optimality. Dropping the dependence of unknown $s_0$ in the tuning parameters, for the $\ell_1$-penalised estimator, adopting  $\lambda \asymp n_1^{-1/2}$, which solely depends on the number of source observations, results in an error bound of order
\[
    \frac{ (s_{0}+1)^2 \big\{ 1+ \log \big( n_{0}/ (s_0+1)\big) \big\} +  \| \delta\|_2^2 }{ n_1 },   
\]
which is sub-optimal by a factor of $s_0+1$.  In contrast, for the $\ell_0$-penalised estimator, adopting $\widetilde{\lambda} \asymp \log(n_0)/n_1$ yields a sharper estimation error bound - sharper than its $\ell_1$ counterpart - of order
 \begin{align}
       \frac{ (s_{0}+1) \big\{1 + \log ( n_{0}) \big\}+  \| \delta\|_2^2 }{ n_1 }.  \nonumber
 \end{align}
The less dependence of the $\ell_0$-penalised estimator on $s_0$ highlights its theoretical superiority over the $\ell_1$-penalised estimator.

As for the computational cost, both the $\ell_1$- and $\ell_0$-penalised estimators start with the same computation of $\widetilde{P}^{n_0, n_1} y^{(1)}$, with a computational cost of order $O(n_1)$.  Upon obtaining $\widetilde{P}^{n_0, n_1} y^{(1)}$, the computational costs of solving the $\ell_1$- and $\ell_0$-penalisation are of order $O(n_0)$ and $O(n_0^2)$, respectively \citep[e.g.][]{johnson2013dynamic,friedrich2008complexity}.

\section{Transfer learning with multisource data}\label{sec-multiple-source}

Owing to the abundance of data, we often encounter situations where multiple sources are available. In the example of Hungary's GDP data in \Cref{sec-intro}, the IP datasets with higher observational frequencies, offer some potential to enhance our comprehension of Hungary's economic trends.

In this section, we propose transferred $\ell_1$- and $\ell_0$-penalised estimators with multiple sources.  In addition to the target data $\{ y_i\}_{i=1}^{n_0}$ in \eqref{model-target}, we have access to $K \in \mathbb{N}^{*}$ mulitsources, denoted by $\{y_i^{(k)}\}_{i=1, k = 1}^{n_k, K}$.  For $k \in [K]$, we measure the discrepancy between the $k$th source to the target by $n_k^{-1/2} \| \delta^{(k)}\|_2$, where
\begin{equation}\label{def-delta}
    \delta^{(k)} =f^{(k)} -  P^{n_k, n_0} f \in \mathbb{R}^{n_k},
\end{equation}
with the alignment operator $P^{n_k, n_0}$ in \eqref{def-P}. 

In \Cref{sec-ora-estimators} we start with generalising the methods studied in \Cref{sec-one-source} to accommodate multisources without selecting beneficial sources.  To maximise the transfer learning benefit, we introduce an informative source selection algorithm and examine these estimators with multisource selection in \Cref{sec-det-estimators}. The associated minimax lower bounds on the estimation accuracy are established in \Cref{sec-minimax}.  We again only focus on the case when $\min_{k \in [K]} \geq n_0$ in this section, with the complement case discussed in \Cref{sec:target-multi} for completeness.

\subsection{Estimation with multisources}\label{sec-ora-estimators}

In \Cref{sec-one-source}, we proposed $\ell_1$- and $\ell_0$-penalised transfer learning estimators for unisource scenarios. The estimation error bounds therein accommodate any level of discrepancy.  In this subsection, our focus shifts from a single to multiple sources, while maintaining the zero-constraint on the discrepancy level.  The results will guide us to properly choose a set of sources and achieve minimax optimality in the sequel.

We introduce the multisource-transferred $\ell_1$-penalised estimator
\begin{equation}\label{estimator-ora-1}
  \widehat{f}^{[K]}= \widehat{f}^{[K]}(\lambda)=  \argmin_{\theta \in \R^{n_0}} \bigg\{  \frac{1}{2n_{0}} \bigg\|\frac{1}{K} \sum_{k \in [K] } \widetilde{P}^{n_0, n_k} y^{(k)}  - \theta\bigg\|_2^2   +\lambda \|D   \theta\|_1  \bigg\},
\end{equation}
and its $\ell_0$ analogue
\begin{equation}\label{estimator-ora-0}
    \widetilde{f}^{[K]} = \widetilde{f}^{[K]}(\widetilde{\lambda}) = \argmin_{\theta \in \R^{n_0}} \bigg\{  \frac{1}{2n_{0}} \bigg\|\frac{1}{K} \sum_{k \in [K]} \widetilde{P}^{n_0, n_k} y^{(k)}  - \theta\bigg\|_2^2   +\widetilde{\lambda} \|D \theta \|_0  \bigg\}, 
\end{equation}
where for any $k \in [K]$, the alignment operator $\widetilde{P}^{n_0, n_k} \in \R^{n_0, n_k}$ is  defined in \eqref{def-P-alt}, $\lambda, \widetilde{\lambda} > 0$ are tuning parameters and $D \in \R^{(n_0 - 1) \times n_0}$ in \eqref{def-D}.  The computational cost of solving \eqref{estimator-ora-1} and \eqref{estimator-ora-0} are of order $O(\sum_{k \in [K]} n_k + n_0)$ and $O(\sum_{k \in [K]} n_k + n_0^2)$, respectively.

As discussed in \Cref{sec-l_1} and \Cref{remark_l_1_one}, the theoretical performance of the $\ell_1$-penalised estimators relies on the minimal length condition (\Cref{ass-fusedlasso}).  It essentially requires that the change points in the target signal spread out balanced.  The lack of such a condition results in a deterioration in estimation. In contrast, for $\ell_0$-penalised estimators, the minimal length condition can be discarded while maintaining the same estimation accuracy.  These arguments remain valid in the multisource scenario, presented below.

\begin{proposition}\label{prop-ora}
Let the target data $\{y_i\}_{i= 1}^{n_0}$ be from \eqref{model-target} and multisource data $\{y_i^{(k)}\}_{i=1, k = 1}^{n_k, K}$ be from~\eqref{model-aux} with $K \in \mathbb{N}^*$ and $\min_{k \in [K]}n_k \geq n_0$.  Assume that $\{ \epsilon_i^{(k)}\}_{i=1, k = 1}^{n_k, K}$ are mutually independent mean-zero $C_{\sigma}$-sub-Gaussian distributed with an absolute constant $C_{\sigma} >0$.  

Let $\widehat{f}^{[K]}$ and $\widetilde{f}^{[K]}$  denote the estimators defined in \eqref{estimator-ora-1} and \eqref{estimator-ora-0}, with tuning parameters 
\begin{equation}\label{tuning-ora}
    \lambda =  C_{\lambda} K^{-1} \sqrt{\frac{\sum_{k = 1}^K n_k^{-1}}{s_0 + 1}} \quad \mbox{and} \quad \widetilde{\lambda} =  C_{\widetilde{\lambda}}\frac{  1 + \log\big(n_{0} / (s_{0}+1) \big)}{K^2 \big(\sum_{k =1}^K n_k^{-1}\big)^{-1}}, 
\end{equation}
respectively, and $C_{\lambda}, C_{\widetilde{\lambda}}  > 0$ being absolute constants.  For $\{\delta^{(k)}\}_{k \in [K]}$ defined in \eqref{def-delta} and absolute constants $C, c>0$, it holds with probability at least $1 - n_0^{-c}$ that 
\begin{align}\label{upper_bound_l_0_ora}
    \big\|\widetilde{f}^{[K]} - f\big\|_{1/n_0}^2 \leq C \Bigg\{\frac{(s_0+1)\big\{ 1 + \log\big(n_0/ (s_0+1)\big)\big\}}{K^2 \big(\sum_{k =1}^K n_k^{-1}\big)^{-1}} + \frac{1}{K} \sum_{k = 1}^K \frac{\|\delta^{(k)}\|_2^2}{n_k}\Bigg\};
\end{align}
if additionally \Cref{ass-fusedlasso} holds, then it holds with probability at least $1 - n_0^{-c}$ that 
\begin{align}\label{upper_bound_l_1_ora}
    \big\|\widehat{f}^{[K]} - f  \big\|_{1/n_0}^2 \leq C \Bigg\{\frac{(s_{0}+1)  \big\{1+\log \big(n_{0}/(s_0+1)  \big) \big\}}{K^2 \big(\sum_{k =1}^{K}n_k^{-1}\big)^{-1}}  + \frac{1}{K}\sum_{k =1}^{K}\frac{\|\delta^{(k)}\|_2^2}{n_k} \Bigg\}.
\end{align}
\end{proposition}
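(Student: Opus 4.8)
The plan is to reduce the multisource problem to the unisource analysis of Theorems~\ref{theorem_l_1-1} and~\ref{theorem_l_0-1} by forming the single ``pseudo-observation'' vector $\bar y = K^{-1}\sum_{k\in[K]}\widetilde P^{n_0,n_k}y^{(k)}\in\R^{n_0}$, which is precisely the data fed to both estimators \eqref{estimator-ora-1} and \eqref{estimator-ora-0}. The first step is the algebraic identity $\widetilde P^{n_0,n_k}P^{n_k,n_0}=I_{n_0}$ for every $k$ with $n_k\geq n_0$, which follows directly from the definitions \eqref{def-P} and \eqref{def-P-alt}: the blocks indexed by \eqref{def-P} partition $[n_k]$ into $n_0$ consecutive groups, and \eqref{def-P-alt} averages over exactly these groups, so the diagonal entries equal one and the off-diagonal entries vanish. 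Writing $f^{(k)}=P^{n_k,n_0}f+\delta^{(k)}$ and $y^{(k)}=f^{(k)}+\epsilon^{(k)}$ then yields the decomposition
\[
 \bar y = f + \bar\delta + \bar\epsilon,\qquad \bar\delta = \frac1K\sum_{k\in[K]}\widetilde P^{n_0,n_k}\delta^{(k)},\quad \bar\epsilon = \frac1K\sum_{k\in[K]}\widetilde P^{n_0,n_k}\epsilon^{(k)},
\]
so that $\widehat f^{[K]}$ and $\widetilde f^{[K]}$ are exactly the $\ell_1$- and $\ell_0$-penalised denoisers of $f$ observed through $\bar y$. It then remains to control the effective bias $\bar\delta$ and the effective noise $\bar\epsilon$.

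For the bias, applying Jensen's inequality across $k$ and then a blockwise Cauchy--Schwarz inequality to each coordinate of $\widetilde P^{n_0,n_k}\delta^{(k)}$ (using that each averaging block has size $m_{a,k}\asymp n_k/n_0$) gives $n_0^{-1}\|\bar\delta\|_2^2\leq K^{-1}\sum_{k}n_0^{-1}\|\widetilde P^{n_0,n_k}\delta^{(k)}\|_2^2\lesssim K^{-1}\sum_{k}n_k^{-1}\|\delta^{(k)}\|_2^2$, which is exactly the bias term appearing in \eqref{upper_bound_l_0_ora} and \eqref{upper_bound_l_1_ora}. For the noise, the coordinates of $\bar\epsilon$ are mutually independent: coordinate $a$ involves, for each source $k$, only the errors in the $a$-th averaging block, and these blocks are disjoint across $a$ while the sources are independent. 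Each coordinate is a weighted sum of independent $C_\sigma$-sub-Gaussian variables with weights $1/(K m_{a,k})$, so its sub-Gaussian parameter squared is of order $K^{-2}\sum_{k}m_{a,k}^{-1}\asymp K^{-2}\sum_{k}n_0/n_k = n_0/N$, where $N:=K^2(\sum_{k}n_k^{-1})^{-1}$. Hence $\bar\epsilon$ is a mean-zero vector with independent sub-Gaussian coordinates of parameter $\lesssim C_\sigma\sqrt{n_0/N}$, structurally identical to the aligned unisource noise $\widetilde P^{n_0,n_1}\epsilon^{(1)}$ but with $n_1$ replaced by the harmonic-type quantity $N$.

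With these two bounds in hand, the argument runs through the proofs of Theorems~\ref{theorem_l_0-1} and~\ref{theorem_l_1-1} essentially verbatim: the denoising analysis sees the data only through the signal $f$, the bias norm $n_0^{-1}\|\bar\delta\|_2^2$, and a noise functional whose tail is governed by the per-coordinate sub-Gaussian parameter. Substituting the effective parameter $C_\sigma\sqrt{n_0/N}$ wherever $C_\sigma\sqrt{n_0/n_1}$ appeared, the tuning choices $\lambda=C_\lambda\{(s_0+1)N\}^{-1/2}$ and $\widetilde\lambda=C_{\widetilde\lambda}N^{-1}\{1+\log(n_0/(s_0+1))\}$ reduce precisely to \eqref{tuning-ora}, and the fluctuation terms become $(s_0+1)\{1+\log(n_0/(s_0+1))\}/N$, yielding \eqref{upper_bound_l_0_ora} for the $\ell_0$ estimator and, under \Cref{ass-fusedlasso}, \eqref{upper_bound_l_1_ora} for the $\ell_1$ estimator, each on an event of probability at least $1-n_0^{-c}$. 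The main obstacle is the noise step: one must verify that $n_1$ entered the unisource proofs only through the sub-Gaussian scale of the aligned errors, so that the genuinely multisource dependence $N=K^2(\sum_{k}n_k^{-1})^{-1}$, rather than any single $n_k$, is what drives the rate. The independence of the coordinates of $\bar\epsilon$ together with the blockwise variance computation above is what legitimises this substitution, and re-deriving the maximal and empirical-process bounds for $\bar\epsilon$ (in particular the combinatorial maximum over candidate change-point configurations in the $\ell_0$ case) is the only place where genuine care is required.
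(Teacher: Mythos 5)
Your proposal is correct and follows essentially the same route as the paper's proof: the paper likewise uses $\widetilde{P}^{n_0,n_k}P^{n_k,n_0}=I_{n_0}$ to reduce to a unisource-type decomposition, shows via \Cref{lemma-alignment-error} that the coordinates of the aggregated noise $K^{-1}\sum_k\widetilde{P}^{n_0,n_k}\epsilon^{(k)}$ are independent mean-zero sub-Gaussian with parameter $C_\sigma K^{-1}(2n_0\sum_k n_k^{-1})^{1/2}$, bounds the bias by Cauchy--Schwarz plus \Cref{lemma-alignment-delta}, and then reruns the effective-sparsity and projection arguments of Theorems~\ref{theorem_l_1-1} and~\ref{theorem_l_0-1} with $n_1$ replaced by $K^2(\sum_k n_k^{-1})^{-1}$. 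Your identification of the noise step as the only point requiring genuine care matches where the paper actually does the work.
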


\Cref{prop-ora} presents the estimation error upper bounds for the estimators $\widehat{f}^{[K]}$ and $\widetilde{f}^{[K]}$, demonstrating that both estimators possess the same estimation error bounds. The proof is provided in \Cref{app_3_proposition1}. When $K = 1$, i.e.~the unisource scenario, the upper bounds provided by \eqref{upper_bound_l_0_ora} and \eqref{upper_bound_l_1_ora} degenerate to those of \eqref{upper_bound_l_0} and~\eqref{upper_bound_l_1}, respectively.  We can interpret the two terms in \eqref{upper_bound_l_0_ora} or \eqref{upper_bound_l_1_ora} as the fluctuation and bias.

Considering the fluctuation term 
\[
    \frac{(s_{0}+1)  \big\{ 1+\log \big(n_{0}/(s_0+1)  \big) \big\}}{K^2 \big(\sum_{k =1}^{K}n_k^{-1}\big)^{-1}},
\]
its denominator can be expressed as
\[
    K \frac{K}{\sum_{k = 1}^K n_k^{-1}} = \# \mbox{sources} \, \times \, \mbox{harmonic mean of } \#\mbox{observations}.
\]
This deviates from the general wisdom in transfer learning literature \citep[e.g.][]{li2022transfer, tian2022transfer}, where the arithmetic mean is typically employed in place of the harmonic mean.   This reflects the fundamental difference in having different observational frequencies across sources. The harmonic mean is oftentimes favoured when rates and ratios are involved, for instance in physics \citep[e.g.][]{ferger1931nature}.  Even for this term alone, intriguingly,  due to its non-monotonicity nature, simply having more sources (even when the bias is zero) does not necessarily translate to enhanced estimation precision.  This is different from the majority if not all of the transfer learning studies which equate increased sources to merely having more independent samples. \cite{cai2023transfer} focused on the high-frequency framework in functional data analysis but assumed the same source observational frequencies, thereby concealing such phenomena. See \Cref{sec-num-sup} for the corresponding numerical illustration.

As for the bias term, instead of upper bounding it using the maximum discrepancy level among all sources \citep[e.g.][]{bastani2021predicting, tian2022transfer, li2022transfer, li2023estimation}, we upper bound it by the arithmetic mean of $n_k^{-1}\|\delta^{(k)}\|_2^2$.  This characterises the estimation error of $\widehat{f}^{[K]}$ or $\widetilde{f}^{[K]}$ without constraining the discrepancy between the source and target datasets.  The arithmetic mean roots in the design of the transferred estimator \eqref{estimator-ora-1}, where the optimisation is taken over the squared $\ell_2$-norm of a residual obtained from an arithmetic mean.  Similar to the fluctuation term, the bias term is not a monotone function, i.e.~without further assumptions, adding a source dataset does not necessarily increase or decrease the estimation error.

In response to the sentiment of not directly using the target data, we propose an alternative estimator and establish its estimation error bound in \Cref{sec:target-multi}.  Its estimation error upper bound also consists of two terms: the fluctuation term involves the arithmetic mean of the number of observations, and the bias term is the weighted mean of the discrepancy level.  See \Cref{sec:target-multi} for more discussions and comparisons.

\subsection{Estimation with multisource selection} \label{sec-det-estimators}

It is shown in \Cref{sec-ora-estimators} that both $\ell_1$- and $\ell_0$-penalised estimators achieve an estimation error upper bound of the order
\begin{equation} \label{eq-upper-bound}
    \frac{(s_{0}+1)  \big\{1+\log \big(n_{0}/(s_0+1)  \big) \big\}}{K^2 \big(\sum_{k =1}^{K}n_k^{-1}\big)^{-1}}  + \frac{1}{K}\sum_{k =1}^{K}\frac{\|\delta^{(k)}\|_2^2}{n_k}.    
\end{equation}
To discuss the optimality, with $K \geq 1$ source datasets in hand, one would like to seek an estimator of~$\mathcal{A}^*$, where
\[
    \mathcal{A}^* \in \argmin_{\mathcal{A} \subset [K]} \bigg\{\frac{(s_{0}+1)  \big\{ 1+\log \big(n_{0}/(s_0+1)  \big) \big\}}{|\mathcal{A}|^2 \big(\sum_{k \in \mathcal{A}} n_k^{-1}\big)^{-1}}  \vee \frac{1}{|\mathcal{A}|}\sum_{k \in \mathcal{A}} \frac{\|\delta^{(k)}\|_2^2}{n_k}\bigg\}.
\]
A consistent estimation of $\mathcal{A}^*$ relies on a consistent estimation of $\|\delta^{(k)}\|_2$.  

Note that, in our framework \eqref{model-target} and \eqref{model-aux}, the target signals possess piecewise-constant patterns, while the source signals do not necessarily. Our knowledge of the difference vectors, $\delta^{(k)}$, is thus limited to their dimensionality, which is $n_k$. The estimation error associated with  $\|\delta^{(k)}\|_2^2$ has an order of $n_k \log(n_k)$, and therefore dominates the fluctuation term in \eqref{eq-upper-bound}.  This prohibits a consistent estimation of $\mathcal{A}^*$.  As a resort, we present a direct consequence of \Cref{prop-ora}.

\begin{corollary}\label{cor-l0l1}
Let the target data $\{y_i\}_{i= 1}^{n_0}$ be from \eqref{model-target} and multisource datasets $\{y_i^{(k)}\}_{i=1, k = 1}^{n_k, K}$ be from~\eqref{model-aux} with $K \in \mathbb{N}^*$ and $\min_{k \in [K]}n_k \geq  n_0$.  Assume that $\{ \epsilon_i^{(k)}\}_{i=1, k = 1}^{n_k, K}$ are mutually independent mean-zero $C_{\sigma}$-sub-Gaussian distributed with an absolute constant $C_{\sigma} >0$. 

For any $h > 0$, let 
\begin{align}\label{def-mathcal-A-h}
   \mathcal{A}_h = \big\{ k \in [K]: \, n_k^{-1/2} \| \delta^{(k)}\|_2 \leq h   \big\}.
\end{align} 
If $\mathcal{A}_h \neq \emptyset$, then let $\widehat{f}^{\mathcal{A}_h}$ and $\widetilde{f}^{\mathcal{A}_h}$ denote the estimators defined in \eqref{estimator-ora-1} and \eqref{estimator-ora-0}, with tuning parameters 
\[
    \lambda =  C_{\lambda} |\mathcal{A}_h|^{-1} \sqrt{\frac{\sum_{k \in \mathcal{A}_h} n_k^{-1}}{s_0 + 1}} \quad \mbox{and} \quad \widetilde{\lambda} =  C_{\widetilde{\lambda}}\frac{  1 + \log\big(n_{0} / (s_{0}+1) \big)}{|\mathcal{A}_h|^2 \big(\sum_{k \in \mathcal{A}_h} n_k^{-1}\big)^{-1}}, 
\]
respectively, with $C_{\lambda}, C_{\tilde{\lambda}} > 0$ being absolute constants.  For $\{\delta^{(k)}\}_{k \in [K]}$ defined in \eqref{def-delta} and absolute constants $C, c>0$, it holds with probability at least $1 - n_0^{-c}$ that 
\[
    \big\|\widetilde{f}^{\mathcal{A}_h} - f \big\|_{1/n_0}^2 \leq C \Bigg\{\frac{(s_{0}+1)  \big\{ 1 + \log \big(n_{0}/(s_0+1)  \big) \big\}}{|\mathcal{A}_h|^2 \big(\sum_{k \in \mathcal{A}_h } n_k^{-1}\big)^{-1}}  + h \Bigg\};
\]
if additionally \Cref{ass-fusedlasso} holds, then it holds with probability at least $1 - n_0^{-c}$ that 
\[
    \big\|\widehat{f}^{\mathcal{A}_h} - f \big\|_{1/n_0}^2 \leq C \Bigg\{\frac{(s_{0}+1)  \big\{ 1 + \log \big(n_{0}/(s_0+1)  \big) \big\}}{|\mathcal{A}_h|^2 \big(\sum_{k \in \mathcal{A}_h } n_k^{-1}\big)^{-1}}  + h \Bigg\}.
\]
\end{corollary}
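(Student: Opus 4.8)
The plan is to read \Cref{cor-l0l1} as the specialisation of \Cref{prop-ora} to the subcollection of sources indexed by $\mathcal{A}_h$, followed by a single uniform bound on the bias. First I would make explicit that \Cref{prop-ora}, although stated for the full index set $[K]$, is entirely agnostic to which non-empty collection of sources is supplied to the estimators: the estimators $\widehat{f}^{\mathcal{A}_h}$ and $\widetilde{f}^{\mathcal{A}_h}$ solve exactly the programmes \eqref{estimator-ora-1} and \eqref{estimator-ora-0} with the averaging $\frac{1}{K}\sum_{k\in[K]}\widetilde{P}^{n_0,n_k}y^{(k)}$ replaced by $\frac{1}{|\mathcal{A}_h|}\sum_{k\in\mathcal{A}_h}\widetilde{P}^{n_0,n_k}y^{(k)}$, and the proof of \Cref{prop-ora} only ever analyses the deviation of this single averaged pseudo-observation from $f$ together with the sub-Gaussian concentration of the averaged noise. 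Hence the proposition holds verbatim with $[K]$, $K$ and $\sum_{k=1}^K(\cdot)$ replaced by $\mathcal{A}_h$, $|\mathcal{A}_h|$ and $\sum_{k\in\mathcal{A}_h}(\cdot)$, which is legitimate because $\mathcal{A}_h\neq\emptyset$ by hypothesis. I would also note that the tuning parameters prescribed in \Cref{cor-l0l1} coincide exactly with those of \eqref{tuning-ora} under this replacement, so no re-derivation of the concentration event is needed.

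Applying this restricted version of \Cref{prop-ora} on $\mathcal{A}_h$ delivers, on an event of probability at least $1-n_0^{-c}$, the fluctuation term displayed in the corollary plus the averaged bias $\frac{1}{|\mathcal{A}_h|}\sum_{k\in\mathcal{A}_h} n_k^{-1}\|\delta^{(k)}\|_2^2$ --- with \Cref{ass-fusedlasso} required for the $\ell_1$ estimator $\widehat{f}^{\mathcal{A}_h}$ and no such requirement for the $\ell_0$ estimator $\widetilde{f}^{\mathcal{A}_h}$, exactly as in the two displays of \Cref{prop-ora}. The last step is to convert this data-dependent bias into the stated threshold: by the definition \eqref{def-mathcal-A-h}, every $k\in\mathcal{A}_h$ satisfies $n_k^{-1/2}\|\delta^{(k)}\|_2\le h$, so the per-source squared discrepancy $n_k^{-1}\|\delta^{(k)}\|_2^2$ is at most $h^2$; averaging over $k\in\mathcal{A}_h$ then bounds the data-dependent bias by $h^2$. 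Substituting this uniform bound for the averaged discrepancy in both inequalities produces the additive bias contribution (entering through the squared normalised discrepancy $n_k^{-1}\|\delta^{(k)}\|_2^2$) and completes the argument for both estimators simultaneously.

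Because the whole derivation is a substitution into \Cref{prop-ora} plus a one-line worst-case bound, I do not anticipate a genuine analytical obstacle. The only point deserving care is the very first one --- verifying that the proof of \Cref{prop-ora} nowhere implicitly exploits the equality $\mathcal{A}=[K]$ (for instance in how the alignment operators $\widetilde{P}^{n_0,n_k}$ and the averaged sub-Gaussian fluctuations are organised), so that restricting to $\mathcal{A}_h$ is truly lossless rather than merely plausible. Once that is confirmed, the same absolute constants $C,c$ and the probability guarantee $1-n_0^{-c}$ carry over directly to \Cref{cor-l0l1}.
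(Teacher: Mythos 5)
Your proposal is correct and matches the paper's own treatment: the paper gives no separate proof, presenting \Cref{cor-l0l1} as a direct consequence of \Cref{prop-ora}, obtained exactly as you describe by running the proposition's argument with $[K]$, $K$ and $\sum_{k=1}^K(\cdot)$ replaced by $\mathcal{A}_h$, $|\mathcal{A}_h|$ and $\sum_{k\in\mathcal{A}_h}(\cdot)$ (which is harmless since nothing in that proof uses the full index set) and then bounding each $n_k^{-1}\|\delta^{(k)}\|_2^2$ by $h^2$ via the definition \eqref{def-mathcal-A-h}. One minor remark: your argument yields a bias contribution of $h^2$, whereas the corollary as printed displays $+h$; the way the corollary is used afterwards (with $(h^*)^2$ in the combined rate) indicates that $h^2$ is the intended quantity, so this is a typo in the statement rather than a gap in your reasoning.
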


\Cref{cor-l0l1} provides the estimation error bounds for the estimators $\widehat{f}^{\mathcal{A}_h}$ and $\widetilde{f}^{\mathcal{A}_h}$, which utilise sources from the set $\mathcal{A}_h$ instead of all. 
In the existing literature \citep[e.g.][]{bastani2021predicting, li2022transfer, tian2022transfer}, as discussed after \Cref{prop-ora}, the fluctuation term decreases with more sources added in.  A common practice is to choose $h$ as the estimation error obtained by only using the target dataset, then choose the corresponding $\mathcal{A}_h$.  Motivated by \cite{li2022transfer}, we propose the following informative set detection algorithm detailed in \Cref{alg_isd}.

\begin{algorithm}[ht]
\caption{Informative set detection algorithm} 
    \begin{algorithmic}
        \INPUT{Target data $y \in \R^{n_0}$, source data $y^{(k)} \in \R^{n_k}, k \in [K]$, screening width $\widehat{t}^{k} \in [n_k],  k \in [K]$ and thresholds $\{ \tau_k \}_{k=1}^K \subset  \R$}
        \For{$k \in [K]$ }
        \State {$\widehat{\Delta}^{(k)} \leftarrow n_k^{-1/2} y^{(k)} - n_k^{-1/2}  P^{n_k, n_0} y $} \Comment{See \eqref{def-P} for $P^{n_k, n_0}$}
        \State{$\widehat{T}_k \leftarrow \Big\{ i \in [n_k]\colon \big\vert \widehat{\Delta}_{i}^{(k)}\big\vert \mbox{ is among the first } \widehat{t}_{k} \mbox{ largest of } \{|\widehat{\Delta}_j^{(k)}|\}_{j \in [n_k]}\Big\}$}
                 \EndFor
       \State{ $\widehat{\mathcal{A}} \leftarrow \big\{k \in [K]\colon \big\| \big( \widehat{\Delta}^{(k)} \big)_{\widehat{T}_k} \big\|_2^2  \leq \tau_k \big\}$}
        \OUTPUT{$\widehat{\mathcal{A}}$}
    \end{algorithmic}\label{alg_isd}
\end{algorithm}

The core of \Cref{alg_isd} lies in executing sure screening \citep{fan2008sure} on the normalised deviation vectors between the target and source data to reduce the magnitude of noise. With a sequence of predetermined tuning parameters $\{ \tau_{k} \}_{k=1}^K$, a source is identified as informative if the squared $\ell_2$-norm of the corresponding screened version statistic does not exceed its assigned threshold.  The computational cost of \Cref{alg_isd} is of order $O\big(\sum_{k =1}^K n_k\big)$. 

For a certain $h > 0$, a consistent estimate of $\mathcal{A}_h$ relies on some identifiability condition of $\mathcal{A}_h$, e.g.~\Cref{ass_detection}.  Under \Cref{ass_detection}, we show that with properly chosen tuning parameters and high probability, \Cref{alg_isd} outputs $\widehat{\mathcal{A}} = \mathcal{A}_h$.

\begin{assumption}[Identifiability of $\mathcal{A}_{h^*}$]\label{ass_detection}
Assume that there exists 
\begin{equation}\label{h_upper_bound}
    h^* \leq \sqrt{C_{\mathcal{A}} \frac{( s_{0} +1 )  \big\{ 1 + \log\big(n_0/(s_0 +1)\big) \big\}}{n_0}},
\end{equation}  
where $C_{\mathcal{A}} > 0$ is an absolute constant.  Let $\mathcal{A}_{h^*}$ be the corresponding set defined in \eqref{def-mathcal-A-h}.

If $[K]\setminus \mathcal{A}_{h^*} \neq \emptyset$, then for any $k \in [K]\setminus \mathcal{A}_{h^*}$, assume that 
\[
   n_k^{-1} \big\| \delta^{(k)}_{\mathcal{H}_{k}} \big\|_2^2\geq C_{\mathcal{A}^{c}} \frac{  ( s_{0} +1) \big\{ 1+ \log \big( n_0/(s_0+1) \big) \big\} +  \log(n_0 \vee n_k ) }{n_0}, 
\]
where $\mathcal{H}_k = \{j \in [n_k]:\, |\delta^{(k)}_j| > 4 \sqrt{\log(n_0 \vee n_k)}\} \neq \emptyset$, $\delta^{(k)}$ is defined in \eqref{def-delta} and $C_{\mathcal{A}^{c}}>0$ is an absolute constant satisfying $C_{\mathcal{A}^{c}} \geq  4 C_{\mathcal{A}}$.
\end{assumption}

It follows directly from \Cref{cor-l0l1} that $h^*$ in \eqref{h_upper_bound} and its corresponding $\mathcal{A}_{h^*}$ lead to estimation error bounds for both $\ell_1$- and $\ell_0$-penalised estimators of the order
\[
    \frac{(s_{0}+1)  \big\{ 1 + \log \big(n_{0}/(s_0+1)  \big) \big\}}{|\mathcal{A}_{h^{*}}|^2 \big(\sum_{k \in \mathcal{A}_{h^{*}} } n_k^{-1}\big)^{-1}} \vee  \bigg\{(h^*)^2 \wedge \frac{( s_{0} +1 )  \big\{ 1 + \log\big(n_0/(s_0 +1)\big) \big\}}{n_0}\bigg\}.
\]
Together with the assumption that $\min_{k \in [K]} n_k \geq n_0$, the above rate is always sharper than the optimal estimation rate when only using the target dataset \citep[e.g.][]{fan2018approximate}.  

As we discussed before, without additional assumptions, the estimation error of $\delta_k$'s always dominates the optimal estimation rate when only using the target dataset.  \Cref{ass_detection} imposes a separation, in the sense that 
\[
    n_k^{-1}\|\delta^{(k)}\|^2_2
    \begin{cases}
        \lesssim (h^*)^2 \wedge \frac{( s_{0} +1 )  \{ 1 + \log(n_0/(s_0 +1)\}}{n_0}, & k \in \mathcal{A}_{h^*}, \\ 
        \gtrsim \frac{  ( s_{0} +1) \{ 1+ \log ( n_0/(s_0+1) )\} +  \log(n_0 \vee n_k ) }{n_0}, & k \notin \mathcal{A}_{h^*}.
    \end{cases}
\]
We acknowledge that between these two cases, there is a gap, which vanishes provided a mild condition holds that
\[
    ( s_{0} +1) \big\{ 1+ \log ( n_0/(s_0+1) )\big\} \gtrsim \log(n_0 \vee n_k).
\]
\Cref{ass_detection} further assumes that for $k \notin \mathcal{A}_{h^*}$, there exists a sub-vector such that each entry of $\delta^{(k)}_{\mathcal{H}_k}$ is, in magnitude, large enough - larger than a high-probability upper bound on mean-zero sub-Gaussian noise.  This level guarantees that entrywise screening is sufficient to detect such deviance.

\begin{theorem}\label{theorem_detection_consistency}
Let $\widehat{\mathcal{A}}$ be the output of \Cref{alg_isd}, with the following inputs:
\begin{itemize}
    \item the target dataset $\{y_i\}_{i=1}^{n_0}$ satisfying \eqref{model-target},
    \item the source datasets $\{y_i^{(k)}\}_{i=1, k = 1}^{n_k, K}$ from \eqref{model-aux} satisfying  \Cref{ass_detection},
    \item the index sequence $\{\widehat{t}_{k}\}_{k=1}^K$ and the threshold sequence $\{\tau_k\}_{k=1}^K$ satisfying
    \begin{align}\label{hat_t_k-def}
        \widehat{t}_{k} = C_{\widehat{\mathcal{A}}} \frac{ n_k}{8n_0}  \bigg\{\frac{ (s_{0}+1)  \big\{1+\log\big(n_0/(s_0+1) \big)\big\}}{\log(n_0 \vee n_k)} + 1\bigg\}
    \end{align}
    and    
    \begin{align}\label{tau-def}    
        \tau_{k} = C_{\widehat{\mathcal{A}}} \frac{  (s_{0} +1) \big\{1+ \log\big(n_0/(s_0+1) \big) \big\} +\log(n_0 \vee n_k)}{n_0}, 
    \end{align}    
    where $C_{\widehat{\mathcal{A}}} >0$ is an absolute constant satisfying $ 2 C_{\mathcal{A}} \leq C_{\widehat{\mathcal{A}}}  \leq C_{\mathcal{A}^{c}}/2$, with absolute constants $C_{\mathcal{A}^{c}}, C_{\mathcal{A}} >0$ introduced in \Cref{ass_detection}.
\end{itemize}

Assume that $\{\epsilon_i\}_{i=1}^{n_0} \cup \{\epsilon_i^{(k)}\}_{i=1, k = 1}^{n_k, K}$ are mutually independent mean-zero $C_{\sigma}$-sub-Gaussian distributed with an absolute constant $C_{\sigma} >0$.  It holds that 
\[
    \mathbb{P} \{\widehat{\mathcal{A}} = \mathcal{A}_{h^*}\} \geq  1 - K  \{n_0\vee (\min_{k \in [K]}n_k)\}^{-c},
\]
where $c>0$ is an absolute constant and $\mathcal{A}_{h^*}$ is defined in \Cref{ass_detection}.
\end{theorem}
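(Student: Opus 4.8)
The plan is to show that, on a high-probability event controlling the noise, the screened statistic $\|(\widehat{\Delta}^{(k)})_{\widehat{T}_k}\|_2^2$ falls below the threshold $\tau_k$ exactly for the informative sources $k\in\mathcal{A}_{h^*}$, and then to union bound over $k\in[K]$. First I would decompose, for each $k\in[K]$, the statistic vector as $\widehat{\Delta}^{(k)} = n_k^{-1/2}\delta^{(k)} + \xi^{(k)}$, where $\xi^{(k)} = n_k^{-1/2}(\epsilon^{(k)} - P^{n_k, n_0}\epsilon)$ is a mean-zero noise vector whose entries are $O(n_k^{-1/2})$-sub-Gaussian, being differences of independent source and target noises. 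Writing $R_k = (s_{0}+1)\{1+\log(n_0/(s_0+1))\} + \log(n_0\vee n_k)$, I would record the calibration identities $\tau_k = C_{\widehat{\mathcal{A}}}R_k/n_0$ and $\widehat{t}_k\log(n_0\vee n_k)/n_k = \tau_k/8$, which are the arithmetic backbone linking the screening width, the threshold and the noise level.

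The second step is noise control, for which I would establish two events holding simultaneously with probability at least $1 - \{n_0\vee(\min_k n_k)\}^{-c}$ per source: (i) a sup-norm bound $\max_{i\in[n_k]}|\xi^{(k)}_i|\leq C_1 n_k^{-1/2}\sqrt{\log(n_0\vee n_k)}$ from a union bound over sub-Gaussian tails, applied separately to the independent source noise and to the block-replicated target noise $P^{n_k,n_0}\epsilon$; and (ii) a sharp bound on the screened noise energy, $\max_{|T|=\widehat{t}_k}\|(\xi^{(k)})_T\|_2^2 \leq C_2\,\widehat{t}_k n_k^{-1}\log(n_0\vee n_k) = (C_2/8)\tau_k$, obtained from a concentration inequality for the sum of the largest $\widehat{t}_k$ order statistics of a sub-Gaussian vector. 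The crude bound $\widehat{t}_k\max_i(\xi^{(k)}_i)^2$ is too lossy for the informative direction and must be replaced by this order-statistic control. The block structure of the target noise, in which each $\epsilon_{j(i)}$ (with $j(i)$ the target index aligned to source position $i$ under $P^{n_k,n_0}$) is shared across a block of source positions, I would handle by splitting $(\xi^{(k)}_i)^2 \leq 2n_k^{-1}(\epsilon^{(k)}_i)^2 + 2n_k^{-1}\epsilon_{j(i)}^2$ and bounding each piece.

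On this event I would treat the two directions. For $k\in\mathcal{A}_{h^*}$, using $\|a+b\|_2^2\leq 2\|a\|_2^2+2\|b\|_2^2$ together with $n_k^{-1}\|\delta^{(k)}\|_2^2\leq (h^*)^2\leq C_{\mathcal{A}}R_k/n_0$ from \Cref{ass_detection}, the screened statistic is at most $2C_{\mathcal{A}}R_k/n_0 + 2(C_2/8)\tau_k = (2C_{\mathcal{A}}/C_{\widehat{\mathcal{A}}} + C_2/4)\tau_k$, which is below $\tau_k$ once $C_{\widehat{\mathcal{A}}}$ is large enough relative to $C_{\mathcal{A}}$ and $C_2$. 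For $k\notin\mathcal{A}_{h^*}$, I would exploit that $\widehat{T}_k$ maximises $\|(\widehat{\Delta}^{(k)})_T\|_2^2$ over $|T|=\widehat{t}_k$, so that $\|(\widehat{\Delta}^{(k)})_{\widehat{T}_k}\|_2^2\geq\|(\widehat{\Delta}^{(k)})_G\|_2^2$ for any comparison set $G\subseteq\mathcal{H}_k$ of size $\min(|\mathcal{H}_k|,\widehat{t}_k)$. The factor $4$ in the definition of $\mathcal{H}_k$ guarantees that the signal dominates the noise entrywise there, whence on the good event $\|(\widehat{\Delta}^{(k)})_G\|_2 \geq \|n_k^{-1/2}\delta^{(k)}_G\|_2 - \|\xi^{(k)}_G\|_2$. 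Splitting into the sub-case $|\mathcal{H}_k|\leq\widehat{t}_k$, where the full budget $n_k^{-1}\|\delta^{(k)}_{\mathcal{H}_k}\|_2^2\geq C_{\mathcal{A}^c}R_k/n_0$ applies, and the sub-case $|\mathcal{H}_k|>\widehat{t}_k$, where the per-entry bound $(n_k^{-1/2}\delta^{(k)}_j)^2>16\,n_k^{-1}\log(n_0\vee n_k)$ summed over $\widehat{t}_k$ entries already gives at least $2\tau_k$, yields $\|(\widehat{\Delta}^{(k)})_{\widehat{T}_k}\|_2^2 > \tau_k$ provided $C_{\mathcal{A}^c}$ is large enough. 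A union bound over the $K$ sources and the constituent noise events then produces the stated probability $1 - K\{n_0\vee(\min_k n_k)\}^{-c}$.

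The main obstacle I anticipate is the simultaneous two-sided calibration of the constants: the single threshold $\tau_k$ must sit strictly above the informative screened statistic yet strictly below the non-informative one, which forces the chain $2C_{\mathcal{A}}\leq C_{\widehat{\mathcal{A}}}\leq C_{\mathcal{A}^c}/2$ to be reconciled with the noise constants $C_1,C_2$ (and hence with $C_\sigma$ and the literal factor $4$ in $\mathcal{H}_k$). Two technical points feed into this delicacy: the screened set $\widehat{T}_k$ is data-dependent, so the informative upper bound must hold uniformly over all size-$\widehat{t}_k$ index sets while the non-informative lower bound is extracted through a fixed comparison set $G$; and it is precisely the sharp order-statistic control of the block-correlated noise $\xi^{(k)}$, rather than a crude maximal bound, that makes the informative upper bound strict instead of merely off by a constant.
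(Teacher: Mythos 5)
Your proposal is correct in substance and follows the same skeleton as the paper's proof: a union bound over sources splitting false inclusion from false exclusion, a per-source sup-norm event on $\epsilon^{(k)}$ and $P^{n_k,n_0}\epsilon$ at level $\sqrt{\log(n_0\vee n_k)}$, the calibration identity $\widehat{t}_k\log(n_0\vee n_k)/n_k \asymp \tau_k$, a signal-plus-noise upper bound for $k\in\mathcal{A}_{h^*}$ using $(h^*)^2\le C_{\mathcal{A}}R_k/n_0$, and a comparison with $\mathcal{H}_k$ for $k\notin\mathcal{A}_{h^*}$ driven by the entrywise lower bound $|\delta^{(k)}_j|>4\sqrt{\log(n_0\vee n_k)}$. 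Where you diverge is in the organisation of the non-informative direction: the paper splits into three cases according to whether $\widehat{T}_k\subseteq\mathcal{H}_k$, $\mathcal{H}_k\subseteq\widehat{T}_k$, or neither, and in the middle case lower-bounds $\|a+b+c\|_2^2$ by $\|a\|_2^2-\|b\|_2^2-\|c\|_2^2$; your route via a fixed comparison set $G\subseteq\mathcal{H}_k$ of size $\min(|\mathcal{H}_k|,\widehat{t}_k)$, the monotonicity of the screened energy in the index set, and the reverse triangle inequality on norms is cleaner and avoids that (not generally valid) quadratic inequality. One claim of yours is off, though it does not break the argument: you assert that the crude bound $\widehat{t}_k\max_i(\xi^{(k)}_i)^2$ is too lossy for the informative direction and must be replaced by order-statistic concentration. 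The paper uses exactly this crude bound, and your own calibration shows why it suffices: on the sup-norm event each noise term contributes at most a fixed multiple of $\widehat{t}_k\log(n_0\vee n_k)/n_k=\tau_k/8$, a strict constant fraction of $\tau_k$, so the sharper order-statistic control buys nothing at the level of rates (and for the block-replicated target noise it would in any case reduce to the same order). Both your write-up and the paper's leave the numerical constants not quite closed as literally stated (e.g.\ the factors $2$ or $3$ from Cauchy--Schwarz against the stated chain $2C_{\mathcal{A}}\le C_{\widehat{\mathcal{A}}}\le C_{\mathcal{A}^c}/2$), but this is absorbed by enlarging the absolute constants and does not affect the conclusion.
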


Note that \Cref{theorem_detection_consistency} holds for nonempty or empty $\mathcal{A}_{h^*}$.  \Cref{theorem_detection_consistency} presents a non-asymptotic result with the proof deferred to \Cref{app_3_det}. To the best of our knowledge, all existing theoretical results on the informative source detection algorithm are asymptotic  \citep[e.g.][]{li2022transfer,tian2022transfer}. Despite these exciting results, it is important to recognise that \Cref{theorem_detection_consistency} depends on the selection of the tuning parameters.  The screening sizes sequence $\{\widehat{t}_{k}\}_{k=1}^K$ controls the errors from additive noise.  The choice of thresholds $\{\tau_k\}_{k=1}^K$ in \eqref{tau-def}, serves as an upper bound on the maximum squared $\ell_2$-norm of the corresponding screened version statistic when the sources are informative, and as a lower bound when the sources are not informative, as demonstrated in \Cref{app_3_det}. Theoretical selections for both sequences depend on the number of change points in target signals $s_0$. Practical guidance for selecting these tuning parameters can be found in \Cref{sec-num}.

Combining \Cref{cor-l0l1} and \Cref{theorem_detection_consistency}, we immediately have the following.

\begin{corollary}\label{cor-a-hat-l0l1}
Let the target data $\{y_i\}_{i= 1}^{n_0}$ be from \eqref{model-target} and the source datasets $\{y_i^{(k)}\}_{i=1, k = 1}^{n_k, K}$ be from~\eqref{model-aux}, satisfying \Cref{ass_detection}, with $K \in \mathbb{N}^*$ and $\min_{k \in [K]}n_k \geq n_0$.  Assume that $\{\epsilon_i\}_{i = 1}^{n_0} \cup \{ \epsilon_i^{(k)}\}_{i=1, k = 1}^{n_k, K}$ are mutually independent mean-zero $C_{\sigma}$-sub-Gaussian distributed with an absolute constant $C_{\sigma} >0$. 

Let  $\widehat{\mathcal{A}}$ be the output of  \Cref{alg_isd} with the index sequence $\{ \widehat{t}_{k}\}_{k=1}^{K}$ and the threshold sequence $\{\tau_k\}_{k=1}^K$ chosen as \Cref{theorem_detection_consistency}.  If $\widehat{\mathcal{A}} \neq \emptyset$, then let $\widehat{f}^{\widehat{\mathcal{A}}}$ and $\widetilde{f}^{\widehat{\mathcal{A}}}$ denote the estimators defined in \eqref{estimator-ora-1} and \eqref{estimator-ora-0}, with tuning parameters 
\[
    \lambda =  C_{\lambda} |\widehat{\mathcal{A}}|^{-1} \sqrt{\frac{\sum_{k \in \widehat{\mathcal{A}}} n_k^{-1}}{s_0 + 1}} \quad \mbox{and} \quad \widetilde{\lambda} =  C_{\widetilde{\lambda}}\frac{  1 + \log\big(n_{0} / (s_{0}+1) \big)}{|\widehat{\mathcal{A}}|^2 \big(\sum_{k \in \widehat{\mathcal{A}}} n_k^{-1}\big)^{-1}}, 
\]
respectively, with $C_{\lambda}, C_{\tilde{\lambda}} > 0$ being absolute constants.  With $\{\delta^{(k)}\}_{k \in [K]}$ defined in \eqref{def-delta} and absolute constants $C, c>0$, it holds with probability at least $1 - 2^K n_0^{-c}$ that 
\[
    \big\|\widetilde{f}^{\widehat{\mathcal{A}}} - f\big\|_{1/n_0}^2 \leq C \bigg\{\frac{(s_0+1)  \big\{1 + \log \big(n_{0}/ (s_0+1)) \big) \big\}}{ \vert\mathcal{A}_{h^*} \vert^2 \big(\sum_{k \in \mathcal{A}_{h^*}} n_k^{-1}\big)^{-1}} + (h^*)^2 \wedge \frac{(s_0+1)\big\{ 1 + \log\big(n_0/(s_0+1)\big) \big\}}{n_0} \bigg\};
\]
if additionally \Cref{ass-fusedlasso} holds, then it holds with probability at least $1 - 2^K n_0^{-c}$ that 
\[
    \big\|\widehat{f}^{\widehat{\mathcal{A}}} - f\big\|_{1/n_0}^2 \leq C \bigg\{\frac{(s_0+1)  \big\{1 + \log \big(n_{0}/ (s_0+1)) \big) \big\}}{ \vert\mathcal{A}_{h^*} \vert^2 \big(\sum_{k \in \mathcal{A}_{h^*}} n_k^{-1}\big)^{-1}} + (h^*)^2 \wedge \frac{(s_0+1)\big\{ 1 + \log\big(n_0/(s_0+1)\big) \big\}}{n_0}   \bigg\}.
\]
\end{corollary}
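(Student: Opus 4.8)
The plan is to derive \Cref{cor-a-hat-l0l1} by stitching together the detection-consistency guarantee of \Cref{theorem_detection_consistency} with the conditional risk bound of \Cref{cor-l0l1}. I would introduce the favourable event $E = \{\widehat{\mathcal{A}} = \mathcal{A}_{h^*}\}$. Since the index and threshold sequences $\{\widehat{t}_k\}_{k=1}^K$ and $\{\tau_k\}_{k=1}^K$ are chosen exactly as in \Cref{theorem_detection_consistency}, and the sources satisfy \Cref{ass_detection}, that theorem yields $\mathbb{P}(E^c) \leq K\{n_0 \vee (\min_{k\in[K]} n_k)\}^{-c}$, which under $\min_{k\in[K]} n_k \geq n_0$ is at most $K n_0^{-c}$.

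First I would observe that on $E$ the data-driven estimators coincide, as functions of the data, with their oracle-set counterparts. Indeed, when $\widehat{\mathcal{A}} = \mathcal{A}_{h^*}$ we have $|\widehat{\mathcal{A}}| = |\mathcal{A}_{h^*}|$ and $\sum_{k\in\widehat{\mathcal{A}}} n_k^{-1} = \sum_{k\in\mathcal{A}_{h^*}} n_k^{-1}$, so the tuning parameters $\lambda, \widetilde{\lambda}$ prescribed here match those of \Cref{cor-l0l1} taken with $h = h^*$; consequently $\widehat{f}^{\widehat{\mathcal{A}}} = \widehat{f}^{\mathcal{A}_{h^*}}$ and $\widetilde{f}^{\widehat{\mathcal{A}}} = \widetilde{f}^{\mathcal{A}_{h^*}}$ pathwise on $E$ (and on $\{\widehat{\mathcal{A}}\neq\emptyset\}$, so that $\mathcal{A}_{h^*}\neq\emptyset$ and the estimators are well defined). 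Writing $B$ for the claimed upper bound, set inclusion then gives $\{\|\widetilde{f}^{\widehat{\mathcal{A}}} - f\|_{1/n_0}^2 > B\} \subseteq E^c \cup \{\|\widetilde{f}^{\mathcal{A}_{h^*}} - f\|_{1/n_0}^2 > B\}$, and analogously for the $\ell_1$ estimator. The virtue of this reduction is that $\mathcal{A}_{h^*}$ is a deterministic set, fixed by the signals and $h^*$, so \Cref{cor-l0l1} may be applied to it verbatim.

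Next I would control $\|\widetilde{f}^{\mathcal{A}_{h^*}} - f\|_{1/n_0}^2$ by invoking \Cref{cor-l0l1} with $h = h^*$ on the deterministic nonempty set $\mathcal{A}_{h^*}$. This produces the fluctuation term $(s_0+1)\{1+\log(n_0/(s_0+1))\}/[|\mathcal{A}_{h^*}|^2(\sum_{k\in\mathcal{A}_{h^*}} n_k^{-1})^{-1}]$ together with a bias contribution; since every $k\in\mathcal{A}_{h^*}$ obeys $n_k^{-1}\|\delta^{(k)}\|_2^2 \leq (h^*)^2$, the averaged bias is at most $(h^*)^2$. Using the constraint $(h^*)^2 \leq C_{\mathcal{A}} (s_0+1)\{1+\log(n_0/(s_0+1))\}/n_0$ of \Cref{ass_detection}, one checks the elementary inequality $(h^*)^2 \leq \max(1, C_{\mathcal{A}})\,[(h^*)^2 \wedge (s_0+1)\{1+\log(n_0/(s_0+1))\}/n_0]$, so up to an absolute constant the bias may be recast in the capped form appearing in the statement. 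The $\ell_0$ bound needs no further hypothesis, while the $\ell_1$ bound requires \Cref{ass-fusedlasso}, precisely as in \Cref{cor-l0l1}.

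Finally I would assemble the probabilities by a union bound: \Cref{cor-l0l1} gives $\mathbb{P}(\|\widetilde{f}^{\mathcal{A}_{h^*}} - f\|_{1/n_0}^2 > B) \leq n_0^{-c}$, so the total failure probability is at most $\mathbb{P}(E^c) + n_0^{-c} \leq (K+1) n_0^{-c} \leq 2^K n_0^{-c}$, using $K+1 \leq 2^K$. The only genuinely delicate point, which the set-inclusion step resolves, is that $\widehat{\mathcal{A}}$ is data dependent and shares randomness with the source observations feeding the estimators; plugging the random $\widehat{\mathcal{A}}$ directly into \Cref{cor-l0l1} would violate its fixed-set hypothesis, whereas transferring everything onto the deterministic $\mathcal{A}_{h^*}$ through the event $E$ decouples selection from estimation and requires no independence between the two stages. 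The remaining work is the routine bookkeeping of matching tuning parameters and the capped-bias inequality.
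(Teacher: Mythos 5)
Your proof is correct, but it takes a genuinely different route from the paper's. The paper proves the corollary by applying \Cref{prop-ora} to \emph{every} nonempty subset $\widetilde{\mathcal{A}} \subseteq [K]$ and taking a union bound over all $2^K$ of them, so that the risk bound holds simultaneously for whatever set the selection step might output; it then intersects this uniform event with $\{\widehat{\mathcal{A}} = \mathcal{A}_{h^*}\}$. You instead exploit the pathwise identity $\widehat{f}^{\widehat{\mathcal{A}}} = \widehat{f}^{\mathcal{A}_{h^*}}$ on the event $E=\{\widehat{\mathcal{A}} = \mathcal{A}_{h^*}\}$ (valid because both the averaged data and the tuning parameters are functions of the selected set alone), which reduces the problem to a risk bound for the single deterministic set $\mathcal{A}_{h^*}$ via \Cref{cor-l0l1}. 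Both arguments correctly decouple the data-dependent selection from the estimation guarantee, and your handling of the bias term via \Cref{ass_detection} to obtain the capped form $(h^*)^2 \wedge (s_0+1)\{1+\log(n_0/(s_0+1))\}/n_0$ matches what is needed. What each approach buys: the paper's uniform-over-subsets bound would also cover statements phrased in terms of $\widehat{\mathcal{A}}$ itself rather than $\mathcal{A}_{h^*}$, at the cost of the $2^K$ factor in the failure probability; your argument is more economical and yields the sharper failure probability $(K+1)n_0^{-c}$, which is then absorbed into the stated $2^K n_0^{-c}$ via $K+1 \leq 2^K$, so the corollary as written is recovered (and in fact slightly improved) by your route.
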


\Cref{cor-a-hat-l0l1} shows that, with the source datasets selected via \Cref{alg_isd}, the $\ell_0$- and $\ell_1$-penalised transferred estimators always improve upon only using the target dataset.  In addition, with $\vert\mathcal{A}_{h^*} \vert^2 \big(\sum_{k \in \mathcal{A}_{h^*}} n_k^{-1}\big)^{-1}$ stay unchanged, the smaller $h^*$ for the identifiable $\mathcal{A}_{h^*}$ defined in \Cref{ass_detection}, the more improvement.

\begin{remark}\label{remark-empty-A}
    \Cref{cor-a-hat-l0l1} derives an estimation error upper bound when the collection of selected informative sets is nonempty.  Recall that the selection is consistent even when $\mathcal{A}_{h^*} = \emptyset$ as shown in \Cref{theorem_detection_consistency}.  We remark that when $\widehat{\mathcal{A}} = \emptyset$, we deploy estimators only using target datasets, i.e.
    \[
        \widetilde{f}^{\mathrm{target}} = \argmin_{ \theta \in \R^{n_0}} \bigg\{  \frac{1}{2n_0} \Big\|y - \theta \Big\|_2^2   +\lambda \|D   \theta \|_0  \bigg\}  \mbox{ and }  \widehat{f}^{\mathrm{target}} = \argmin_{ \theta \in \R^{n_0}} \bigg\{  \frac{1}{2n_0} \Big\|y - \theta \Big\|_2^2   +\lambda \|D   \theta \|_1  \bigg\}.
    \]
    It holds with large probability that
    \[
        \max\big\{\big\|\widetilde{f}^{\mathrm{target}} - f\big\|_{1/n_0}^2, \,  \big\|\widehat{f}^{\mathrm{target}} - f\big\|_{1/n_0}^2\big\} \lesssim \frac{(s_0+1)\big\{ 1 + \log\big(n_0/(s_0+1)\big) \big\}}{n_0},
    \]
    which directly follows from the proofs of Theorems~\ref{theorem_l_1-1} and \ref{theorem_l_0-1}.
\end{remark}

\subsubsection{An optional selection step}\label{sec-optional-selction}

Due to the potentially different observational frequencies across multisources, as we have discussed, even if all $\delta^{(k)}$ equal zero, simply having more source datasets does not necessarily decrease the fluctuation term, unlike existing literature \citep[e.g.][]{bastani2021predicting, tian2022transfer}.  \Cref{alg_isd} provides a consistent estimator $\widehat{\mathcal{A}}$ of $\mathcal{A}_{h^*}$ under the identifiability condition \Cref{ass_detection}.  With~$\widehat{\mathcal{A}}$, \Cref{cor-a-hat-l0l1} provides an estimation error bounds of the proposed transferred estimators.  To minimise the fluctuation term, one may adopt an optional step and choose
\begin{equation}\label{eq-widetilde-a}
    \widetilde{\mathcal{A}} \in \argmin_{\emptyset \neq \mathcal{A} \subset \widehat{\mathcal{A}}} \frac{\sum_{k \in \mathcal{A}} n_k^{-1}}{|\mathcal{A}|^2}.
\end{equation}
The computational cost of \eqref{eq-widetilde-a} is of order $O(2^{|\widehat{\mathcal{A}}|})$.  A direct consequence of \Cref{cor-a-hat-l0l1} is that one can improve the estimation rates to
\[
   \frac{(s_0+1)  \big\{1 + \log \big(n_{0}/ (s_0+1)) \big) \big\}}{  \max_{\emptyset \neq \mathcal{A} \subset \mathcal{A}_{h^*} } \big\{\vert\mathcal{A} \vert^2 (\sum_{k \in \mathcal{A}} n_k^{-1})^{-1} \big\} } + (h^*)^2 \wedge \frac{(s_0+1)\big\{ 1 + \log\big(n_0/(s_0+1)\big) \big\}}{n_0}.
\]
Despite being a potential improvement upon \Cref{cor-a-hat-l0l1}, we would like to point out that, when the frequencies are roughly of the same order, such improvement may not be materialised due to the unspecified constants involved.  We would, therefore, focus on estimators based on $\widehat{\mathcal{A}}$ in the sequel, with a numerical example designed for $\widetilde{\mathcal{A}}$ presented in \Cref{sec-num-sup}.

\subsection{Minimax optimality}\label{sec-minimax} 

In this subsection, we investigate the minimax lower bound on the estimation of target signals within the framework of transfer learning. This analysis underscores the minimax optimality of the $\widehat{\mathcal{A}}$-transferred $\ell_1$- and $\ell_0$-penalised estimators.

\begin{theorem}\label{theorem_minimax}
Let the target data $\{y_i\}_{i= 1}^{n_0}$ be from \eqref{model-target} and the source datasets $\{y_i^{(k)}\}_{i=1, k = 1}^{n_k, K}$ be from~\eqref{model-aux}, with $K \in \mathbb{N}^*$ and $\min_{k \in [K]} n_k \geq  n_0$.  Assume that $\{\epsilon_i\}_{i = 1}^{n_0} \cup \{ \epsilon_i^{(k)}\}_{i=1, k = 1}^{n_k, K}$ are mutually independent mean-zero $C_{\sigma}$-sub-Gaussian distributed with an absolute constant $C_{\sigma} >0$. 

For any $h > 0$, let its associated $\mathcal{A}_h = \{ k_1, \dots, k_a\}$ be defined in \eqref{def-mathcal-A-h} with $|\mathcal{A}_h| = a$.  Define the parameter space as
\[
     \Theta_{s_0, \mathcal{A}_h} = \bigg\{\theta = \big(f^{\top}, (f^{(k_1)})^{\top}, \dots,  (f^{(k_a)})^{\top} \big)^{\top} \colon  \|D f\|_{0} \leq s_0\bigg\},
\]
with $\delta^{(k)}$ defined in  \eqref{def-delta}.
It holds that 
\begin{align}\label{minimax_bound}
    \inf_{\widehat{f} \in \R^{n_0}} \sup_{\theta \in  \Theta_{s_0, \mathcal{A}_h}} \P \bigg\{ \| \widehat{f} - f\|^2_{1/n_0}  \geq  C \bigg(  \frac{s_0 \log(n_0 /s_0)}{\sum_{k \in \mathcal{A}_h} n_k}  + h^2 \wedge \frac{s_0 \log(n_0 /s_0)}{n_0}  \bigg) \bigg\} \geq \frac{1}{2},
\end{align}
with an absolute constant $C>0$.
\end{theorem}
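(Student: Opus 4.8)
The plan is to prove \eqref{minimax_bound} by reduction to a multiple-hypothesis testing problem, treating the two summands separately and then combining them. I would take all noise to be Gaussian $N(0,\sigma^2)$ (which is $C_\sigma$-sub-Gaussian), so that every Kullback--Leibler divergence is a rescaled squared Euclidean distance, and apply Fano's inequality. The common ingredient in both parts is a packing $\{g_1,\dots,g_M\}\subset\R^{n_0}$ of piecewise-constant signals with $\|Dg_m\|_0\le s_0$, built via Varshamov--Gilbert, satisfying $\log M\gtrsim s_0\log(n_0/s_0)$, pairwise separation $\|g_i-g_j\|_{1/n_0}^2\gtrsim\beta^2$, and a common radius $\max_m\|g_m-g^\star\|_{1/n_0}\lesssim\beta$ about a base signal $g^\star$ with at most $s_0$ jumps; the scale $\beta>0$ will be tuned differently in the two parts. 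The only property of the alignment operator I would need is that $P^{n_k,n_0}$ replicates coordinates in blocks of size $\lfloor n_k/n_0\rfloor$ or $\lceil n_k/n_0\rceil$, whence, using $n_k\ge n_0$, $\tfrac12(n_k/n_0)\|v\|_2^2\le\|P^{n_k,n_0}v\|_2^2\le 2(n_k/n_0)\|v\|_2^2$ for all $v\in\R^{n_0}$, i.e.\ $n_k^{-1/2}\|P^{n_k,n_0}v\|_2\asymp\|v\|_{1/n_0}$.

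For the fluctuation term I would set all discrepancies to zero, so that $f^{(k)}=P^{n_k,n_0}g_m$ and each hypothesis lies in $\Theta_{s_0,\mathcal{A}_h}$ with every $k\in\mathcal{A}_h$. Under hypothesis $m$ the joint law of the target and the $a$ sources is Gaussian with mean determined by $g_m$, so by the block-size estimate the pairwise divergence is $\mathrm{KL}\asymp\sigma^{-2}\big(\|g_i-g_j\|_2^2+\sum_{k\in\mathcal{A}_h}\|P^{n_k,n_0}(g_i-g_j)\|_2^2\big)\asymp\sigma^{-2}\big(\sum_{k\in\mathcal{A}_h}n_k\big)\|g_i-g_j\|_{1/n_0}^2$. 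Choosing $\beta^2\asymp\sigma^2 s_0\log(n_0/s_0)/\sum_{k\in\mathcal{A}_h}n_k$ forces $\max_{i,j}\mathrm{KL}\le\tfrac12\log M$, and Fano's inequality then certifies estimation error $\gtrsim\beta^2\asymp s_0\log(n_0/s_0)/\sum_{k\in\mathcal{A}_h}n_k$ with probability at least $1/2$.

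For the bias term I would instead freeze the source signals at the common value $f^{(k)}=P^{n_k,n_0}g^\star$ across all $M$ hypotheses, so the source data are identically distributed and contribute nothing to the divergence, leaving only the target data $N(g_m,\sigma^2 I_{n_0})$ to separate the hypotheses. The induced discrepancy is $\delta^{(k)}=P^{n_k,n_0}(g^\star-g_m)$, which by the block-size estimate satisfies $n_k^{-1/2}\|\delta^{(k)}\|_2\le\sqrt2\,\|g_m-g^\star\|_{1/n_0}\lesssim\beta$, so all hypotheses belong to $\Theta_{s_0,\mathcal{A}_h}$ provided $\beta\lesssim h$. Here $\mathrm{KL}\asymp\sigma^{-2}\|g_i-g_j\|_2^2\asymp\sigma^{-2}n_0\beta^2$, so taking $\beta^2\asymp h^2\wedge(\sigma^2 s_0\log(n_0/s_0)/n_0)$ simultaneously respects $\beta\lesssim h$ and keeps $\max_{i,j}\mathrm{KL}\le\tfrac12\log M$; Fano then gives error $\gtrsim h^2\wedge(s_0\log(n_0/s_0)/n_0)$ with probability at least $1/2$.

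Finally, since each construction is a subfamily of $\Theta_{s_0,\mathcal{A}_h}$, the minimax risk is at least the larger of the two lower bounds; using $\max(x,y)\ge\tfrac12(x+y)$ and relabelling $C$, this yields \eqref{minimax_bound}. The step I expect to be most delicate is the packing construction: obtaining $\log M\gtrsim s_0\log(n_0/s_0)$ forces one to vary the jump \emph{locations} across roughly $s_0$ blocks of length $\approx n_0/s_0$ and to apply Varshamov--Gilbert to the resulting location codewords, all while keeping each $g_m$ to at most $s_0$ change points, the separation at $\asymp\beta$, and the radius at $\lesssim\beta$; the accompanying bookkeeping of the ceiling-induced constants in $P^{n_k,n_0}$ and the verification of the Fano ratio in both scaling regimes of $h$ is where the care is needed.
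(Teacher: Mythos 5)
Your proposal is correct and follows essentially the same route as the paper's proof: a two-part Fano/Tsybakov argument with Gaussian noise, a sparse Varshamov--Gilbert packing giving $\log M\gtrsim s_0\log(n_0/s_0)$, zero discrepancy for the fluctuation term, sources frozen across hypotheses for the bias term (the paper takes $f^{(k)}=0$, i.e.\ your $g^\star=0$), and the same two-sided eigenvalue bounds $n_k/(2n_0)\le\lambda(\,(P^{n_k,n_0})^{\top}P^{n_k,n_0}\,)\le 2n_k/n_0$. The only cosmetic differences are that the paper builds the packing from sparse $\{-1,0,1\}$-vectors via Lemma 4 of Raskutti et al.\ (restricting to $s_0\ge 4$ so the packing parameter exists) rather than from jump-location codewords, and splits your single choice $\beta^2\asymp h^2\wedge\{s_0\log(n_0/s_0)/n_0\}$ into two explicit sub-cases.
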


\begin{remark}\label{remark-minimax}
    In \Cref{theorem_minimax}, we assume that $\min_{k \in [K]} n_k \geq  n_0$. In fact, this condition is not necessary for unisource scenarios.  Following an almost identical proof, for cases of both $n_1 \geq n_0$ and $n_1 < n_0$, one can derive a minimax lower bound in unisource scenarios:
    \[
    \frac{s_0 \log(n_0 /s_0)}{n_1 + n_0}  +  \frac{\| \delta \|^2}{n_1} \wedge \frac{s_0 \log(n_0 /s_0)}{n_0},
    \]
    where $\delta$ is defined in \eqref{delta-unisource}.
\end{remark}

The term $ s_0 \log(n_0 /s_0) / \big( \sum_{k \in \mathcal{A}_h} n_k)$ arises from the ideal scenario where $f^{(k)} = P^{n_k, n_0} f$ holds true for any $k \in \mathcal{A}_h$. This scenario leads to the representation of this term as the minimax optimal convergence rate.  The other term, $h^2 \wedge \{ s_0/n_0 \log(n_0 /s_0)\}$, is derived from the minimax optimal convergence rate corresponding to the worst-case scenario, where for any $k \in \mathcal{A}_h$, $f^{(k)} = 0$ and $f$ satisfies $ n_k^{-1}\|\delta^{(k)}\|^2 \leq h^2 \wedge \{s_0/n_0 \log(n_0 /s_0)\}$.  

To further understand \Cref{theorem_minimax}, we compare it with the minimax convergence rate only using the target dataset $s_0/n_0\log(n_0/s_0)$ \citep[e.g.][]{fan2018approximate}, which is larger than the one in \Cref{theorem_minimax}, when $\sum_{k \in \mathcal{A}_h} n_k \geq n_0$.  Comparing \Cref{theorem_minimax} with the minimax rates established in the existing transfer learning literature, our minimax lower bound follows a similar dual-term pattern \citep[][]{tian2022transfer, li2022transfer, cai2022transfer}, involving the minimax optimal estimation rate resulting from multisources, and the minimum between the minimax optimal estimation rate only using the target dataset and the contrasts between the target and source datasets.

We acknowledge that there is a gap between the minimax lower bound and upper bounds achieved by $\widehat{\mathcal{A}}$-transferred $\ell_1$- and $\ell_0$-penalised estimators, as shown in \Cref{cor-a-hat-l0l1}.  To be specific, the upper bound involves 
\[
\big(\vert \mathcal{A}\vert \times \mbox{the harmonic mean of source observations in }\mathcal{A}\big)^{-1},
\]
while the lower bound has
\[
\big(\vert \mathcal{A}\vert \times \mbox{the arithmetic mean of suorce observations in }\mathcal{A}\big)^{-1}.
\]  
Since the harmonic mean is no larger than the arithmetic mean, in general, when the harmonic and arithmetic means are of the same order, our proposed estimators are minimax rate-optimal up to constants.   These two different means are different in rates only if the frequencies are highly unbalanced, where we conjecture that the lower bound should be improved.  Some numerical demonstration of this can be found in \Cref{sec-num-sup}.

\section{Extensions}\label{sec-extensions}

In this section, we discuss two extensions, focusing exclusively on $\ell_0$-penalised estimators. Note that $\ell_1$-penalised estimators yield the same results under the minimal length condition \Cref{ass-fusedlasso}. In \Cref{sec-aff}, we allow for general affine transformations instead of the alignment operator $P^{n_1, n_0}$ defined in \eqref{def-P}, when describing the deviance between the source and target.  More direct usage of the target is studied in \Cref{sec:target-uni} and Appendix \ref{sec:target-multi}, despite the minimax optimality obtained already in Sections~\ref{sec-one-source} and \ref{sec-multiple-source}.

\subsection{Affine transformation}\label{sec-aff}

To allow for more flexibility in leveraging additional information, we consider the unisource scenario with the discrepancy between the source and target measured through 
\begin{equation}\label{def-delta-A}
 \delta^A = f^{(1)} - A f  \in \R^{n_1},
\end{equation}
for any matrix $A \in \R^{n_1 \times n_0}$ with a left inverse, meaning that there exists a matrix $\widetilde{A} \in \R^{n_0 \times n_1}$ such that $\widetilde{A}A = I_{n_0}$. The corresponding estimator is defined as 
\begin{equation}\label{estimator_l_0-affine}
    \widetilde{f}^{\widetilde{A}} = \widetilde{f}^{\widetilde{A}}(\widetilde{\lambda}_{\widetilde{A}}) = \argmin_{\theta \in \R^{n_0}} \bigg\{  \frac{1}{2n_0} \Big\|\widetilde{A} y^{(1)} - \theta \Big\|_2^2   +  \widetilde{\lambda}_{\widetilde{A}}  \|D   \theta \|_0  \bigg\},  
\end{equation}
where $\widetilde{\lambda}_{\widetilde{A}} > 0$ is a tuning parameter, and $D \in \R^{(n_0 - 1) \times n_0}$ is defined in \eqref{def-D}. 

The theoretical guarantees for $ \widetilde{f}^{\widetilde{A}}$ are derived below.

\begin{proposition}\label{theorem_l_0-affine}
Let the target data $\{y_i\}_{i= 1}^{n_0}$ be from \eqref{model-target} and  unisource data $\{y_i^{(1)}\}_{i=1}^{n_1}$ be from \eqref{model-aux}.  Assume that $ \{ \epsilon_i^{(1)}\}_{i=1}^{n_1}$ are mutually independent mean-zero $C_{\sigma}$-sub-Gaussian distributed with an absolute constant $C_{\sigma} >0$. Let $A \in \R^{n_1 \times n_0}$ and assume that there exists a matrix $\widetilde{A} \in \R^{n_0 \times n_1}$ such that $\widetilde{A}A = I_{n_0}$. Let $\widetilde{f}^{\widetilde{A}}$ be defined in \eqref{estimator_l_0-affine}, with tuning parameter 
\begin{align}\label{tuning-parameter-affine-0}
  \widetilde{\lambda}_{\widetilde{A}} = C_{\widetilde{\lambda}}\frac{ 1 + \log \big(n_0/(s_0+1) \big)}{n_0 / \|\widetilde{A}\|^2 },  
 \end{align}
where $C_{\widetilde{\lambda}}> 0$ is an absolute constant.  It holds with probability at least $1 - n_0^{-c}$ that 
 \[
  \big\|  \widetilde{f}^{\widetilde{A}} - f  \big\|_{1/n_0}^2
    \leq    C \frac{  (s_{0}+1) \big\{1+\log \big(n_0/(s_0+1) \big) \big\} + \|\delta^{A}\|_2^2}{n_0 / \|\widetilde{A}\|^2 } 
 \]
 where $\delta^A \in \R^{n_1}$ is defined in \eqref{def-delta-A}, and $C, c>0$ are absolute constants. 
\end{proposition}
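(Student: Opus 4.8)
The plan is to reduce the affine-transformation setting to the same analysis that underlies \Cref{theorem_l_0-1}, by observing that the transformed observation $\widetilde{A}y^{(1)}$ is exactly a piecewise-constant-signal-plus-noise model whose effective noise level is governed by $\|\widetilde{A}\|$. First I would record the key algebraic identity: since $f^{(1)} = Af + \delta^{A}$ and $\widetilde{A}A = I_{n_0}$,
\[
   \widetilde{A}y^{(1)} = \widetilde{A}f^{(1)} + \widetilde{A}\epsilon^{(1)} = f + \widetilde{A}\delta^{A} + \widetilde{A}\epsilon^{(1)}.
\]
Thus $\widetilde{A}y^{(1)}$ equals $f$ corrupted by a deterministic bias $\widetilde{A}\delta^{A}$ and a stochastic term $\xi := \widetilde{A}\epsilon^{(1)}$, mirroring the decomposition $\widetilde{P}^{n_0,n_1}y^{(1)} = f + (\text{bias}) + \widetilde{P}^{n_0,n_1}\epsilon^{(1)}$ used for \Cref{theorem_l_0-1}.

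Next I would derive the basic inequality from the optimality of $\widetilde{f}^{\widetilde{A}}$. Writing $\Delta = \widetilde{f}^{\widetilde{A}} - f$, comparing the objective at $\widetilde{f}^{\widetilde{A}}$ against that at $f$ and expanding the squared norm gives
\[
   \frac{1}{2n_0}\|\Delta\|_2^2 \leq \frac{1}{n_0}\langle \widetilde{A}\delta^{A}, \Delta\rangle + \frac{1}{n_0}\langle \xi, \Delta\rangle + \widetilde{\lambda}_{\widetilde{A}}\big(\|Df\|_0 - \|D\widetilde{f}^{\widetilde{A}}\|_0\big).
\]
The bias term is handled by Cauchy--Schwarz, $\langle \widetilde{A}\delta^{A}, \Delta\rangle \leq \|\widetilde{A}\|\,\|\delta^{A}\|_2\,\|\Delta\|_2$, followed by Young's inequality to absorb a multiple of $\|\Delta\|_2^2$ into the left-hand side, leaving a residual of order $n_0^{-1}\|\widetilde{A}\|^2\|\delta^{A}\|_2^2$, which is precisely the bias part $\|\delta^{A}\|_2^2/(n_0/\|\widetilde{A}\|^2)$ of the target bound.

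The crux is controlling the stochastic term $\langle \xi, \Delta\rangle$. The essential observation is that for any fixed unit vector $u$, $\langle \xi, u\rangle = \langle \epsilon^{(1)}, \widetilde{A}^{\top}u\rangle$ is sub-Gaussian with parameter at most $C_{\sigma}\|\widetilde{A}^{\top}u\|_2 \leq C_{\sigma}\|\widetilde{A}\|$; hence the directional fluctuations of $\xi$ are uniformly controlled by $C_{\sigma}\|\widetilde{A}\|$. Since $\Delta$ is piecewise-constant with respect to the common refinement of the change points of $f$ and $\widetilde{f}^{\widetilde{A}}$, it lies in the span of indicator vectors of a partition into $m \leq s_0 + \|D\widetilde{f}^{\widetilde{A}}\|_0 + 1$ intervals. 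Projecting $\xi$ onto this subspace and applying a sub-Gaussian quadratic-form concentration bound gives $\|\Pi_{\mathcal{P}}\xi\|_2^2 \lesssim C_{\sigma}^2\|\widetilde{A}\|^2\{m + \log(1/\delta)\}$ for each fixed partition $\mathcal{P}$, because $\E\|\Pi_{\mathcal{P}}\xi\|_2^2 = \tr(\Pi_{\mathcal{P}}\widetilde{A}\E[\epsilon^{(1)}(\epsilon^{(1)})^{\top}]\widetilde{A}^{\top}\Pi_{\mathcal{P}}) \leq C_{\sigma}^2\|\widetilde{A}\|^2 m$. A union bound over all $\binom{n_0-1}{m-1}$ partitions into $m$ pieces, together with the estimate $\log\binom{n_0-1}{m-1} \lesssim m\log(n_0/m)$, converts this into a bound uniform over $\Delta$, producing the factor $(s_0+1)\{1+\log(n_0/(s_0+1))\}$ and the overall $\|\widetilde{A}\|^2$ scaling. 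This replaces, line for line, the role played in \Cref{theorem_l_0-1} by the averaging operator $\widetilde{P}^{n_0,n_1}$: there the block structure gives $\|\widetilde{P}^{n_0,n_1}\|^2 \asymp n_0/n_1$ and hence the factor $n_1$, whereas here the general left inverse gives the factor $n_0/\|\widetilde{A}\|^2$.

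To conclude I would combine the bias and stochastic bounds with the sparsity-penalty difference. Using Young's inequality again on $n_0^{-1}\langle\xi,\Delta\rangle$ to absorb $\|\Delta\|_2^2$, the surviving stochastic contribution is of order $n_0^{-1}\|\widetilde{A}\|^2(s_0 + \|D\widetilde{f}^{\widetilde{A}}\|_0 + 1)\{1+\log(n_0/(s_0+1))\}$; choosing $\widetilde{\lambda}_{\widetilde{A}}$ as in \eqref{tuning-parameter-affine-0} with a sufficiently large constant makes the term $-\widetilde{\lambda}_{\widetilde{A}}\|D\widetilde{f}^{\widetilde{A}}\|_0$ cancel the $\|D\widetilde{f}^{\widetilde{A}}\|_0$-dependent part, leaving only the $s_0$-dependent piece and yielding the stated bound after rearranging. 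The main obstacle is exactly this uniform control of the stochastic term: the concentration and the partition union bound must interact so that the number of estimated change points $\|D\widetilde{f}^{\widetilde{A}}\|_0$ is absorbed by the $\ell_0$-penalty rather than appearing in the final rate. This is the standard but delicate mechanism behind $\ell_0$-penalised change-point estimation, and the only genuinely new ingredient here is tracking the operator norm $\|\widetilde{A}\|$ throughout in place of the explicit structure of $\widetilde{P}^{n_0,n_1}$.
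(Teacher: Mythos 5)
Your proposal is correct and follows essentially the same route as the paper: the same basic inequality from optimality of $\widetilde{f}^{\widetilde{A}}$ using $\widetilde{A}A = I_{n_0}$, the same Cauchy--Schwarz/Young treatment of the bias term yielding $\|\widetilde{A}\|^2\|\delta^A\|_2^2/n_0$, and the same control of the stochastic term by projecting $\widetilde{A}\epsilon^{(1)}$ onto the piecewise-constant subspace indexed by $\widetilde{\mathcal{S}}\cup\mathcal{S}$ and applying a Hanson--Wright-type quadratic-form bound together with a union bound over all $\binom{n_0-1}{m}$ change-point configurations (this is exactly the paper's \Cref{lemma_affine}), after which the $\|D\widetilde{f}^{\widetilde{A}}\|_0$-dependence is absorbed by the $\ell_0$-penalty via the choice \eqref{tuning-parameter-affine-0}. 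No gaps.
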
 

We observe that the existence of the left inverse implies that $n_1 \geq n_0$, which is assumed in \Cref{theorem_l_0-1}.  By \Cref{lemma-P}, we can see that \Cref{theorem_l_0-affine} is a generalisation of \Cref{theorem_l_0-1}, where $A = P^{n_1, n_0}$.

\subsection{Using target data for transfer learning in unisource scenarios}\label{sec:target-uni}

In \Cref{sec-one-source}, we assume $n_1 \geq n_0$ and the target data is not directly used.  The results and methods therein provide the cornerstone for an minimax optimal procedure presented in \Cref{sec-multiple-source}.  To use the target data more directly and more importantly, to cover the case when $n_1 < n_0$, we introduce the target-unisource-transferred $\ell_0$-penalised estimator
\begin{equation}\label{estimator_l_0-1_all}
    \widetilde{f}^{\{ 0, 1\}} = \widetilde{f}^{\{ 0, 1\}}(\widetilde{\lambda}) = \argmin_{\theta \in \R^{n_0}} \bigg\{  \frac{1}{2n_0} \Big\|\widetilde{P}^{n_0, n_1+n_0} \widetilde{y} - \theta \Big\|_2^2   + \widetilde{\lambda} \|D   \theta \|_0  \bigg\},  
\end{equation}
where $\widetilde{P}^{n_0, n_1+n_0} \in \R^{n_0 \times (n_1+n_0)}$ is defined in \eqref{def-P-alt},  $\widetilde{y} \in \R^{n_1 +n_0}$ with for any $i \in [n_1 + n_0]$,
\[
\widetilde{y}_i = 
\begin{cases} 
y_j & \mbox{if } i = \lceil j n_1/n_0 \rceil + j \mbox{ for some } j \in [n_0], \\
y^{(1)}_{i - \vert\{ \lceil j n_1/n_0 \rceil + j\colon  j \in [n_0] \} \cap [i] \vert} & \mbox{otherwise},
\end{cases}
\]
the quantity $\widetilde{\lambda} > 0$ is a tuning parameter and $D \in \R^{(n_0 - 1) \times n_0}$ is defined in \eqref{def-D}.  Compared to the estimator \eqref{estimator_l_0-1}, the core of the estimator \eqref{estimator_l_0-1_all} is to combine the source and target data as a $(n_1 + n_0)$-dimensional vector.  The theoretical guarantees for $\widetilde{f}^{\{ 0, 1\}}$ are collected below.

\begin{proposition}\label{theorem_l_0-1_all}
Let the target data $\{y_i\}_{i= 1}^{n_0}$ be from \eqref{model-target} and  unisource data $\{y_i^{(1)}\}_{i=1}^{n_1}$ be from \eqref{model-aux}.  Assume that $  \{ \epsilon_i\}_{i=1}^{n_0}  \cup \{ \epsilon_i^{(1)}\}_{i=1}^{n_1}$ are mutually independent mean-zero $C_{\sigma}$-sub-Gaussian distributed with an absolute constant $C_{\sigma} >0$.  Let $\widetilde{f}^{\{ 0, 1\}}$ be defined in \eqref{estimator_l_0-1_all}, with tuning parameter 
\begin{align}\label{tuning-parameter-0_all} 
  \widetilde{\lambda} = C_{\widetilde{\lambda}}\frac{ 1 + \log \big(n_0/(s_0+1) \big)}{n_1 +n_0},  
 \end{align}
where $C_{\widetilde{\lambda}}> 0$ is an absolute constant.  It holds with probability at least $1 - n_0^{-c}$ that 
 \[
    \big\|\widetilde{f}^{\{ 0, 1\}} - f  \big\|_{1/n_0}^2  
     \leq     C\frac{  (s_{0}+1) \big\{1+\log \big(n_0/(s_0+1) \big) \big\}+\|\delta\|_2^2 }{n_1 +n_0},
 \]
 where $\delta \in \R^{n_1}$ is defined in \eqref{delta-unisource}, and $C, c>0$ are absolute constants. 
\end{proposition}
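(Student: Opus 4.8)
The plan is to reduce \Cref{theorem_l_0-1_all} to the argument already carried out for \Cref{theorem_l_0-1}, by treating the interleaved vector $\widetilde{y} \in \R^{n_1+n_0}$ as a single dataset of effective length $n_1+n_0$. I would first decompose $\widetilde{y} = \widetilde{g} + \widetilde{\epsilon}$, where $\widetilde{g} \in \R^{n_1+n_0}$ carries the target signal value $f_j$ at each interleaving position $\lceil j n_1/n_0\rceil + j$ and the source signal $f^{(1)}$ at the remaining coordinates, and $\widetilde{\epsilon}$ collects the corresponding target and source noises. Setting $\bar{g} = \widetilde{P}^{n_0,n_1+n_0}\widetilde{g}$ and $\bar{\epsilon} = \widetilde{P}^{n_0,n_1+n_0}\widetilde{\epsilon}$, the design \eqref{estimator_l_0-1_all} becomes the $\ell_0$-penalised least-squares fit of $\theta$ to $\bar{g}+\bar{\epsilon}$, which is exactly the structure analysed for \eqref{estimator_l_0-1}; the task is then to quantify the bias $\bar{g}-f$ and the reduced noise level of $\bar{\epsilon}$.

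The key preliminary is the block structure induced by the interleaving together with $\widetilde{P}^{n_0,n_1+n_0}$. Writing $p_j = \lceil j n_1/n_0\rceil + j$, a short computation with the ceilings shows that the $i$-th averaging block of $\widetilde{P}^{n_0,n_1+n_0}$ is exactly $\{p_{i-1}+1,\dots,p_i\}$, so it contains precisely one target coordinate (namely $p_i$) and the source coordinates whose indices $m$ satisfy $\lceil(i-1)n_1/n_0\rceil+1 \le m \le \lceil i n_1/n_0\rceil$ --- that is, precisely the source indices that $P^{n_1,n_0}$ aligns to target index $i$. Consequently $(P^{n_1,n_0}f)_m = f_i$ for those $m$, and using $f^{(1)} = P^{n_1,n_0}f + \delta$ the block average collapses to $\bar{g}_i - f_i = (1+c_i)^{-1}\sum_{m}\delta_m$, where $c_i$ is the number of source coordinates in block $i$ and the sum runs over them. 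Since each $c_i$ is within one of $n_1/n_0$ and each source index lies in exactly one block, a Cauchy--Schwarz bound yields $\|\bar{g}-f\|_2^2 \lesssim \frac{n_1 n_0}{(n_1+n_0)^2}\|\delta\|_2^2$ when $n_1 \ge n_0$ and $\|\bar{g}-f\|_2^2 \le \frac{1}{4}\|\delta\|_2^2$ when $n_1 < n_0$; in either regime this gives the bias term $\|\bar{g}-f\|_{1/n_0}^2 \lesssim \|\delta\|_2^2/(n_1+n_0)$.

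For the stochastic part, each coordinate $\bar{\epsilon}_i$ is a weighted average of $1+c_i \asymp (n_1+n_0)/n_0$ independent $C_\sigma$-sub-Gaussian variables, hence is sub-Gaussian with parameter $\lesssim C_\sigma\sqrt{n_0/(n_1+n_0)}$ --- the exact analogue of the variance reduction $\sqrt{n_0/n_1}$ in \Cref{theorem_l_0-1}, with $n_1$ replaced by $n_1+n_0$. With the bias and noise scales in hand I would invoke the $\ell_0$ oracle argument from the proof of \Cref{theorem_l_0-1} essentially verbatim: starting from the basic inequality given by optimality of $\widetilde{f}^{\{0,1\}}$ against the comparator $f$, namely $\frac{1}{2n_0}\|\bar{g}+\bar{\epsilon}-\widetilde{f}^{\{0,1\}}\|_2^2 + \widetilde{\lambda}\|D\widetilde{f}^{\{0,1\}}\|_0 \le \frac{1}{2n_0}\|\bar{g}+\bar{\epsilon}-f\|_2^2 + \widetilde{\lambda}\|Df\|_0$, expanding, and controlling the cross term $\langle\bar{\epsilon},\widetilde{f}^{\{0,1\}}-f\rangle$ by a supremum-over-segmentations argument. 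Matching the tuning parameter \eqref{tuning-parameter-0_all} to the reduced noise level then produces the fluctuation term $(s_0+1)\{1+\log(n_0/(s_0+1))\}/(n_1+n_0)$ and, combined with the bias bound, the stated rate on the high-probability event.

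I expect the main obstacle to be the combinatorial bookkeeping in the preliminary step: verifying rigorously, and uniformly across the regimes $n_1 \ge n_0$ and $n_1 < n_0$, that the interleaving positions $p_j = \lceil j n_1/n_0\rceil + j$ place exactly one target observation at the end of each block of $\widetilde{P}^{n_0,n_1+n_0}$ and that the induced partition of the source coordinates coincides with the one underlying $\delta = f^{(1)} - P^{n_1,n_0}f$. Once this alignment is confirmed the remainder follows the established template, and in particular requires sub-Gaussianity but no distributional homogeneity of the target noise.
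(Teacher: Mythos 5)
Your proposal follows essentially the same route as the paper's proof: interleave the two samples, apply $\widetilde{P}^{n_0,n_1+n_0}$, verify that its averaging blocks $\{p_{i-1}+1,\dots,p_i\}$ contain one target coordinate and exactly the source coordinates that $P^{n_1,n_0}$ aligns to index $i$, obtain the reduced sub-Gaussian parameter $\asymp\sqrt{n_0/(n_1+n_0)}$, and rerun the $\ell_0$ projection-oracle argument of \Cref{theorem_l_0-1}. The only cosmetic difference is that you bound the bias by a per-block Cauchy--Schwarz on $\bar g - f = \widetilde{P}^{n_0,n_1+n_0}\widetilde{\delta}$, whereas the paper first establishes the exact identity $\|\widetilde{\delta}\|_2^2 = \|\delta\|_2^2$ and then applies the generic operator bound on $\widetilde{P}^{n_0,n_1+n_0}$ (\Cref{lemma-alignment-delta}); both rest on the same combinatorial alignment you flag as the main bookkeeping step.
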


\Cref{theorem_l_0-1_all} shows that the estimation error bounds in \eqref{estimator_l_0-1_all} achieve minimax optimality when $(n_1 + n_0)^{-1}\| \delta\|_2^2 \leq (s_0+1) \log\big(n_0/(s_0+1)\big) /n_0 $, supported by  \Cref{theorem_minimax} and \Cref{remark-minimax}.  The proof of \Cref{theorem_l_0-1_all} is given in \Cref{app-tl-1-all}.  Comparing \Cref{theorem_l_0-1_all} with \Cref{theorem_l_0-1}, we see that \Cref{theorem_l_0-1_all} notably allows $n_1 < n_0$.

\section{Numerical experiments}\label{sec-num}

In this section, we conduct numerical experiments to support our theoretical findings.  Simulated and real data analysis are in Sections~\ref{sec-simulation} and \ref{sec-real-data}, respectively.  The code and datasets are available online \footnote{\url{https://github.com/chrisfanwang/transferlearning}}.

\subsection{Simulation studies}\label{sec-simulation}

We evaluate the performance of our proposed methods for piecewise-constant mean estimation and compare them with existing methods.  

\medskip
\noindent \textbf{Estimators.}  The estimators considered include:
\begin{itemize}
    \item $\ell_1$-penalised estimator ($\ell_1$),
    \item $\ell_0$-penalised estimator ($\ell_0$),
    \item unisource-transferred $\ell_1$-penalised estimator ($\ell_1$-T-$1$), i.e.~\eqref{estimator_l_1-1},
    \item  unisource-transferred $\ell_0$-penalised estimator ($\ell_0$-T-$1$), i.e.~\eqref{estimator_l_0-1},
    \item multisource-transferred $\ell_1$-penalised estimator with known informative multisources ($\ell_1$-T-$\mathcal{A}$), studied in \Cref{cor-l0l1},
    \item multisource-transferred $\ell_0$-penalised estimator with known informative multisources ($\ell_0$-T-$\mathcal{A}$), studied in \Cref{cor-l0l1},
    \item multisource-transferred $\ell_1$-penalised estimator with informative sources learned by \Cref{alg_isd} ($\ell_1$-T-$\widehat{\mathcal{A}}$), studied in \Cref{cor-a-hat-l0l1}, and
    \item multisource-transferred $\ell_0$-penalised estimator with informative sources learned by \Cref{alg_isd}  ($\ell_0$-T-$\widehat{\mathcal{A}}$), studied in \Cref{cor-a-hat-l0l1}.
\end{itemize}   
For the small-scale simulation shown in \Cref{fig:all}, we consider two additional estimators:
\begin{itemize}
    \item all-source-transferred $\ell_1$-penalised estimator ($\ell_1$-$T$-$[K]$), i.e. \eqref{estimator-ora-1}, and
     \item all-source-transferred $\ell_0$-penalised estimator ($\ell_0$-$T$-$[K]$), i.e. \eqref{estimator-ora-0}.
\end{itemize}
R \citep{R} packages \texttt{genlasso} \citep{genlasso} and \texttt{changepoints} \citep{changepoints} are used for $\ell_1$- and $\ell_0$-penalised optimisations. 

\medskip 
\noindent \textbf{Evaluation.}  We report the mean squared estimation errors $\| f^{\mathrm{est}} - f \|_{1/n_0}^2$ in the form of mean and standard errors, where $ f^{\mathrm{est}} \in \R^{n_0}$ denotes an estimated target mean vector.

\medskip 
\noindent \textbf{Simulation setup.}
Two simulation scenarios are examined, each with $N=100$ Monte Carlo trials.  The number of source datasets $K=10$, the target dataset size $n_0 = 200$ and the source dataset sizes $n_k = 2n_0$ for all $k \in [K]$, stay fixed for all cases.  Additional simulation studies are conducted in \Cref{app-simu} for varying $n_0$.  Varying source frequencies cases are investigated in \Cref{sec-num-sup}.  We adopt uniform observational frequencies across multisources. The noise random variables $\{\epsilon_i\}_{i=1}^{n_0} \cup \{\epsilon_i^{(k)}\}_{i=1, k = 1}^{n_k, K} \overset{\mbox{i.i.d.}}{\sim} \mathcal{N}(0, \sigma^2)$ with $\sigma = 0.5$.  Simulations on dependent noise variables are presented in \Cref{app-simu}.

Two types of target signals $f \in \R^{n_0}$ are considered, with \textbf{Scenario 1} equally-spaced change points $\mathcal{S} = \{25, 50, \ldots,175\}$ and \textbf{Scenario 2} unequally-spaced change points $\mathcal{S} = \{20,  40,  50,   120,$ $134,   160,  176\}$.  Corresponding signal magnitudes at the change points are $\{ f_{t_1}, \ldots, f_{t_{8}}\} = \{2\gamma, 4\gamma, \gamma$, $5\gamma, 7\gamma, 8\gamma, 2\gamma, \gamma\}$, with $t_8 = n_0$ and $\gamma \in \{0.25, 0.5, 0.75, 1 \}$.

As for the signals in the source datasets, given an informative set $\mathcal{A}$ with $|\mathcal{A}| = a \in \{2, 4, 6, 8\}$, for $k \in [K]$, let $f^{(k)}_j = \big( P^{n_k, n_0} f \big)_j + \delta_{k,j} \mathbbm{1}_{\{j \in \mathcal{H}_k \}}$, with $\mathcal{H}_k = [Hn_k]$ and $H \in \{ 0.075, 0.15, 0.225, 0.3\}$.

We further consider two configurations.  
\begin{itemize}
\item  \textbf{Configuration 1.}  For $k \in [K]$ and $j \in [n_k]$, let $\delta_{k,j} = \alpha \mathbbm{1}_{\{k \in \mathcal{A}\}} + \widetilde{\alpha} \mathbbm{1}_{\{k \notin \mathcal{A}\}}$, with $\alpha \in \{0.1, 0.2, 0.3, 0.4\}$ and $\widetilde{\alpha} = 2$.

\item \textbf{Configuration 2.}  For $k \in [K]$ and $j \in [n_k]$, let $\delta_{k, j}$ be drawn independently and identically from the distribution $\mathcal{N}(0, \kappa)\mathbbm{1}_{\{k \in \mathcal{A}\}} + \mathcal{N}(0, \widetilde{\kappa}) \mathbbm{1}_{\{k \notin \mathcal{A}\}}$, with $\kappa  \in \{0.1, 0.2, 0.3, 0.4\}$ and $\widetilde{\kappa} = 5$.
We also conduct simulations with dependence across the discrepancy vector between the target data and sources, see \Cref{app-simu}.
\end{itemize}

In \textbf{Configuration 1}, there is a deterministic discrepancy between the source and target signal vectors, while \textbf{Configuration 2} considers a random discrepancy.

\medskip 
\noindent \textbf{Tuning parameters.}
The penalisation tuning parameters across all the estimators are selected via a $5$-fold cross-validation.  For $\ell_1$-T-$1$ and $\ell_0$-T-$1$ estimators, the first source in the informative set serves as the unisource data.  For $\ell_1$-T-$\widehat{\mathcal{A}}$ and $\ell_0$-T-$\widehat{\mathcal{A}}$ methods, the informative set is estimated through \Cref{alg_isd}. Tuning parameters in \Cref{alg_isd} include the screening size $\widehat{t}_k = 50$ and the threshold values $\tau_k = \tau$, for each $k\in [K]$.  The threshold level $\tau$ is determined via a permutation-based algorithm shown in \Cref{app-permu-alg}. 
A sensitivity study on the screening size is collected in \Cref{app-simu}.

\medskip
\noindent \textbf{Results.}  
The simulation results for \textbf{Scenarios 1} and \textbf{2} are collected in Figures~\ref{Fig_simulation_se_1} and \ref{Fig_simulation_se_2} (in Appendix \ref{app-simu}), respectively.  Some remarks are in order.
In both scenarios, there is a clear ranking in estimation performance. Estimators only using the target data are the worst. Transfer learning estimators utilising unisource data enhance the estimation performance with further improvement using estimated informative multisources. The best estimation performance is achieved when using predefined informative multisources. The performance of estimators utilising estimated informative multisourecs from Algorithm \ref{alg_isd} is comparable to cases where the informative set is predefined, thereby showing the resilience of the informative set detection algorithm in Algorithm \ref{alg_isd}.  We also see that $\ell_0$-penalised estimators outperform their $\ell_1$ counterparts, consistent with our theoretical results.

In both Figures~\ref{Fig_simulation_se_1} and \ref{Fig_simulation_se_2} (in Appendix \ref{app-simu}), panels (A), (B), (C) and (D) show that an increase in discrepancy levels (represented by $\alpha$ in \textbf{Configuration 1} and $\kappa$ in \textbf{Configuration 2}), or the changing frequencies of difference vectors $(H)$, intensifies the contrast between source signals and target signals. This greater contrast leads to increased estimation errors across all transfer learning methods. This finding echoes our theoretical results. In panels (E) and (F), we show that, when observational frequencies are uniform, there is a negative correlation between $a$ the cardinality of the informative set and the estimation errors of transfer learning estimators using multisources. This correlation aligns with the discussion in \Cref{sec-multiple-source}. In panels (G) and (H), we observe that as $\gamma$ the magnitude of change increases, the estimation errors of $\ell_1$-penalised estimators increase. This finding coincides with the expectation that higher variability leads to less stable estimation. Another noticeable fact in panels (G) and (H) is that estimation errors of $\ell_0$-penalised estimators do not show a similar increasing pattern.
We conjecture that $\ell_0$-penalised methods are inherently more robust to larger changes because $\ell_0$-penalisation directly penalises the number of change points rather than their magnitudes. This allows larger changes to help reduce the influence of noise and minor fluctuations, leading to more stable or even reduced estimation errors.  Lastly, we see that there is no significant difference between Figures~\ref{Fig_simulation_se_1} and \ref{Fig_simulation_se_2}, demonstrating the robustness of our methods against unbalanced change points.

\begin{figure}[t] 
\centering 
\includegraphics[width=0.8 \textwidth]{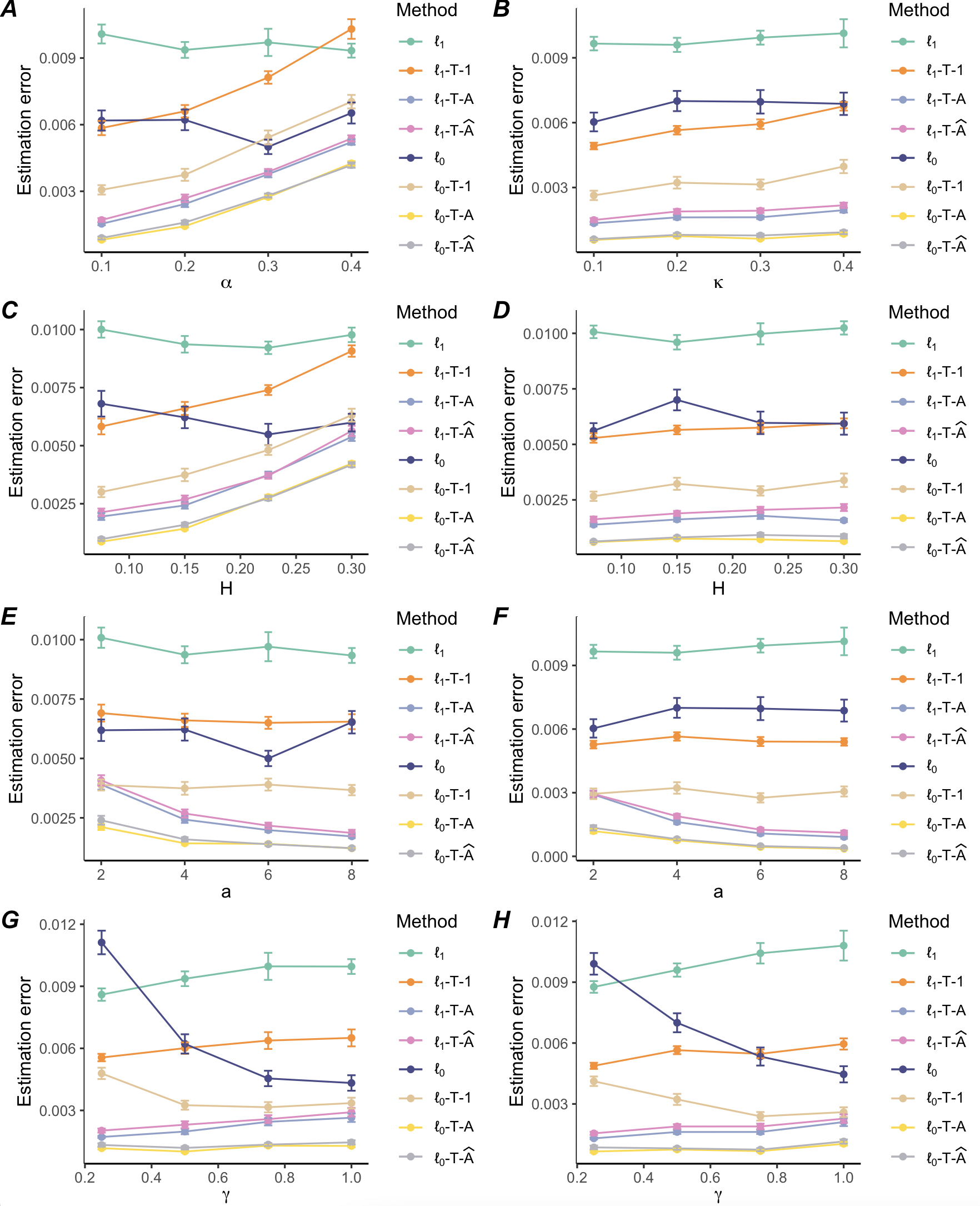}
\caption{Estimation results in Scenario 1. From left to right: Configurations 1 and 2.  From top to bottom: performances with varying discrepancy levels ($\alpha$ and $\kappa$), difference vector changing frequencies ($H$), cardinalities of the informative set ($a$) and change magnitudes~($\gamma$).} \label{Fig_simulation_se_1}
\end{figure}

\subsubsection{Simulation studies with varied source frequencies}\label{sec-num-sup}

To examine the performance of our proposed methods under varying source frequencies, we construct $10$ sources, each associated with a unique observational frequency, i.e.
\begin{equation}\label{eq_order}
    n_k = 200 \times (11-k), \quad k \in [10].
\end{equation}
We vary the number of observed sources $K$ from $1$ to $10$ and depict in \Cref{fig_add} the following quantity: 
\begin{equation}\label{eq-fig-2-quantity}
    K \times \mbox{the harmonic mean of source observations in } [K].
\end{equation}
      
Let the target data be constructed as \textbf{Scenario 1}, with the parameter $\gamma = 0.5$. For $k \in [K]$, let the $k$th source data follow \textbf{Configuration 1} with the specified parameters $\mathcal{A} = [K]$, $H = 0.15$ and $\alpha = 0.2$. The estimators considered include multisource-transferred $\ell_1$- and $\ell_0$-penalised estimators which use
\begin{itemize}
    \item all sources in $[K]$, i.e.~\eqref{estimator-ora-1} and \eqref{estimator-ora-0}, and
    \item a set of sources identified in the selection step, $\widetilde{\mathcal{A}}$, i.e.~\eqref{eq-widetilde-a} with $\widehat{\mathcal{A}} = [K]$. 
\end{itemize} 
Evaluations and choices of tuning parameters remain the same as introduced before.  The simulation results are depicted in \Cref{fig_add}.

\begin{figure}[t] 
\centering 
\includegraphics[width=0.8 \textwidth]{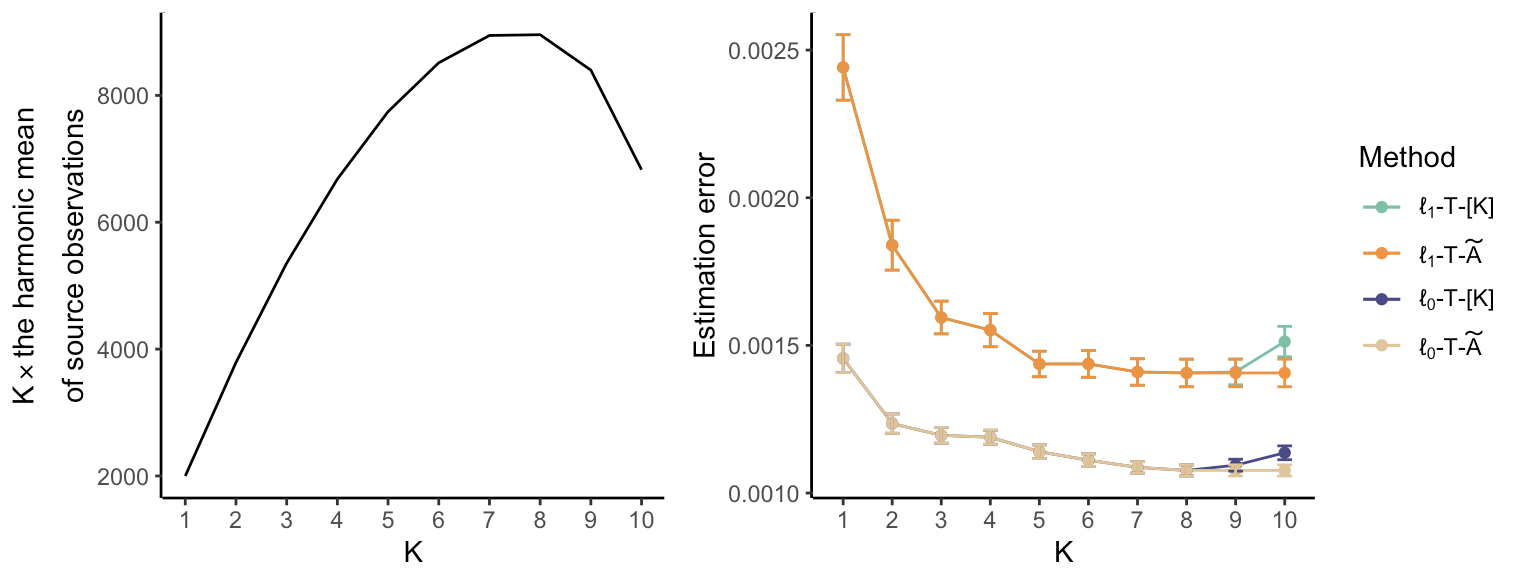}
\caption{Results of \Cref{sec-num-sup}. Left panel:  Relationship between $K$ the number of sources and the quantity in \eqref{eq-fig-2-quantity}, with sample sizes detailed in \eqref{eq_order}. Right panel: Estimation results as $K$ varies. }\label{fig_add}
\end{figure}

The left panel in \Cref{fig_add} shows that increasing the number of source datasets~$K$ does not necessarily raise the value $K^2 (\sum_{k \in [K]}{n_k}^{-1})^{-1}$, which is the denominator in the fluctuation term in our estimation error bounds. This observation aligns with the discussion in \Cref{sec-optional-selction}.  We, again, highlight that our focus is not simply on increased source sample sizes, but on increased source observational frequencies and guarantee adaptability to varying observational frequencies across multiple sources. The study by \cite{cai2023transfer} in the high-frequency functional data analysis assumed a uniform observation across multisources, overlooking the interesting features.

From the right panel in \Cref{fig_add}, we observe that transferred estimators using all sources in~$[K]$ show a turning point at $K = 8$, where estimation errors start to increase. This turning point matches the one in the left panel, where the term $K^2 (\sum_{k \in [K]}{n_k}^{-1} )^{-1}$ shifts from an increasing to decreasing trend.  Comparing the performances of transferred estimators that utilise all sources to those with selected sources, we see that an additional source selection step, as shown in \Cref{sec-optional-selction},  ensures the precision of transferred estimators remains non-decreasing as the number of sources grows.  In the varied source frequency framework, hence, simply adding more beneficial sources does not necessarily improve precision.

\subsubsection{Simulation studies with incorporating target data}\label{sec-num-targrt}

To assess the performance of the estimators proposed in \Cref{sec-extensions}, we conduct simulations comparing transfer learning estimators
\begin{itemize}
    \item for unisource scenarios, i.e.~\eqref{estimator_l_1-1} and \eqref{estimator_l_0-1}, and
    \item for non-selective multisource scenarios, i.e.~\eqref{estimator-ora-1} and \eqref{estimator-ora-0}, 
\end{itemize}
 with those incorporating target data 
 \begin{itemize}
     \item for unisource scenario studied in \Cref{sec:target-uni}, and
     \item for non-selective multisource scenarios studied in \Cref{sec:target-multi}.
 \end{itemize}

The target data are generated according to \textbf{Scenario 1}, with the parameter $\gamma = 0.5$. For $k \in [K]$, let the $k$th source data follow \textbf{Configuration 1} with the specified parameters $\mathcal{A} = [K]$, $H = 0.15$ and $\alpha = 0.2$. The procedure for evaluating and selecting tuning parameters remains as previously described.   The simulation results are provided in \Cref{fig:extension}, where we observe that numerically incorporating target data is beneficial for both unisource and non-selective multisource scenarios.

\begin{figure}[t]
        \centering
        \includegraphics[width=0.45 \textwidth]{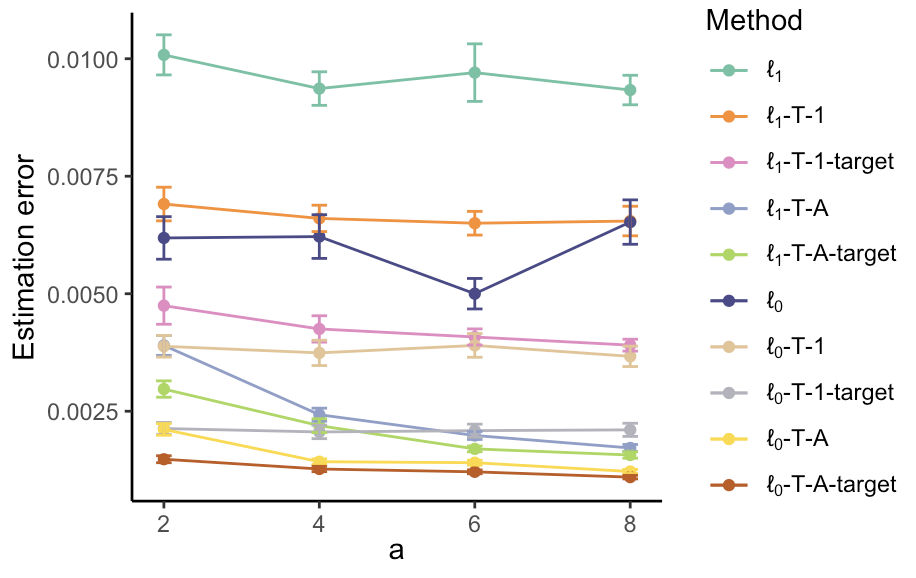}
        \caption{Estimation results for Configuration 1 and Scenario 1 in \Cref{sec-simulation} with estimators defined in \Cref{sec-num-targrt}, the number of source datasets  $K = 10$ and the cardinality of the informative set $a \in \{2, 4, 6, 8\}$.}
        \label{fig:extension}
\end{figure}

\subsection{Real data analysis}\label{sec-real-data}
Consider three real datasets: the U.S.~electric power operations dataset \citep{U.S.Energy},  the GDP \citep{gdpdata} \& IP \citep{ipdata} dataset, and the air quality dataset \citep{airdata}.  
This section focuses exclusively on the analysis of the U.S. electric power operations dataset and the GDP \& IP dataset. The
analysis of the air quality dataset can be found in \Cref{app-real-data}. 
All methods listed in \Cref{sec-simulation} are considered, except those that incorporate known informative multisources. Tuning parameters involved are chosen following \Cref{sec-simulation}.  To evaluate different estimators, we split the target dataset into training and test datasets. Estimators are derived using the training dataset, while the mean squared errors are computed using the test dataset.

\medskip
\noindent\textbf{The U.S.~electric power operations dataset} includes daily records of the electrical power demand of the various regions and sub-balancing authorities within the U.S.~electricity market. Our study specifically focuses on the New York Independent System Operator, which consists of $11$ distinct sub-regions:~Capital, Central, Dunwoodie, Genesee, Hudson Valley, Long Island, Millwood, Mohawk Valley, New York City, North and West.

We conduct two separate analyses, using data collected every Saturday from 2nd July 2020 to 1st July 2023 ($156$ days) for New York City and Central sub-regions as the target datasets.  For both analyses, daily observations from other sub-regions within the same time frame ($1092$ days) serve as source data.  For transfer learning estimators utilising unisource data, the Dunwoodie sub-region is chosen when New York City is the target, due to their similar urban characteristics. For Central being the target, the Mohawk Valley sub-region is selected, given their geographical alignment. 
We also conducted additional analyses using different sub-regions as unisource with results in \Cref{app-real-data}. We split the target dataset with the sample size $n_0 = 156$ into even-week Saturdays as training $y^{\mathrm{train}}$ and odd-week as test $y^{\mathrm{test}}$ datasets.  Using  $y^{\mathrm{train}}$, we obtain the estimated mean vector $f^{\mathrm{est}}$ and then report the mean squared prediction errors $\| f^{\mathrm{est}} - y^{\mathrm{train}} \|_{2/n_0}^2$. Results are shown in \Cref{table:estimation_results}.

\Cref{table:estimation_results} indicates that transfer learning methods, especially those using estimated informative multisources, outperform traditional methods when estimating electricity consumption in both target regions. This emphasises the advantage of leveraging source data to enhance estimation precision.  It is worth mentioning that simply including all multisources without selection for transfer might not improve estimation performance. The results could even be worse than those obtained only using the target data. This highlights the necessity of the informative set detection algorithm, as shown in \Cref{alg_isd}.  The estimated informative sets through \Cref{alg_isd}, consist of sub-regions with electricity consumption patterns similar to those of the target regions. For instance, when New York City is the target, the estimated informative set features Long Island, likely because of their shared urban characteristics and geographic proximity. 

\medskip
\noindent \textbf{The GDP \& IP  dataset} includes quarterly records of GDP, presented as the percentage change compared to the same period in the previous year, and monthly records of the IP data, measured as an index based on the reference period, from various countries worldwide.

As illustrated in  \Cref{sec-intro}, we use data collected from  Q1-2000 to Q4-2022 (92 quarters) for Hungary as the target dataset. Monthly records of IP data from  12 different countries (Bulgaria, Croatia, Czechia, Estonia, Greece, Hungary, Latvia, Lithuania, Poland, Romania, Slovak Republic and Slovenia)  within the same duration (276 months) serve as the source data. For the source data, each entry is adjusted to represent the change compared to the same period in the previous year.
For transfer learning estimators utilising unisource data, Hungary's IP dataset is selected as the source. The target dataset is split into $y^{\mathrm{train}}$ even quarters as training dataset and $y^{\mathrm{test}}$ odd quarters as test dataset.  Using  $y^{\mathrm{train}}$, we obtain the estimated mean vector $f^{\mathrm{est}}$ and then report the mean squared prediction errors $\| f^{\mathrm{est}} - y^{\mathrm{train}} \|_{2/n_0}^2$. Results are presented in \Cref{table:estimation_results}.

From \Cref{table:estimation_results}, we observe that in Hungary's GDP trend estimation, transfer learning methods show a significant advantage over traditional methods. The performance of using all multisource methods and informative multisource methods, however, is comparable. We conjecture that this is due to all the source data coming from East European countries, which likely share some similarities such as economic structures and industrial patterns.

\begin{table}[t]
\centering
\caption{Results for New York City and Central sub-regions in the U.S.~electric power operations dataset, and Hungary's GDP in the GDP \& IP dataset.} 
\begin{tabular}{lcccccccc}
\hline\hline
 & $\ell_1$ & $\ell_1$-T-1 & $\ell_1$-T-$\widehat{\mathcal{A}}$ & $\ell_1$-T-$[K]$ & $\ell_0$ & $\ell_0$-T-1 & $\ell_0$-T-$\widehat{\mathcal{A}}$ & $\ell_0$-T-$[K]$ \\
\hline
New York City & 0.3734 & 0.1952 & 0.1955 & 0.4586 & 0.4131 & 0.3052 & 0.3061 & 0.5813 \\
Central & 0.6390 & 0.3026 & 0.2789 & 0.4025 & 0.7638 & 0.4298 & 0.4521 & 0.5588 \\
Hungry & 0.9260 & 0.5782 & 0.5433 & 0.5022 & 0.7823 & 0.7279 & 0.6787 & 0.6968 \\
\hline
\end{tabular}
\label{table:estimation_results}
\end{table}

\section{Conclusions}\label{sec-conclusion}

In this paper, we study transfer learning for the estimation of piecewise-constant signals, which is the first time seen in the literature. Our approaches leverage higher observational frequencies and accommodate diverse observational frequencies across multiple sources.  We consider both $\ell_1$- and $\ell_0$-penalisation. The theoretical advantages of the transferred $\ell_0$-penalised estimator include its independence from the minimal length condition and its reduced reliance on unknown parameters when selecting tuning parameters, compared to its $\ell_1$-penalised counterpart. 

The current work offers several interesting directions for future studies. Firstly, the foundational frameworks and methodologies used for the source and target models in this study can be generalised to transfer learning for piecewise-polynomial mean estimation.  Specifically, we can investigate the trend filtering method \citep[e.g.][]{tibshirani2014adaptive, ortelli2019prediction, guntuboyina2020adaptive}, which incorporates the $r$th order difference operator $D^{(r)}$, a generalisation of the difference operator $D$ in \eqref{def-D}.    The associated challenges lie in the extension of the alignment operator $P^{n_k, n_0}$, defined in \eqref{def-P}, and the characterisation of its associated eigenvalue spectrum,  like \Cref{lemma-P}.  Secondly, an extension to transfer learning for high-dimensional linear regression models with general designs and piecewise-constant regression coefficients \citep{wang2022denoising, xu2019iterative}, can be studied. The main challenges are to establish measures for the discrepancies between the source and the target regression coefficients and the covariance matrices of their respective covariates.

\section*{Acknowledgements}

Wang is supported by Chancellor's International Scholarship, University of Warwick. Yu is partially supported by the EPSRC and Leverhulme Trust.

\clearpage

\bibliographystyle{plainnat}
\bibliography{references.bib}

\appendix

\section*{Appendices}\label{appendices}

All technical details of this paper can be found in the Appendices. The proofs of \Cref{theorem_l_1-1}, \Cref{theorem_l_0-1}, and the theoretical results in \Cref{sec-multiple-source} and \Cref{sec-extensions} are presented in Appendices \ref{app_1}, \ref{app_2}, \ref{app_3} and \ref{app_4} respectively.  An extension to methodologies in \Cref{sec:target-uni} is presented in \Cref{sec:target-multi}.  Details and results in \Cref{sec-num} are collected in \Cref{sec-add-num}.

\section[]{Proof of Theorem \ref{theorem_l_1-1}}\label{app_1}

The proof of \Cref{theorem_l_1-1} can be found in  \Cref{app_1_proof}.  Relevant notation is provided in \Cref{app_1_notation} and all necessary auxiliary results are in \Cref{app_1_lemma}.

\subsection{Additional notation}\label{app_1_notation}

Let us consider $\mathcal{S} \subseteq [n_0-1]$ as defined in \eqref{def-S} with cardinality $s_0$. 
When  $s_0 >0$, denote  $\mathcal{S} = \{ t_1, \dots, t_{s_0}\}$ and let the sign vector $ q \in \R^{s_0}$ be defined as 
\begin{equation}\label{def-q}
   q_{i} = \mathrm{sign} \big( ( D f )_{t_i}\big), \quad \mbox{for each }  i \in [s_0] 
\end{equation} 
with $D$ defined in \eqref{def-D}, and the set $\mathcal{S}_{\pm} \subseteq [s_0 +1 ]$ be defined as
\begin{equation}\label{def-S_pm}
    \mathcal{S}_{\pm} = \{ i \in \{2, \dots, s_{0} \}\colon q_{i} q_{i-1} =-1 \} \cup \{1,  s_{0}+1 \}.
\end{equation}
Using the difference operator $D$ defined in \eqref{def-D}, if $s_0 < n_0-1$,  let $\Psi^{-\mathcal{S}}  \in \R^{n_0 \times (n_0-1-s_0)}$ be defined as   
\begin{align}\label{def-psi}
\Psi^{-\mathcal{S}} = D _{- \mathcal{S}}^{\top}( D _{- \mathcal{S}} D_{- \mathcal{S}}^{\top})^{-1},
\end{align}  
which, as investigated by \cite{ortelli2019prediction},  is well-defined.

We then introduce the definition of effective sparsity for target signals $f$, as proposed by \cite{ortelli2019prediction}. This can be seen as a substitute for sparsity and then can be used to establish upper bounds.

\begin{definition}[Effective sparsity] \label{def-effective sparsity}
Let the set $\mathcal{S} \subseteq [n_0-1]$  with cardinality $s_0$ be defined in \eqref{def-S}.  If $s_0 >0$, let the sign vector $q \in \R^{s_0}$  be defined in \eqref{def-q}. 
The effective sparsity of target signals $f$,  denoted as $\Gamma_\mathcal{S}^2$, is defined as
\begin{align}
	\Gamma_{\mathcal{S}}^2 = \begin{cases}
        \big\{  \max \big\{   \sum_{j=1}^{n_0-1-s_0} \big\vert\big( 1- w^{- \mathcal{S}}_j \big) \big((D\theta )_{-\mathcal{S}}\big)_j \big\vert\colon \|\theta\|_{1/n_0} =1\big\} \big\}^2,  &\mbox{if } s_0 = 0, \\
	  \big\{  \max \big\{ q^{\top}  (D\theta)_{\mathcal{S}}\colon \|\theta\|_{1/n_0} =1 \big\} \big\}^2,   &  \mbox{if }  s_0 = n_0 - 1, \\
    \big\{  \max \big\{ q^{\top}  (D\theta)_{\mathcal{S}}   - \sum_{j=1}^{n_0-1-s_0} \big \vert\big( 1- w^{- \mathcal{S}}_j \big) \big((D\theta )_{-\mathcal{S}}\big)_j \big\vert\colon \|\theta\|_{1/n_0} =1 \big\} \big\}^2,    & \mbox{otherwise}.
	\end{cases}
  \nonumber
\end{align}
Here, if $s_0 < n_0 - 1$, the vector $ w^{- \mathcal{S}} \in [0, 1]^{n_0-1-s_0}$ is defined as
\begin{align}
	 w^{- \mathcal{S}}_j =  \big\|\Psi^{ - \mathcal{S}}_{, j} \big\|_{1/n_0} \Big( \max_{j \in [n_0-1-s_0]} \big\|\Psi^{ - \mathcal{S}}_{, j} \big\|_{1/n_0} \Big)^{-1}, \quad  j \in [n_0-1-s_0], \nonumber
\end{align} 
with $j$th column of  $\Psi^{ - \mathcal{S}}$ denoted as  $\Psi^{ - \mathcal{S}}_{, j}$.  
\end{definition}

\subsection{Proof of Theorem \ref{theorem_l_1-1}}\label{app_1_proof}

\begin{proof}[Proof of Theorem \ref{theorem_l_1-1}]
This proof consists of four steps. In \textbf{Step 1}, we decompose our target quantity into several terms.
In \textbf{Step 2} and \textbf{Step 3}, we deal with these terms separately.  In \textbf{Step 4}, we gather all the pieces and complete the proof.

\medskip
\noindent\textbf{Step 1.}
    It directly follows from the definition of  $\widehat{f}$ that
\begin{align}
&  \frac{1}{2n_0} \big\|\widetilde{P}^{n_0, n_1} y^{(1)} -  \widehat{f}  \big\|_2^2 +\lambda \|D \widehat{f} \|_1    \leq \frac{1}{2n_0} \big\|\widetilde{P}^{n_0, n_1} y^{(1)}   -   f  \big\|_2^2 +\lambda \|D  f \|_1.  \nonumber
\end{align}
Given that $y^{(1)} = f^{(1)} + \epsilon^{(1)}$ with $f^{(1)} = P^{n_1, n_0} f + \delta$, we obtain that 
\begin{align}
  \frac{1}{2n_{0}} \big\|\widehat{f} - \widetilde{P}^{n_0, n_1} P^{n_1, n_0} f  \big\|_2^2  
\leq  &    \frac{1}{2n_{0}} \big\|f - \widetilde{P}^{n_0, n_1} P^{n_1, n_0} f  \big\|_2^2  +   \frac{1}{n_0}\widetilde{\epsilon}^{\top}\big( \widehat{f}  - f \big)   + \lambda \|D   f \|_1 
\nonumber\\
& \hspace{0.5cm}-  \lambda   \|D   \widehat{f} \|_1 + \frac{1}{n_0}\big(   \widetilde{P}^{n_0, n_1} \delta\big)^{\top} \big( \widehat{f}  - f \big), \nonumber 
\end{align}
with $\widetilde{\epsilon} =\widetilde{P}^{n_0, n_1} \epsilon^{(1)} \in \R^{n_0}$. By $n_1 \geq n_0$ and \Cref{lemma-P}, it holds that
\begin{align}\label{theorem1_1}
 \frac{1}{2n_{0}} \big\|\widehat{f} - f  \big\|_2^2   
\leq  &   \frac{1}{n_0}\widetilde{\epsilon}^{\top} \big( \widehat{f}  - f \big) 
  + \lambda \|D   f \|_1 - \lambda   \|D   \widehat{f} \|_1
+ \frac{1}{n_0}\big(   \widetilde{P}^{n_0, n_1} \delta\big)^{\top} \big( \widehat{f}  -   f\big) \nonumber\\
= &  (I.1) + (I.2) + (I. 3) + (II) =(I) + (II).
\end{align}

\medskip
\noindent\textbf{Step 2.} In this step, we deal with  the term $(I)$ in \eqref{theorem1_1}.
We claim that if 
\begin{align}\label{theorem1_lambda}
    \lambda = C_{\lambda}\big(n_{\max}^{(0)}/(n_1 n_0)\big)^{1/2},    
\end{align}
with $n_{\max}^{(0)}$  defined in \Cref{ass-fusedlasso} and an absolute constant $C_{\lambda} >0$,  then it holds that 
\begin{align}\label{theorem1_2}
 \P  \Bigg\{  (I)
    \leq &   \frac{3}{8n_0} \big\|  \widehat{f} - f  \big\|_2^2  + C_{\mathcal{E}} \frac{n_{\max}^{(0)} /\widetilde{n}_{\min}^{(0)} (s_0+1) \big(1 +\log(n_{\max}^{(0)}) \big)}{n_1}  \Bigg\} \geq 1 - n_0^{-c_{\mathcal{E}}},
 \end{align}
with $\widetilde{n}_{\min}^{(0)}$ defined in \Cref{remark_l_1_one} and absolute constants $C_{\mathcal{E}}, c_{\mathcal{E}} >0$. Then we prove this claim under two scenarios $s_0 = n_0 - 1$ and $s_0 < n_0 - 1$ in \textbf{Step 2.1} and \textbf{Step 2.2}, respectively. Before proving the claim, by \Cref{lemma-alignment-error} and the assumption $n_1 \geq n_0$, we have that 
\begin{equation}\label{theorem1_3}
\{ \widetilde{\epsilon}_i\}_{i=1}^{n_0} \overset{\mbox{ind.}}{\sim}  \mbox{mean-zero }C_{\sigma}\big(2 n_0/ n_1\big)^{1/2}\mbox{-sub-Gaussian}.
\end{equation}

\medskip
\noindent\textbf{Step 2.1.} In this step, we prove the claim stated in \eqref{theorem1_2} when $s_0 = n_0 - 1$.
By \eqref{theorem1_3} and general Hoeffding inequality \citep[e.g. Theorem 2.6.3 in][]{vershynin2018high},  we can conclude that  there exists an absolute constant  $c_0 >0$ such that
\begin{align}\label{theorem1_4}
\P \big\{ \mathcal{E}_1 \big\} \geq 1 - \exp\{ -c_0 n_0 \} \quad \mbox{with} \quad 
\mathcal{E}_1 =  \big\{ (I) \leq n_1^{-1/2} \big\|  \widehat{f} - f  \big\|_2 + \lambda \|D   f \|_1 - \lambda   \|D   \widehat{f} \|_1 \big\}.
\end{align}
From now on, we assume that the event $\mathcal{E}_1 $ holds in this sub-step.
By applying Cauchy--Schwartz inequality and the fact that $\vert ab \vert \leq a^2+ b^2/4$, we obtain that 
\begin{align}\label{theorem1_5}
(I) \leq  \frac{1}{4n_0}\big\|  \widehat{f} - f  \big\|_2^2  + \frac{n_0}{n_1} +  \lambda \|D   f \|_1 - \lambda   \|D \widehat{f} \|_1.
 \end{align}
 With the sign vector $q \in \R^{s_0}$ defined in \eqref{def-q}, we have that 
\begin{align}\label{theorem1_6}
\|Df  \|_1  = \big\|(D   f)_{\mathcal{S}} \big\|_1  = q^{\top}(D   f)_{\mathcal{S}}  \quad \mbox{and}  \quad \|D \widehat{f} \|_1  = \big\|(D \widehat{f})_{\mathcal{S}}  \big\|_1  \geq q^{\top}(D   \widehat{f})_{\mathcal{S}}.
\end{align}
Combining \eqref{theorem1_5} and  \eqref{theorem1_6}, we have that 
 \begin{align}\label{theorem1_7}
    (I) \leq  & \frac{1}{4n_0}\big\|  \widehat{f} - f  \big\|_2^2  + \frac{n_0}{n_1} + \lambda  q^{\top}\big(D  (f- \widehat{f}) \big)_{\mathcal{S}} 
    \leq  \frac{1}{4n_0}\big\|  \widehat{f} - f  \big\|_2^2  + \frac{n_0}{n_1} + \lambda \Gamma_{\mathcal{S}} \big\|  \widehat{f} - f  \big\|_{1/n_0}  \nonumber\\
    \leq & \frac{3}{8n_0}\big\|  \widehat{f} - f  \big\|_2^2  + \frac{n_0}{n_1} + 2 \lambda^2 \Gamma_{\mathcal{S}}^2
    \leq \frac{3}{8n_0}\big\|  \widehat{f} - f  \big\|_2^2 +\big(C_{\lambda}^2 C_{\Gamma}+1\big) \frac{(s_0+1)} {n_1 }, 
 \end{align} 
 where \begin{itemize}
     \item  the second inequality follows from the definition of the effective sparsity $\Gamma_{\mathcal{S}}$ in \Cref{def-effective sparsity},
     \item  the third inequality is based on the fact that $\vert ab \vert \leq 2a^2+ b^2/8$,
     \item and the final inequality follows from the choice of $\lambda$ in \eqref{theorem1_lambda}, \Cref{lemma-Gamma} and for any $i \in [s_0+1]$, $n_{i}^{(0)} = 1$ and $n_{\max}^{(0)} = 1$.
 \end{itemize}
Since when $s_0 = n_0 - 1$, we have $\widetilde{n}_{\min}^{(0)} = n_{\max}^{(0)} = 1$, then combining \eqref{theorem1_4} and \eqref{theorem1_7}, it holds  with an absolute constant $c_1 >0$ that
\begin{align}
     \P  \Bigg\{  (I)
    \leq &   \frac{3}{8n_0} \big\|  \widehat{f} - f  \big\|_2^2  + \big(C_{\lambda}^2 C_{\Gamma}+1\big) \frac{n_{\max}^{(0)} /\widetilde{n}_{\min}^{(0)} (s_0+1) \big(1 +\log(n_{\max}^{(0)}) \big)}{n_1}  \Bigg\} \geq 1 - n_0^{-c_{1}}, \nonumber
\end{align}
which proves \eqref{theorem1_2} when $s_0 = n_0 -1$.

\medskip
\noindent\textbf{Step 2.2.} In this step, we prove the claim stated in \eqref{theorem1_2} when $s_0 < n_0 - 1$.

By \eqref{theorem1_3}  and \Cref{theorm-error-fusedlasso}, we obtain  that    $\P \{ \mathcal{E}_2 \} \geq 1 - \exp \big\{-c_2 (s_0+1) \log\big(n_0/(s_0+1) \big) \big\}$  with 
\begin{align}
  \mathcal{E}_2 =    \Bigg\{  (I.1)
    \leq & \frac{1}{4n_0} \big\|  \widehat{f} - f  \big\|_2^2 + C_2  \frac{(s_0+1) \log( n_{\max}^{(0)})}{n_1}  + \lambda 
 \sum_{i=1}^{n_0-1-s_0} \big\vert w^{-\mathcal{S}}_i \big( ( D\widehat{f} - D f)_{-\mathcal{S}} \big)_{i}  \big\vert \Bigg\}, \nonumber
 \end{align}
where $n_{\max}^{(0)}$ is defined in \Cref{ass-fusedlasso} and $C_2, c_2 > 0 $ are absolute constants, if  $\lambda$ satisfies 
 \begin{align}\label{theorem1_lambda_check}
 \lambda \geq   \lambda_{\mathcal{S}} \quad \mbox{with} \quad 
 \lambda_{\mathcal{S}} = C_{\lambda_0}  \max_{j \in [n_{0} - 1- s_{0}]} \big\|\Psi^{ -\mathcal{S}}_{,j} \big\|_{1/n_{0}}    n_1^{-1/2} 
  \leq  C_{\lambda_0} \sqrt{ \frac{ n_{\max}^{(0)}}{2 n_1 n_0}},  
\end{align}
where the last inequality follows from \Cref{lemma-Psi} and $ C_{\lambda_0}  >0$ is an absolute constant.
Note that the function
\[
    s \mapsto  -c_2 (s +1) \log ( n_{0}/ (s + 1)))
\]
is convex, so its maximum over $[ 0:(n_{0}-2)]$ is attained at either $s =0$ or $s = n_{0} - 2$.  Thus, it holds with an absolute constant $c_3>0$ that
\begin{align}\label{theorem1_8}
\P \{ \mathcal{E}_2 \} \geq 1 - \max \big\{  \exp \{ - c_{2}\log ( n_{0} ) \},  \exp \{  -c_{2} (n_{0} -1)\log ( n_{0}/(n_{0}-1) ) \} \big\} \geq 1- n_{0}^{-c_3}.  
\end{align}

 From now on we assume that the event $\mathcal{E}_2$ holds in this sub-step.   Note that 
\begin{align}\label{theorem1_9}
&  \lambda \sum_{i=1}^{n_0-1-s_0} \big\vert w^{-\mathcal{S}}_i \big(( D   \widehat{f} - D f  )_{-\mathcal{S}} \big)_{i}  \big\vert   + \lambda \|D   f \|_1 - \lambda \|D   \widehat{f} \|_1   \nonumber \\
= &  - \lambda \sum_{i=1}^{n_0-1-s_0} \big( 1- w^{-\mathcal{S}}_i \big) \big\vert \big(( D   \widehat{f} - D f  )_{-\mathcal{S}} \big)_{i}  \big\vert   + \lambda  \big\|( D   \widehat{f} - D f  )_{-\mathcal{S}} \big\|_1  + \lambda \|D   f \|_1 - \lambda \|D   \widehat{f} \|_1   \nonumber \\
\leq&    - \lambda \sum_{i=1}^{n_0-1-s_0} \big( 1- w^{-\mathcal{S}}_i \big) \big\vert \big(( D   \widehat{f} - D f  )_{-\mathcal{S}} \big)_{i}  \big\vert   +\lambda \big\|(D   f)_{\mathcal{S}} \|_1 - \lambda \big\| (D   \widehat{f})_{\mathcal{S}}\|_1 +  2\lambda  \big\| ( D f )_{-\mathcal{S}} \big\|_1   \nonumber\\
\leq&   - \lambda \sum_{i=1}^{n_0-1-s_0} \big( 1- w^{-\mathcal{S}}_i \big) \big\vert \big(( D   \widehat{f} - D f  )_{-\mathcal{S}} \big)_{i}  \big\vert   + \lambda q^{\top} \big(D (f - \widehat{f})_{\mathcal{S}} \big)+  2\lambda  \big\| ( D f )_{-\mathcal{S}} \big\|_1   \nonumber\\
\leq&  \lambda \Gamma_\mathcal{S} \big\| \widehat{f} - f  \big\|_{1/n_0}+  2\lambda  \big\| ( D f )_{-\mathcal{S}} \big\|_1 
\leq \frac{1}{8n_{0}}\big\| \widehat{f} - f  \big\|_2^2 +    2\lambda^2 \Gamma_\mathcal{S}^2 +  2\lambda  \big\| ( D f )_{-\mathcal{S}} \big\|_1,  
 \end{align}
where
\begin{itemize}
    \item the first equality is based on the fact that $w^{-\mathcal{S}} \in [0, 1]^{n_0-1-s_0}$ defined in \Cref{def-effective sparsity}, 
    \item  the first inequality follows from the reverse triangle inequality,
    \item  the second inequality is based on the fact that $ \big\|(D   f)_{\mathcal{S}} \big\|_1  = q^{\top}(D   f)_{\mathcal{S}}$ and  $  \big\|(D \widehat{f})_\mathcal{S}  \big\|_1  \geq q^{\top}(D   \widehat{f})_{\mathcal{S}}$. Specifically, the sign vector $q$ is defined in \eqref{def-q} when $s_0 >0 $ and is set to  $q = 0$ for $s_0 = 0$,
    \item  the third inequality follows from the definition of the effective sparsity $\Gamma_\mathcal{S}$ in \Cref{def-effective sparsity},
     \item  the final inequality is based on the fact $\vert ab \vert \leq 2a^2+ b^2/8$.
\end{itemize}
By the construction of $\mathcal{S}$ in \eqref{def-S}, it holds that   
\begin{align}\label{theorem1_10}
       \| ( D f )_{-\mathcal{S}} \|_1   = 0.
\end{align}
By \Cref{lemma-Gamma}, we obtain a deterministic result with an absolute  constant $C_{\Gamma}>0$ as follows
\begin{align}
   \Gamma_\mathcal{S}^2 \leq  \begin{cases}
   C_{\Gamma}  \log(n_{0}), \quad &\mbox{if } s_0 = 0, \\
        C_{\Gamma}  n_{0} \bigg( \sum_{i \in \mathcal{S}_{\pm}}\frac{1+\log ( n^{(0)}_i )}{n^{(0)}_i } +  \sum_{i \in  S \setminus  \mathcal{S}_{\pm}} \frac{1+\log (n^{(0))}_i)}{n^{(0)}_{\max}  } \bigg)  , \quad & \mbox{otherwise}.
           \end{cases} \nonumber
 \end{align}
 Then it holds with $n_{\max}^{(0)}$ and $\widetilde{n}_{\min}^{(0)}$ defined in \Cref{ass-fusedlasso} and \Cref{remark_l_1_one}, respectively, that 
 \begin{align}\label{theorem1_11}
       \Gamma_\mathcal{S}^2 \leq 
       C_{\Gamma} \frac{n_{0} (s_{0}+1) \big(1 +\log(n_{\max}^{(0)}) \big)}{ \widetilde{n}_{\min}^{(0)} }.
 \end{align}
Then combining \eqref{theorem1_8}, \eqref{theorem1_9}, \eqref{theorem1_10} and \eqref{theorem1_11}, with $n_{\max}^{(0)} \geq n_0/(s_0+1)$ and the choice of $\lambda$ in \eqref{theorem1_lambda} which satisfies \eqref{theorem1_lambda_check}, it holds with an absolute constant $C_3 > 0$ that
 \begin{align}
 \P \Bigg\{(I) \leq &     \frac{3}{8n_0} \big\|  \widehat{f} - f  \big\|_2^2  + C_3 \frac{n_{\max}^{(0)} /\widetilde{n}_{\min}^{(0)} (s_0+1) \big(1 +\log(n_{\max}^{(0)}) \big)}{n_1}  \Bigg\} \geq 1 - n_0^{-c_{3}}, \nonumber
 \end{align}
which proves \eqref{theorem1_2}  when $s_0 < n_0 - 1$.

\medskip
\noindent\textbf{Step 3.} We consider the term $(II)$ in   \eqref{theorem1_1}. Note that by applying the Cauchy-Schwartz inequality and utilising the fact that $\vert ab \vert \leq 4a^2 + b^2/16$, we can establish  that 
 \begin{align}\label{theorem1_12}
   (II) \leq   &    \frac{4\| \widetilde{P}^{n_0, n_1} \delta\|_2^2 }{n_{0}}  + \frac{1}{16n_{0}} \big\| \widehat{f} - f   \big\|_2^2   \leq     \frac{ 8 \| \delta\|_2^2}{ n_1}  + \frac{1}{16n_{0}} \big\| \widehat{f} - f   \big\|_2^2, 
\end{align}
where the last inequality follows from $n_1 \geq n_0$ and \Cref{lemma-alignment-delta}.

\medskip
\noindent\textbf{Step 4.} Choosing  $\lambda$ as \eqref{theorem1_lambda}, and combining \eqref{theorem1_1}, \eqref{theorem1_2} and \eqref{theorem1_12}, we have with an absolute constant $C_4 >0$ that  
\begin{align}
 \P  \Bigg\{  \big\|  \widehat{f} - f  \big\|_{1/n_0}^2
    \leq &    C_4  \frac{n_{\max}^{(0)} /\widetilde{n}_{\min}^{(0)} (s_0+1) \big(1 +\log(n_{\max}^{(0)}) \big) +  \|\delta \|_2^2}{n_1}   \Bigg\} \geq 1 - n_0^{-c_{\mathcal{E}}}. \nonumber
 \end{align}
Under \Cref{ass-fusedlasso},  if $\lambda = C_{\lambda}\big((s_0+1)n_1\big)^{-1/2}$, then it holds with an absolute constant $C_5 >0$  that  
 \begin{align}
 \P  \Bigg\{  \big\|  \widehat{f} - f  \big\|_{1/n_0}^2
    \leq &    C_5  \frac{ (s_0+1) \big\{1 +\log \big(n_0 /( s_0 +1) \big) \big\} +  \|\delta \|_2^2}{n_1}   \Bigg\} \geq 1 - n_0^{-c_{\mathcal{E}}}, \nonumber
 \end{align}
 completing the proof.
\end{proof}

\subsection{Additional lemmas}\label{app_1_lemma}

\begin{lemma}\label{lemma-P}  
For any $n, m \in \mathbb{N}^{*}$, let the alignment operator $P^{n, m} \in \R^{n \times m}$ be defined in \eqref{def-P}. We have that 
\begin{align}\label{lemma-P-reuslt-1}
        \sqrt{  \lceil n/m\rceil  - 1} \leq  \sigma_{m}(  P^{n, m} )  \leq     \sigma_{1}(  P^{n, m} ) \leq  \sqrt{ \lceil n/m \rceil},
\end{align}
and if $n=m$, $P^{n, m} = I_n$.

For any $n, m \in \mathbb{N}^{*}$ with $n \leq m$, let the alignment operator $\widetilde{P}^{n, m} \in \R^{n \times m}$ be defined in \eqref{def-P-alt}. If $m = n $, then $\widetilde{P}^{n, m} = I_n$. If $m >n$ then we have that 
\begin{align}\label{lemma-P-reuslt-2}
   \big(  \lceil m/n\rceil  \big)^{-1/2} \leq  \sigma_{n} (  \widetilde{P}^{n, m} )  \leq     \sigma_{1} (  \widetilde{P}^{n, m} ) \leq  \big( \lceil m/n \rceil -1  \big)^{-1/2},
\end{align}
and $\widetilde{P}^{n, m} P^{m, n} = I_{n}$.
\end{lemma}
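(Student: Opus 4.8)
The plan is to exploit that both alignment operators are \emph{block operators} with mutually orthogonal columns (respectively rows), which diagonalises the relevant Gram matrix and reduces everything to controlling block sizes. I would focus on the regime $n \geq m$ (the one used throughout, since $n_1 \geq n_0$), the complementary degenerate case being trivial. First, for $P^{n,m}$: by \eqref{def-P} the nonzero entries of column $j \in [m]$ occupy the consecutive rows $\lceil(j-1)n/m\rceil + 1, \dots, \lceil jn/m\rceil$, all equal to $1$. These index blocks partition $[n]$, so distinct columns have disjoint supports and are orthogonal. Writing $b_j = \lceil jn/m\rceil - \lceil(j-1)n/m\rceil$ for the $j$th block size, this gives $(P^{n,m})^{\top}P^{n,m} = \diag(b_1, \dots, b_m)$, so the singular values of $P^{n,m}$ are exactly $\{\sqrt{b_j}\}_{j \in [m]}$, with $\sigma_1(P^{n,m}) = \sqrt{\max_j b_j}$ and $\sigma_m(P^{n,m}) = \sqrt{\min_j b_j}$.

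The remaining step for \eqref{lemma-P-reuslt-1} is an elementary size bound. Since $jn/m$ and $(j-1)n/m$ differ by exactly $n/m$, the ceiling difference $b_j$ always lies between $\lfloor n/m\rfloor$ and $\lceil n/m\rceil$; hence $\min_j b_j \geq \lceil n/m\rceil - 1$ and $\max_j b_j \leq \lceil n/m\rceil$, and taking square roots yields the two-sided bound. When $n = m$ every block has size one, so $P^{n,m} = I_n$ directly.

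The argument for $\widetilde{P}^{n,m}$ with $n \leq m$ is dual. By \eqref{def-P-alt}, row $i$ is supported on the consecutive columns $\lceil(i-1)m/n\rceil + 1, \dots, \lceil im/n\rceil$, with every nonzero entry equal to $\tilde{b}_i^{-1}$, where $\tilde{b}_i = \lceil im/n\rceil - \lceil(i-1)m/n\rceil$. Thus $\widetilde{P}^{n,m}$ has orthogonal rows and $\widetilde{P}^{n,m}(\widetilde{P}^{n,m})^{\top} = \diag(\tilde{b}_1^{-1}, \dots, \tilde{b}_n^{-1})$, so its singular values are $\{\tilde{b}_i^{-1/2}\}_{i \in [n]}$. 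The same floor/ceiling bound on $\tilde{b}_i$, now inverted, gives \eqref{lemma-P-reuslt-2}, and $m = n$ again forces unit blocks and $\widetilde{P}^{n,m} = I_n$.

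Finally, for the identity $\widetilde{P}^{n,m}P^{m,n} = I_n$, the key observation is that $P^{m,n}$ and $\widetilde{P}^{n,m}$ use the \emph{same} partition of $[m]$ into $n$ blocks: $P^{m,n}$ expands an $n$-vector by repeating the $i$th entry across block $i$, while $\widetilde{P}^{n,m}$ averages coordinates within block $i$. Concretely, row $i$ of $\widetilde{P}^{n,m}$ and column $i$ of $P^{m,n}$ share support exactly on block $i$, so the $(i,i)$ entry of the product equals $\sum_{\ell} (\widetilde{P}^{n,m})_{i,\ell}(P^{m,n})_{\ell,i} = \tilde{b}_i \cdot \tilde{b}_i^{-1} = 1$, whereas for $i \neq i'$ the supports of row $i$ and column $i'$ are disjoint blocks, giving a vanishing off-diagonal entry. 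I expect the only genuinely delicate point of the whole lemma to be checking that the two operators index \emph{identical} blocks, so that averaging an expansion collapses to the identity; once the common block decomposition is pinned down, every remaining step is a routine floor/ceiling estimate.
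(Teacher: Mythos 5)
Your proposal is correct and follows essentially the same route as the paper: compute the Gram matrices $(P^{n,m})^{\top}P^{n,m}$ and $\widetilde{P}^{n,m}(\widetilde{P}^{n,m})^{\top}$, observe they are diagonal with entries given by the block sizes $\lceil jn/m\rceil-\lceil(j-1)n/m\rceil$ (respectively their reciprocals), bound these via the elementary ceiling inequality, and verify $\widetilde{P}^{n,m}P^{m,n}=I_n$ entrywise using the shared block partition. No gaps.
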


\begin{proof}
We first consider $P^{n, m} \in \R^{n \times m}$ defined in \eqref{def-P}.  Based on the definition of $P^{n, m}$ in \eqref{def-P}, for any $i, j \in [m]$,  we have that 
\begin{align}
  \big((P^{n, m})^{\top} P^{n, m} \big)_{i, j} = &  \sum_{l=1}^n   \mathbbm{1}_{ \{ \lceil(i-1)  n/ m \rceil +1 \leq l  \leq \lceil i n / m \rceil \}} \mathbbm{1}_{ \{ \lceil(j-1)  n/ m \rceil +1 \leq l  \leq \lceil j n / m \rceil \}  }  \nonumber\\
 = & \begin{cases}
       \lceil j n/m \rceil - \lceil (j-1) n/m \rceil, & \mbox{if } i=j, \\
       0,  &\mbox{otherwise}. \nonumber
   \end{cases}
\end{align}
and if $n=m$,  $P^{n, m} = I_n$.
Since for any $j \in [m]$,
\[ 
    \lceil (j-1)n/m\rceil + \lceil n/m\rceil - 1\leq   \lceil jn/m \rceil \leq \lceil (j-1)n/m\rceil + \lceil n/m\rceil,
\]
it holds with  $\lambda_{1} \big((P^{n, m})^{\top} P^{n, m} \big)\geq \cdots \geq \lambda_{m} \big((P^{n, m})^{\top} P^{n, m} \big))$ being the eigenvalues of $(P^{n, m})^{\top} P^{n, m}$ that 
\[
    \lceil n/m\rceil - 1 \leq     \lambda_{m} \big((P^{n, m})^{\top} P^{n, m} \big) \leq  \lambda_{1} \big(P^{n, m})^{\top} P^{n, m} \big) \leq  \lceil n/m\rceil,
\]
which proves \eqref{lemma-P-reuslt-1}.

Next, we consider  $\widetilde{P}^{n, m} \in \R^{n \times m}$ defined in \eqref{def-P-alt} with $m \geq  n$. By the definition of $\widetilde{P}^{n, m} $ in \eqref{def-P-alt}, we have that for any $i, j \in [n]$,
\begin{align}
    \big( \widetilde{P}^{n, m}  (\widetilde{P}^{n, m} )^{\top}\big)_{i, j}  
    = & \sum_{l=1}^m \frac{\mathbbm{1}_{ \{ \lceil(i-1)m/n  \rceil  + 1 \leq l \leq \lceil im/n \rceil  \} }}{\lceil im/n \rceil - \lceil(i-1)m/n\rceil}  \frac{\mathbbm{1}_{ \{ \lceil(j-1)m/n  \rceil  + 1 \leq l \leq \lceil jm/n \rceil  \} }}{\lceil jm/n \rceil - \lceil(j-1)m/n\rceil} \nonumber \\
    = & \begin{cases}
       \big( \lceil jm/n \rceil - \lceil (j-1)m/n\rceil \big)^{-1}, & \mbox{if } i=j, \\
       0,  &\mbox{otherwise},\nonumber
   \end{cases}
\end{align}
and if $n=m$, $\widetilde{P}^{n, m} = I_n$.
Since
\[ 
    \lceil (j-1)m/n\rceil + \lceil m/n\rceil - 1\leq   \lceil jm/n \rceil \leq \lceil (j-1)m/n\rceil + \lceil m/n\rceil,
\]
for any $m >n $, it holds with $\lambda_{1} \big( \widetilde{P}^{n, m}  (\widetilde{P}^{n, m})^{\top}\big) \geq \cdots \geq \lambda_{n} \big( \widetilde{P}^{n, m}  (\widetilde{P}^{n, m} )^{\top}\big)$ being the eigenvalues of $ \widetilde{P}^{n, m}  (\widetilde{P}^{n, m} )^{\top}$  that 
\[
    \big(\lceil m/n\rceil  \big)^{-1} \leq     \lambda_{n}\big( P^{n, m} (P^{n, m})^{\top}\big) \leq  \lambda_{1} \big( P^{n, m} (P^{n, m})^{\top}\big) \leq  \big( \lceil m/n \rceil -1 \big)^{-1},
\]
which proves \eqref{lemma-P-reuslt-2}.
Additionally, by the definition of $\widetilde{P}^{n, m}$ and $P^{m, n}$ in \eqref{def-P-alt} and \eqref{def-P}, respectively, under the assumption $m>n$, it holds that for any $i, j \in [n]$ 
\begin{align}
  \big(\widetilde{P}^{n, m} P^{m, n} \big)_{i, j} 
  & =  \sum_{l=1}^{m} \frac{\mathbbm{1}_{ \{ \lceil(i-1)m/n  \rceil  + 1 \leq l \leq \lceil im/n \rceil  \} }}{\lceil im/n \rceil - \lceil(i-1)m/n\rceil}
  \mathbbm{1}_{ \{ \lceil(j-1)  m/ n \rceil +1 \leq l  \leq \lceil j m / n \rceil \}} \nonumber \\
 = &  \begin{cases}
       1, & \mbox{if } i=j, \\
       0,  &\mbox{otherwise},
   \end{cases} \nonumber
\end{align}
which proves $\widetilde{P}^{n, m} P^{m, n} = I_{n}$, completing the proof.
\end{proof}

\begin{lemma}\label{lemma-alignment-error}
For any $n, m \in \mathbb{N}^{*}$ with $m \geq n$, let the alignment operator $\widetilde{P}^{n, m} \in \R^{n \times m}$ be defined in \eqref{def-P-alt}, and assume that $\{ \widetilde{\epsilon}_{i} \}_{i=1}^{m}$ are mutually independent mean-zero $C_{\sigma}$-sub-Gaussian variables. We have that 
$\{ (\widetilde{P}^{n, m} \widetilde{\epsilon})_j \}_{j=1}^n$ are mutually independent mean-zero  $C_{\sigma}(2 n/ m)^{1/2}$-sub-Gaussian variables.
\end{lemma}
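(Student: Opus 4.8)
The plan is to exploit the explicit block structure of $\widetilde{P}^{n,m}$. By \eqref{def-P-alt}, the $i$th row is supported on the index block $B_i = \{\lceil (i-1)m/n\rceil + 1, \dots, \lceil im/n\rceil\}$ and is constant there, equal to $1/L_i$ with $L_i = |B_i| = \lceil im/n\rceil - \lceil (i-1)m/n\rceil$. Hence $(\widetilde{P}^{n,m}\widetilde{\epsilon})_i = L_i^{-1}\sum_{j\in B_i}\widetilde{\epsilon}_j$. The first observation I would record is that consecutive blocks are contiguous and disjoint: $B_{i+1}$ begins at $\lceil im/n\rceil + 1$, exactly where $B_i$ ends, so $\{B_i\}_{i=1}^n$ partitions $[m]$. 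This, together with the ceiling telescoping inequality already established in the proof of \Cref{lemma-P}, is the only structural input.

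Given the partition, mean-zero and mutual independence are immediate: each coordinate is a linear combination of mean-zero variables, and distinct coordinates are measurable functions of disjoint sub-collections of the independent family $\{\widetilde{\epsilon}_j\}$. The remaining, and only genuinely quantitative, task is to control the sub-Gaussian parameter of a single block average. Here I would invoke the standard sub-Gaussian summation property: an average of $L$ independent mean-zero $C_\sigma$-sub-Gaussian variables has variance proxy of order $C_\sigma^2/L$, so that $\|(\widetilde{P}^{n,m}\widetilde{\epsilon})_i\|_{\psi_2} \lesssim C_\sigma/\sqrt{L_i}$. Converting $1/\sqrt{L_i}$ into the stated $\sqrt{2n/m}$ then reduces entirely to a lower bound on the block length.

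The crux is therefore the block-length bound $L_i \ge m/(2n)$. From the ceiling inequality $\lceil(i-1)m/n\rceil + \lceil m/n\rceil - 1 \le \lceil im/n\rceil \le \lceil(i-1)m/n\rceil + \lceil m/n\rceil$, one gets $L_i \in \{\lceil m/n\rceil - 1, \lceil m/n\rceil\}$, hence $L_i \ge \lceil m/n\rceil - 1$. I would then split into cases: if $m = n$ then $\widetilde{P}^{n,m} = I_n$ and the claim is trivial (parameter $C_\sigma \le \sqrt{2}\,C_\sigma$); if $m \ge 2n$ then $L_i \ge m/n - 1 \ge m/(2n)$; and if $n < m < 2n$ then $\lceil m/n\rceil = 2$ gives $L_i \ge 1 > m/(2n)$. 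In all cases $L_i \ge m/(2n)$, i.e.\ $1/\sqrt{L_i} \le \sqrt{2n/m}$.

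The one place to be careful is tracking the absolute constant in the averaging step, since the target constant $\sqrt{2}$ is essentially tight: the block-length bound is nearly an equality when $m \approx 2n$, where the blocks have length about $2$. I would therefore obtain the $\psi_2$ parameter of a block average directly through the moment generating function, which tensorises cleanly over the independent summands inside a block and then returns to the Orlicz norm, rather than through a lossy subadditivity bound on $\|\cdot\|_{\psi_2}^2$. Combining the resulting averaging constant with $L_i \ge m/(2n)$ is what must be arranged to deliver exactly $C_\sigma(2n/m)^{1/2}$, and this constant-bookkeeping is the only delicate part of an otherwise routine argument.
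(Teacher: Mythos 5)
Your proposal is correct and follows essentially the same route as the paper's proof: block-average representation of $(\widetilde{P}^{n,m}\widetilde{\epsilon})_i$, independence from the disjointness of the blocks, the sub-Gaussian averaging property (the paper cites Proposition 2.6.1 of \cite{vershynin2018high}), and the identical three-case analysis ($m=n$, $n<m<2n$, $m\geq 2n$) to bound the block length below by $m/(2n)$. Your remark about tracking the absolute constant in the averaging step is fair — the cited proposition only controls $\|\sum_j \widetilde\epsilon_j\|_{\psi_2}^2$ up to an absolute constant, which the paper silently absorbs into $C_\sigma$ — but this does not affect the lemma's downstream use, where everything is up to absolute constants anyway.
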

\begin{proof}
Note that by the definition of the alignment operator $\widetilde{P}^{n, m}$ in \eqref{def-P-alt} with $m \geq n$, we have that for any $j \in [n]$,  
\begin{align}
 ( \widetilde{P}^{n, m} \widetilde{\epsilon} )_j = \frac{1}{ \lceil j (m/n) \rceil - \lceil (j-1) m/n\rceil} \sum_{i = \lceil (j-1) (m/n)\rceil+1}^{ \lceil j m/n \rceil}\widetilde{\epsilon}_i. \nonumber
\end{align}
Since  for each $j \in [n]$, there is  no overlapping of independent random variables $\{\widetilde{\epsilon}_i\}_{i=1}^{m}$, we have that $\{ ( \widetilde{P}^{n, m} \widetilde{\epsilon} )_j\}_{j=1}^{n}$ are independent.
According to Proposition 2.6.1 in \cite{vershynin2018high}, we derive that for any $j \in [n]$,
\begin{align}\label{lemma-error-1}
 (\widetilde{P}^{n, m} \widetilde{\epsilon})_j \sim
   \mbox{mean-zero } C_{\sigma} \big( \lceil  j(m/n) \rceil - \lceil (j-1) (m/n)\rceil \big)^{-1/2}\mbox{-sub-Gaussian}. 
\end{align}
If $m = n$, then it holds that 
\begin{align}\label{lemma-error-2}
 \big( \lceil  j(m/n) \rceil - \lceil (j-1) (m/n)\rceil \big)^{-1/2} = \big( \lceil  j \rceil - \lceil (j-1) \rceil \big)^{-1/2} = 1 < (2n/m)^{1/2}.
\end{align}
Since for $m > n$
\begin{align}
   ( \lceil j (m/n) \rceil - \lceil (j-1) (m/n)\rceil )^{-1/2} \leq   (\lceil m/n \rceil - 1)^{-1/2},  \nonumber
\end{align}  
if $n < m < 2 n$, then it holds that 
\begin{align}\label{lemma-error-3}
( \lceil j (m/n) \rceil - \lceil (j-1) (m/n)\rceil \big)^{-1/2} \leq 1 < (2 n/ m)^{1/2}, 
\end{align}
and if $ m \geq 2 n$, then it holds that 
\begin{align}\label{lemma-error-4}
    ( \lceil j (m/n) \rceil - \lceil (j-1) (m/n)\rceil )^{-1/2} \leq  \{ n/(m - n) \}^{1/2} \leq  (2 n/ m )^{1/2}.
\end{align}
Combining \eqref{lemma-error-1}, \eqref{lemma-error-2} \eqref{lemma-error-3} and \eqref{lemma-error-4}  we can conclude that $\{ (\widetilde{P}^{n, m} \widetilde{\epsilon})_j \}_{j=1}^n$ are mutually independent mean-zero  $C_{\sigma}(2 n/ m)^{1/2}$-sub-Gaussian variables, which completes the proof.
\end{proof}

\begin{lemma}\label{lemma-alignment-delta}
For any $n, m \in \mathbb{N}^{*}$ with $m \geq n$, let the alignment operator $\widetilde{P}^{n, m} \in \R^{n \times m}$ be defined in \eqref{def-P-alt}.  For any vector $v \in \R^{m}$, it holds that 
\[
  \| \widetilde{P}^{n, m} v\|_2^2  \leq    \frac{2 \| v\|_2^2}{ m/n}.
\]
\end{lemma}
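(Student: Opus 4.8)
The plan is to recognise $\widetilde{P}^{n,m}$ as a block-averaging operator and then combine a within-block Cauchy--Schwarz bound with the block-size estimate already established earlier in this excerpt. Concretely, for each row $i \in [n]$ set the block
\[
    B_i = \big\{\lceil(i-1)m/n\rceil + 1, \ldots, \lceil im/n\rceil\big\} \subseteq [m], \qquad b_i = \vert B_i\vert = \lceil im/n\rceil - \lceil(i-1)m/n\rceil.
\]
By the definition of $\widetilde{P}^{n,m}$ in \eqref{def-P-alt}, the $i$th entry of the image is precisely the average of $v$ over $B_i$, namely $(\widetilde{P}^{n,m} v)_i = b_i^{-1}\sum_{j \in B_i} v_j$. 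The computation in the proof of \Cref{lemma-P} shows that the collection $\{B_i\}_{i \in [n]}$ partitions $[m]$, which is the structural fact that makes the whole argument go through.

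First I would apply the Cauchy--Schwarz inequality inside each block to obtain
\[
    (\widetilde{P}^{n,m} v)_i^2 = \Big(b_i^{-1}\sum_{j \in B_i} v_j\Big)^2 \leq b_i^{-2}\, b_i \sum_{j \in B_i} v_j^2 = b_i^{-1}\sum_{j \in B_i} v_j^2.
\]
Summing over $i \in [n]$ and invoking the partition property of $\{B_i\}$ then gives $\|\widetilde{P}^{n,m} v\|_2^2 \leq \sum_{i=1}^{n} b_i^{-1}\sum_{j \in B_i} v_j^2$, so it remains only to control the weights $b_i^{-1}$ uniformly in $i$.

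The required bound is $b_i^{-1} \leq 2n/m = 2/(m/n)$ for every $i \in [n]$, which is exactly the inequality $b_i^{-1/2} \leq (2n/m)^{1/2}$ proved (in its square-root form) during the proof of \Cref{lemma-alignment-error}. I would simply reproduce that three-case verification: for $m = n$ one has $b_i = 1 \leq 2n/m$; for $n < m < 2n$ one has $b_i \geq 1$ while $2n/m > 1$; and for $m \geq 2n$ the estimate $b_i \geq \lceil m/n\rceil - 1 \geq m/n - 1 \geq m/(2n)$ yields $b_i^{-1} \leq 2n/m$. Feeding $b_i^{-1} \leq 2n/m$ back into the displayed sum gives
\[
    \|\widetilde{P}^{n,m} v\|_2^2 \leq \frac{2n}{m}\sum_{i=1}^{n}\sum_{j \in B_i} v_j^2 = \frac{2n}{m}\|v\|_2^2 = \frac{2\|v\|_2^2}{m/n},
\]
where the middle equality again uses that the blocks partition $[m]$. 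This is the claimed bound. There is no genuine obstacle here: the only point requiring care is the uniform block-size estimate, which is mild and already carried out elsewhere in the excerpt, so the proof is essentially a Cauchy--Schwarz step glued to that estimate.
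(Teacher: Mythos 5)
Your proof is correct and is essentially the paper's argument: the paper bounds $\|\widetilde{P}^{n,m}v\|_2^2$ by $\sigma_1(\widetilde{P}^{n,m})^2\|v\|_2^2$ using \Cref{lemma-P}, where $\sigma_1(\widetilde{P}^{n,m})^2$ is exactly your $\max_i b_i^{-1}$, and then runs the same three-case analysis on $m/n$ to show this is at most $2n/m$. Your block-wise Cauchy--Schwarz derivation is just a direct, entrywise route to the same spectral bound, so the two proofs coincide in substance.
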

\begin{proof}
Note that if $m =  n$, by \Cref{lemma-P},  we have that 
 \begin{align}\label{lemma-delta-1}
 \| \widetilde{P}^{n, m} v \|_2^2 = \| v \|_2^2 \leq   \frac{2\| v \|_2^2}{m/n};
\end{align}
if $n < m < 2n$, by \Cref{lemma-P}  we have that 
\begin{align}\label{lemma-delta-2}  
  \| \widetilde{P}^{n, m} v \|_2^2  \leq     \frac{  \| v \|_2^2}{   \lceil m/n \rceil -1 } =    \| v\|_2^2 < \frac{ 2 \|v\|_2^2}{ m/n};
\end{align}
and if $m \geq  2n$, by \Cref{lemma-P} we have that 
\begin{align}\label{lemma-delta-3}   
 \| \widetilde{P}^{n, m} v \|_2^2  \leq     \frac{ \| v \|_2^2}{  \lceil m/n \rceil -1 } 
     \leq    \frac{  \| v \|_2^2}{ (m -  n)/n} \leq  \frac{ 2 \| v \|_2^2}{ m/n}.
\end{align}
Combining \eqref{lemma-delta-1}, \eqref{lemma-delta-2} and \eqref{lemma-delta-3}, if $n \leq m$, it holds that 
\[
  \| \widetilde{P}^{n, m} v\|_2^2  \leq    \frac{2 \| v\|_2^2}{ m/n},
\]
which concludes the proof.
\end{proof}

\begin{theorem}[\citealp{vandegeer2020logistic}]\label{theorm-error-fusedlasso}
 Let  $\mathcal{S} \subseteq  [n_0-1]$ be  defined in \eqref{def-S} with cardinality $s_0$.
Assume that $s_0 < n_0 -1$, then let $n_{\max}^{(0)}$,  the matrix  $\Psi^{-\mathcal{S}} \in \R^{n_0\times (n_0-1-s_0)}$ and the vector $w^{-\mathcal{S}} \in \R^{n_0 - 1 -s_0}$ be defined in  \Cref{ass-fusedlasso}, \eqref{def-psi} and \Cref{def-effective sparsity}, respectively. Furthermore, assume that $\{ \widetilde{\epsilon}_{i} \}_{i=1}^{n_0}$ are independent mean-zero $\sigma$-sub-Gaussian variables. If there exists $\lambda$ satisfies 
 \begin{align}
 \lambda \geq   \lambda_{\mathcal{S}} \quad \mbox{with} \quad 
 \lambda_{\mathcal{S}} = C_{\lambda_{\mathcal{S}}}  \sigma \max_{j \in [n_{0} - 1- s_{0}]} \big\|\Psi^{ -\mathcal{S}}_{,j} \big\|_{1/n_{0}}    n_0^{-1/2},   \nonumber
\end{align}
with the $j$th column of  $\Psi^{ - \mathcal{S}}$  denoted as  $\Psi^{ - \mathcal{S}}_{, j}$ and an absolute constant $ C_{\lambda_{\mathcal{S}}}  >0$, then it holds with probability at least $1 - \exp \big\{-c_{\epsilon} (s_0+1) \log\big(n_0/(s_0+1)\big) \big\}$ that 
\begin{align}
   \frac{\widetilde{\epsilon}^{\top} \theta}{n_0} 
    \leq &  \frac{1}{4n_0} \| \theta\|_2^2 +  C_{\epsilon} \sigma^2 \frac{(s_0+1) \log ( n_{\max}^{(0}))}{n_0} + \lambda 
 \sum_{i=1}^{n_0-1-s_0} \big\vert w^{-\mathcal{S}}_i \big( ( D\theta)_{-\mathcal{S}} \big)_{i}  \big\vert,  \nonumber
 \end{align}
 for any $\theta \in R^{n_0}$.
\end{theorem}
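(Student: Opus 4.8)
The plan is to expose $\widetilde{\epsilon}^{\top}\theta$ through an orthogonal decomposition adapted to the change-point set $\mathcal{S}$, so that a low-dimensional part produces both the $\|\theta\|_2^2$ slack and the $(s_0+1)\log(n_{\max}^{(0)})$ fluctuation, while the complementary part is absorbed into the weighted total-variation penalty. Let $\mathcal{T}=\ker(D_{-\mathcal{S}})$, the $(s_0+1)$-dimensional space of vectors that are constant on each of the $(s_0+1)$ blocks cut out by $\mathcal{S}$, and let $\Pi$ be the orthogonal projection onto $\mathcal{T}$. Using $D_{-\mathcal{S}}\Psi^{-\mathcal{S}}=(D_{-\mathcal{S}}D_{-\mathcal{S}}^{\top})(D_{-\mathcal{S}}D_{-\mathcal{S}}^{\top})^{-1}=I$ together with the fact that the columns of $\Psi^{-\mathcal{S}}=D_{-\mathcal{S}}^{\top}(D_{-\mathcal{S}}D_{-\mathcal{S}}^{\top})^{-1}$ lie in the row space of $D_{-\mathcal{S}}$ (hence in $\mathcal{T}^{\perp}$), I would first verify the identity $(I-\Pi)\theta=\Psi^{-\mathcal{S}}(D\theta)_{-\mathcal{S}}$, valid for every $\theta\in\R^{n_0}$. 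Since $\Pi$ is symmetric, this yields the $\theta$-decomposition $\widetilde{\epsilon}^{\top}\theta=(\Pi\widetilde{\epsilon})^{\top}\theta+\widetilde{\epsilon}^{\top}\Psi^{-\mathcal{S}}(D\theta)_{-\mathcal{S}}$.

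For the low-dimensional part, Cauchy--Schwarz followed by Young's inequality $ab\le b^2/4+a^2$ gives $n_0^{-1}(\Pi\widetilde{\epsilon})^{\top}\theta\le (4n_0)^{-1}\|\theta\|_2^2+n_0^{-1}\|\Pi\widetilde{\epsilon}\|_2^2$, which already supplies the first term of the claim. It then remains to control
\[
\|\Pi\widetilde{\epsilon}\|_2^2=\sum_{l=1}^{s_0+1}\frac{1}{n_l^{(0)}}\Big(\sum_{i\text{ in block }l}\widetilde{\epsilon}_i\Big)^2,
\]
a sum of $(s_0+1)$ sub-exponential variables each of mean of order $\sigma^2$. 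A Bernstein-type deviation inequality for sub-exponentials, applied at deviation level of order $(s_0+1)\log(n_0/(s_0+1))$, yields $\|\Pi\widetilde{\epsilon}\|_2^2\le C\sigma^2(s_0+1)\log(n_0/(s_0+1))$ on an event of probability at least $1-\exp\{-c(s_0+1)\log(n_0/(s_0+1))\}$; since $n_{\max}^{(0)}\ge n_0/(s_0+1)$, this is dominated by $C\sigma^2(s_0+1)\log(n_{\max}^{(0)})$, delivering the second term together with the stated confidence level. The degenerate case $s_0=0$, where $\mathcal{T}$ is the line of constants, is covered by the same estimate.

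For the complementary part I would bound term by term, $n_0^{-1}\widetilde{\epsilon}^{\top}\Psi^{-\mathcal{S}}(D\theta)_{-\mathcal{S}}\le\sum_i n_0^{-1}\big|\widetilde{\epsilon}^{\top}\Psi^{-\mathcal{S}}_{,i}\big|\,\big|\big((D\theta)_{-\mathcal{S}}\big)_i\big|$, and then unwind the definition $w_i^{-\mathcal{S}}=\|\Psi^{-\mathcal{S}}_{,i}\|_{1/n_0}/\max_l\|\Psi^{-\mathcal{S}}_{,l}\|_{1/n_0}$ and the threshold $\lambda\ge\lambda_{\mathcal{S}}=C_{\lambda_{\mathcal{S}}}\sigma\max_l\|\Psi^{-\mathcal{S}}_{,l}\|_{1/n_0}n_0^{-1/2}$. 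A short computation shows that the target bound $n_0^{-1}\widetilde{\epsilon}^{\top}\Psi^{-\mathcal{S}}(D\theta)_{-\mathcal{S}}\le\lambda\sum_i|w_i^{-\mathcal{S}}((D\theta)_{-\mathcal{S}})_i|$ reduces, for every $\theta$ simultaneously, to the single $\theta$-free event
\[
\max_{i\in[n_0-1-s_0]}\frac{\big|\widetilde{\epsilon}^{\top}\Psi^{-\mathcal{S}}_{,i}\big|}{\|\Psi^{-\mathcal{S}}_{,i}\|_2}\le C_{\lambda_{\mathcal{S}}}\,\sigma,
\]
where each ratio is $\sigma$-sub-Gaussian. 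Intersecting this event with the projection event and collecting the three pieces then proves the inequality for all $\theta$.

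I expect the last maximal inequality to be the true obstacle, and it is precisely what pins down the absolute value of $C_{\lambda_{\mathcal{S}}}$. A crude union bound over the $n_0-1-s_0$ columns would cost a $\sqrt{\log n_0}$ factor and fail to match the confidence $1-\exp\{-c(s_0+1)\log(n_0/(s_0+1))\}$ required above. The resolution, which I would import from the interpolating-vector analysis of \citet{vandegeer2020logistic} (see also \cite{ortelli2019prediction}), is that the columns $\Psi^{-\mathcal{S}}_{,i}$ inherit the block structure of $\mathcal{S}$ and are nested within blocks, so the effective cardinality governing the maximum is the block length (at most $n_{\max}^{(0)}$) rather than $n_0$; this is exactly what keeps the logarithmic factor at $\log(n_{\max}^{(0)})$ and makes the two high-probability events compatible. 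Verifying this structural property of $\Psi^{-\mathcal{S}}$ for the difference operator $D$ is the technical heart, after which the remaining steps are the routine algebra indicated above.
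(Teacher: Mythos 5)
First, a point of reference: the paper does not prove this statement at all --- it is imported verbatim from \citet{vandegeer2020logistic} and used as a black box in Appendix A.3 --- so your attempt can only be judged against the cited source, not against an internal argument. Your overall architecture is the right one and matches the machinery the paper's \Cref{def-effective sparsity} is built on: the identity $(I-\Pi)\theta=\Psi^{-\mathcal{S}}(D\theta)_{-\mathcal{S}}$ is correct (since $\Psi^{-\mathcal{S}}D_{-\mathcal{S}}$ is the orthogonal projection onto the row space of $D_{-\mathcal{S}}$), and your treatment of the low-dimensional part --- Cauchy--Schwarz, Young's inequality, and a Bernstein bound on $\|\Pi\widetilde{\epsilon}\|_2^2$ as a sum of $s_0+1$ sub-exponentials --- is sound and delivers the first two terms at the stated confidence level.

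The genuine gap is the final reduction of the remainder to the $\theta$-free event $\max_{i}\vert\widetilde{\epsilon}^{\top}\Psi^{-\mathcal{S}}_{,i}\vert/\Vert\Psi^{-\mathcal{S}}_{,i}\Vert_2\leq C_{\lambda_{\mathcal{S}}}\sigma$ with an \emph{absolute} constant $C_{\lambda_{\mathcal{S}}}$. This event cannot hold with probability $1-\exp\{-c_{\epsilon}(s_0+1)\log(n_0/(s_0+1))\}$: already for a \emph{single} column, $\widetilde{\epsilon}^{\top}\Psi^{-\mathcal{S}}_{,i}/\Vert\Psi^{-\mathcal{S}}_{,i}\Vert_2$ is a standardized sub-Gaussian (exactly standard normal for Gaussian noise), so $\P\{\vert\cdot\vert>C_{\lambda_{\mathcal{S}}}\sigma\}$ is a fixed constant, which for $s_0=0$ must be compared with the required failure probability $n_0^{-c_{\epsilon}}$. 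Your proposed rescue --- that the nesting of the columns within blocks reduces the effective cardinality from $n_0$ to $n^{(0)}_{\max}$ --- only shrinks the union bound; it cannot make the maximum (which within one block of length $m$ is the supremum of a standardized Brownian-bridge-type process and grows at least like $\sqrt{\log\log m}$) bounded by an absolute constant at polynomial confidence. The argument in \citet{vandegeer2020logistic} does not decouple the cross term from $\theta$: the coordinates $j$ for which $\vert\widetilde{\epsilon}^{\top}\Psi^{-\mathcal{S}}_{,j}\vert$ exceeds $n_0\lambda w^{-\mathcal{S}}_j$ are not charged to the penalty but are instead controlled via Cauchy--Schwarz against $\Vert\Psi^{-\mathcal{S}}(D\theta)_{-\mathcal{S}}\Vert_2=\Vert(I-\Pi)\theta\Vert_2\leq\Vert\theta\Vert_2$, and the resulting excess fluctuation is what produces the $\log(n^{(0)}_{\max})$ (rather than merely $\log(n_0/(s_0+1))$) in the additive term. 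Without this $\theta$-coupled splitting, your proof cannot close with the stated choice of $\lambda_{\mathcal{S}}$.
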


\begin{lemma}[\citealp{ortelli2019prediction}]\label{lemma-Psi} 
 Let  $\mathcal{S} \subseteq  [n_0-1]$ be  defined in \eqref{def-S} with cardinality $s_0$. Assume that $s_0 < n_0 -1$, then let $\Psi^{-\mathcal{S}}$ and $n_{\max}^{(0)}$ be defined in \eqref{def-psi} and \Cref{ass-fusedlasso}, respectively. Denote the $j$th column of  $\Psi^{ - \mathcal{S}}$ as  $\Psi^{ - \mathcal{S}}_{, j}$, then it holds that
    \[
      \max_{j \in [n_0-1-s]} \big\|\Psi^{-\mathcal{S}}_{, j} \big\|_{1/n_0}^2  \leq \frac{n_{\max}^{(0)}}{2n_0}.
    \]
\end{lemma}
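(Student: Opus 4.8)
The plan is to first reduce to the Euclidean norm: since $\|v\|_{1/n_0}^2 = n_0^{-1}\|v\|_2^2$ by definition, it suffices to show $\max_{j} \|\Psi^{-\mathcal{S}}_{,j}\|_2^2 \le n_{\max}^{(0)}/2$, and I would in fact aim for the sharper bound $n_{\max}^{(0)}/4$. The engine of the argument is the observation that, since $D_{-\mathcal{S}}$ has full row rank (its rows are a subset of the linearly independent rows of $D$), the matrix $\Psi^{-\mathcal{S}} = D_{-\mathcal{S}}^{\top}(D_{-\mathcal{S}}D_{-\mathcal{S}}^{\top})^{-1}$ is the Moore--Penrose right inverse of $D_{-\mathcal{S}}$. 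Writing $e_j$ for the $j$th canonical basis vector of $\R^{n_0-1-s_0}$, the column $\Psi^{-\mathcal{S}}_{,j}$ is then the unique minimum-$\ell_2$-norm solution of $D_{-\mathcal{S}} x = e_j$: it solves the system and lies in the row space of $D_{-\mathcal{S}}$ (being of the form $D_{-\mathcal{S}}^{\top} y$), which is precisely the characterisation of the minimum-norm solution. I would establish this reformulation carefully first.

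Next I would unpack the constraint set. Associating column $j$ with the kept row $r_j \in [n_0-1]\setminus\mathcal{S}$, the system $D_{-\mathcal{S}} x = e_j$ reads $x_i - x_{i+1} = 0$ for every kept row $i \neq r_j$ and $x_{r_j} - x_{r_j+1} = 1$. The removed rows are exactly the change points $\mathcal{S} = \{t_1,\dots,t_{s_0}\}$, which split $\{1,\dots,n_0\}$ into the $s_0+1$ segments $\{t_{k-1}+1,\dots,t_k\}$ of lengths $n_k^{(0)}$ (with $t_0=0$, $t_{s_0+1}=n_0$). Because no constraint links two distinct segments, the system decouples across segments: on each segment $x$ must be constant except for a single unit drop located at $r_j$, which belongs to exactly one segment, say segment $\kappa$. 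On every other segment $x$ is constant but otherwise unconstrained, so the minimum-norm solution sets it to $0$ there; hence $\Psi^{-\mathcal{S}}_{,j}$ is supported on segment $\kappa$.

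It then remains to solve a single scalar minimisation. On segment $\kappa$, let $p$ count the coordinates up to and including $r_j$ and $q = n_\kappa^{(0)} - p$ the coordinates after it, so that $x$ takes a value $c_1$ on the first $p$ and $c_2$ on the last $q$, subject to $c_1 - c_2 = 1$. Minimising $p c_1^2 + q c_2^2$ over this constraint gives $c_1 = q/(p+q)$, $c_2 = -p/(p+q)$, whence $\|\Psi^{-\mathcal{S}}_{,j}\|_2^2 = pq/(p+q) = pq/n_\kappa^{(0)}$. Since $pq \le (p+q)^2/4$, this is at most $n_\kappa^{(0)}/4 \le n_{\max}^{(0)}/4 \le n_{\max}^{(0)}/2$; dividing by $n_0$ and maximising over $j$ yields the claim. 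The hard part will be the second paragraph, namely rigorously justifying that the minimum-norm solution decouples across segments and vanishes off segment $\kappa$, i.e.\ translating the pseudoinverse column into the explicit piecewise-constant minimiser. Once that structural fact is in hand, the remaining optimisation and the final bound are elementary.
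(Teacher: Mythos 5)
The paper does not prove this lemma; it is imported verbatim from \citet{ortelli2019prediction}, so there is no internal proof to compare against. Your argument is correct and self-contained, and in fact yields the sharper constant $n_{\max}^{(0)}/(4n_0)$. The identification of $\Psi^{-\mathcal{S}}_{,j}$ with the minimum-$\ell_2$-norm solution of $D_{-\mathcal{S}}x = e_j$ is sound: $D$ has rank $n_0-1$, so $D_{-\mathcal{S}}$ has full row rank, $D_{-\mathcal{S}}\Psi^{-\mathcal{S}} = I$, and each column of $\Psi^{-\mathcal{S}}$ lies in the row space of $D_{-\mathcal{S}}$, which characterises the minimum-norm solution. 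The step you flag as the hard part is actually already closed by what you wrote: the constraints of $D_{-\mathcal{S}}x=e_j$ partition into within-segment equalities (rows $t_1,\dots,t_{s_0}$ being deleted means no constraint couples positions $t_k$ and $t_k+1$), so the feasible set is a product over segments and $\|x\|_2^2$ is additively separable over segments; hence the global minimiser is the concatenation of segment-wise minimisers, which is $0$ off the segment containing $r_j$. Equivalently, one can note that the null space of $D_{-\mathcal{S}}$ is spanned by the segment indicators, so the row space consists of vectors summing to zero on each segment, and your explicit candidate (values $q/(p+q)$ and $-p/(p+q)$ on segment $\kappa$, zero elsewhere) is checked to lie there. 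The closed form $\|\Psi^{-\mathcal{S}}_{,j}\|_2^2 = pq/(p+q) \le n_{\kappa}^{(0)}/4$ matches direct computation in small cases (e.g.\ $n_0=3$, $\mathcal{S}=\emptyset$ gives $2/3$ for $p=1,q=2$), and the stated bound $n_{\max}^{(0)}/(2n_0)$ follows a fortiori.
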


\begin{lemma}[\citealp{vandegeer2020logistic}]\label{lemma-Gamma}
 Let  $\mathcal{S} \subseteq  [n_0-1]$ be  defined in \eqref{def-S} with cardinality $s_0$ and effective sparsity $\Gamma_{\mathcal{S}}^2$ be defined in \Cref{def-effective sparsity}. For $s_0 > 0$, let $\{n_i^{(0)}\}_{i=1}^{s_0+1}$, $n_{\max}^{(0)}$ and $\mathcal{S}_{\pm}$ be defined in \Cref{ass-fusedlasso} and \Cref{def-S_pm}.
It holds that 
 \[
    \Gamma_{\mathcal{S}}^2 \leq
    \begin{cases}
        C_{\Gamma} \log(n_0)   \quad &\mbox{if } s_0 = 0,\\
        C_{\Gamma} n_0  \Big( \sum_{i \in \mathcal{S}_{\pm}} \frac{1+\log (n^{(0)}_i)}{n^{(0)}_i } +  \sum_{i \in \mathcal{S} \setminus \mathcal{S}_{\pm}} \frac{1+\log (n_i^{(0)})}{n_{\max}^{(0)}} \Big), \quad &\mbox{otherwise},
    \end{cases}
 \]
where $C_{\Gamma}>0$ is an absolute constant .
\end{lemma}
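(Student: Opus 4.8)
The plan is to pass to the dual (minimax) description of the effective sparsity and thereby reduce the claimed upper bound to the explicit construction of a single interpolating vector. Focus first on the generic case $0 < s_0 < n_0-1$, where, writing $(D\theta)_{\mathcal{S}} = D_{\mathcal{S}}\theta$ and $(D\theta)_{-\mathcal{S}} = D_{-\mathcal{S}}\theta$, the defining quantity reads $\Gamma_{\mathcal{S}} = \max_{\|\theta\|_{1/n_0}=1}\{q^\top D_{\mathcal{S}}\theta - \sum_{j}|(1-w^{-\mathcal{S}}_j)(D_{-\mathcal{S}}\theta)_j|\}$. I would rewrite the negative $\ell_1$-term as $\min_{\|u\|_\infty\le 1}(-u^\top\diag(1-w^{-\mathcal{S}})D_{-\mathcal{S}}\theta)$, so that the objective becomes bilinear in $(\theta,u)$. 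Since it is linear in $\theta$, the maximum over the sphere $\|\theta\|_{1/n_0}=1$ equals the maximum over the ball; both domains are then convex and compact, and Sion's minimax theorem lets me exchange the max and the min. Carrying out the inner maximization gives
\[
\Gamma_{\mathcal{S}}^2 = n_0\min_{\substack{v\in\R^{n_0-1}:\,v_{\mathcal{S}}=q\\ |v_j|\le 1-w^{-\mathcal{S}}_j,\ j\in-\mathcal{S}}}\big\|D^\top v\big\|_2^2,
\]
where I have set $v_{\mathcal{S}}=q$ and absorbed the dual variable into $v_{-\mathcal{S}}$. This is the crucial reduction: any feasible $v$ yields an upper bound, and with the convention $v_0=v_{n_0}=0$ one has $\|D^\top v\|_2^2=\sum_{i=1}^{n_0}(v_i-v_{i-1})^2$, the discrete Dirichlet energy of the profile $v$.

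Next I would construct such a $v$ block by block on the segments delimited by $\mathcal{S}$. On each segment the profile must equal the prescribed signs $q$ at its two endpoints while staying inside the envelope $|v_j|\le 1-w^{-\mathcal{S}}_j$ at interior indices. The point is that the envelope is tight near the change points — where $1-w^{-\mathcal{S}}_j$ forces $v$ to peel away from $\pm1$ — and this forced transition, optimized against the Dirichlet energy, has a discrete-harmonic shape whose energy scales like $(1+\log n_i^{(0)})/n_i^{(0)}$; the harmonic-sum behaviour $\sum_{k=1}^{m}1/k\asymp\log m$ is exactly the source of the logarithmic factors. When the signs at the two ends of a block agree (i.e.\ $i\notin\mathcal{S}_\pm$), no sign reversal is forced inside the block, so the transition can be amortized against the longest adjacent segment, replacing the denominator $n_i^{(0)}$ by $n_{\max}^{(0)}$; when the signs alternate ($i\in\mathcal{S}_\pm$) a full reversal is unavoidable within the block and the denominator stays $n_i^{(0)}$. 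Summing the per-block energies, multiplying by $n_0$, and using the column-norm bounds for $\Psi^{-\mathcal{S}}=D_{-\mathcal{S}}^\top(D_{-\mathcal{S}}D_{-\mathcal{S}}^\top)^{-1}$ (cf.\ \Cref{lemma-Psi}) to control the weights $w^{-\mathcal{S}}$ reproduces the stated bound for $s_0>0$.

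The degenerate case $s_0=0$ must be treated separately, because there the off-support term enters with a plus sign and the defining maximum dualizes to a \emph{maximum} rather than a minimum: one obtains $\Gamma_{\mathcal{S}}^2=n_0\max_{\|u\|_\infty\le1}\|D^\top\diag(1-w^{-\mathcal{S}})u\|_2^2$. Here I would bound the energy directly, using $|1-w^{-\mathcal{S}}_j|\le1$ together with the explicit decay of the column norms $\|\Psi^{-\mathcal{S}}_{,j}\|_{1/n_0}$ (which, for the first-difference operator, grow with the distance to the boundary), so that the relevant sum telescopes to a harmonic series and yields the $\log n_0$ factor. The boundary case $s_0=n_0-1$ (empty $-\mathcal{S}$) is immediate from the dual formula, since then $v=q$ is forced and $\|D^\top q\|_2^2\lesssim s_0+1$.

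The main obstacle is the second step: producing the explicit interpolating vector that simultaneously (i) respects the envelope $|v_j|\le1-w^{-\mathcal{S}}_j$, (ii) achieves the sharp logarithmic energy per block, and (iii) correctly distinguishes the $\mathcal{S}_\pm$ blocks from the $\mathcal{S}\setminus\mathcal{S}_\pm$ blocks via amortization over the longest neighbour. This in turn rests on sharp quantitative control of the weights $w^{-\mathcal{S}}$, equivalently of the column norms of $\Psi^{-\mathcal{S}}$. Since this construction and the attendant estimates are exactly those carried out in \cite{vandegeer2020logistic}, I would invoke their interpolating-vector analysis for the delicate energy accounting rather than reproduce it.
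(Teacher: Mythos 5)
The paper does not prove \Cref{lemma-Gamma} at all: it is imported verbatim from \citet{vandegeer2020logistic}, so the only ``proof'' on record is the citation. Your sketch correctly reconstructs the argument of that reference --- the minimax/dual characterisation $\Gamma_{\mathcal{S}}^2 = n_0\min_v\|D^\top v\|_2^2$ over interpolating vectors $v$ with $v_{\mathcal{S}}=q$ and $|v_j|\le 1-w^{-\mathcal{S}}_j$ off the support, followed by a block-wise construction whose discrete Dirichlet energy produces the $(1+\log n_i^{(0)})/n_i^{(0)}$ terms and the $n_{\max}^{(0)}$ amortisation for non-alternating blocks --- and, like the paper, you ultimately defer the delicate interpolating-vector construction to \citet{vandegeer2020logistic}, so in substance the two coincide.
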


\section[]{Proof of Theorem \ref{theorem_l_0-1}}\label{app_2}
The proof of \Cref{theorem_l_0-1} can be found in \Cref{app_2_proof} with the necessary auxiliary results in \Cref{app_2_lemma}.

\subsection{Proof of Theorem \ref{theorem_l_0-1}}\label{app_2_proof}
\begin{proof}[Proof of \Cref{theorem_l_0-1}]
This proof consists of four steps. In \textbf{Step 1}, we decompose our target quantity into several terms. We then deal with these terms individually in \textbf{Step 2} and \textbf{Step 3}. In \textbf{Step 4}, we gather all the pieces and conclude the proof.

\medskip
\noindent\textbf{Step 1.}
    It directly follows from the definition of  $\widetilde{f}$ that
\begin{align}
&  \frac{1}{2n_0} \big\|\widetilde{P}^{n_0, n_1} y^{(1)} -  \widetilde{f}  \big\|_2^2 +\widetilde{\lambda} \|D \widetilde{f} \|_0    \leq \frac{1}{2n_0} \big\|\widetilde{P}^{n_0, n_1} y^{(1)}   -   f  \big\|_2^2 +\widetilde{\lambda} \|D  f \|_0.  \nonumber
\end{align}
Given that $y^{(1)} = f^{(1)} + \epsilon^{(1)}$ with $f^{(1)} = P^{n_1, n_0} f + \delta$, we derive that 
\begin{align}
  \frac{1}{2n_{0}} \big\|\widetilde{f} - \widetilde{P}^{n_0, n_1} P^{n_1, n_0} f  \big\|_2^2  
\leq  &    \frac{1}{2n_{0}} \big\|f - \widetilde{P}^{n_0, n_1} P^{n_1, n_0} f  \big\|_2^2  +   \frac{1}{n_0}\widetilde{\epsilon}^{\top}\big( \widetilde{f}  - f \big)   + \widetilde{\lambda} \|D   f \|_0 
\nonumber\\
& \hspace{0.5cm}-  \widetilde{\lambda}   \|D   \widetilde{f} \|_0 + \frac{1}{n_0}\big(   \widetilde{P}^{n_0, n_1} \delta\big)^{\top} \big( \widetilde{f}  - f \big), \nonumber 
\end{align}
with $\widetilde{\epsilon} =\widetilde{P}^{n_0, n_1} \epsilon^{(1)} \in \R^{n_0}$. By $n_1 \geq n_0$ and \Cref{lemma-P}, it holds that
\begin{align}\label{theorem2_1}
 \frac{1}{2n_{0}} \big\|\widetilde{f} - f  \big\|_2^2   
\leq  &   \frac{1}{n_0}\widetilde{\epsilon}^{\top} \big( \widetilde{f}  - f \big) 
  + \lambda \|D   f \|_0 - \lambda   \|D   \widetilde{f} \|_0
+ \frac{1}{n_0}\big(   \widetilde{P}^{n_0, n_1} \delta\big)^{\top} \big( \widetilde{f}  -   f\big) \nonumber\\
= &  (I.1) + (I.2) + (I. 3) + (II) =(I) + (II).
\end{align}

\medskip
\noindent\textbf{Step 2.} In this step, we consider the term $(I)$ in \eqref{theorem2_1}. Note that by \Cref{lemma-alignment-error} and the assumption $n_1 \geq  n_0$, we obtain that
\begin{align}\label{theorem2_3}
\{ \widetilde{\epsilon}_i\}_{i=1}^{n_0} \overset{\mbox{ind.}}{\sim} 
  \mbox{mean-zero }C_{\sigma}\big(2 n_0/ n_1 \big)^{1/2}\mbox{-sub-Gaussian}.
\end{align}
Let the set $\mathcal{S}$ be defined in \eqref{def-S}  with  cardinality $s_0$ and the set $\widetilde{\mathcal{S}}$ be defined as
\begin{align}\label{theorem2_4}
\widetilde{\mathcal{S}} = \big\{i \in [n_{0}-1]\colon \widetilde{f}_i \neq \widetilde{f}_{i+1}  \big\}  = \big\{i \in [n_{0}-1]\colon (D\widetilde{f})_i \neq 0  \big\}. 
\end{align}
Let the orthogonal projection operator $P^{\widetilde{\mathcal{S}} \cup \mathcal{S}}$ be defined in \Cref{lemm_l_0}, then we have that 
\begin{align}\label{theorem2_5}
   (I.1) = &  \frac{1}{n_{0}} \widetilde{\epsilon}^{\top} \big( P^{\widetilde{\mathcal{S}} \cup \mathcal{S}} ( \widetilde{f} - f  )  \big)
       =       \frac{1}{n_{0}} \big(P^{ \widetilde{\mathcal{S}} \cup \mathcal{S}} \widetilde{\epsilon} \big)^{\top} \big( \widetilde{f} - f  \big)   \nonumber\\
       \leq    &    \frac{1}{n_{0}}  \big\|P^{\widetilde{\mathcal{S}} \cup \mathcal{S}} \widetilde{\epsilon} \big\|_2
       \big\|\widetilde{f} - f  \big\|_2
         \leq          \frac{1}{n_{0}}  \big\|P^{\widetilde{\mathcal{S}} \cup \mathcal{S}}\widetilde{\epsilon} \big\|_2^2+  \frac{1}{4n_{0}} 
       \big\| \widetilde{f} - f    \big\|_2^2, 
\end{align}
where the first inequality follows from Cauchy--Schwartz inequality and the last inequality is based on the fact that $\vert ab \vert \leq a^2 + b^2/4$.
By \eqref{theorem2_3}, \eqref{theorem2_5} and \Cref{lemm_l_0}, we can conclude that  that $\mathbb{P}\{\mathcal{E}\} \geq 1- n_{0}^{-c_{\epsilon}}$ with 
\[
 \quad \mathcal{E} = \bigg\{  (I.1) \leq \frac{1}{4n_{0}} 
       \big\| \widetilde{f} - f    \big\|_2^2+ C_{\epsilon}  \frac{  \big( \vert \widetilde{\mathcal{S}} \cup \mathcal{S}\vert +1 \big) \big\{1 + \log\big( n_{0}/ (\vert\widetilde{\mathcal{S}} \cup \mathcal{S}\vert+1) \big) \big\} }{n_1 }\bigg\},
\]
where $C_{\epsilon}, c_{\epsilon} >0$ are absolute constants. From now on we assume that the event $\mathcal{E}$ holds. Then it holds that 
\begin{align}\label{theorem2_7}
   (I)
 \leq & \frac{1}{4n_{0}} 
       \big\| \widetilde{f} - f    \big\|_2^2 +  C_{\epsilon}  \frac{  \big( \vert \widetilde{\mathcal{S}} \cup \mathcal{S}\vert +1 \big) \big\{1 + \log\big( n_{0}/ (\vert\widetilde{\mathcal{S}} \cup \mathcal{S}\vert+1) \big) \big\} }{n_1 }+  \widetilde{\lambda} \|D   f \|_0 - \widetilde{\lambda}   \|D   \widetilde{f}\|_0\nonumber\\
= &  \frac{1}{4n_{0}} 
       \big\| \widetilde{f} - f    \big\|_2^2 +  C_{\epsilon}  \frac{  \big( \vert \widetilde{\mathcal{S}} \cup \mathcal{S}\vert +1 \big) \big\{1 + \log\big( n_{0}/ (\vert\widetilde{\mathcal{S}} \cup \mathcal{S}\vert+1) \big) \big\} }{n_1 }+  \widetilde{\lambda} \big(s_0 - \vert \widetilde{\mathcal{S}} \vert \big) \nonumber\\
\leq & \frac{1}{4n_{0}} 
       \big\| \widetilde{f} - f    \big\|_2^2  + 2 \widetilde{\lambda}   (s_0+1) \nonumber\\
= &\frac{1}{4n_{0}} 
       \big\| \widetilde{f} - f    \big\|_2^2  + 2 C_{\widetilde{\lambda}}\frac{( s_0+1) \big\{ 1+ \log\big(n_{0}/(s_0+1)\big) \big\} }{n_1 },
\end{align}
where 
\begin{itemize}
\item the first equality follows from the definitions of $\mathcal{S}$ and $\widetilde{\mathcal{S}}$ in \eqref{def-S} and \eqref{theorem2_4},
\item and the second inequality and the last equality are due to the choice of $\widetilde{\lambda}$ in \eqref{tuning-parameter-0} and $C_{\widetilde{\lambda}} >0$ is a large enough absolute constant.
\end{itemize}

\medskip
\noindent\textbf{Step 3.} In this step, we consider the term $(II)$ in  \eqref{theorem2_1}. Note that by applying the Cauchy--Schwartz inequality and utilising the fact that $\vert ab \vert \leq 2a^2 + b^2/8$, we can establish  that 
 \begin{align}\label{theorem2_8}
   (II) \leq     \frac{2\| \widetilde{P}^{n_0, n_1} \delta\|_2^2 }{n_{0}}  + \frac{1}{8n_{0}} \big\| \widetilde{f} - f   \big\|_2^2  \leq       \frac{ 4 \| \delta\|_2^2}{ n_1}  + \frac{1}{8n_{0}} \big\| \widetilde{f} - f   \big\|_2^2,
\end{align}
where the last inequality follows from $n_1 \geq n_0$ and \Cref{lemma-alignment-delta}.

\medskip
\noindent\textbf{Step 4.} Choosing  $\widetilde{\lambda}$ as \eqref{tuning-parameter-0}, and combining \eqref{theorem2_1}, \eqref{theorem2_7} and \eqref{theorem2_8}, we have with an absolute $C_1 > 0$ that  
\[
 \P  \Bigg\{  \big\|  \widetilde{f} - f  \big\|_{1/n_0}^2
    \leq    C_1\frac{  (s_{0}+1) \big\{1+\log \big(n_1/(s_0+1) \big) \big\}+\|\delta\|_2^2 }{n_1} \Bigg\} \geq 1 - n_0^{-c_{\mathcal{E}}},
\]
completing the proof.
\end{proof}

\subsection{Additional lemmas}\label{app_2_lemma}
\begin{lemma}\label{lemm_l_0}
For any $\mathcal{M} \subseteq [n_0-1]$, if $\vert \mathcal{M} \vert > 0$, denote it as $\mathcal{M} = \{ t_1^\mathcal{M}, \dots, t_{\vert \mathcal{M} \vert}^{\mathcal{M}} \}$. Let $t_0^{\mathcal{M}} =0$ and $t_{\vert \mathcal{M} \vert+1}^{\mathcal{M}} = n_0$. Let the subspace $\mathcal{K}^{\mathcal{M}} \subset \mathbb{R}^{n}$ be defined as $\theta \in \mathcal{K}^{\mathcal{M}}$ if and only if $\theta$ takes a constant value on $\{t_{i}^{\mathcal{M}}+1, \dots, t^{\mathcal{M}}_{i+1}\}$ for each $i \in  [0: \vert \mathcal{M} \vert]$. Then let  $P^{\mathcal{M}}$ be the orthogonal projection operator from $\R^{n_0}$ to $\mathcal{K}^{\mathcal{M}}$.
 Assume that $\{ \widetilde{\epsilon}_{i} \}_{i=1}^{n_0}$ are mutually independent mean-zero $\sigma$-sub-Gaussian variables. Then there exist absolute constants $C_{\epsilon}, c_{\epsilon} >0$ such that 
\begin{align}
\P \Big\{ \forall \mathcal{M}  \subseteq [n_0-1]\colon \big\| P^{\mathcal{M}}  \widetilde{\epsilon} \big\|_2^2 \leq C_{\epsilon}\sigma^2 ( \vert \mathcal{M} \vert  +1 ) \big\{ 1 \vee \log  \big( n_0/ ( \vert \mathcal{M} \vert +1) \big) \big\}\Big\}  \geq 1-n_0^{-c_{\epsilon}}.  \nonumber
\end{align}

\end{lemma}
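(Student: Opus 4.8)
The plan is to prove a uniform bound over all subsets $\mathcal{M} \subseteq [n_0-1]$ by first fixing the sparsity level $|\mathcal{M}| = s$ and establishing a bound that holds uniformly over all $\binom{n_0-1}{s}$ subsets of that size, then taking a union bound over $s \in [0:n_0-1]$. The key observation is that for a fixed $\mathcal{M}$, the subspace $\mathcal{K}^{\mathcal{M}}$ has dimension exactly $|\mathcal{M}|+1$, since a vector in $\mathcal{K}^{\mathcal{M}}$ is constant on each of the $|\mathcal{M}|+1$ blocks determined by the partition points $0 = t_0^{\mathcal{M}} < t_1^{\mathcal{M}} < \cdots < t_{|\mathcal{M}|+1}^{\mathcal{M}} = n_0$. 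Therefore $\|P^{\mathcal{M}} \widetilde{\epsilon}\|_2^2$ is the squared norm of the projection of a sub-Gaussian vector onto a fixed subspace of dimension $d = |\mathcal{M}|+1$.

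First I would record the single-subspace tail bound. For a fixed subspace of dimension $d$, the projection $\|P^{\mathcal{M}}\widetilde{\epsilon}\|_2^2$ is a sub-exponential quadratic form in the independent $\sigma$-sub-Gaussian coordinates. By a Hanson--Wright-type concentration inequality (or equivalently, since $P^{\mathcal{M}}\widetilde{\epsilon}$ can be written in an orthonormal basis of the subspace as a $d$-dimensional sub-Gaussian vector), there is an absolute constant $c'>0$ such that for any $u \geq 0$,
\[
    \P\big\{ \|P^{\mathcal{M}}\widetilde{\epsilon}\|_2^2 \geq C_0 \sigma^2 (d + u) \big\} \leq \exp(-c' u),
\]
for a suitable absolute constant $C_0>0$. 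The essential point is that the expected squared projection norm is $O(\sigma^2 d)$ and the fluctuations have sub-exponential tails with the stated rate.

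Next I would perform the union bound at fixed $s$. There are $\binom{n_0-1}{s} \leq (e n_0 / s)^{s} \leq \exp\{ s (1 + \log(n_0/s)) \}$ subsets of size $s$ (interpreting the $s=0$ case as the single empty set), so I would choose the deviation parameter $u$ proportional to $(s+1)\{1 + \log(n_0/(s+1))\}$ to absorb this entropy, set the threshold to $C_\epsilon \sigma^2 (s+1)\{1 \vee \log(n_0/(s+1))\}$ with $C_\epsilon$ large enough that the $d = s+1$ term and the $u$ term both fit, and conclude that the bad event at level $s$ has probability at most $\exp\{-c'' (s+1)\log(n_0/(s+1)) - c'' \}$ for some absolute $c''>0$ after the entropy term is dominated. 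Finally, summing these failure probabilities over $s \in [0 : n_0 - 1]$ gives a geometric-type series dominated by its leading term, yielding a total failure probability at most $n_0^{-c_\epsilon}$ after adjusting constants. I expect the main obstacle to be the bookkeeping of constants in the union bound: one must verify that the exponent $c''(s+1)\log(n_0/(s+1))$ genuinely dominates the combinatorial entropy $s\{1+\log(n_0/s)\}$ uniformly in $s$ (the borderline cases being $s$ near $0$ and $s$ near $n_0$), and that the resulting sum over all $s$ converges to something of order $n_0^{-c_\epsilon}$; handling $s=0$ and the small-$s$ regime where $\log(n_0/s)$ would blow up requires the $(s+1)$ shift and the $1 \vee \log$ in the statement, which is precisely why these appear.
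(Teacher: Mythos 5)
Your proposal is correct and follows essentially the same route as the paper's proof: a sub-exponential concentration bound for the squared projection norm (the paper applies Bernstein's inequality to the independent block-wise sums of $(P^{\mathcal{M}}\widetilde{\epsilon})_j^2$, which is equivalent to your Hanson--Wright formulation since $P^{\mathcal{M}}$ has rank $|\mathcal{M}|+1$), followed by a union bound over subsets of each cardinality via $\binom{n_0-1}{m}\le\{e(n_0-1)/m\}^m$ and a sum over cardinalities controlled by convexity of the exponent. The only minor caveat is that your single-subspace tail $\exp(-c'u)$ holds in the regime $u\gtrsim d$ rather than for all $u\ge 0$, but that is exactly the regime in which you invoke it, so nothing breaks.
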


\begin{proof}

Fix $\mathcal{M} \subseteq [n_0-1]$. For any $\gamma >0$, a random variable $Z$ is said to be $\gamma$-sub-exponential distributed if $\| Z\|_{\psi_1} = \inf \big\{t>0\colon \E\{\exp\{ \vert Z\vert/t\leq 2 \} \big\} \leq \gamma$.  By Proposition 2.6.1 and Lemma 2.7.6  in\cite{vershynin2018high}, we have with an absolute constant $c_0 >0$ that
\[
     \sum_{j=t_{i}+1}^{t_{i+1}} (P^{\mathcal{M}} \widetilde{\epsilon})_j^2  \overset{\mbox{ind.}}{\sim} c_0 \sigma^2\mbox{-sub-exponential}, \quad i \in [ 0 : \vert \mathcal{M} \vert].
\] 
Note that 
\begin{align}\label{l_0-lemma1-1}
   \mathbb{E} \big\{  \big\| P^{\mathcal{M}}  \widetilde{\epsilon} \big\|_2^2  \big\} 
   = &   \sum_{i = 0}^{\vert \mathcal{M} \vert}   \sum_{j=t_{i}+1}^{t_{i+1}} \mathbb{E} \big\{  (P^{\mathcal{M}} \widetilde{\epsilon})_j^2  \big\}
    =   \sum_{i = 0}^{\vert \mathcal{M} \vert}   \sum_{j=t_{i}+1}^{t_{i+1}} \mathbb{E} \bigg\{ \Big( \sum_{k = 1}^{n_0} P^{\mathcal{M}}_{j, k} \widetilde{\epsilon}_k \Big)^2 \bigg\} \nonumber\\
   = &  \sum_{i = 0}^{\vert \mathcal{M} \vert}   \sum_{j=t_{i}+1}^{t_{i+1}}  \sum_{k = 1}^{n_0}  \mathbb{E} \Big\{\Big(P^{\mathcal{M}}_{j, k} \widetilde{\epsilon}_k \Big)^2 \Big\} 
    \leq  C_0 \sigma^2 ( \vert \mathcal{M} \vert+ 1 ),
\end{align}
where $C_0 > 0$ is an absolute constant and third equality follows from that  $\{ \widetilde{\epsilon}_{i} \}_{i=1}^{n_0}$ are mutually independent.
Then by \eqref{l_0-lemma1-1} and Bernstein's inequality \citep[e.g. Theorem 2.8.1][]{vershynin2010introduction}, it holds with absolute constants $C_1, c_1 >0$ that 
\begin{align}
      & \P \Big[  \big\| P^{\mathcal{M}}  \widetilde{\epsilon} \big\|_2^2 >  C_1 \sigma^2 ( \vert \mathcal{M} \vert+ 1 ) \big\{1\vee \log \big( n_0/ (\vert \mathcal{M} \vert+1)\big) \big\} \Big] \nonumber\\
       \leq & \exp \Big[ -c_1 ( \vert \mathcal{M} \vert+1 ) \big\{ 1\vee \log \big( n_0/ (\vert \mathcal{M} \vert+1))\big)  \big\} \Big]. \nonumber
\end{align}
By a union bound argument, we derive that 
\begin{align}\label{l_0-lemma1-2}
& \P \Big[ \exists \mathcal{M}  \subseteq [n_0-1]\colon \big\| P^{\mathcal{M}}  \widetilde{\epsilon} \big\|_2^2 > C_{1}\sigma^2 ( \vert \mathcal{M} \vert +1 ) \big\{ 1 \vee \log \big( n_0/ ( \vert \mathcal{M} \vert +1) \big) \big\} \Big] \nonumber \\
\leq &  \sum_{ \mathcal{M} \subseteq [n_0 - 1] } \P \Big[ \big\| P^{\mathcal{M}}  \widetilde{\epsilon} \big\|_2^2 > C_{1}\sigma^2 ( \vert \mathcal{M} \vert +1 ) \big\{ 1 \vee \log \big( n_0/ (\vert \mathcal{M} \vert +1) \big) \big\} \Big] \nonumber \\
\leq & \sum_{m=0}^{n_0-1} \sum_{\substack{  \mathcal{M} \subseteq [n_0 - 1] \\ \mbox{with }\vert \mathcal{M}\vert = m} } \exp \Big[ -c_1 ( m+1 ) \big\{1\vee \log \big( n_0/ (m+1)\big)\big\} \Big]  \nonumber\\
\leq & \sum_{m = 0}^{n_0-1} \binom{n_0 - 1}{m} \exp \Big[ -c_1 ( m+1 ) \big\{1\vee \log \big( n_0/ (m+1) \big)\big\} \Big]  \nonumber  \\
\leq & n_0^{-c_1} +  \sum_{m = 1}^{n_0-1}  \exp \Big[ m  \log\big( e(n_0-1)/m\big)  -c_1 ( m+1 ) \big\{1\vee \log \big( n_0/ (m+1) \big)\big\} \Big]  \nonumber\\
\leq & n_0^{-c_1} +\sum_{m = 1}^{n_0-1}  \exp \Big[  -c_2 ( m+1 ) \big\{1\vee \log \big( n_0/ (m+1) \big)\big\} \Big],  
\end{align}
where $c_2>0$ is an absolute constant and the fourth inequality is based on the fact that for any $m_1 \in \mathbb{N}^{*}$ and $m_2 \in [m_1]$ 
\[
   \binom{m_1}{m_2} \leq \Big(\frac{em_1}{m_2}\Big)^{m_2}. 
\]
The function
\[
    m \mapsto  - c_2  (m+1) \log \big( n_0/ (m+1) \big) 
\]
is convex, so its maximum over $m \in  [n_0-1]$ is attained at either $m=1$ or $m=n_0-1$. Thus, we have with an absolute constant $c_3>0$ that
\begin{align}\label{l_0-lemma1-3}
 & \sum_{m = 1}^{n_0-1} \exp \big\{ -c_2 ( m+1 ) \big(1\vee \log ( n_0/ (m+1))\big) \big\}  \nonumber \\
\leq &  (n_0-1) \max \Big\{  \exp \big\{ -2c_2 \log ( n_0/2 ) \big\},  \exp \big\{  -c_2 n_0  \big\}\Big\} \leq n_0^{-c_{3}}.
\end{align}
Combining \eqref{l_0-lemma1-2} and \eqref{l_0-lemma1-3}, it holds with an absolute constant $c_4 >0$ that
\begin{align}
& \P \Big[ \forall \mathcal{M}  \subseteq [n_0-1]\colon \big\| P^{\mathcal{M}}  \widetilde{\epsilon} \big\|_2^2 \leq C_{2}\sigma^2 ( \vert \mathcal{M} \vert +1 ) \big\{ 1 \vee \log \big( n_0/ ( \vert \mathcal{M} \vert +1) \big) \big\} \Big]  \geq 1- n_0^{-c_4}, \nonumber
\end{align}
completing the proof.
\end{proof}

\section[]{Technical details of results in \Cref{sec-multiple-source}}\label{app_3}

The proofs of \Cref{prop-ora}, \Cref{theorem_detection_consistency},  \Cref{cor-a-hat-l0l1} and \Cref{theorem_minimax} can be found in Appendices \ref{app_3_proposition1}, \ref{app_3_det}, \ref{app_3_coro}  and \ref{app_3_minimax}, respectively. 

\subsection{Proof of Proposition \ref{prop-ora}}\label{app_3_proposition1}  

The proof of \Cref{app_3_proposition1}  can be found in \Cref{app_2_proof} .

 \begin{proof}[Proof of \Cref{{prop-ora}}]
This proof consists of two steps. In \textbf{Step 1}, we focus on establishing \eqref{upper_bound_l_1_ora}, and in \textbf{Step 2}, we provide the proof of \eqref{upper_bound_l_0_ora}.

\medskip
\noindent\textbf{Step 1.}
Our proof in this step consists of four sub-steps. In \textbf{Step 1.1}, we decompose our target quantity into several terms. 
We deal with these terms individually in \textbf{Step 1.2} and \textbf{Step 1.3}. Finally, in \textbf{Step 1.4}, we gather all the pieces and conclude the proof of \eqref{upper_bound_l_1_ora}.

\medskip
\noindent\textbf{Step 1.1.}
    It directly follows from the definition of  $\widehat{f}^{[K]}$ that
\begin{align}
  &   \frac{1}{2n_{0}} \bigg\|\frac{1}{K} \sum_{k=1}^K \widetilde{P}^{n_0, n_k} y^{(k)}  - \widehat{f}^{[K]} \bigg\|_2^2   +\lambda \|D \widehat{f}^{[K]}\|_1
  \leq  \frac{1}{2n_{0}} \bigg\|\frac{1}{K} \sum_{k=1}^K \widetilde{P}^{n_0, n_k} y^{(k)}  - f \bigg\|_2^2   +\lambda \|D f  \|_1. \nonumber
\end{align}
Since for any $k \in [K]$, $y^{(k)} = f^{(k)} + \epsilon^{(k)}$ with $f^{(k)} = P^{n_k, n_0} f + \delta^{(k)}$, we obtain that 
\begin{align}
\frac{1}{2n_{0}} \bigg\|\widehat{f}^{[K]} - \frac{1}{K} \sum_{k=1}^K \widetilde{P}^{n_0, n_k} P^{n_k, n_0} f  \bigg\|_2^2   
\leq  &  \frac{1}{2n_{0}} \bigg\|f - \frac{1}{K} \sum_{k=1}^K \widetilde{P}^{n_0, n_k} P^{n_k, n_0} f  \bigg\|_2^2  +  \frac{1}{n_0}\big( \epsilon^{K}\big)^{\top} \big( \widehat{f}^{[K]}  -  f\big)   \nonumber\\
 & \hspace{0.5cm} + \lambda  \|D   f \|_1 - \lambda   \|D  \widehat{f}^{[K]} \|_1
+ \frac{1}{n_0}\big( \delta^{K}  \big)^{\top} \big( \widehat{f}^{[K]}  -  f \big), \nonumber 
\end{align}
with $\epsilon^{K} = K^{-1} \sum_{k=1}^K \widetilde{P}^{n_0, n_k} \epsilon^{(k)}$ and $\delta^{K} = K^{-1} \sum_{k=1}^K \widetilde{P}^{n_0, n_k} \delta^{(k)}$. By $\min_{k \in [K]} n_k \geq n_0$ and \Cref{lemma-P}, it holds that
\begin{align}\label{prop1_1}
  \frac{1}{2n_{0}} \big\|\widehat{f}^{[K]} -  f  \big\|_2^2 
\leq  &   \frac{1}{n_0}\big( \epsilon^{K}\big)^{\top} \big( \widehat{f}^{[K]}  -  f\big)  
  + \lambda  \|D   f \|_1 - \lambda   \|D\widehat{f}^{[K]}\|_1
+ \frac{1}{n_0}\big(  \delta^{K}\big)^{\top} \big( \widehat{f}^{[K]}  -  f\big) \nonumber \\
= & (I.1)+(I.2)+(I.3)+(II) = (I) + (II).
\end{align}

\medskip
\noindent\textbf{Step 1.2.}
In this sub-step, we deal with the term $(I)$ in \eqref{prop1_1}. 
We claim that if 
\begin{align}\label{prop1_lambda}
\lambda =  C_{\lambda}  K^{-1} 
 \Big(  n_{\max}^{(0)}/ n_0 \sum_{k = 1}^K n_k^{-1} \Big)^{1/2}
\end{align}
with $n_{\max}^{(0)}$  defined in \Cref{ass-fusedlasso}, $\widetilde{n}_{\min}^{(0)}$ defined in \Cref{remark_l_1_one} and an absolute constant $C_{\lambda} >0$,  then it holds that 
\begin{align}\label{prop1_2}
 \P  \Bigg\{  (I)
    \leq &   \frac{3}{8n_0} \big\|  \widehat{f}^{[K]} - f  \big\|_2^2  + C_{\mathcal{E}} \frac{n_{\max}^{(0)} /\widetilde{n}_{\min}^{(0)} (s_0+1) \big(1 +\log(n_{\max}^{(0)}) \big)}{K^2  \big( \sum_{k =1}^K  n_k^{-1} \big)^{-1} }  \Bigg\} \geq 1 - n_0^{-c_{\mathcal{E}}},
 \end{align}
with absolute constants $C_{\mathcal{E}}, c_{\mathcal{E}} >0$. Then we prove this claim in two scenarios $s_0 = n_0 - 1$ and $s_0 < n_0 - 1$ in \textbf{Step 1.2.1} and \textbf{Step 1.2.2.}, respectively. Before proving the claim,
note that by $\min_{k \in [K]} n_k \geq n_0$ and \Cref{lemma-alignment-error}, we obtain that for any  $k \in [K]$
\begin{align}
\{(\widetilde{P}^{n_0, n_k} \epsilon^{(k)})_i\}_{i=1}^{n_0} \overset{\mbox{ind.}}{\sim} 
  \mbox{mean-zero }C_{\sigma}  
( 2n_0 n_k^{-1})^{1/2}\mbox{-sub-Gaussian}. \nonumber
\end{align}
By $\epsilon^{K} = K^{-1} \sum_{k=1}^K \widetilde{P}^{n_0, n_k} \epsilon^{(k)}$ and  Proposition 2.6.1 in \cite{vershynin2018high}, it holds that 
\begin{align}\label{prop1_3}
\{\epsilon^{K}_i\}_{i=1}^{n_0} \overset{\mbox{ind.}}{\sim} 
  \mbox{mean-zero }C_{\sigma} K^{-1} 
 \Big( 2n_0\sum_{k =1}^K n_k^{-1} \Big)^{1/2}\mbox{-sub-Gaussian}.
\end{align}

\medskip
\noindent\textbf{Step 1.2.1.} In this sub-step, we prove the claim stated in \eqref{prop1_2} in the scenario $s_0 = n_0 - 1$.
By \eqref{prop1_3} and general Hoeffding inequality \citep[e.g. Theorem 2.6.3 in][]{vershynin2018high}, it holds  with an absolute constant  $c_0 >0$ that
\begin{align}\label{prop1_4}
\P \big\{ \mathcal{E}_1 \big\} \geq 1 - \exp\{ -c_0 n_0 \} \quad \mbox{with} \quad 
\mathcal{E}_1 =  \bigg\{ (I.1) \leq K^{-1}\big( \sum_{k =1}^K n_k^{-1} \big)^{1/2} \big\|  \widehat{f}^{[K]} - f  \big\|_2  \bigg\}.
\end{align}
From now on we assume that $\mathcal{E}_1 $ holds in this sub-step.
By applying Cauchy--Schwartz inequality and the fact that $\vert ab \vert \leq a^2+ b^2/4$, we derive that 
\begin{align}\label{prop1_5}
(I) \leq  \frac{1}{4n_0}\big\|  \widehat{f}^{[K]} - f  \big\|_2^2  + \frac{n_0}{ K^2 \big( \sum_{k =1}^K n_k^{-1} \big)^{-1} } +  \lambda \|D   f \|_1 - \lambda   \|D   \widehat{f}^{[K]} \|_1.
 \end{align}
Note that with the sign vector $q \in \R^{s_0}$ defined in \eqref{def-q}, we obtain that 
\begin{align}\label{prop1_6}
\|Df  \|_1  = \big\|(D   f)_{\mathcal{S}} \big\|_1  = q^{\top}(D   f)_{\mathcal{S}}  \quad \mbox{and}  \quad \|D \widehat{f}^{[K]} \|_1  = \big\|(D \widehat{f}^{[K]})_{\mathcal{S}}  \big\|_1  \geq q^{\top}(D   \widehat{f}^{[K]})_{\mathcal{S}}.
\end{align}
Combining \eqref{prop1_5} and  \eqref{prop1_6}, we have that 
 \begin{align}\label{prop1_7}
    (I) \leq  & \frac{1}{4n_0}\big\|  \widehat{f}^{[K]} - f  \big\|_2^2  + \frac{n_0}{ K^2 \big( \sum_{k =1}^K n_k^{-1} \big)^{-1} } + \lambda  q^{\top}\big(D  (f- \widehat{f}^{[K]}) \big)_{\mathcal{S}} \nonumber\\
    \leq &  \frac{1}{4n_0}\big\|  \widehat{f}^{[K]} - f  \big\|_2^2  + \frac{n_0}{ K^2 \big( \sum_{k =1}^K n_k^{-1} \big)^{-1} }  + \lambda \Gamma_{\mathcal{S}} \big\|  \widehat{f}^{[K]} - f  \big\|_{1/n_0}  \nonumber\\
    \leq & \frac{3}{8n_0}\big\|  \widehat{f}^{[K]} - f  \big\|_2^2  + \frac{n_0}{ K^2 \big( \sum_{k =1}^K n_k^{-1} \big)^{-1} } + 2 \lambda^2 \Gamma_{\mathcal{S}}^2 \nonumber\\
    \leq & \frac{3}{8n_0}\big\|  \widehat{f}^{[K]} - f  \big\|_2^2 +\big(C_{\lambda}^2 C_{\Gamma} +1\big) \frac{(s_0+1)} {K^2 \big( \sum_{k =1}^K n_k^{-1} \big)^{-1} }, 
 \end{align} 
 where \begin{itemize}
     \item  the second inequality follows from the definition of the effective sparsity $\Gamma_{\mathcal{S}}$ in \Cref{def-effective sparsity},
     \item  the third inequality is based on the fact $\vert ab \vert \leq 2a^2+ b^2/8$,
     \item and the final inequality follows from the choice of $\lambda$ in \eqref{prop1_lambda}, \Cref{lemma-Gamma}, and for any $i \in [s_0+1]$, $n_{i}^{(0)} = 1$ and $n_{\max}^{(0)} = 1$.
 \end{itemize}
Since when $s_0 = n_0 - 1$, we have $\widetilde{n}_{\min}^{(0)} = n_{\max}^{(0)} = 1$, then combining \eqref{prop1_4} and \eqref{prop1_7}, it holds  with an absolute constant $c_1 >0$ that
\begin{align}
     \P  \Bigg\{  (I)
    \leq &   \frac{3}{8n_0} \big\|  \widehat{f}^{[K]} - f  \big\|_2^2  + \big(C_{\lambda}^2 C_{\Gamma} +1\big) \frac{n_{\max}^{(0)} /\widetilde{n}_{\min}^{(0)} (s_0+1) \big(1 +\log(n_{\max}^{(0)}) \big)}{K^2 \big( \sum_{k =1}^K n_k^{-1} \big)^{-1}}  \Bigg\} \geq 1 - n_0^{-c_{1}}, \nonumber
\end{align}
which proves \eqref{prop1_2} in the scenario $s_0 = n_0 -1$.

\medskip
\noindent\textbf{Step 1.2.2.} In this sub-step, we  prove the claim stated in \eqref{prop1_2} in the scenario $s_0 < n_0 - 1$.

By \eqref{prop1_3}  and \Cref{theorm-error-fusedlasso}, we have that    $\P \{ \mathcal{E}_2 \} \geq 1 - \exp \big\{-c_2 (s_0+1) \log\big(n_0/(s_0+1) \big) \big\}$  with 
\begin{align}
  \mathcal{E}_2 =    \Bigg\{  (I.1)
    \leq & \frac{1}{4n_0} \big\|  \widehat{f}^{[K]} - f  \big\|_2^2 + C_2  \frac{(s_0+1) \log (n_{\max}^{(0)})}{K^2 \big( \sum_{k =1}^K n_k^{-1} \big)^{-1}}  + \lambda 
 \sum_{i=1}^{n_0-1-s_0} \big\vert w^{-\mathcal{S}}_i \big( ( D\widehat{f}^{[K]} - D f)_{-\mathcal{S}} \big)_{i}  \big\vert \Bigg\}, \nonumber
 \end{align}
where $ C_2, c_2 > 0 $ are absolute constants, if  $\lambda$ satisfies 
 \begin{align}\label{prop1_lambda_check}
 \lambda \geq   \lambda_{\mathcal{S}} \quad \mbox{with} \quad 
 \lambda_{\mathcal{S}} = C_{\lambda_0}  \frac{\max_{j \in [n_{0} - 1- s_{0}]} \big\|\Psi^{ -\mathcal{S}}_{,j} \big\|_{n_{0}}}{    K \big( \sum_{k =1}^K n_k^{-1} \big)^{-1/2}}  
  \leq  C_{\lambda_0} K^{-1} \Big( n_{\max}^{(0)} /(2n_0) \sum_{k =1}^K n_k^{-1} \Big)^{1/2},
\end{align}
where the last inequality follows from \Cref{lemma-Psi} and $ C_{\lambda_0}  >0$ is an absolute constant.
Note that the function
\[
    s \mapsto  -c_2 (s +1) \log ( n_{0}/ (s + 1)))
\]
is convex, so its maximum over $\{ 0\} \cup [n_{0}-2]$ is attained at either $s =0$ or $s = n_{0} - 2$.  Thus, it holds with an absolute constant $c_3>0$ that
\begin{align}\label{prop1_8}
\P \{ \mathcal{E}_2 \} \geq 1 - \max \big\{  \exp \{ - c_{2}\log ( n_{0} ) \},  \exp \{  -c_{2} (n_{0} -1)\log ( n_{0}/(n_{0}-1) ) \} \big\} \geq 1- n_{0}^{-c_3}.  
\end{align}

 From now on we assume that the event $\mathcal{E}_2$ holds in this sub-step.   Note that 
\begin{align}\label{prop1_9}
&  \lambda \sum_{i=1}^{n_0-1-s_0} \big\vert w^{-\mathcal{S}}_i \big(( D   \widehat{f}^{[K]} - D f  )_{-\mathcal{S}} \big)_{i}  \big\vert   + \lambda\|D   f \|_1 - \lambda \|D   \widehat{f}^{[K]} \|_1   \nonumber \\
= &  - \lambda \sum_{i=1}^{n_0-1-s_0} \big( 1- w^{-\mathcal{S}}_i \big) \big\vert \big(( D   \widehat{f}^{[K]} - D f  )_{-\mathcal{S}} \big)_{i}  \big\vert   + \lambda  \big\|( D   \widehat{f}^{[K]} - D f  )_{-\mathcal{S}} \big\|  + \lambda \|D   f \|_1 - \lambda \|D   \widehat{f}^{[K]} \|_1   \nonumber \\
\leq&    - \lambda \sum_{i=1}^{n_0-1-s_0} \big( 1- w^{-\mathcal{S}}_i \big) \big\vert \big(( D   \widehat{f}^{[K]} - D f  )_{-\mathcal{S}} \big)_{i}  \big\vert   + \lambda \big\|(D   f)_{\mathcal{S}} \|_1 - \lambda \big\| (D   \widehat{f}^{[K]})_{\mathcal{S}}\|_1 + 2\lambda  \big\| ( D f )_{-\mathcal{S}} \big\|_1   \nonumber\\
\leq&   - \lambda \sum_{i=1}^{n_0-1-s_0} \big( 1- w^{-\mathcal{S}}_i \big) \big\vert \big(( D   \widehat{f}^{[K]} - D f  )_{-\mathcal{S}} \big)_{i}  \big\vert  + \lambda q^{\top} \big(D (f - \widehat{f}^{[K]})_{\mathcal{S}} \big) +  2\lambda  \big\| ( D f )_{-\mathcal{S}} \big\|_1   \nonumber\\
\leq&  \lambda \Gamma_{\mathcal{S}}\big\| \widehat{f}^{[K]} - f  \big\|_{1/n_0}+  2\lambda  \big\| ( D f )_{-\mathcal{S}} \big\|_1 
\leq \frac{1}{8n_{0}}\big\| \widehat{f}^{[K]} - f  \big\|_2^2 +    2\lambda^2 \Gamma_{\mathcal{S}}^2 +  2\lambda  \big\| ( D f )_{-\mathcal{S}} \big\|_1,   
 \end{align}
where
\begin{itemize}
    \item the first equality is based on the fact that $w^{-\mathcal{S}} \in [0, 1]^{n_0-1-s_0}$ defined in \Cref{def-effective sparsity}, 
    \item  the first inequality follows from the reverse triangle inequality,
    \item  the second inequality is based on the fact that $ \big\|(D   f)_{\mathcal{S}} \big\|_1  = q^{\top}(D   f)_{\mathcal{S}}$ and  $  \big\|(D \widehat{f}^{[K]})_{\mathcal{S}}  \big\|_1  \geq q^{\top}(D   \widehat{f}^{[K]})_{\mathcal{S}}$. Specifically, the sign vector $q$ is defined in \eqref{def-q} when $s_0 >0$ and is set to  $q = 0$ for $s_0 = 0$ 
    \item  the third inequality follows from the definition of the effective sparsity $\Gamma_{\mathcal{S}}$ in \Cref{def-effective sparsity},
     \item  the final inequality is based on the fact $\vert ab \vert \leq 2a^2+ b^2/8$.
\end{itemize}
By the construction of $\mathcal{S}$ in \eqref{def-S}, it holds that   
\begin{align}\label{prop1_10}
       \| ( D f )_{-\mathcal{S}} \|_1   = 0.
\end{align}
By \Cref{lemma-Gamma}, we obtain a deterministic result with an absolute  constant $C_{\Gamma}>0$ as follows
\begin{align}
   \Gamma_{\mathcal{S}}^2 \leq  \begin{cases}
   C_{\Gamma}  \log(n_{0}), \quad &\mbox{if } s_0 = 0, \\
        C_{\Gamma}  n_{0} \bigg( \sum_{i \in \mathcal{S}_{\pm}}\frac{1+\log ( n^{(0)}_i )}{n^{(0)}_i } +  \sum_{i \in  S \setminus  \mathcal{S}_{\pm}} \frac{1+\log (n^{(0))}_i)}{n^{(0)}_{\max}  } \bigg)  , \quad & \mbox{otherwise}.
           \end{cases} \nonumber
 \end{align}
 Then it holds with $n_{\max}^{(0)}$ and $\widetilde{n}_{\min}^{(0)}$ defined in \Cref{ass-fusedlasso} and \Cref{remark_l_1_one}, respectively,  that 
 \begin{align}\label{prop1_11}
       \Gamma_{\mathcal{S}}^2 \leq 
       C_{\Gamma} \frac{n_{0} (s_{0}+1) \big(1 +\log(n_{\max}^{(0)}) \big)}{ \widetilde{n}_{\min}^{(0)} }.
 \end{align}
Then combining \eqref{prop1_8}, \eqref{prop1_9}, \eqref{prop1_10} and \eqref{prop1_11}, with $n_{\max}^{(0)} \geq n_0/(s_0+1)$ and the choice of $\lambda$ in \eqref{prop1_lambda} which satisfies \eqref{prop1_lambda_check}, it holds with an absolute constant $C_3 > 0$ that
 \begin{align}
 \P \Bigg\{(I) \leq &     \frac{3}{8n_0} \big\|  \widehat{f} - f  \big\|_2^2  + C_3 \frac{n_{\max}^{(0)} /\widetilde{n}_{\min}^{(0)} (s_0+1) \big(1 +\log(n_{\max}^{(0)}) \big)}{K^2 \big( \sum_{k =1}^K n_k^{-1} \big)^{-1}}  \Bigg\} \geq 1 - n_0^{-c_{3}}, \nonumber
 \end{align}
which proves \eqref{prop1_2} when $s_0 < n_0 - 1$

\medskip
\noindent\textbf{Step 1.3.} We consider the term $(II)$ in   \eqref{prop1_1}. Note that by applying the Cauchy-Schwartz inequality and utilising the fact that $\vert ab \vert \leq 4a^2 + b^2/16$, we can establish   that 
 \begin{align}\label{prop1_12} 
    (II) \leq &     \frac{4 \big\|K^{-1} \sum_{k =1}^K \widetilde{P}^{n_0, n_k} \delta^{(k)} \big\|_2^2 }{n_{0}}  + \frac{1}{16n_{0}} \big\| \widehat{f}^{[K]} - f   \big\|_2^2 \nonumber \\
    \leq &     \frac{4 K^{-1}\sum_{k=1}^K \big\| \widetilde{P}^{n_0, n_k} \delta^{(k)} \big\|_2^2 }{n_{0}}  + \frac{1}{16n_{0}} \big\| \widehat{f}^{[K]} - f   \big\|_2^2 \nonumber \\
    \leq &   
     8 K^{-1}\sum_{k=1}^K   \frac{  \big\|  \delta^{(k)} \big\|_2^2 }{n_{k}} + \frac{1}{16n_{0}} \big\| \widehat{f}^{[K]} - f   \big\|_2^2
\end{align}
where the second inequality follows from Cauchy--Schwartz inequality and the last inequality follows from \Cref{lemma-alignment-delta} and $\min_{k \in [K]} n_k \geq n_0$.

\medskip
\noindent\textbf{Step 1.4.} Choosing  $\lambda$ as \eqref{prop1_lambda}, and combining \eqref{prop1_1}, \eqref{prop1_2} and \eqref{prop1_12}, we have with an absolute constant $C_4 >0$ that  
\begin{align}
 \P  \Bigg\{  \big\|  \widehat{f}^{[K]} - f  \big\|_{1/n_0}^2
    \leq &    C_4  \frac{n_{\max}^{(0)} /\widetilde{n}_{\min}^{(0)} (s_0+1) \big(1 +\log(n_{\max}^{(0)}) \big)}{K^2  \big( \sum_{k =1}^K  n_k^{-1} \big)^{-1} }  + K^{-1}\sum_{k=1}^K   n_k^{-1} \big\|  \delta^{(k)} \big\|_2^2  \Bigg\} \geq 1 - n_0^{-c_{\mathcal{E}}}. \nonumber
 \end{align}
 With \Cref{ass-fusedlasso},  if 
 \[
\lambda =  C_{\lambda}
 \Bigg(  (s_0+1) K^{2}  \Big( \sum_{k =1}^K n_k^{-1} \Big)^{-1}\Bigg)^{-1/2},
 \]
 then it holds that  
 \begin{align}
 \P  \Bigg\{  \big\|  \widehat{f}^{[K]} - f  \big\|_{1/n_0}^2
    \leq &    C_4  \frac{ (s_0+1) \big\{1 +\log\big(n_0/(s_0+1)\big)\big\}   }{K^2  \big( \sum_{k =1}^K  n_k^{-1} \big)^{-1} } + K^{-1}\sum_{k=1}^K   n_k^{-1} \big\|  \delta^{(k)} \big\|_2^2 \Bigg\} \geq 1 - n_0^{-c_{\mathcal{E}}}, \nonumber
 \end{align}
 completing the proof of \eqref{upper_bound_l_1_ora}.

\medskip
\noindent\textbf{Step 2.} This step is structured into four sub-steps. In \textbf{Step 2.1}, we decompose our target quantity into several terms. We deal with these terms individually in \textbf{Step 2.2} and \textbf{Step 2.3}. Finally, in \textbf{Step 2.4}, we gather all the pieces and conclude the proof of \eqref{upper_bound_l_0_ora}.

\medskip
\noindent\textbf{Step 2.1.}
    It directly follows from the definition of  $\widetilde{f}^{[K]}$ that
\begin{align}
  &   \frac{1}{2n_{0}} \bigg\|\frac{1}{K} \sum_{k=1}^K \widetilde{P}^{n_0, n_k} y^{(k)}  - \widetilde{f}^{[K]} \bigg\|_2^2   +\widetilde{\lambda} \|D \widetilde{f}^{[K]}\|_0
  \leq  \frac{1}{2n_{0}} \bigg\|\frac{1}{K} \sum_{k=1}^K \widetilde{P}^{n_0, n_k} y^{(k)}  - f \bigg\|_2^2   +\widetilde{\lambda} \|D f  \|_0. \nonumber
\end{align}
Since for any $k \in [K]$, $y^{(k)} = f^{(k)} + \epsilon^{(k)}$ with $f^{(k)} = P^{n_k, n_0} f + \delta^{(k)}$, it holds that  
\begin{align}
\frac{1}{2n_{0}} \bigg\|\widetilde{f}^{[K]} - \frac{1}{K} \sum_{k=1}^K \widetilde{P}^{n_0, n_k} P^{n_k, n_0} f  \bigg\|_2^2   
\leq  &  \frac{1}{2n_{0}} \bigg\|f - \frac{1}{K} \sum_{k=1}^K \widetilde{P}^{n_0, n_k} P^{n_k, n_0} f  \bigg\|_2^2  +  \frac{1}{n_0}\big( \epsilon^{K}\big)^{\top} \big( \widetilde{f}^{[K]}  -  f\big)   \nonumber\\
 & \hspace{0.5cm} + \widetilde{\lambda}  \|D   f \|_0 - \widetilde{\lambda}_{ K }   \|D  \widetilde{f}^{[K]}\|_0
+ \frac{1}{n_0}\big( \delta^{K}  \big)^{\top} \big( \widetilde{f}^{[K]}  -  f \big), \nonumber 
\end{align}
with $\epsilon^{K} =K^{-1} \sum_{k=1}^K \widetilde{P}^{n_0, n_k} \epsilon^{(k)}$ and $\delta^{K} = K^{-1} \sum_{k=1}^K \widetilde{P}^{n_0, n_k} \delta^{(k)}$. By \Cref{lemma-P}, it holds that
\begin{align}\label{prop2_1}
  \frac{1}{2n_{0}} \big\|\widetilde{f}^{[K]} -  f  \big\|_2^2 
\leq  &   \frac{1}{n_0}\big( \epsilon^{K}\big)^{\top} \big( \widetilde{f}^{[K]}  -  f\big)  
  + \widetilde{\lambda}  \|D   f \|_1 - \widetilde{\lambda}   \|D\widetilde{f}^{[K]}\|_1
+ \frac{1}{n_0}\big(  \delta^{K}\big)^{\top} \big( \widetilde{f}^{[K]}  -  f\big) \nonumber \\
= & (I.1)+(I.2)+(I.3)+(II) = (I) + (II).
\end{align}

\medskip
\noindent\textbf{Step 2.2.} In this step, we consider the term $(I)$ in \eqref{prop2_1}.

Let  $\mathcal{S}$ be defined in \eqref{def-S}  with  cardinality $s_0$ and  $\widetilde{\mathcal{S}}^{K}$ be defined as
\begin{align}\label{prop2_4}
\widetilde{\mathcal{S}}^{K} = \big\{i \in [n_{0}-1]\colon \widetilde{f}^{[K]}_i \neq \widetilde{f}^{[K]}_{i+1}  \big\}  =  \big\{i \in [n_{0}-1]\colon \big( D\widetilde{f}^{[K]}\big)_i \neq 0 \big\}. 
\end{align}
Let the orthogonal projection operator $P^{\widetilde{\mathcal{S}}^{K} \cup \mathcal{S}}$ be defined in \Cref{lemm_l_0}, then we have that 
\begin{align}\label{prop2_5}
   (I.1) = &  \frac{1}{n_{0}}  \big(\epsilon^{K} \big)^{\top} \Big( P^{\widetilde{\mathcal{S}}^{K} \cup \mathcal{S}} \big( \widetilde{f}^{[K]} - f  \big)  \Big)
       =       \frac{1}{n_{0}} \Big(P^{ \widetilde{\mathcal{S}}^{K} \cup \mathcal{S}} \epsilon^{K} \big)^{\top} \big( \widetilde{f}^{[K]} - f  \big)   \nonumber\\
       \leq    &    \frac{1}{n_{0}}  \big\|P^{\widetilde{\mathcal{S}}^{K} \cup \mathcal{S}} \epsilon^{K} \big\|_2
       \big\|\widetilde{f}^{[K]} - f  \big\|_2
         \leq          \frac{1}{n_{0}}  \big\|P^{\widetilde{\mathcal{S}}^{K} \cup \mathcal{S}}\epsilon^{K} \big\|_2^2+  \frac{1}{4n_{0}} 
       \big\| \widetilde{f}^{[K]} - f    \big\|_2^2, 
\end{align}
where the first inequality arises from Cauchy--Schwartz inequality and the last inequality is based on the fact that $\vert ab \vert \leq a^2 + b^2/4$.
By \eqref{prop1_3}, \eqref{prop2_5} and \Cref{lemm_l_0}, we have that $\mathbb{P}\{\mathcal{E}\} \geq 1- n_{0}^{-c_{\epsilon}^{\prime}}$ with 
\begin{align}\label{prop2_6}
 \mathcal{E} = \bigg\{  (I.1) \leq \frac{1}{4n_{0}} 
       \big\| \widetilde{f}^{[K]} - f    \big\|_2^2+ C_{\epsilon}^{\prime}  \frac{  \big( \vert \widetilde{\mathcal{S}}^{K} \cup \mathcal{S}\vert +1 \big) \big\{1 + \log\big( n_{0}/ (\vert\widetilde{\mathcal{S}}^{K} \cup \mathcal{S}\vert+1) \big) \big\} }{K^2  \big( \sum_{k =1}^K  n_k^{-1} \big)^{-1} }\bigg\},
\end{align}
where $C_{\epsilon}^{\prime}, c_{\epsilon}^{\prime} >0$ are absolute constants. From now on we assume that the event $\mathcal{E}$ holds. Then it holds that 
\begin{align}\label{prop2_7}
   (I)
 \leq & \frac{1}{4n_{0}} 
       \big\| \widetilde{f}^{[K]} - f    \big\|_2^2 +  C_{\epsilon}^{\prime}  \frac{  \big( \vert \widetilde{\mathcal{S}}^{K} \cup \mathcal{S}\vert +1 \big) \big\{1 + \log\big( n_{0}/ (\vert\widetilde{\mathcal{S}}^{K} \cup \mathcal{S}\vert+1) \big) \big\} }{K^2  \big( \sum_{k =1}^K  n_k^{-1} \big)^{-1} }+  \widetilde{\lambda} \|D   f \|_0 - \widetilde{\lambda}   \|D   \widetilde{f}^{[K]}\|_0\nonumber\\
= &  \frac{1}{4n_{0}} 
       \big\| \widetilde{f}^{[K]} - f    \big\|_2^2 +  C_{\epsilon}^{\prime}  \frac{  \big( \vert \widetilde{\mathcal{S}}^{K} \cup \mathcal{S}\vert +1 \big) \big\{1 + \log\big( n_{0}/ (\vert\widetilde{\mathcal{S}}^{K} \cup \mathcal{S}\vert+1) \big) \big\} }{K^2  \big( \sum_{k =1}^K  n_k^{-1} \big)^{-1} }+  \widetilde{\lambda} \big(s_0 - \vert \widetilde{\mathcal{S}}^{K} \vert \big) \nonumber\\
\leq & \frac{1}{4n_{0}} 
       \big\| \widetilde{f}^{[K]} - f    \big\|_2^2  + 2 \widetilde{\lambda}   (s_0+1) \nonumber\\
 = &\frac{1}{4n_{0}} 
       \big\| \widetilde{f}^{[K]} - f    \big\|_2^2  + 2 C_{\widetilde{\lambda}}\frac{( s_0+1) \big\{ 1+ \log\big(n_{0}/(s_0+1)\big) \big\} }{K^2  \big( \sum_{k =1}^K  n_k^{-1} \big)^{-1} },
\end{align}
where the first equality follows from the definitions of $\mathcal{S}$ and $\widetilde{\mathcal{S}}^{K}$ in \eqref{def-S} and \eqref{prop2_4}, and the second inequality and the last equality follow from the choice of $\widetilde{\lambda}$ in \eqref{tuning-ora} and $C_{\widetilde{\lambda}} >0$ is a large enough absolute constant.

\medskip
\noindent\textbf{Step 2.3.} In this sub-step, we deal with the term $(II)$ in  \eqref{prop2_1}. Note that by applying the Cauchy-Schwartz inequality and utilising the fact that $\vert ab \vert \leq 2a^2 + b^2/8$, we can establish that 
 \begin{align}\label{prop2_8}
    (II) \leq &     \frac{2 \big\|K^{-1} \sum_{k =1}^K \widetilde{P}^{n_0, n_k} \delta^{(k)} \big\|_2^2 }{n_{0}}  + \frac{1}{8n_{0}} \big\| \widetilde{f}^{[K]} - f   \big\|_2^2 \nonumber \\
    \leq &     \frac{2 K^{-1} \sum_{k =1}^K \big\| \widetilde{P}^{n_0, n_k} \delta^{(k)} \big\|_2^2 }{n_{0}}  + \frac{1}{8n_{0}} \big\| \widetilde{f}^{[K]} - f   \big\|_2^2 \nonumber \\
    \leq &    4 K^{-1} \sum_{k =1}^K \frac{  \big\|  \delta^{(k)} \big\|_2^2 }{n_{k}} + \frac{1}{8n_{0}} \big\| \widetilde{f}^{[K]} - f   \big\|_2^2,
\end{align}
where the second inequality follows from Cauchy--Schwartz inequality and the last inequality follows from \Cref{lemma-alignment-delta} and  $\min_{k \in [K]} n_k \geq n_0$.

\medskip
\noindent\textbf{Step 2.4.} Choosing $\widetilde{\lambda}$ as \eqref{tuning-ora} and combining  \eqref{prop2_1}, \eqref{prop2_6}, \eqref{prop2_7} and  \eqref{prop2_8}, we have with an absolute constant $C_5>0$ that 
 \begin{align}
    \P \bigg\{  \big\|\widetilde{f}^{[K]} -  f  \big\|_{1/n_0}^2  
\leq & C_{5} \bigg( \frac{( s_0+1) \big\{1 + \log\big(n_{0}/(s_0+1)\big) \big\} }{ K^2 \big(\sum_{k =1}n_k^{-1}\big)^{-1} }    + K^{-1} \sum_{k =1}^K n_k^{-1} \big\|  \delta^{(k)} \big\|_2^2\bigg) \bigg\} \geq 1 - n_0^{-c_{\epsilon}^{\prime}}, \nonumber
\end{align}
completing the proof of \eqref{upper_bound_l_0_ora}.
\end{proof}

\subsection{Proof of Theorem \ref{theorem_detection_consistency}}\label{app_3_det}

\begin{proof}[{Proof of \Cref{theorem_detection_consistency}}]
The proof consists of four steps. In \textbf{Step 1}, we decompose our target quantity into several terms. We then handle these terms individually in \textbf{Steps 2} and \textbf{3}. In \textbf{Step 4}, we gather all the pieces and conclude the proof.

\medskip
\noindent \textbf{Step 1.}
Recall that for any $k \in [K]$,  
\begin{align}\label{def-delta_k}
    \widehat{\Delta}^{(k)} = n_k^{-1/2} y^{(k)} - n_k^{-1/2}  P^{n_k, n_0} y,
\end{align}
and 
\[
\widehat{T}_k=  \Big\{ i \in [n_k]\colon \big\vert \widehat{\Delta}_{i}^{(k)}\big\vert \mbox{ is among the first } \widehat{t}_{k} \mbox{ largest of all}\Big\},
\]
as defined in \Cref{alg_isd}.  Note that 
\begin{align}\label{detection_1}
   &\P \big\{  \widehat{\mathcal{A}} = \mathcal{A}_{h^{*}} \big\}  =  1-  \P \big\{  \widehat{\mathcal{A}} \neq \mathcal{A}_{h^{*}} \big\}   \nonumber \\
   =  &1- \P \Big\{ \exists k \in \mathcal{A}_{h^{*}}^{c} \mbox{ such that }\Big\| \big( \widehat{\Delta}^{(k)} \big)_{\widehat{T}_k} \Big\|_2^2  \leq   \tau_k  \mbox{ or } \exists k \in \mathcal{A}_{h^{*}} \mbox{ such that }\Big\| \big( \widehat{\Delta}^{(k)} \big)_{\widehat{T}_k} \Big\|_2^2  >  \tau_k  \Big\} \nonumber \\
   \geq &  1- \P \Big\{\exists k \in \mathcal{A}_{h^{*}}^{c} \mbox{ such that }\Big\| \big( \widehat{\Delta}^{(k)} \big)_{\widehat{T}_k} \Big\|_2^2  \leq   \tau_k \Big\}  - \P \Big\{ \exists k \in \mathcal{A}_{h^{*}} \mbox{ such that }\Big\| \big( \widehat{\Delta}^{(k)} \big)_{\widehat{T}_k} \Big\|_2^2  >  \tau_k  \Big\} \nonumber\\
     \geq &  1- \big\vert \mathcal{A}_{h^{*}}^{c}  \big\vert  \max_{k \in  \mathcal{A}_{h^{*}}^{c}}\P \Big\{ \Big\| \big( \widehat{\Delta}^{(k)} \big)_{\widehat{T}_k} \Big\|_2^2  \leq   \tau_k \Big\}  - \vert \mathcal{A}_{h^{*}} \vert  \max_{k \in  \mathcal{A}_{h^{*}}} \P \Big\{ \Big\| \big( \widehat{\Delta}^{(k)} \big)_{\widehat{T}_k} \Big\|_2^2  >  \tau_k  \Big\} \nonumber\\
   =  &  1- (I) -(II),
\end{align}
where the first and second inequalities follow from a union bound argument.

For any $k \in \{ 0\} \cup [K]$ define the event  
\[
    \mathcal{E}_k = \Big\{ \max_{i \in [n_k]} \big\vert  \epsilon^{(k)}_i\big\vert \leq  \sqrt{\log(n_0 \vee n_k)} \mbox{ and } \max_{i \in [n_k]} \big\vert ( P^{n_k, n_0}\epsilon)_i\big\vert \leq  \sqrt{\log(n_0 \vee n_k)} \Big\}.
\]

By a union bound argument, we have with an absolute constant $c_1> 0$ that 
\begin{align}\label{detection_1_2_0}
    \P \big\{ \mathcal{E}_k \big\} \geq & 1 -  \P\Big\{ \max_{i \in [n_k]} \big\vert  \epsilon^{(k)}_i\big\vert \geq \sqrt{\log(n_0 \vee n_k)} \Big\} - \P\Big\{\max_{i \in [n_k]} \big\vert ( P^{n_k, n_0}\epsilon)_i\big\vert\geq  \sqrt{\log(n_0 \vee n_k)} \Big\}  \nonumber\\
    \geq & 1 -  n_k  \max_{i \in [n_k]}  \P \Big\{  \big\vert  \epsilon^{(k)}_i\big\vert \geq  \sqrt{\log(n_0 \vee n_k)} \Big\} - \P\Big\{\max_{i \in [n_k]} \big\vert ( P^{n_k, n_0}\epsilon)_i\big\vert \geq  \sqrt{\log(n_0 \vee n_k)} \Big\}  \nonumber\\
    \geq & 1- (n_0 \vee n_k)^{-c_1} - \P\Big\{\max_{i \in [n_k]} \big\vert ( P^{n_k, n_0}\epsilon)_i\big\vert \geq  \sqrt{\log(n_0 \vee n_k)} \Big\},
\end{align}
where the last inequality follows from the assumption that $ \{\epsilon_i^{(k)}\}_{i = 1, k=1}^{n_k, K}$  are mutually independent mean-zero $C_{\sigma}$-sub-Gaussian distributed and Proposition 2.5.2 in \cite{vershynin2018high}.
We have that 
\begin{align}\label{detection_1_2_1}
    \P\Big\{\max_{i \in [n_k]} \big\vert  P^{n_k, n_0}\epsilon_i\big\vert \geq  \sqrt{\log(n_0 \vee n_k)} \Big\}  \leq & \P\Big\{\max_{i \in [n_0]} \big\vert  \epsilon_i\big\vert \geq  \sqrt{\log(n_0 \vee n_k)} \Big\}  
    \nonumber \\
    \leq & n_0  \max_{i \in [n_0]}  \P \Big\{  \big\vert  \epsilon_i\big\vert \geq  \sqrt{\log(n_0 \vee n_k)} \Big\} 
    \leq  (n_0 \vee n_k)^{-c_1},
\end{align}
where the first inequality follows from the definition of $P^{n_k, n_0}$ in \eqref{def-P}, and the last inequality follows from the assumption that $ \{\epsilon_i\}_{i = 1}^{n_0}$  are mutually independent mean-zero $C_{\sigma}$-sub-Gaussian distributed and Proposition 2.5.2 in \cite{vershynin2018high}. 

Combining \eqref{detection_1_2_0} and \eqref{detection_1_2_1}, we have with an absolute constant $c_2> 0$ that 
\begin{align}\label{detection_1_2}
    \P \big\{ \mathcal{E}_k \big\} \geq  & 1- (n_0 \vee n_k)^{-c_2}. 
\end{align}

\medskip
\noindent \textbf{Step 2.} In this step, we address the term $(I)$ in \eqref{detection_1}. We decompose the term $(I)$ in \eqref{detection_1} into three components, which we then deal with separately in \textbf{Steps 2.1}, \textbf{2.2}, and \textbf{2.3}. In \textbf{Step 2.4}, we gather all the pieces and conclude the proof of this step. 

For any measurable sets $A_1$, $A_2$ and $A_3$, we have that 
\begin{align}
    \P\{ A_1  \} = & \P\big\{ A_1  \cap (A_2 \cup A_3) \big\} + \P\big\{ A_1  \cap (A_2 \cup A_3)^{c} \big\} \nonumber \\
    = & \P\big\{ (A_1  \cap A_2) \cup(A_1  \cap A_3) \big\} + \P\big\{ A_1  \cap A_2^{c} \cap A_3^{c} \big\} \nonumber\\
    \leq & \P\big\{ A_1  \cap A_2 \big\} + \P\big\{A_1  \cap A_3\big\} + \P\big\{ A_1  \cap A_2^{c} \cap A_3^{c} \big\}, \nonumber
\end{align}
where the last inequality follows from a union bound argument.  We have that
\begin{align}\label{detection_2_1}
      (I)  \leq  &   \big\vert \mathcal{A}_{h^{*}}^{c}  \big\vert  \max_{k \in  \mathcal{A}_{h^{*}}^{c}}\P\Big\{  \Big\| \big( \widehat{\Delta}^{(k)} \big)_{\widehat{T}_k} \Big\|_2^2  \leq   \tau_k  \mbox{ and }  \widehat{T}_k \subseteq  \mathcal{H}_k  \Big\}
      \nonumber \\
       & \hspace{0.5cm}
       +  \big\vert \mathcal{A}_{h^{*}}^{c}  \big\vert  \max_{k \in  \mathcal{A}_{h^{*}}^{c}} \P\Big\{  \Big\| \big( \widehat{\Delta}^{(k)} \big)_{\widehat{T}_k} \Big\|_2^2  \leq   \tau_k \mbox{ and } \mathcal{H}_k \subseteq \widehat{T}_k \Big\}  \nonumber \\
       & \hspace{0.5cm} +\big\vert \mathcal{A}_{h^{*}}^{c}  \big\vert  \max_{k \in  \mathcal{A}_{h^{*}}^{c}} \P\Big\{ \Big\| \big( \widehat{\Delta}^{(k)} \big)_{\widehat{T}_k} \Big\|_2^2  \leq   \tau_k, \mathcal{H}_k \nsubseteq \widehat{T}_k \mbox{ and } \widehat{T}_k \nsubseteq  \mathcal{H}_k   \Big\}  \nonumber \\
      = &  (I.1) + (I.2)+ (I.3).
\end{align}

We now focus on the term  $ \min_{i \in \mathcal{H}_k} \vert  \widehat{\Delta}_{i}^{(k)}  \vert $, for each $k \in \mathcal{A}_{h^{*}}^{c}$. For any $k \in \mathcal{A}_{h^{*}}^{c}$, assume that $\mathcal{E}_k$ holds, then we have that 
\begin{align}\label{detection_2_2}
       \min_{i \in \mathcal{H}_k} \big\vert  \widehat{\Delta}_{i}^{(k)}  \big\vert 
        & = n_k^{-1/2} \min_{i \in \mathcal{H}_k} \Big\vert    \big( f_i^{(k)} - ( P^{n_k, n_0} f )_i \big) +  \big( \epsilon_i^{(k)} - ( P^{n_k, n_0} \epsilon )_i\big) \Big\vert \nonumber\\
        & \geq  n_k^{-1/2}  \Big( \min_{i \in \mathcal{H}_k} \big\vert  ( f_i^{(k)} - ( P^{n_k, n_0} f )_i  \big\vert -  \max_{i \in \mathcal{H}_k} \big\vert  \epsilon_i^{(k)} \big\vert  - \max_{i \in \mathcal{H}_k} \big\vert ( P^{n_k, n_0} \epsilon )_i \big\vert \Big)  \nonumber \\ 
         & \geq  n_k^{-1/2}  \Big( \min_{i \in \mathcal{H}_k} \big\vert  ( f_i^{(k)} - ( P^{n_k, n_0} f )_i  \big\vert -  \max_{i \in [n_k]} \big\vert  \epsilon_i^{(k)} \big\vert  -  \max_{i \in [n_k]} \big\vert ( P^{n_k, n_0} \epsilon )_i \big\vert \Big)\nonumber\\   
            & \geq  n_k^{-1/2}  \Big(4 \sqrt{\log(n_0 \vee n_k)} -  \max_{i \in [n_k]} \big\vert  \epsilon_i^{(k)} \big\vert  -  \max_{i \in [n_k]} \big\vert ( P^{n_k, n_0} \epsilon )_i \big\vert \Big), \nonumber \\
        & \geq  2n_k^{-1/2}  \sqrt{\log(n_0 \vee n_k)},
\end{align}
where the first equality follows from the definition of $ \widehat{\Delta}^{(k)}$ in \eqref{def-delta_k}, the third inequality follows from \Cref{ass_detection} and the final inequality is a consequence of the event $\mathcal{E}_k$. 

\medskip
\noindent \textbf{Step 2.1.}
We now consider the term $(I.1)$ in \eqref{detection_2_1}.  For any $k \in \mathcal{A}_{h^{*}}^{c}$, assuming that $\widehat{T}_k \subseteq  \mathcal{H}_k $ and the event $\mathcal{E}_k$ hold, then we have that
\begin{align}\label{detection_2.1_1}
  \Big\| \big( \widehat{\Delta}^{(k)} \big)_{\widehat{T}_k} \Big\|_2^2  \geq  \widehat{t}_{k}   \min_{i \in \mathcal{H}_k} \big\vert  \widehat{\Delta}_{i}^{(k)} \big\vert^2 \geq \frac{C_{\widehat{\mathcal{A}}}}{2}\frac{ (s_{0}+1) \big\{1+\log\big(n_0/(s_0+1)\big) + \log(n_0 \vee n_k)\big\}}{n_0} > \tau_k.
\end{align}
where the second inequality follows from \eqref{detection_2_2} and the definition of $\widehat{t}_{k}$ in \eqref{hat_t_k-def}, and the last inequality follows from the definition of $ \tau_k$ in \eqref{tau-def}.
Note that
\begin{align}\label{detection_2.1_2}
    (I.1) = &  \big\vert \mathcal{A}_{h^{*}}^{c}  \big\vert  \max_{k \in  \mathcal{A}_{h^{*}}^{c}} \Big(   \P \Big\{  \Big\| \big( \widehat{\Delta}^{(k)} \big)_{\widehat{T}_k} \Big\|_2^2  \leq   \tau_k, \widehat{T}_k \subseteq  \mathcal{H}_k  \mbox{ and }\mathcal{E}_k \Big\}
    \nonumber\\
    & \hspace{0.5cm}
    + \P\Big\{  \Big\| \big( \widehat{\Delta}^{(k)} \big)_{\widehat{T}_k} \Big\|_2^2  \leq   \tau_k,\widehat{T}_k \subseteq  \mathcal{H}_k  \mbox{ and } \mathcal{E}_k^{c}   \Big\}\Big)\nonumber \\
   \leq &  \big\vert \mathcal{A}_{h^{*}}^{c}  \big\vert  \max_{k \in  \mathcal{A}_{h^{*}}^{c}}   \P\Big\{   \Big\| \big( \widehat{\Delta}^{(k)} \big)_{\widehat{T}_k} \Big\|_2^2  \leq   \tau_k,  \widehat{T}_k \subseteq  \mathcal{H}_k  \mbox{ and } \mathcal{E}_k  \Big\} +   \big\vert \mathcal{A}_{h^{*}}^{c}  \big\vert  \max_{k \in  \mathcal{A}_{h^{*}}^{c}} \P\big\{ \mathcal{E}_k^{c} \big\}  \nonumber \\
      \leq & \big\vert \mathcal{A}_{h^{*}}^{c} \big\vert \max_{ k \in  \mathcal{A}_{h^{*}}^{c}}
        \P\big\{ \mathcal{E}_k^{c} \big\} 
   \leq \big\vert \mathcal{A}_{h^{*}}^{c} \big\vert \{n_0\vee (\min_{k \in [K]}n_k)\}^{-c_2},
\end{align}
where the first inequality is based on the fact that for any sets $A_1, A_2$ and $A_3$, $\P(A_1 \cap A_2 \cap A_3) \leq \P(A_3)$, the second inequality follows from \eqref{detection_2.1_1} and the last inequality follows from \eqref{detection_1_2}.

\medskip
\noindent \textbf{Step 2.2.} 
We now consider the term $(I.2)$ in \eqref{detection_2_1}. 
For any $k \in \mathcal{A}_{h^{*}}^{c}$, assuming that $  \mathcal{H}_k \subseteq \widehat{T}_k$ and the event $\mathcal{E}_k$ holds, then we have that
\begin{align}\label{detection_2.2_1}
       \Big\| \big( \widehat{\Delta}^{(k)} \big)_{\widehat{T}_k} \Big\|_2^2   
     = & n_k^{-1}\Big\| \big(  f^{(k)} - P^{n_k, n_0} f \big)_{\widehat{T}_k} +  
 \epsilon^{(k)}_{\widehat{T}_k} - \big( P^{n_k, n_0} \epsilon \big)_{\widehat{T}_k}\Big\|_2^2 \nonumber \\
 \geq &  n_k^{-1} \Big( 
     \Big\| \big(  f^{(k)} - P^{n_k, n_0} f \big)_{\widehat{T}_k} \Big\|_2^2 - 
    \Big\|  \epsilon^{(k)}_{\widehat{T}_k} \Big\|_2^2  - \Big\| \big( P^{n_k, n_0} \epsilon  \big)_{\widehat{T}_k}\Big\|_2^2  \Big) \nonumber\\
    \geq &  n_k^{-1} \Big( 
     \Big\| \big(  f^{(k)} - P^{n_k, n_0} f \big)_{\mathcal{H}_k} \Big\|_2^2 -  \max_{ \substack{ T_k \subseteq [n_k] \\ \mbox{with }\vert T_k \vert = \widehat{t}_{k} }}
    \Big\|  \epsilon^{(k)}_{T_k} \Big\|_2^2  -  \max_{ \substack{ T_k \subseteq [n_k] \\ \mbox{with }\vert T_k \vert = \widehat{t}_{k} }} \Big\| \big( P^{n_k, n_0} \epsilon  \big)_{T_k}\Big\|_2^2  \Big) \nonumber \\
        \geq &   n_k^{-1} \Big( 
     \Big\| \big(  f^{(k)} - P^{n_k, n_0} f \big)_{\mathcal{H}_k} \Big\|_2^2 -  \widehat{t}_{k} \max_{ i \in [n_k] }
    \Big\vert  \epsilon^{(k)}_i \Big\vert^2  -  \widehat{t}_{k} \max_{ i \in [n_k] } \big\vert ( P^{n_k, n_0} \epsilon  )_{i}\big \vert^2  \Big) \nonumber\\
    \geq &  \big(  C_{\mathcal{A}^c} - C_{\widehat{\mathcal{A}}} /4 \big) \frac{ (s_{0}+1) \big\{1+\log\big(n_0/(s_0+1)\big) + \log(n_0\vee n_k)\big\}}{n_0} > \tau_k,
\end{align}
where
\begin{itemize}
     \item the first equality is a consequence of the definition of $ \widehat{\Delta}^{(k)} $ shown in \eqref{def-delta_k},
    \item the second inequality follows from the assumption that $ \mathcal{H}_k \subseteq \widehat{T}_k$,
    \item the forth inequality follows from the event $\mathcal{E}_k$,  \Cref{ass_detection} and the definition of $\widehat{t}_{k}$ as shown in  \eqref{hat_t_k-def},
    \item and the final inequality follows from  the definition of $\tau_k$ shown in  \eqref{tau-def} and $C_{\widehat{\mathcal{A}}} \leq C_{\mathcal{A}^c}/2$.
\end{itemize}
Note that
\begin{align}\label{detection_2.2_2}
    (I.2) = &  \big\vert \mathcal{A}_{h^{*}}^{c}  \big\vert  \max_{k \in  \mathcal{A}_{h^{*}}^{c}} \Big[   \P\Big\{  \Big\| \big( \widehat{\Delta}^{(k)} \big)_{\widehat{T}_k} \Big\|_2^2  \leq   \tau_k, \mathcal{H}_k \subseteq \widehat{T}_k \mbox{ and }\mathcal{E}_k \Big\}
        \nonumber\\
    & \hspace{0.5cm}
    + \P\Big\{  \Big\| \big( \widehat{\Delta}^{(k)} \big)_{\widehat{T}_k} \Big\|_2^2  \leq   \tau_k,\mathcal{H}_k \subseteq \widehat{T}_k \mbox{ and } \mathcal{E}_k^{c}   \Big\}\Big]\nonumber \\
   \leq &  \big\vert \mathcal{A}_{h^{*}}^{c}  \big\vert  \max_{k \in  \mathcal{A}_{h^{*}}^{c}}   \P\Big\{   \Big\| \big( \widehat{\Delta}^{(k)} \big)_{\widehat{T}_k} \Big\|_2^2  \leq   \tau_k,  \mathcal{H}_k \subseteq \widehat{T}_k  \mbox{ and } \mathcal{E}_k  \Big\} +   \big\vert \mathcal{A}_{h^{*}}^{c}  \big\vert  \max_{k \in  \mathcal{A}_{h^{*}}^{c}} \P\big\{ \mathcal{E}_k^{c} \big\}  \nonumber \\
      \leq & \big\vert \mathcal{A}_{h^{*}}^{c} \big\vert \max_{ k \in  \mathcal{A}_{h^{*}}^{c}}
        \P\big\{ \mathcal{E}_k^{c} \big\} 
   \leq \big\vert \mathcal{A}_{h^{*}}^{c} \big\vert \{n_0\vee (\min_{k \in [K]}n_k)\}^{-c_2},
\end{align}
where the first inequality is based on the fact that for any sets $A_1, A_2$ and $A_3$, $\P(A_1 \cap A_2 \cap A_3) \leq \P(A_3)$, the second inequality follows from \eqref{detection_2.2_1} and the last inequality follows from \eqref{detection_1_2}.

\medskip
\noindent \textbf{Step 2.3.} 
We now consider the term $(I.3)$ in \eqref{detection_2_1}.
Note that 
\begin{align}
    \Big\{   \mathcal{H}_k \nsubseteq \widehat{T}_k \mbox{ and } \widehat{T}_k \nsubseteq  \mathcal{H}_k   \Big\} \subseteq 
     \Big\{   \min_{i \in \mathcal{H}_k} \big\vert  \widehat{\Delta}_{i}^{(k)} \big\vert \leq \min_{i \in \widehat{T}_k}\big\vert  \widehat{\Delta}_{i}^{(k)} \big\vert \Big\},  \nonumber
\end{align}
then we have that
\begin{align}\label{detection_2.3_1}
    (I.3) \leq &  \big\vert \mathcal{A}_{h^{*}}^{c}  \big\vert  \max_{k \in  \mathcal{A}_{h^{*}}^{c}}  \P\bigg\{  \Big\| \big( \widehat{\Delta}^{(k)} \big)_{\widehat{T}_k} \Big\|_2^2  \leq   \tau_k \mbox{ and }  \min_{i \in \mathcal{H}_k} \big\vert  \widehat{\Delta}_{i}^{(k)} \big\vert \leq \min_{i \in \widehat{T}_k}\big\vert  \widehat{\Delta}_{i}^{(k)} \big\vert  \bigg\}\nonumber \\
    = &  \big\vert \mathcal{A}_{h^{*}}^{c}  \big\vert  \max_{k \in  \mathcal{A}_{h^{*}}^{c}} \bigg[  \P\bigg\{  \Big\| \big( \widehat{\Delta}^{(k)} \big)_{\widehat{T}_k} \Big\|_2^2  \leq   \tau_k, \min_{i \in \mathcal{H}_k} \big\vert  \widehat{\Delta}_{i}^{(k)} \big\vert \leq \min_{i \in \widehat{T}_k}\big\vert  \widehat{\Delta}_{i}^{(k)} \big\vert  \mbox{ and } \mathcal{E}_k \bigg\}  \nonumber\\
    & \hspace{0.5cm} +  \P\bigg\{  \Big\| \big( \widehat{\Delta}^{(k)} \big)_{\widehat{T}_k} \Big\|_2^2  \leq   \tau_k, \min_{i \in \mathcal{H}_k} \big\vert  \widehat{\Delta}_{i}^{(k)} \big\vert \leq \min_{i \in \widehat{T}_k}\big\vert  \widehat{\Delta}_{i}^{(k)} \big\vert  \mbox{ and } \mathcal{E}_k^{c} \bigg\} \bigg] \nonumber\\ 
    \leq &  \big\vert \mathcal{A}_{h^{*}}^{c}  \big\vert  \max_{k \in  \mathcal{A}_{h^{*}}^{c}}  \P\bigg\{  \Big\| \big( \widehat{\Delta}^{(k)} \big)_{\widehat{T}_k} \Big\|_2^2  \leq   \tau_k, \min_{i \in \mathcal{H}_k} \big\vert  \widehat{\Delta}_{i}^{(k)} \big\vert \leq \min_{i \in \widehat{T}_k}\big\vert  \widehat{\Delta}_{i}^{(k)} \big\vert  \mbox{ and } \mathcal{E}_k \bigg\} + \big\vert \mathcal{A}_{h^{*}}^{c}  \big\vert  \max_{k \in  \mathcal{A}_{h^{*}}^{c}} \P \big\{  \mathcal{E}_k^{c}\big\}\nonumber\\
     \leq & \big\vert \mathcal{A}_{h^{*}}^{c}  \big\vert  \max_{k \in  \mathcal{A}_{h^{*}}^{c}}  \P\bigg\{  \Big\| \big( \widehat{\Delta}^{(k)} \big)_{\widehat{T}_k} \Big\|_2^2  \leq   \tau_k, \min_{i \in \mathcal{H}_k} \big\vert  \widehat{\Delta}_{i}^{(k)} \big\vert \leq \min_{i \in \widehat{T}_k}\big\vert  \widehat{\Delta}_{i}^{(k)} \big\vert  \mbox{ and } \mathcal{E}_k \bigg\}   \nonumber \\
      & \hspace{0.5cm}+ \big\vert \mathcal{A}_{h^{*}}^{c} \big\vert \{n_0\vee (\min_{k \in [K]}n_k)\}^{-c_2},
\end{align}
where
\begin{itemize}
    \item the first equality is based on the fact that for any sets $A_1, A_2$ and $A_3$, $\P(A_1 \cap A_2) = \P(A_1 \cap A_2 \cap A_3) + \P(A_1 \cap A_2 \cap A_3^{c})$,
    \item the second inequality is based on the fact that for any sets $A_1, A_2$ and $A_3$, $\P(A_1 \cap A_2 \cap A_3) \leq \P(A_3)$,
     \item and the last inequality follows from \eqref{detection_1_2}.
\end{itemize}

For any $k \in \mathcal{A}_{h^{*}}^{c}$, assuming that  $\min_{i \in \mathcal{H}_k} \big\vert  \widehat{\Delta}_{i}^{(k)} \big\vert \leq \min_{i \in \widehat{T}_k}\big\vert  \widehat{\Delta}_{i}^{(k)} \big\vert $ and the event $\mathcal{E}_k$ hold, then we obtain that
\begin{align}\label{detection_2.3_2}
  \Big\| \big( \widehat{\Delta}^{(k)} \big)_{\widehat{T}_k} \Big\|_2^2  \geq  \widehat{t}_{k}   \min_{i \in \mathcal{H}_k} \big\vert  \widehat{\Delta}_{i}^{(k)} \big\vert^2 \geq \frac{C_{\widehat{\mathcal{A}}}  }{2}\frac{ (s_{0}+1) \big\{1+\log\big(n_0/(s_0+1)\big) + \log(n_0 \vee n_k) \big\}}{n_0} > \tau_k,
\end{align}
where the second inequality follows from \eqref{detection_2_2} and the definition of $\widehat{t}_{k}$ in \eqref{hat_t_k-def}, and the last inequality follows from the definition of $ \tau_k$ in \eqref{tau-def}.
Combining \eqref{detection_2.3_1} and \eqref{detection_2.3_2}, we have that 
\begin{align}\label{detection_2.3.3}
    (I.3) \leq & \big\vert \mathcal{A}_{h^{*}}^{c} \big\vert \{n_0\vee (\min_{k \in [K]}n_k)\}^{-c_2}.
\end{align}

\medskip
\noindent \textbf{Step 2.4.}
Combining \eqref{detection_2_1}, \eqref{detection_2.1_2}, \eqref{detection_2.2_2}, \eqref{detection_2.3.3}, it holds with an absolute constant $c_3 >0$ that
\begin{align}\label{detection_2.4}
     &  (I) \leq \big\vert \mathcal{A}_{h^{*}}^{c} \big\vert \{n_0\vee (\min_{k \in [K]}n_k)\}^{-c_3}.
\end{align}

\medskip
\noindent \textbf{Step 3.} 
In this step, we deal with the term $(II)$ in \eqref{detection_1}.
For any $k \in \mathcal{A}_{h^{*}}$, assuming that the event $\mathcal{E}_k$ holds,  we obtain that
\begin{align}\label{detection_3_1}
     \Big\| \big( \widehat{\Delta}^{(k)} \big)_{\widehat{T}_k} \Big\|_2^2   = & n_k^{-1}
     \Big\| \big(  f^{(k)} - P^{n_k, n_0} f \big)_{\widehat{T}_k} +  
 \epsilon^{(k)}_{\widehat{T}_k} - \big( P^{n_k, n_0} \epsilon \big)_{\widehat{T}_k}\Big\|_2^2\nonumber \\
 \leq & 3 n_k^{-1} \Big\{ 
     \Big\| \big(  f^{(k)} - P^{n_k, n_0} f \big)_{\widehat{T}_k} \Big\|_2^2 +  
    \Big\|  \epsilon^{(k)}_{\widehat{T}_k} \Big\|_2^2  + \Big\| \big( P^{n_k, n_0} \epsilon  \big)_{\widehat{T}_k}\Big\|_2^2  \Big\} \nonumber\\
    \leq & 3 n_k^{-1} \Big\{ 
     n_k(h^{*})^2 +   \max_{ \substack{ T_k \subseteq [n_k] \\ \mbox{with }\vert T_k \vert = \widehat{t}_{k} }}
    \Big\|  \epsilon^{(k)}_{T_k} \Big\|_2^2  +  \max_{ \substack{ T_k \subseteq [n_k] \\ \mbox{with }\vert T_k \vert = \widehat{t}_{k} }} \Big\| \big( P^{n_k, n_0} \epsilon  \big)_{T_k}\Big\|_2^2  \Big\} \nonumber \\
        \leq & 3 n_k^{-1} \Big\{
      n_k(h^{*})^2 +  \widehat{t}_{k} \max_{ i \in [n_k] }
    \Big\vert  \epsilon^{(k)}_i \Big\vert^2  +  \widehat{t}_{k} \max_{ i \in [n_k] } \Big\vert \big( P^{n_k, n_0} \epsilon  \big)_{i}\Big \vert^2  \Big\}\nonumber\\
               \leq & \big( C_{\mathcal{A}}  +C_{\widehat{\mathcal{A}}}/4 \big)  \frac{ (s_{0}+1) \big\{1+\log\big(n_0/(s_0+1)\big) + \log(n_0 \vee n_k)\big\}}{n_0} \leq \tau_k,
\end{align}
where 
\begin{itemize}
    \item the first equality follows from the definition of $ \widehat{\Delta}^{(k)} $ in \eqref{def-delta_k},
    \item the second inequality follows from the definition of $\mathcal{A}_{h^{*}}$ in \eqref{def-mathcal-A-h},
    \item the fourth inequality follows from the event $\mathcal{E}_k$, \Cref{ass_detection} and the definition of $\widehat{t}_{k}$ in   \eqref{hat_t_k-def},
    \item and the final inequality follows from the definition of $\tau_k$ in \eqref{tau-def} and  $ C_{\widehat{\mathcal{A}}} \geq 2C_{\mathcal{A}}$.
\end{itemize}
Note that 
\begin{align}\label{detection_3_3}
      (II) =  &  \big \vert \mathcal{A}_{h^{*}} \big \vert \max_{k \in \mathcal{A}_{h^{*}}} \big\{ \P \Big\{ \Big\| \big( \widehat{\Delta}^{(k)} \big)_{\widehat{T}_k} \Big\|_2^2  >  \tau_k \mbox{ and } \mathcal{E}_k \Big\} +   \P \Big\{ \Big\| \big( \widehat{\Delta}^{(k)} \big)_{\widehat{T}_k} \Big\|_2^2  >  \tau_k \mbox{ and } \mathcal{E}_k^{c}  \Big\} \big\} \nonumber \\
   \leq &  \big \vert \mathcal{A}_{h^{*}} \big \vert \max_{k \in \mathcal{A}_{h^{*}}}  \P \big\{  \mathcal{E}_k^{c}  \big\}  
    \leq  \big\vert \mathcal{A}_{h^{*}}  \big\vert  \{n_0\vee (\min_{k \in [K]}n_k)\}^{-c_2},  
\end{align}
where the first inequality is based on the fact that for any sets $A_1, A_2$ and $A_3$, $\P(A_1 \cap A_2 \cap A_3) \leq \P(A_3)$, the first inequality follows from \eqref{detection_3_1} and the last inequality follows from \eqref{detection_1_2}.

\medskip
\noindent \textbf{Step 4.}
Combining \eqref{detection_1}, \eqref{detection_2.4} and \eqref{detection_3_3}, it holds with an absolute constant $c_4>0$ that 
\begin{align}
   \P \big\{  \widehat{\mathcal{A}} = \mathcal{A}_{h^{*}} \big\}  
   \geq  & 1- \big\vert \mathcal{A}_{h^{*}}^{c} \big\vert   \{n_0\vee (\min_{k \in [K]}n_k)\}^{-c_3}  - \big\vert \mathcal{A}_{h^{*}} \big\vert  \{n_0\vee (\min_{k \in [K]}n_k)\}^{-c_2}  \nonumber\\
   \geq & 1- K  \{n_0\vee (\min_{k \in [K]}n_k)\}^{-c_4}, \nonumber
\end{align}
which completes the proof.

\end{proof}

\subsection{Proof of Corollary \ref{cor-a-hat-l0l1}}\label{app_3_coro} 
\begin{proof}[Proof of \Cref{cor-a-hat-l0l1}]
 This proof consists of two steps. In \textbf{Step 1}, we focus on establishing an estimation upper bound for $\widehat{f}^{\widehat{\mathcal{A}}}$, and in \textbf{Step 2}, we prove a similar estimation upper bound for $\widetilde{f}^{\widehat{\mathcal{A}}}$.

 \medskip
 \noindent \textbf{Step 1.} In this step, we focus on the estimator $\widehat{f}^{\widehat{\mathcal{A}}}$.
 
 For any nonempty set $\widetilde{\mathcal{A}} \subseteq [K]$,  let
 \[
  \widehat{f}^{\widetilde{\mathcal{A}}}= \argmin_{\theta \in \R^{n_0}} \bigg\{  \frac{1}{2n_{0}} \bigg\|\frac{1}{\vert \widetilde{\mathcal{A}} \vert  } \sum_{k \in \widetilde{\mathcal{A}} } \widetilde{P}^{n_0, n_k} y^{(k)}  - \theta \bigg\|_2^2   +\lambda_{\widetilde{\mathcal{A}}}\|D   \theta \|_1  \bigg\}, 
\]
where 
\begin{align}
  \lambda_{\widetilde{\mathcal{A}}} = C_{\lambda} \Big((s_0+1) \vert \widetilde{\mathcal{A}}\vert^2 \big(\sum_{k \in \widetilde{\mathcal{A}}}n_k^{-1}\big)^{-1} \Big)^{-1/2}, \nonumber
\end{align}
with an absolute constant $C_{\lambda} >0$.
By  \Cref{prop-ora}, it holds with absolute constants $C_1, c_1>0$ that 
 \begin{align}
 & \P \bigg\{ \big\|\widehat{f}^{\widetilde{\mathcal{A}}} -f   \big\|_{1/n_0}^2    
 > C_1 \Bigg(\max_{k \in \widetilde{\mathcal{A}}} \frac{\big\|\delta^{(k)}\big\|_2^2}{n_k} + \frac{    (s_{0}+1)  \big\{1+ \log \big(n_{0}/(s_0+1)  \big) \big\}}{\vert \widetilde{\mathcal{A}}\vert^2 \big(\sum_{k \in \widetilde{\mathcal{A}}}n_k^{-1}\big)^{-1}  }   \Bigg)  \bigg\}  \leq n_0^{-c_1}.\nonumber 
\end{align}
Denote 
\begin{align}
    \mathcal{E}_1  = & \Bigg\{ \exists \widetilde{\mathcal{A}} \subseteq [K]\colon  \widetilde{\mathcal{A}} \neq \emptyset \mbox{ and }  \big\|\widehat{f}^{\widetilde{\mathcal{A}}} -f   \big\|_{1/n_0}^2     
> C_1  \Bigg(\max_{k \in \widetilde{\mathcal{A}}} \frac{\big\|\delta^{(k)}\big\|_2^2}{n_k} 
\nonumber\\
& \hspace{8cm}+  
\frac{    (s_{0}+1)  \big\{1+\log \big(n_{0}/(s_0+1)  \big) \big\}}{\vert \widetilde{\mathcal{A}}\vert^2 \big(\sum_{k \in \widetilde{\mathcal{A}}}n_k^{-1}\big)^{-1} }   \Bigg)\Bigg\}. \nonumber
\end{align}
By a union bound argument, we obtain that 
 \begin{align}\label{coro1_1}
\P \big\{   \mathcal{E}_1  \big\}   \leq &
\sum_{ \substack{\widetilde{\mathcal{A}} \subseteq [K] \\   \mbox{with } \widetilde{\mathcal{A}} \neq \emptyset}} \P \bigg\{ \big\|\widehat{f}^{\widetilde{\mathcal{A}}} -f   \big\|_{1/n_0}^2    
 > C_1 \Bigg(\max_{k \in \widetilde{\mathcal{A}}} \frac{\big\|\delta^{(k)}\big\|_2^2}{n_k} + \frac{    (s_{0}+1)  \big\{1+ \log \big(n_{0}/(s_0+1)  \big) \big\}}{\vert \widetilde{\mathcal{A}}\vert^2 \big(\sum_{k \in \widetilde{\mathcal{A}}}n_k^{-1}\big)^{-1}  }   \Bigg)  \bigg\}\nonumber \\
\leq &  \sum_{ \substack{\widetilde{\mathcal{A}} \subseteq [K] \\   \mbox{with } \widetilde{\mathcal{A}} \neq \emptyset}} n_0^{-c_1} = \sum_{a=1}^{K} \sum_{\substack{  \widetilde{\mathcal{A}} \subseteq [K] \\ \mbox{with }\vert \widetilde{\mathcal{A}}  \vert = a }} n_{0}^{-c_1} = \sum_{a=1}^{K} \binom{K}{a} n_{0}^{-c_1} \leq 2^{K} n_{0}^{-c_1},
\end{align}
where the last inequality is based on the binomial formula, wherein for any $n \in \mathbb{N}^{*}$,  $2^n = \sum_{k=0}^{n} \binom{n}{k}$. Denote $\mathcal{E}_2  =  \big\{  \widehat{\mathcal{A}} \neq \mathcal{A}_{h^{*}} \big\}$.
By \Cref{theorem_detection_consistency}, it holds with an absolute constant $c_2 >0$ that 
\begin{align}\label{coro1_2}
  \P \big\{ \mathcal{E}_2 \big\} \leq K n_0^{-c_2} .
\end{align}
Denote
\begin{align}
 \mathcal{E}_3 = & \bigg\{  \big\|\widehat{f}^{\widehat{\mathcal{A}}} -f   
 \big\|_{1/n_0}^2    
\leq  C  \Bigg( \frac{    (s_{0}+1)  \big\{1+\log \big(n_{0}/(s_0+1)  \big) \big\}}{ \vert \mathcal{A}_{h^{*}}\vert^2 \big(\sum_{k \in \mathcal{A}_{h^{*}}}n_k^{-1}\big)^{-1}}
\nonumber\\
& \hspace{7cm}
+ (h^{*})^2   \wedge \frac{(s_0 +1) \big\{1+\log\big(n_0/(s_0+1)\big)\big\} }{n_0}     \Bigg)  \bigg\}. \nonumber
\end{align}  
Note that  $ \mathcal{E}_1^{c} \cap \mathcal{E}_2^{c} \subseteq \mathcal{E}_3$, then we  we can conclude with an absolute constant $c_3>0$ that 
\begin{align}
\P \big\{\mathcal{E}_3   \big\} \geq  & \P \big\{ \mathcal{E}_1^{c} \cap \mathcal{E}_2^{c} \big\} = 1-
\P \big\{ \mathcal{E}_1 \cup \mathcal{E}_2 \big\}  \nonumber\\
\geq & 1-  \P \big\{  \mathcal{E}_1 \big\} -  \P \big\{  \mathcal{E}_2\big\}   
\geq   1 - 2^K n_{0}^{-c_1} - Kn_0^{-c_2} \geq 1- 2^{K}n_0^{-c_3}, \nonumber 
\end{align}  
where the second inequality follows from \eqref{coro1_1} and \eqref{coro1_2}. This completes the proof for $\widehat{f}^{\widehat{\mathcal{A}}}$ .

 \medskip
 \noindent \textbf{Step 2.} In this step, we focus on the estimator $\widetilde{f}^{\widehat{\mathcal{A}}}$.

 For any nonempty set $\widetilde{\mathcal{A}} \subseteq [K]$,  let
 \[
  \widetilde{f}^{\widetilde{\mathcal{A}}}= \argmin_{\theta \in \R^{n_0}} \bigg\{  \frac{1}{2n_{0}} \bigg\|\frac{1}{\vert \widetilde{\mathcal{A}} \vert  } \sum_{k \in \widetilde{\mathcal{A}} } \widetilde{P}^{n_0, n_k} y^{(k)}  - \theta \bigg\|_2^2   +\widetilde{\lambda}_{\widetilde{\mathcal{A}}}\|D   \theta \|_0  \bigg\}, 
\]
where 
\begin{align}
   \widetilde{\lambda}_{\widetilde{\mathcal{A}}} = & C_{\widetilde{\lambda}}\frac{ 1+ \log\big(n_0 / (s_{0}+1)) \big)}{\vert \widetilde{\mathcal{A}}\vert^2 \big(\sum_{k \in \widetilde{\mathcal{A}}}n_k^{-1}\big)^{-1} }, \nonumber
 \end{align}
with an absolute constant $C_{\widetilde{\lambda}} >0$.
By  \Cref{prop-ora}, it holds with absolute constants $C_4, c_4 > 0$ that  
 \begin{align}
 \P \bigg\{ \big\|\widetilde{f}^{\widetilde{\mathcal{A}}} -f   \big\|_{1/n_0}^2    
>  C_4 \Bigg(\max_{k \in \widetilde{\mathcal{A}}} \frac{\big\|\delta^{(k)}\big\|_2^2}{n_k} + \frac{    (s_{0}+1) \big\{1+ \log \big(n_{0}/(s_0+1)  \big)\big\}}{\vert \widetilde{\mathcal{A}}\vert^2 \big(\sum_{k \in \widetilde{\mathcal{A}}}n_k^{-1}\big)^{-1}  }   \Bigg)   \bigg\} \leq n_0^{-c_4}.\nonumber 
\end{align}
Denote 
\begin{align}
    \mathcal{E}_4  = & \Bigg\{ \exists \widetilde{\mathcal{A}} \subseteq [K]\colon~ \widetilde{\mathcal{A}} \neq \emptyset \mbox{ and }\big\|\widetilde{f}^{\widetilde{\mathcal{A}}} -f   \big\|_{1/n_0}^2    
> C_4  \Bigg(\max_{k \in \widetilde{\mathcal{A}}} \frac{\big\|\delta^{(k)}\big\|_2^2}{n_k}
\nonumber\\
& \hspace{8cm}
+ \frac{    (s_{0}+1)  \big\{1+\log \big(n_{0}/(s_0+1)  \big) \big\}}{\vert \widetilde{\mathcal{A}}\vert^2 \big(\sum_{k \in \widetilde{\mathcal{A}}}n_k^{-1}\big)^{-1} }   \Bigg)\Bigg\}. \nonumber
\end{align}
By a union bound argument, we obtain that 
 \begin{align}\label{coro2_1}
\P \big\{   \mathcal{E}_4  \big\}   \leq &
\sum_{\widetilde{\mathcal{A}} \subseteq [K]} \P \bigg\{ \big\|\widetilde{f}^{\widetilde{\mathcal{A}}} -f   \big\|_{1/n_0}^2    
 > C_4 \Bigg(\max_{k \in \widetilde{\mathcal{A}}} \frac{\big\|\delta^{(k)}\big\|_2^2}{n_k} + \frac{    (s_{0}+1)  \big\{1+ \log \big(n_{0}/(s_0+1)  \big) \big\}}{\vert \widetilde{\mathcal{A}}\vert^2 \big(\sum_{k \in \widetilde{\mathcal{A}}}n_k^{-1}\big)^{-1}  }   \Bigg)  \bigg\}\nonumber \\
\leq &  \sum_{ \substack{\widetilde{\mathcal{A}} \subseteq [K] \\   \mbox{with } \widetilde{\mathcal{A}} \neq \emptyset}} n_0^{-c_4} = \sum_{a=1}^{K} \sum_{\substack{  \widetilde{\mathcal{A}} \subseteq [K] \\ \mbox{with }\vert \widetilde{\mathcal{A}}  \vert = a }} n_{0}^{-c_4} = \sum_{a=1}^{K} \binom{K}{a} n_{0}^{-c_4} \leq 2^{K} n_{0}^{-c_4},
\end{align}
where the last inequality is based on the binomial formula, wherein for any $n \in \mathbb{N}$,  $2^n = \sum_{k=0}^{n} \binom{n}{k}$. 

Denote
\begin{align}
 \mathcal{E}_5 = & \bigg\{  \big\|\widetilde{f}^{\widehat{\mathcal{A}}} -f   
 \big\|_{1/n_0}^2    
\leq  C  \Bigg( \frac{    (s_{0}+1)  \big\{1+\log \big(n_{0}/(s_0+1)  \big) \big\}}{ \vert \mathcal{A}_{h^{*}}\vert^2 \big(\sum_{k \in \mathcal{A}_{h^{*}}}n_k^{-1}\big)^{-1}}
\nonumber\\
& \hspace{7cm}
+ (h^{*})^2   \wedge \frac{(s_0 +1) \big\{1+\log\big(n_0/(s_0+1)\big)\big\} }{n_0}     \Bigg)  \bigg\}. \nonumber
\end{align}  
Note that  $ \mathcal{E}_4^{c} \cap \mathcal{E}_2^{c} \subseteq \mathcal{E}_5$, then we can conclude with an absolute constant $c_5>0$ that 
\begin{align}
\P \big\{\mathcal{E}_5   \big\} \geq  & \P \big\{ \mathcal{E}_4^{c} \cap \mathcal{E}_2^{c} \big\} = 1-
\P \big\{ \mathcal{E}_4 \cup \mathcal{E}_2 \big\} 
\nonumber \\
\geq & 1-  \P \big\{  \mathcal{E}_4 \big\} -  \P \big\{  \mathcal{E}_2\big\}   
\geq   1 - 2^K n_{0}^{-c_4} - Kn_0^{-c_2} \geq 1- 2^{K}n_0^{-c_5}, \nonumber 
\end{align}  
where the second inequality follows from \eqref{coro2_1} and \eqref{coro1_2}. This completes the proof for $\widetilde{f}^{\widehat{\mathcal{A}}}$.
\end{proof}

\subsection{Proofs of Theorem \ref{theorem_minimax}}\label{app_3_minimax}
\begin{proof}[Proof of \Cref{theorem_minimax}]  
Let 
\[
     \Theta^{\prime}_{s_0, \mathcal{A}_h} = \bigg\{\theta = \big(f^{\top}, (f^{(k_1)})^{\top}, \dots,  (f^{(k_a)})^{\top} \big)^{\top}\colon \|D f\|_{0} \leq s_0,\,  s_0 \geq 4 \bigg\}.
\]
Since $ \Theta^{\prime}_{s_0, \mathcal{A}_h} \subseteq  \Theta_{s_0, \mathcal{A}_h}$, to prove \eqref{minimax_bound}, we only need to prove that 
\[
    \inf_{\widehat{f} \in \R^{n_0}} \sup_{\theta \in  \Theta^{\prime}_{s_0, \mathcal{A}_h}} \P \bigg\{ \| \widehat{f} - f\|^2_{1/n_0}  \geq  C \bigg(  \frac{s_0 \log(n_0 /s_0)}{\sum_{k \in \mathcal{A}_h} n_k}  + h^2 \wedge \frac{s_0 \log(n_0 /s_0)}{n_0}  \bigg) \bigg\} \geq \frac{1}{2}.
\]

This proof consists of two steps. In \textbf{Step 1}, we examine the scenario where 
\[
 \frac{s_0\log(n_0/s_0)} { \sum_{k \in \mathcal{A}_h} n_k  } \geq   h^2 \wedge \frac{s_0\log(n_0/s_0)} {n_{0}}.
\]
 Subsequently, we address the scenario where 
 \[
 \frac{s_0\log(n_0/s_0)} { \sum_{k \in \mathcal{A}_h} n_k  } <   h^2 \wedge \frac{s_0\log(n_0/s_0)} {n_{0}},
\] 
in \textbf{Step 2}.

Without loss of generality, let $n_0$ be even. 
Let $l$ be the largest nonzero even number such that $ l \leq s_0/2$. Since $s_0 \geq 4$ in $\Theta^{\prime}_{s_0, \mathcal{A}_h}$, such $l$ exists. For any $k \in \mathcal{A}_h$, denote
\[
\lambda_{1} \big((P^{n_k, n_0})^{\top} P^{n_k, n_0} \big) \geq \cdots \geq \lambda_{n_0} \big((P^{n_k, n_0})^{\top} P^{n_k, n_0} \big),
\]
as the eigenvalues of $(P^{n_k, n_0})^{\top} P^{n_k, n_0}$, with the alignment operator $P^{n_k, n_0}$ defined in \eqref{def-P}. 
For any $k \in \mathcal{A}_h$, 
 by \Cref{lemma-P} and $n_k \geq n_0$, we have that 
\begin{align}\label{lemma_eigen_1}
     \lambda_1 \big((P^{n_k, n_0})^{\top} P^{n_k, n_0} \big) \leq \Big\lceil \frac{n_k}{n_0} \Big\rceil 
 \leq \frac{2n_k}{n_0}.
\end{align}
Then we can conclude that for any $\widehat{f} \in \R^{n_0}$, 
 \begin{align}\label{minimax_1}
    \sum_{k \in \mathcal{A}_h} \Big\| P^{n_k, n_0}  \widehat{f} - P^{n_k, n_0} f \Big\|_2^2  \leq \frac{\sum_{k \in \mathcal{A}_h} (2n_k)}{n_0}\big\| \widehat{f} - f \big\|_{2}^2. 
 \end{align}

For any $k \in \mathcal{A}_h$, if $n_k = n_0$, by \Cref{lemma-P}, we have that 
\begin{align}\label{lemma_eigen_2_1}
     \lambda_{n_0} \big((P^{n_k, n_0})^{\top} P^{n_k, n_0} \big) = 1  > \frac{n_k}{2 n_0};
\end{align}
if $n_0 < n_k < 2n_0$, then by \Cref{lemma-P}, we have that 
\begin{align}\label{lemma_eigen_2_2}
     \lambda_{n_0} \big((P^{n_k, n_0})^{\top} P^{n_k, n_0} \big) \geq \Big\lceil \frac{n_k}{n_0}  \Big\rceil -1  =1 > \frac{n_k}{2 n_0};
\end{align}
and if $n_k \geq 2n_0$, then by \Cref{lemma-P}, we have that 
\begin{align}\label{lemma_eigen_2_3}
     \lambda_{n_0} \big((P^{n_k, n_0})^{\top} P^{n_k, n_0} \big) \geq \Big\lceil \frac{n_k}{n_0}  \Big\rceil -1 \geq  \frac{n_k - n_0}{n_0}   \geq \frac{n_k}{2 n_0}.
\end{align}
Combining \eqref{lemma_eigen_2_1}, \eqref{lemma_eigen_2_2} and \eqref{lemma_eigen_2_3}, for any $k \in \mathcal{A}_h$, it holds that 
\begin{align}\label{lemma_eigen_2}
     \lambda_{n_0} \big((P^{n_k, n_0})^{\top} P^{n_k, n_0} \big)  \geq \frac{n_k}{2 n_0}.
\end{align}

\medskip
\noindent \textbf{Step 1.}  
In this step, we consider the scenario where
\[
 \frac{s_0\log(n_0/s_0)} { \sum_{k \in \mathcal{A}_h} n_k   } \geq   h^2 \wedge \frac{s_0\log(n_0/s_0)} {n_{0}},
\]
and construct the ideal case with $f^{(k)} = P^{n_k, n_0} f$, for each $k \in \mathcal{A}_h$. This allows us to derive the lower bound in \eqref{minimax_bound}. 

Remember that $\mathcal{A}_h$ defined in \eqref{def-mathcal-A-h} with cardinality $a$, is denoted as $\mathcal{A}_h  = \{ k_1, \dots, k_a \}$, then define the parameter space  $\Theta_{s_0} $ as 
\begin{align}
     \Theta_{s_0} = \Big\{  \theta = \big(f^{\top}, (P^{n_{k_1}, n_0}f)^{\top}, \dots,  (P^{n_{k_a}, n_0}f)^{\top} \big)^{\top}\colon\|D f\|_{0} \leq s_0\mbox{ and } s_0 \geq 4 \Big\}. \nonumber
\end{align} 
Since $\Theta_{s_0} \subseteq  \Theta^{\prime}_{s_0, \mathcal{A}_h}$, then we have with an absolute constant $C_1 >0$ that
\begin{align}
 &\inf_{\widehat{f} \in \R^{n_0}} \sup_{\theta \in  \Theta^{\prime}_{s_0, \mathcal{A}_h}} \P \bigg\{ \big\| \widehat{f} - f \big\|_{1/n_0}^2  \geq  C_1\frac{s_0 \log(n_0 /s_0)}{\sum_{k \in \mathcal{A}_h} n_k}  \bigg\} \nonumber\\
  \geq &\inf_{\widehat{f} \in \R^{n_0}} \sup_{\theta \in  \Theta_{s_0}} \P \bigg\{ \big\| \widehat{f} - f \big\|_{1/n_0}^2  \geq  C_1\frac{s_0 \log(n_0 /s_0)}{\sum_{k \in \mathcal{A}_h} n_k }  \bigg\} \nonumber\\
  \geq  & \inf_{\widehat{f} \in \R^{n_0}} \sup_{\theta \in  \Theta_{s_0}} \P \bigg\{ \sum_{k \in \mathcal{A}_h} \Big\| P^{n_k, n_0}  \widehat{f} - P^{n_k, n_0} f \Big\|_2^2  \geq  2C_1  s_0 \log(n_0 /s_0) \bigg\} \nonumber\\ 
 \geq  & \inf_{\widehat{f}^{(k)} \in \R^{n_{k}}, \forall k \in \mathcal{A}_h}  \sup_{\theta \in  \Theta_{s_0}} \P \bigg\{ \sum_{k \in \mathcal{A}_h} \big\|  \widehat{f}^{(k)} -  f^{(k)} \big\|_2^2  \geq  2C_1  s_0 \log(n_0 /s_0) \bigg\}, \nonumber
\end{align}
where the second inequality follows from \eqref{minimax_1}.
Thus, to prove \eqref{minimax_bound}, it suffices to prove that 
\begin{align}\label{minimax_1_1}
   \inf_{\widehat{f}^{(k)} \in \R^{n_{k}}, \forall k \in \mathcal{A}_h}  \sup_{\theta \in  \Theta_{s_0}} \P \bigg\{ \sum_{k \in \mathcal{A}_h} \big\|  \widehat{f}^{(k)} -  f^{(k)} \big\|_2^2  \geq 2C_1  s_0 \log(n_0 /s_0) \bigg\} \geq \frac{1}{2}.
\end{align}
Let $l\geq 2$ and 
\begin{align}\label{minimax_1_2}
        \mathcal{B} = \big\{ f \in \{-1, 0, 1 \}^{n_0}\colon\| f\|_0 \leq l \big\}.
    \end{align}
Then by $1 \leq l/2 \leq s_0/4 < n_0/3$ and  Lemma 4 in \cite{raskutti2011minimax}, there exists $\widetilde{\mathcal{B}} \subseteq \mathcal{B}$ such that 
\begin{align}\label{minimax_1_3}
    \log\big(  \vert \widetilde{\mathcal{B}}  \vert  \big ) \geq \frac{l}{2} \log\bigg( \frac{n_0 - l}{l/2}\bigg) \quad \mbox{and} \quad \big\|f^1 -f^2 \big\|_2^2 \geq l/2, \quad \forall f^1\neq f^2 \in \widetilde{\mathcal{B}}. 
\end{align}
Let $\epsilon_1> 0$ be specified later. Define the parameter space $\widetilde{\Theta}_{\epsilon_1, 0}$ as
\begin{align}\label{minimax_1_4}
    \widetilde{\Theta}_{\epsilon_1, 0} = \Big\{ \big(f^{\top}, (P^{n_{k_1}, n_0}f)^{\top}, \dots,  (P^{n_{k_a}, n_0}f)^{\top} \big)^{\top}\colon f \in  2 \sqrt{2n_0/ \big(\sum_{k \in \mathcal{A}_h} n_k + 2n_0\big)} \epsilon_1 \widetilde{\mathcal{B}} \Big\}. 
\end{align}
Combining \eqref{minimax_1_3} and \eqref{minimax_1_4}, we have that 
\begin{align}\label{minimax_1_5}
   \log\big( \vert  \widetilde{\Theta}_{\epsilon_1, 0} \vert \big)  \geq \frac{l}{2} \log\bigg( \frac{n_0 - l}{l/2}\bigg),  
\end{align} 
and for any $\theta^{1}\neq \theta^{2} \in \widetilde{\Theta}_{\epsilon_1, 0}$,
\begin{align}
   \| \theta^{1} - \theta^{2} \|_2^2 \geq \frac{4ln_0\epsilon_1^2}{\sum_{k \in \mathcal{A}_h} n_k + 2n_0} \bigg( 1 + \sum_{k \in \mathcal{A}_h}  \lambda_{n_0} \big((P^{n_k, n_0})^{\top} P^{n_k, n_0} \big)  \bigg)   \geq  2\epsilon_1^2l, \nonumber
\end{align}
where the last inequality follows from \eqref{lemma_eigen_2}.

For any $\theta \in \widetilde{\Theta}_{\epsilon_1, 0}$, we consider comparing the measure $\mathcal{P}_{\theta} = \mathcal{N}\big( \theta, I_{\sum_{k \in \mathcal{A}_h \cup \{ 0\}} n_k} \big)$ against $\mathcal{P}_{0} = \mathcal{N}(0, I_{\sum_{k \in \mathcal{A}_h\cup \{ 0\}}n_k })$. Then we have that 
\begin{align}
    D_{\mathrm{KL}} \big( \mathcal{P}_{\theta}, \mathcal{P}_{0}\big) = \|\theta \|_2^2 \leq \frac{8ln_0\epsilon_1^2}{\sum_{k \in \mathcal{A}_h} n_k + 2n_0} \bigg( 1 + \sum_{k \in \mathcal{A}_h}  \lambda_1 \big((P^{n_k, n_0})^{\top} P^{n_k, n_0} \big)  \bigg) \leq 16\epsilon_1^2 l, \nonumber 
\end{align}
where the first inequality follows from \eqref{minimax_1_2} and \eqref{minimax_1_4}, and the last inequality follows from \eqref{lemma_eigen_1}.
Let $\epsilon_1^2 = \alpha   \log\big( \vert  \widetilde{\Theta}_{\epsilon_1, 0} \vert \big)/ (16 l)$ with $\alpha>0$ to be defined later, then it holds that
\begin{align}
   \frac{1}{\vert  \widetilde{\Theta}_{\epsilon_1, 0}\vert} \sum_{ \theta  \in  \widetilde{\Theta}_{\epsilon_1, 0}} D_{\mathrm{KL}} \big( \mathcal{P}_{\theta},  \mathcal{P}_{0}\big)  \leq \alpha  \log\big( \vert   \widetilde{\Theta}_{\epsilon_1, 0} \vert \big). \nonumber
\end{align}
By Theorem 2.5 in \cite{tsybakov2009introduction}, we have that 
\begin{align}
\inf_{\widehat{f}^{(k)} \in \R^{n_{k}}, \forall k \in \mathcal{A}_h}  \sup_{\theta \in  \widetilde{\Theta}_{\epsilon_1, 0}}  \P \bigg\{ \sum_{k \in \mathcal{A}_h} \big\|  \widehat{f}^{(k)} -  f^{(k)} \big\|_2^2  \geq \epsilon_1^2 l \bigg\} \geq \frac{\sqrt{\vert  \widetilde{\Theta}_{\epsilon_1, 0} \vert }}{ 1+ \sqrt{\vert  \widetilde{\Theta}_{\epsilon_1, 0} \vert}} \bigg( 1 - \alpha - \sqrt{\frac{2\alpha}{\log\big( \vert  \widetilde{\Theta}_{\epsilon_1, 0} \vert \big) }}\bigg). \nonumber
\end{align}
Choosing $\alpha > 0$ to be a small enough constant, by \eqref{minimax_1_5},  we obtain that there exists an absolute constant $C_2 >0$ such that 
\begin{align}
    \epsilon_1^2 l = \alpha \log\big( \vert  \widetilde{\Theta}_{\epsilon_1, 0}\vert \big)/ 16  \geq C_2 s_0   \log(n_0/s_0). \nonumber
\end{align}
Since $\alpha>0$ is a small enough constant, it holds that 
\begin{align}
   & \inf_{\widehat{f}^{(k)} \in \R^{n_{k}}, \forall k \in \mathcal{A}_h}  \sup_{\theta \in  \widetilde{\Theta}_{\epsilon_1, 0}}  \P \Big\{ \sum_{k \in \mathcal{A}_h} \big\|  \widehat{f}^{(k)} -  f^{(k)} \big\|_2^2  \geq  C_2 s_0 \log(n_0 /s_0) \Big\}  \geq \frac{1}{2}, \nonumber
\end{align}
which proves \eqref{minimax_1_1}.

\medskip
\noindent \textbf{Step 2.}
In this step, we deal with the scenario where
 \[
 \frac{s_0\log(n_0/s_0)} { \sum_{k \in \mathcal{A}_h} n_k  } <   h^2 \wedge \frac{s_0\log(n_0/s_0)} {n_{0}}.
\] 
Given a small enough absolute constant $C_h>0$,  we decompose our analysis into two distinct cases and construct the least informative scenario where for any $k \in \mathcal{A}_h$, $f^{(k)} = 0$ to prove the lower bound as shown in \eqref{minimax_bound}. The first case is $h^2 \leq C_h s_0\log(n_0/s_0) / n_{0}$, which is addressed in \textbf{Step 2.1}. Conversely, the second case is $h^2 > C_h s_0\log(n_0/s_0) / n_{0}$, which is examined in \textbf{Step 2.2}. 

\medskip
\noindent \textbf{Step 2.1.} In this step, we consider the scenario where
 \begin{align}\label{minimax_2.1_0}
 \frac{s_0\log(n_0/s_0)} { \sum_{k \in \mathcal{A}_h} n_k  } <   h^2 \wedge \frac{s_0\log(n_0/s_0)} {n_{0}} \quad \mbox{and} \quad h^2 \leq C_h\frac{s_0\log(n_0/s_0)}{n_{0}},
 \end{align}
 with a small enough absolute constant $C_h >0$. 
 
Let $\epsilon_2 = h \sqrt{\alpha n_0 / l}$ with a small enough absolute constant $\alpha>0$. Define the parameter space $\widetilde{\Theta}_{\epsilon_2}$ as
\begin{align}\label{minimax_2.1_1}
     \widetilde{\Theta}_{\epsilon_2}= \Big\{  \theta = \big(f^{\top}, (f^{(k_1)})^{\top}, \dots,  (f^{(k_a)})^{\top} \big)^{\top}\colon  f \in \epsilon_2\widetilde{\mathcal{B}}, f^{(k)} = 0, \forall k \in \mathcal{A}_h \Big\}.
\end{align} 
with $\widetilde{\mathcal{B}}$ defined in \textbf{Step 1}.
Since $ \widetilde{\Theta}_{\epsilon_2} \subseteq  \Theta^{\prime}_{s_0, \mathcal{A}_h}$, we have with an absolute $C_3 >0$ that
\begin{align}
 \inf_{\widehat{f} \in \R^{n_0}} \sup_{\theta \in  \Theta^{\prime}_{s_0, \mathcal{A}_h}} \P \Big\{ \big\| \widehat{f} - f \big\|_{1/n_0}^2  \geq  C_3 h^2  \Big\}  \geq  \inf_{\widehat{f} \in \R^{n_0}} \sup_{\theta \in \widetilde{\Theta}_{\epsilon_2}} \P \Big\{ \big\|\widehat{f} - f  \big\|_2^2  \geq   C_3n_0 h^2 \Big\}. \nonumber
\end{align}
Thus to prove \eqref{minimax_bound}, it suffices to prove that 
\begin{align}\label{minimax_2.1_2}
    \inf_{\widehat{f} \in \R^{n_0}} \sup_{\theta \in \widetilde{\Theta}_{\epsilon_2}} \P \Big\{ \big\|\widehat{f} - f  \big\|_2^2  \geq   C_3n_0 h^2 \Big\}  \geq \frac{1}{2}.
\end{align}

Combining \eqref{minimax_1_3} and \eqref{minimax_2.1_1}, we have that
\begin{align}\label{minimax_2.1_3}
   \log\big( \vert   \widetilde{\Theta}_{\epsilon_2} \vert \big)  \geq \frac{l}{2} \log\bigg( \frac{n_0 - l}{l/2}\bigg) \quad \mbox{and} \quad    \| \theta^{1} - \theta^{2} \|_2^2 \geq  \epsilon_2^2 l /2, \quad  \forall  \theta^{1}\neq \theta_{2} \in \widetilde{\Theta}_{\epsilon_2}.  
\end{align}
For any $\theta \in \widetilde{\Theta}_{\epsilon_2}$, we consider comparing  the measure $\mathcal{P}_{\theta} = \mathcal{N}\big(\theta, I_{ \sum_{k \in \mathcal{A}_h \cup \{ 0\}} n_k }\big)$ against $\mathcal{P}_{0} = \mathcal{N} \big(0, I_{\sum_{k \in \mathcal{A}_h \cup \{ 0\}} n_k} \big) $. It holds that 
\begin{align}
    D_{\mathrm{KL}} \big( \mathcal{P}_{\theta}, \mathcal{P}_{0}\big) = \|\theta\|_2^2 \leq  \epsilon_2^2 l, \nonumber
\end{align}
where the first inequality follows from \eqref{minimax_1_2} and \eqref{minimax_2.1_1}. Then  we obtain that 
\begin{align}
   \frac{1}{\vert  \widetilde{\Theta}_{\epsilon_2} \vert} \sum_{\theta \in  \widetilde{\Theta}_{\epsilon_2}} D_{\mathrm{KL}} \big( \mathcal{P}_{\theta},  \mathcal{P}_{0}\big)  \leq \epsilon_2^2 l = \alpha n_0h^2 \leq \alpha  C_h s_0\log(n_0/s_0) \leq \alpha  \log\big( \vert   \widetilde{\Theta}_{\epsilon_2} \vert \big), \nonumber
\end{align}
where the first equality is due to the choice of $\epsilon_2$, the second inequality follows form \eqref{minimax_2.1_0}, and the last inequality follows from \eqref{minimax_2.1_3} and $C_h >0$ is an small enough absolute constant .
Then by Theorem 2.5 in \cite{tsybakov2009introduction}, we have that  
\begin{align}
 \inf_{\widehat{f} \in \R^{n_0}}  \sup_{\theta \in \widetilde{\Theta}_{\epsilon_2}}  \P \Big\{ \big\| \widehat{f} - f \big\|_2^2  \geq \epsilon_2^2 l /2  \Big\} \geq \frac{\sqrt{\vert  \widetilde{\Theta}_{\epsilon_2}\vert }}{ 1+ \sqrt{\vert   \widetilde{\Theta}_{\epsilon_2} \vert}} \bigg( 1 - \alpha - \sqrt{\frac{2\alpha}{\log\big( \vert  \widetilde{\Theta}_{\epsilon_2}\vert \big) }}\bigg)    \geq \frac{1}{2}, \nonumber
\end{align}
where $\epsilon_2^2 l /2= \alpha n_0 h^2/2$ and the last inequality follows from that $\alpha>0$ is a small enough constant. This proves \eqref{minimax_2.1_2}.

\medskip
\noindent \textbf{Step 2.2.}
In this step, we focus on the scenario 
\[
\frac{s_0\log(n_0/s_0)}{ \sum_{k \in \mathcal{A}_h} n_k} <  h^2 \wedge \frac{s_0\log(n_0/s_0)}{n_{0}} \quad \mbox{and}  \quad h^2 > C_h \frac{s_0\log(n_0/s_0)}{n_{0}},
\]
with a small enough absolute constant $C_h >0$. 

Let $\epsilon_3  = \sqrt{C_h \alpha s_0\log(n_0/s_0) /l}$ with a small enough absolute constant $\alpha>0$. Define the parameter space $\widetilde{\Theta}_{\epsilon_3}$ as 
\begin{align}\label{minimax_2.2_1}
     \widetilde{\Theta}_{\epsilon_3} = \Big\{  \theta = \big(f^{\top}, (f^{(k_1)})^{\top}, \dots,  (f^{(k_a)})^{\top} \big)^{\top} \colon  f \in \epsilon_3 \widetilde{\mathcal{B}}, f^{(k)} = 0, \forall k \in \mathcal{A}_h \Big\}, 
\end{align} 
with $\widetilde{\mathcal{B}}$ defined in \textbf{Step 1}.
Since $ \widetilde{\Theta}_{\epsilon_3} \subseteq  \Theta^{\prime}_{s_0, \mathcal{A}_h}$, we have with an absolute $C_4 >0$ that
\begin{align}
 \inf_{\widehat{f} \in \R^{n_0}} \sup_{\theta \in  \Theta^{\prime}_{s_0, \mathcal{A}_h}} \P \bigg\{ \big\| \widehat{f} - f \big\|_{1/n_0}^2  \geq  C_4\frac{s_0 \log(n_0/s_0)}{n_0}\bigg\}  \geq  \inf_{\widehat{f} \in \R^{n_0}} \sup_{\theta \in  \widetilde{\Theta}_{\epsilon_3}} \P \Big\{ \big\|\widehat{f} - f  \big\|_2^2  \geq   C_4s_0 \log(n_0/s_0) \Big\}. \nonumber
\end{align}
Thus to prove \eqref{minimax_bound}, it suffices to prove that 
\begin{align}\label{minimax_2.2_2}
\inf_{\widehat{f} \in \R^{n_0}} \sup_{\theta \in  \widetilde{\Theta}_{\epsilon_3}} \P \Big\{ \big\|\widehat{f} - f  \big\|_2^2  \geq   C_4 s_0 \log(n_0/s_0) \Big\} \geq \frac{1}{2}.
\end{align}

Combining \eqref{minimax_1_3} and \eqref{minimax_2.2_1}, we derive that
\begin{align}\label{minimax_2.2_3}
   \log\big( \vert   \widetilde{\Theta}_{\epsilon_3} \vert \big)  \geq \frac{l}{2} \log\bigg( \frac{n_0 - l}{l/2}\bigg) \quad \mbox{and} \quad    \| \theta^{1} - \theta^{2} \|_2^2 \geq  \epsilon_3^2 l /2, \quad  \forall  \theta^{1}\neq \theta_{2} \in \widetilde{\Theta}_{\epsilon_3}.  
\end{align}
For any $\theta \in \widetilde{\Theta}_{\epsilon_3}$, we consider comparing  the measure $\mathcal{P}_{\theta} = \mathcal{N} \big(\theta, I_{ \sum_{k \in \mathcal{A}_h \cup \{ 0\}} n_k }\big)$ against $\mathcal{P}_{0} = \mathcal{N}(0, I_{ \sum_{k \in \mathcal{A}_h \cup \{ 0\}} n_k }\big) $. It holds that 
\begin{align}
    D_{\mathrm{KL}} \big( \mathcal{P}_{\theta}, \mathcal{P}_{0}\big) = \|\theta\|_2^2 \leq  \epsilon_3^2 l, \nonumber
\end{align}
where the first inequality follows from \eqref{minimax_1_2} and \eqref{minimax_2.2_1}. Then we obtain that 
\begin{align}
   \frac{1}{\vert  \widetilde{\Theta}_{\epsilon_3} \vert} \sum_{\theta \in  \widetilde{\Theta}_{\epsilon_3}} D_{\mathrm{KL}} \big( \mathcal{P}_{\theta},  \mathcal{P}_{0}\big)  \leq \epsilon_3^2 l  = C_h \alpha s_0\log(n_0/s_0)\leq \alpha  \log\big( \vert   \widetilde{\Theta}_{\epsilon_2} \vert \big), \nonumber
\end{align}
where the first equality is due to the choice of $\epsilon_3$, and the last inequality follows from \eqref{minimax_2.2_3} and $C_h >0$ is an small enough absolute constant .
Then by Theorem 2.5 in \cite{tsybakov2009introduction}, we can conclude that 
\begin{align}
 \inf_{\widehat{f} \in \R^{n_0}}  \sup_{\theta \in \widetilde{\Theta}_{\epsilon_3}}  \P \big\{ \| \widehat{f} - f \|_2^2  \geq \epsilon_3^2 l /2  \big\} \geq \frac{\sqrt{\vert  \widetilde{\Theta}_{\epsilon_3}\vert }}{ 1+ \sqrt{\vert   \widetilde{\Theta}_{\epsilon_3} \vert}} \bigg( 1 - \alpha - \sqrt{\frac{2\alpha}{\log\big( \vert  \widetilde{\Theta}_{\epsilon_3}\vert \big) }}\bigg)    \geq \frac{1}{2}, \nonumber
\end{align}
where $\epsilon_3^2 l /2 = C_h \alpha s_0\log(n_0/s_0)/2$ and the last inequality follows from that $\alpha$ is a small enough constant. This proves \eqref{minimax_2.2_2}. 

\medskip
Combining \eqref{minimax_1_1}, \eqref{minimax_2.1_2} and \eqref{minimax_2.2_2}, we complete the proof.
\end{proof}

\section[]{Technical details of results in \Cref{{sec-extensions}}}\label{app_4}

The proofs of \Cref{theorem_l_0-affine}, \Cref{theorem_l_0-1_all} and \Cref{theorem_l_0-K_all} can be found in Appendices \ref{app-affine}, \ref{app-tl-1-all} and \ref{app-tl-K-all}, respectively.

\subsection{Proof of Proposition \ref{theorem_l_0-affine}}\label{app-affine}

The proof of \Cref{theorem_l_0-affine} is in \Cref{app-subsec-affine} with all necessary auxiliary results in \Cref{app-affine-lemmas}.

\subsubsection{Proof of Proposition \ref{theorem_l_0-affine}}\label{app-subsec-affine}

\begin{proof}[Proof of \Cref{theorem_l_0-affine}]
This proof consists of four steps. In \textbf{Step 1}, we decompose our target quantity into several terms. We then deal with these terms individually in \textbf{Step 2} and \textbf{Step 3}. In \textbf{Step 4}, we gather all the pieces and conclude the proof.

\medskip
\noindent\textbf{Step 1.}
    It directly follows from the definition of  $\widetilde{f}^{\widetilde{A}} $ that
\begin{align}
&  \frac{1}{2n_0} \big\|\widetilde{A} y^{(1)} -  \widetilde{f}^{\widetilde{A}}   \big\|_2^2 +\widetilde{\lambda}_{{\widetilde{A}}} \|D \widetilde{f}^{\widetilde{A}}  \|_0    \leq \frac{1}{2n_0} \big\|\widetilde{A}  y^{(1)}   -   f  \big\|_2^2 +\widetilde{\lambda}_{\widetilde{A}} \|D  f \|_0.  \nonumber
\end{align}
Given that $y^{(1)} = f^{(1)} + \epsilon^{(1)}$ with $f^{(1)} = A f + \delta^A$, we derive that 
\begin{align}
  \frac{1}{2n_{0}} \big\|\widetilde{f}^{\widetilde{A}}  - \widetilde{A} A f  \big\|_2^2  
\leq  &    \frac{1}{2n_{0}} \big\|f - \widetilde{A} A f  \big\|_2^2  +   \frac{1}{n_0}\widetilde{\epsilon}^{\top}\big( \widetilde{f}^{\widetilde{A}}   - f \big)   + \widetilde{\lambda}_{\widetilde{A}} \|D   f \|_0  \nonumber\\
& \hspace{0.5cm}-  \widetilde{\lambda}_{\widetilde{A}}   \|D   \widetilde{f}^{\widetilde{A}}  \|_0 + \frac{1}{n_0}\big(   \widetilde{A} \delta^A\big)^{\top} \big( \widetilde{f}^{\widetilde{A}}   - f \big), \nonumber 
\end{align}
with $\widetilde{\epsilon} =\widetilde{A} \epsilon^{(1)} \in \R^{n_0}$. Since $\widetilde{A}A=I_{n_0}$, it holds that
\begin{align}\label{theorem-affine_1}
 \frac{1}{2n_{0}} \big\|\widetilde{f}^{\widetilde{A}}  - f  \big\|_2^2   
\leq  &   \frac{1}{n_0}\widetilde{\epsilon}^{\top} \big( \widetilde{f}^{\widetilde{A}}   - f \big) 
  + \widetilde{\lambda}_{\widetilde{A}}  \|D   f \|_0 - \widetilde{\lambda}_{\widetilde{A}}   \|D   \widetilde{f}^{\widetilde{A}}  \|_0
+ \frac{1}{n_0}\big(   \widetilde{A} \delta^A\big)^{\top} \big( \widetilde{f}^{\widetilde{A}}   -   f\big) \nonumber\\
= &  (I.1) + (I.2) + (I. 3) + (II) =(I) + (II).
\end{align}

\medskip
\noindent\textbf{Step 2.} In this step, we consider the term $(I)$ in \eqref{theorem-affine_1}. 

Let the set $\mathcal{S}$ be defined in \eqref{def-S}  with  cardinality $s_0$ and the set $\widetilde{\mathcal{S}}$ be defined as
\begin{align}\label{theorem-affine-3}
\widetilde{\mathcal{S}} = \big\{i \in [n_{0}-1]\colon \widetilde{f}^{\widetilde{A}} _i \neq \widetilde{f}^{\widetilde{A}} _{i+1}  \big\}  = \big\{i \in [n_{0}-1]\colon (D\widetilde{f}^{\widetilde{A}} )_i \neq 0  \big\}. 
\end{align}
Let the orthogonal projection operator $P^{\widetilde{\mathcal{S}} \cup \mathcal{S}}$ be defined in \Cref{lemm_l_0}, then we have that 
\begin{align}\label{theorem-affine-4}
   (I.1) = &  \frac{1}{n_{0}} \widetilde{\epsilon}^{\top} \big( P^{\widetilde{\mathcal{S}} \cup \mathcal{S}} ( \widetilde{f}^{\widetilde{A}}  - f  )  \big)
       =       \frac{1}{n_{0}} \big(P^{ \widetilde{\mathcal{S}} \cup \mathcal{S}} \widetilde{\epsilon} \big)^{\top} \big( \widetilde{f}^{\widetilde{A}}  - f  \big)   \nonumber\\
       \leq    &    \frac{1}{n_{0}}  \big\|P^{\widetilde{\mathcal{S}} \cup \mathcal{S}} \widetilde{\epsilon} \big\|_2
       \big\|\widetilde{f}^{\widetilde{A}}  - f  \big\|_2
         \leq          \frac{1}{n_{0}}  \big\|P^{\widetilde{\mathcal{S}} \cup \mathcal{S}}\widetilde{\epsilon} \big\|_2^2+  \frac{1}{4n_{0}} 
       \big\| \widetilde{f}^{\widetilde{A}}  - f    \big\|_2^2, 
\end{align}
where the first inequality follows from Cauchy--Schwartz inequality and the last inequality is based on the fact that $\vert ab \vert \leq a^2 + b^2/4$.
By  \eqref{theorem-affine-4} and \Cref{lemma_affine}, we can conclude that  that $\mathbb{P}\{\mathcal{E}\} \geq 1- n_{0}^{-c_{\epsilon}}$ with 
\[
 \quad \mathcal{E} = \bigg\{  (I.1) \leq \frac{1}{4n_{0}} 
       \big\| \widetilde{f}^{\widetilde{A}}  - f    \big\|_2^2+ C_{\epsilon}  \frac{  \big( \vert \widetilde{\mathcal{S}} \cup \mathcal{S}\vert +1 \big) \big\{1 + \log\big( n_{0}/ (\vert\widetilde{\mathcal{S}} \cup \mathcal{S}\vert+1) \big) \big\} }{n_0 / \|\widetilde{A}\|^2 }\bigg\},
\]
where $C_{\epsilon}, c_{\epsilon} >0$ are absolute constants. From now on we assume that the event $\mathcal{E}$ holds. Then it holds that 
\begin{align}\label{theorem-affine-5}
   (I)
 \leq & \frac{1}{4n_{0}} 
       \big\| \widetilde{f}^{\widetilde{A}}  - f    \big\|_2^2 +  C_{\epsilon}  \frac{  \big( \vert \widetilde{\mathcal{S}} \cup \mathcal{S}\vert +1 \big) \big\{1 + \log\big( n_{0}/ (\vert\widetilde{\mathcal{S}} \cup \mathcal{S}\vert+1) \big) \big\} }{n_0 / \|\widetilde{A}\|^2 }+  \widetilde{\lambda}_{\widetilde{A}} \|D   f \|_0 - \widetilde{\lambda}_{\widetilde{A}}   \|D   \widetilde{f}^{\widetilde{A}} \|_0\nonumber\\
= &  \frac{1}{4n_{0}} 
       \big\| \widetilde{f}^{\widetilde{A}}  - f    \big\|_2^2 +  C_{\epsilon}  \frac{  \big( \vert \widetilde{\mathcal{S}} \cup \mathcal{S}\vert +1 \big) \big\{1 + \log\big( n_{0}/ (\vert\widetilde{\mathcal{S}} \cup \mathcal{S}\vert+1) \big) \big\} }{n_0 / \|\widetilde{A}\|^2 }+  \widetilde{\lambda}_{\widetilde{A}} \big(s_0 - \vert \widetilde{\mathcal{S}} \vert \big) \nonumber\\
\leq & \frac{1}{4n_{0}} 
       \big\| \widetilde{f}^{\widetilde{A}}  - f    \big\|_2^2  + 2 \widetilde{\lambda}_{\widetilde{A}}   (s_0+1) \nonumber\\
= &\frac{1}{4n_{0}} 
       \big\| \widetilde{f}^{\widetilde{A}}  - f    \big\|_2^2  + 2 C_{\widetilde{\lambda}}\frac{( s_0+1) \big\{ 1+ \log\big(n_{0}/(s_0+1)\big) \big\} }{n_0 / \|\widetilde{A}\|^2 },
\end{align}
where 
\begin{itemize}
\item the first equality follows from the definitions of $\mathcal{S}$ and $\widetilde{\mathcal{S}}$ in \eqref{def-S} and \eqref{theorem-affine-3},
\item and the second inequality and the last equality are due to the choice of $\widetilde{\lambda}_{\widetilde{A}}$ in \eqref{tuning-parameter-affine-0} and $C_{\widetilde{\lambda}} >0$ is a large enough absolute constant.
\end{itemize}

\medskip
\noindent\textbf{Step 3.} In this step, we consider the term $(II)$ in  \eqref{theorem-affine_1}. Note that by applying the Cauchy-Schwartz inequality and utilising the fact that $\vert ab \vert \leq 2a^2 + b^2/8$, we can establish  that 
 \begin{align}\label{theorem-affine-6}
   (II) \leq     \frac{2\| \widetilde{A} \delta^A\|_2^2 }{n_{0}}  + \frac{1}{8n_{0}} \big\| \widetilde{f}^{\widetilde{A}}  - f   \big\|_2^2  \leq       \frac{ 2\| \widetilde{A} \|^2 \| \delta^A\|_2^2}{ n_0}  + \frac{1}{8n_{0}} \big\| \widehat{f} - f   \big\|_2^2.
\end{align}

\medskip
\noindent\textbf{Step 4.} Choosing  $\widetilde{\lambda}_{\widetilde{A}}$ as \eqref{tuning-parameter-affine-0}, and combining \eqref{theorem-affine_1}, \eqref{theorem-affine-5} and \eqref{theorem-affine-6}, we have with an absolute $C_1 > 0$ that  
\[
 \P  \Bigg\{  \big\|  \widehat{f} - f  \big\|_{1/n_0}^2
    \leq    C_1 \frac{  (s_{0}+1) \big\{1+\log \big(n_0/(s_0+1) \big) \big\} + \|\delta^A\|_2^2  }{n_0 / \|\widetilde{A}\|^2 }  \Bigg\} \geq 1 - n_0^{-c_{\mathcal{E}}},
\]
completing the proof.
\end{proof}

\subsubsection{Additional lemmas}\label{app-affine-lemmas}
\begin{lemma}\label{lemma-fobrnious}
 For any $n, m, k \in N^{*}$,  let $A \in \R^{n \times m}$ and $B \in R^{m \times k}$, then it holds that 
 \[
   \|AB\|_{\mathrm{F}} \leq \|A\| \|B\|_{\mathrm{F}}.
 \]
\end{lemma}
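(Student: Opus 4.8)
The plan is to exploit the column-wise additivity of the squared Frobenius norm together with the characterisation of the spectral norm $\|A\|$ as the operator norm induced by the Euclidean norm. Write $B = (b_1, \dots, b_k)$ with columns $b_j \in \R^m$ for $j \in [k]$. The first step is to observe that the columns of the product satisfy $(AB)_{\cdot, j} = A b_j$, so that the squared Frobenius norm decomposes as
\begin{equation}
    \|AB\|_{\mathrm{F}}^2 = \sum_{j=1}^{k} \big\|A b_j\big\|_2^2. \nonumber
\end{equation}

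Next I would bound each summand using the defining inequality of the spectral norm, namely that $\|Av\|_2 \leq \|A\|\,\|v\|_2$ for every $v \in \R^m$. Applying this with $v = b_j$ gives $\|A b_j\|_2^2 \leq \|A\|^2 \|b_j\|_2^2$ for each $j \in [k]$. Substituting into the display above and factoring out the constant $\|A\|^2$ yields
\begin{equation}
    \|AB\|_{\mathrm{F}}^2 \leq \|A\|^2 \sum_{j=1}^{k} \|b_j\|_2^2 = \|A\|^2 \|B\|_{\mathrm{F}}^2, \nonumber
\end{equation}
where the final equality is again the column-wise decomposition of the Frobenius norm, now applied to $B$. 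Taking square roots on both sides completes the argument.

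There is no substantial obstacle here: the result is the standard submultiplicativity bound mixing the operator and Frobenius norms, and every step is a one-line consequence of a definition. The only point meriting explicit care is the invocation of the operator-norm characterisation $\|A\| = \sup_{\|v\|_2 = 1} \|Av\|_2$, which is precisely what licenses the per-column estimate; everything else is bookkeeping of the sum over columns.
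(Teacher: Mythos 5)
Your argument is correct and coincides with the paper's own proof: both decompose $\|AB\|_{\mathrm{F}}^2$ column-wise, apply the operator-norm bound $\|Ab_j\|_2 \leq \|A\|\|b_j\|_2$ to each column, and sum. Nothing further to add.
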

\begin{proof}
Let $\{B^{j}\}_{j=1}^{k}$ be the columns of $B$, then we have that 
\[
  \|AB\|_{\mathrm{F}}^2 =  \sum_{j=1}^k \| AB^{j} \|^2 \leq \sum_{j=1}^k \| A \|^2\|B^j \|^2 = \| A \|^2 \sum_{j=1}^k \|B^j \|^2 = \| A \|^2 \|B\|_{\mathrm{F}}^2,
\]
completing the proof.
\end{proof}

\begin{lemma}\label{lemma_affine}
For any $\mathcal{M} \subseteq [n_0-1]$, if $\vert \mathcal{M} \vert > 0$, denote it as $\mathcal{M} = \{ t_1^\mathcal{M}, \dots, t_{\vert \mathcal{M} \vert}^{\mathcal{M}} \}$. Let $t_0^{\mathcal{M}} =0$ and $t_{\vert \mathcal{M} \vert+1}^{\mathcal{M}} = n_0$. Let the subspace $\mathcal{K}^{\mathcal{M}} \subset \mathbb{R}^{n}$ be defined as $\theta \in \mathcal{K}^{\mathcal{M}}$ if and only if $\theta$ takes a constant value on $\{t_{i}^{\mathcal{M}}+1, \dots, t^{\mathcal{M}}_{i+1}\}$ for each $i [ 0 : \vert \mathcal{M} \vert]$.  Then let  $P^{\mathcal{M}}$ be the orthogonal projection operator from $\R^{n_0}$ to $\mathcal{K}^{\mathcal{M}}$. Let $\widetilde{A}\in \R^{n_0 \times n_1}$ and
 assume that $\{ \epsilon^{(1)}_{i} \}_{i=1}^{n_1}$ are mutually independent mean-zero $C_{\sigma}$-sub-Gaussian variables, $C_{\sigma}>0$ is an absolute constant. Then there exist absolute constants $C_{\epsilon}, c_{\epsilon} >0$ such that 
\begin{align}
\P \Big\{ \forall \mathcal{M}  \subseteq [n_0-1]\colon \big\| P^{\mathcal{M}} \widetilde{A}  \epsilon^{(1)} \big\|_2^2 \leq C_{\epsilon} \| \widetilde{A} \|^2 ( \vert \mathcal{M} \vert  +1 ) \big\{ 1 \vee \log  \big( n_0/ ( \vert \mathcal{M} \vert +1) \big) \big\}\Big\}  \geq 1-n_0^{-c_{\epsilon}}.  \nonumber
\end{align}

\end{lemma}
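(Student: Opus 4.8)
The plan is to mirror the proof of \Cref{lemm_l_0}, but to replace its block-wise sub-exponential estimate (which exploited independence of the coordinates of $\widetilde{\epsilon}$) with a quadratic-form concentration inequality, since the coordinates of $\widetilde{A}\epsilon^{(1)}$ are in general dependent. First I would fix $\mathcal{M}\subseteq[n_0-1]$ and write the orthogonal projection as $P^{\mathcal{M}}=UU^\top$, where $U\in\R^{n_0\times d}$ has orthonormal columns and $d=|\mathcal{M}|+1=\rank(P^{\mathcal{M}})$ equals the number of constant pieces determining $\mathcal{K}^{\mathcal{M}}$. Setting $B=U^\top\widetilde{A}\in\R^{d\times n_1}$ and using $\|UU^\top v\|_2=\|U^\top v\|_2$, one obtains $\|P^{\mathcal{M}}\widetilde{A}\epsilon^{(1)}\|_2^2=\|B\epsilon^{(1)}\|_2^2=(\epsilon^{(1)})^\top M\epsilon^{(1)}$ with $M=B^\top B$ symmetric and positive semidefinite.

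The key deterministic step is to control the three quantities governing the Hanson--Wright inequality. Using $\|U^\top\|=1$ and \Cref{lemma-fobrnious}, I would bound the trace $\tr(M)=\|B\|_{\mathrm{F}}^2=\|\widetilde{A}^\top U\|_{\mathrm{F}}^2\leq\|\widetilde{A}\|^2\|U\|_{\mathrm{F}}^2=d\|\widetilde{A}\|^2$, the operator norm $\|M\|=\|B\|^2\leq\|\widetilde{A}\|^2$, and the Frobenius norm $\|M\|_{\mathrm{F}}=\|B^\top B\|_{\mathrm{F}}\leq\|B\|\,\|B\|_{\mathrm{F}}\leq\sqrt{d}\,\|\widetilde{A}\|^2$ (again via \Cref{lemma-fobrnious}, together with $\|U\|_{\mathrm{F}}^2=d$). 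Since the coordinates of $\epsilon^{(1)}$ are independent, mean-zero and $C_\sigma$-sub-Gaussian, the mean satisfies $\E\{(\epsilon^{(1)})^\top M\epsilon^{(1)}\}\lesssim C_\sigma^2\tr(M)\lesssim C_\sigma^2 d\|\widetilde{A}\|^2$.

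Next I would apply the Hanson--Wright inequality \citep[e.g.][]{vershynin2018high} to $(\epsilon^{(1)})^\top M\epsilon^{(1)}$. Taking the deviation level $t=C'\|\widetilde{A}\|^2 d\{1\vee\log(n_0/d)\}$, the bounds above give $t^2/(C_\sigma^4\|M\|_{\mathrm{F}}^2)\gtrsim d\{1\vee\log(n_0/d)\}^2$ and $t/(C_\sigma^2\|M\|)\gtrsim d\{1\vee\log(n_0/d)\}$, so the minimum of the two Hanson--Wright exponents is at least of order $d\{1\vee\log(n_0/d)\}$. Absorbing the mean into the constant, this yields for each fixed $\mathcal{M}$ that
\begin{align}
    \P\big\{\|P^{\mathcal{M}}\widetilde{A}\epsilon^{(1)}\|_2^2 > C_\epsilon\|\widetilde{A}\|^2(|\mathcal{M}|+1)\{1\vee\log(n_0/(|\mathcal{M}|+1))\}\big\} \leq 2\exp\big[-c(|\mathcal{M}|+1)\{1\vee\log(n_0/(|\mathcal{M}|+1))\}\big]. \nonumber
\end{align}

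Finally, I would take a union bound over all $\mathcal{M}\subseteq[n_0-1]$, grouping by cardinality $m=|\mathcal{M}|$ and using $\binom{n_0-1}{m}\leq(e(n_0-1)/m)^m$, exactly as in the proof of \Cref{lemm_l_0}. The resulting sum is then controlled by the same convexity argument: the map $m\mapsto -c(m+1)\log(n_0/(m+1))$ is convex on $[n_0-1]$, so its maximum is attained at an endpoint, giving a total failure probability of order $n_0^{-c_\epsilon}$. The main obstacle is precisely the loss of coordinate independence in $\widetilde{A}\epsilon^{(1)}$, which is what forces the switch from the per-block sum-of-squares estimate to Hanson--Wright; the role of \Cref{lemma-fobrnious} is to convert the rank identity $\rank(P^{\mathcal{M}})=|\mathcal{M}|+1$ into trace and Frobenius bounds on $M$ that make the exponent scale correctly in $|\mathcal{M}|+1$, so that the subsequent union bound matches that of \Cref{lemm_l_0} verbatim.
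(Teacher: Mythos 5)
Your proposal is correct and follows essentially the same route as the paper: both reduce $\|P^{\mathcal{M}}\widetilde{A}\epsilon^{(1)}\|_2^2$ to a quadratic form, control its trace, operator norm and Frobenius norm via \Cref{lemma-fobrnious} and $\|P^{\mathcal{M}}\|=1$, apply Hanson--Wright, and finish with the same cardinality-grouped union bound and convexity endpoint argument as in \Cref{lemm_l_0}. The only difference is cosmetic: you factor $P^{\mathcal{M}}=UU^\top$ and work with $M=B^\top B$ (giving the slightly cleaner bound $\|M\|_{\mathrm{F}}\leq\sqrt{d}\,\|\widetilde{A}\|^2$), whereas the paper keeps $\widetilde{A}^\top(P^{\mathcal{M}})^\top P^{\mathcal{M}}\widetilde{A}$ and routes the Frobenius bound through $\|\widetilde{A}\|_{\mathrm{F}}$.
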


\begin{proof}
Fix $\mathcal{M} \subseteq [n_0-1]$. Since $P^{\mathcal{M}}$  is an orthogonal projection operator, it holds that $\|P^{\mathcal{M}} \| =1$. Then we have that
\begin{align}\label{add-affine-lemma-1}
      \big\| \widetilde{A}^{\top}   \big(P^{\mathcal{M}} \big)^{\top} P^{\mathcal{M}} \widetilde{A}   \big\| =   \big\|  P^{\mathcal{M}} \widetilde{A}   \big\|^2
    \leq  \big\| P^{\mathcal{M}}\big\|^2 \|\widetilde{A} \|^2  =   \|\widetilde{A} \|^2, 
\end{align}
and
\begin{align}\label{add-affine-lemma-2}
    \big\| \widetilde{A}^{\top}   \big(P^{\mathcal{M}} \big)^{\top} P^{\mathcal{M}} \widetilde{A}   \big\|_{\mathrm{F}} =   \big\|  P^{\mathcal{M}} \widetilde{A}   \big\|_{\mathrm{F}}^2
    \leq  \big\| P^{\mathcal{M}}\big\|^2 \|\widetilde{A} \|_{\mathrm{F}}^2  =   \|\widetilde{A} \|_{\mathrm{F}}^2 \leq    \|\widetilde{A} \|   \|\widetilde{A} \|_{\mathrm{F}},
\end{align}
where the first and final inequalities follow from \Cref{lemma-fobrnious}.

Note that 
\begin{align}
  \mathbb{E} \Big\{ \big\| P^{\mathcal{M}} \widetilde{A}  \epsilon^{(1)} \big\|_2^2 \Big\} 
  = & \mathbb{E} \Big\{ \big(\epsilon^{(1)} \big)^{\top} \widetilde{A}^{\top}   \big(P^{\mathcal{M}} \big)^{\top}
 P^{\mathcal{M}} \widetilde{A}  \epsilon^{(1)}  \Big\}
 = \mathbb{E} \Big[  \mathrm{tr}\Big\{ \big(\epsilon^{(1)} \big)^{\top} \widetilde{A}^{\top}   \big(P^{\mathcal{M}} \big)^{\top}
 P^{\mathcal{M}} \widetilde{A}  \epsilon^{(1)}  \Big\}\Big]
 \nonumber \\
  = & \mathbb{E} \Big[  \mathrm{tr}\Big\{ \widetilde{A}^{\top}   \big(P^{\mathcal{M}} \big)^{\top}
 P^{\mathcal{M}} \widetilde{A}  \epsilon^{(1)} \big(\epsilon^{(1)} \big)^{\top}   \Big\}\Big]
 =  \mathrm{tr}\Big[ \widetilde{A}^{\top}   \big(P^{\mathcal{M}} \big)^{\top}
 P^{\mathcal{M}} \widetilde{A}   \mathbb{E}\big\{ \epsilon^{(1)} \big(\epsilon^{(1)} \big)^{\top}  \big\} \Big ]
\nonumber \\
 \leq & C_1 \mathrm{tr}\Big\{ \widetilde{A}^{\top}   \big(P^{\mathcal{M}} \big)^{\top}
 P^{\mathcal{M}} \widetilde{A}   \Big\} 
 = C_1 \big\| P^{\mathcal{M}} \widetilde{A}   \big\|_{\mathrm{F}}^2,  \nonumber
\end{align}
where $C_1 > 0$ is an absolute constant, the first inequality follows from  $\{ \epsilon^{(1)}_{i} \}_{i=1}^{n_1}$ are mutually independent mean-zero $C_{\sigma}$-sub-Gaussian variables with an absolute constant $C_{\sigma}>0$. Then by \Cref{lemma-fobrnious}, $\|P^{\mathcal{M}}\| = 1$ and $\|P^{\mathcal{M}}\|_{\mathrm{F}}^2= \vert \mathcal{M} \vert +1  $, we have that 
\begin{align}\label{add-affine-lemma-3}
\mathbb{E} \big\{ \big\| P^{\mathcal{M}} \widetilde{A}  \epsilon^{(1)} \big\|_2^2 \big\} \leq C_1 \| \widetilde{A}   \|_{\mathrm{F}}^2 \quad \mbox{and} \quad \mathbb{E} \big\{ \big\| P^{\mathcal{M}} \widetilde{A}  \epsilon^{(1)} \big\|_2^2 \big\} \leq C_1  (\vert \mathcal{M} \vert +1)  \| \widetilde{A} \|^2. 
\end{align}

Combining Hanson--Wright inequality \citep[e.g.~Theorem 6.2.1 in][]{vershynin2018high}, \eqref{add-affine-lemma-1} and  \eqref{add-affine-lemma-2}, we have for any $u>0$  that
\begin{align}
      & \P \Big[ \Big\vert \big\| P^{\mathcal{M}}  \widetilde{A} \epsilon^{(1)} \big\|_2^2 -    \mathbb{E} \big\{ \big\| P^{\mathcal{M}} \widetilde{A}  \epsilon^{(1)} \big\|_2^2 \big\}\Big \vert \geq  u \Big]
       \leq  2\exp \bigg\{ -c_1 \min \bigg(\frac{u^2}{ \|\widetilde{A} \|^2   \|\widetilde{A} \|_{\mathrm{F}}^2}, \frac{u}{ \|\widetilde{A} \|^2 } \bigg) \bigg\}, \nonumber
\end{align}
where $c_1 >0$ is an absolute constant.
Let $u = v   \|\widetilde{A}\|_{\mathrm{F}}^2$ and obtain that
\begin{align}\label{add-affine-lemma-4}
      & \P \Big[ \Big\vert \big\| P^{\mathcal{M}}  \widetilde{A} \epsilon^{(1)} \big\|_2^2 -    \mathbb{E} \big\{ \big\| P^{\mathcal{M}} \widetilde{A}  \epsilon^{(1)} \big\|_2^2 \big\}\Big \vert \geq  v   \|\widetilde{A}\|_{\mathrm{F}}^2 \Big]
       \leq  2\exp \big\{ -c_1 \min \big( v^2, v \big)  \|\widetilde{A} \|_{\mathrm{F}}^2 /  \|\widetilde{A} \|^2   \big\}. 
\end{align}
Denote $w^2 = \min(v^2, v)$, which is equivalent to $v = \max(w, w^2)$, and
\[
 Z^2 =  \frac{\big\| P^{\mathcal{M}}  \widetilde{A} \epsilon^{(1)} \big\|_2^2  }{   \mathbb{E} \big\{ \big\| P^{\mathcal{M}} \widetilde{A}  \epsilon^{(1)} \big\|_2^2 \big\} }.
\]
Combining \eqref{add-affine-lemma-3} and \eqref{add-affine-lemma-4}, we have with an absolute constant $c_2 >0$ that 
\begin{align}\label{add-affine-lemma-5}
      \P \big\{ \vert Z^2 - 1 \vert \geq \max(w, w^2)\big\}
      \leq  & \P \Big[ \Big\vert \big\| P^{\mathcal{M}}  \widetilde{A} \epsilon^{(1)} \big\|_2^2 -    \mathbb{E} \big\{ \big\| P^{\mathcal{M}} \widetilde{A}  \epsilon^{(1)} \big\|_2^2 \big\}\Big \vert \geq C_1^{-1} \max(w, w^2)  \|\widetilde{A}\|_{\mathrm{F}}^2 \Big] \nonumber \\
       \leq & 2\exp \big\{ -c_2 w^2  \|\widetilde{A} \|_{\mathrm{F}}^2 /  \|\widetilde{A} \|^2   \big\}.
\end{align}
By Equation (3.2) in \cite{vershynin2018high},  it holds that 
\begin{align}\label{add-affine-lemma-6}
  \vert z-1 \vert \geq w \quad \mbox{implies} \quad \vert z^2-1 \vert \geq  \max(w, w^2).
\end{align}
Note that   
\begin{align}
&\P \Big[  \big\| P^{\mathcal{M}}  \widetilde{A} \epsilon^{(1)} \big\|_2    \geq  C_1 \Big\{  ( \vert \mathcal{M} \vert +1)^{1/2}    \|\widetilde{A} \| + w \|\widetilde{A} \|_{\mathrm{F}} \Big\} \Big] \nonumber\\
\leq & \P \Big[  \big\| P^{\mathcal{M}}  \widetilde{A} \epsilon^{(1)} \big\|_2  \geq   \Big\{ \mathbb{E} \big\{ \big\| P^{\mathcal{M}} \widetilde{A}  \epsilon^{(1)} \big\|_2^2 \big\} \Big\}^{1/2}  + w   \Big\{ \mathbb{E} \big\{ \big\| P^{\mathcal{M}} \widetilde{A}  \epsilon^{(1)} \big\|_2^2 \big\} \Big\}^{1/2} \Big]  \nonumber\\
\leq &  \P \big\{ \vert Z - 1 \vert \geq w \big\}  
\leq  \P \big\{ \vert Z^2 - 1 \vert \geq \max(w, w^2) \big\} 
\leq  2\exp \big\{ -c_1 w^2  \|\widetilde{A} \|_{\mathrm{F}}^2 /  \|\widetilde{A} \|^2   \big\}, \nonumber
\end{align}
where the first inequality follows from \eqref{add-affine-lemma-3}, the the third inequality follows from \eqref{add-affine-lemma-6}, and the final inequality follows from \eqref{add-affine-lemma-5} .

Let 
\[ 
w^2 =  \frac{ C_2 \|\widetilde{A} \|^2 ( \vert \mathcal{M} \vert +1 ) \big\{ 1 \vee \log \big( n_0/ ( \vert \mathcal{M} \vert +1) \big) \big\}}{   \|\widetilde{A} \|_{\mathrm{F}}^2 }.
\]
with an absolute constant $C_2 >0$, it holds that 
\begin{align}
 &  \P \Big[  \big\| P^{\mathcal{M}}  \widetilde{A} \epsilon^{(1)} \big\|_2^2    \geq 2( C_2 +1) \|\widetilde{A} \|^2 ( \vert \mathcal{M} \vert +1 ) \big\{ 1 \vee \log \big( n_0/ ( \vert \mathcal{M} \vert +1) \big\} \Big] \nonumber\\
  \leq  &  \exp \big[ -c_2 ( \vert \mathcal{M} \vert +1) \big\{1\vee \log \big( n_0/ (\vert \mathcal{M} \vert +1)\big) \big\} \big]. \nonumber
\end{align}
where $c_2 >0$ is an absolute constant.
By a union bound argument, we derive that 
\begin{align}\label{add-affine-lemma-7}
& \P \Big[ \exists \mathcal{M}  \subseteq [n_0-1]\colon \big\| P^{\mathcal{M}}  \widetilde{A}  \epsilon^{(1)} \big\|_2^2 \geq  2( C_2 +1) \|\widetilde{A} \|^2( \vert \mathcal{M} \vert +1 ) \big\{ 1 \vee \log \big( n_0/ ( \vert \mathcal{M} \vert +1) \big) \big\} \Big] \nonumber \\
\leq &  \sum_{ \mathcal{M} \subseteq [n_0 - 1] } \P \Big[ \big\| P^{\mathcal{M}}  \widetilde{A}  \epsilon^{(1)}  \big\|_2^2 \geq2 ( C_2 +1) \|\widetilde{A} \|^2 ( \vert \mathcal{M} \vert +1 ) \big\{ 1 \vee \log \big( n_0/ (\vert \mathcal{M} \vert +1) \big) \big\} \Big] \nonumber \\
\leq & \sum_{m=0}^{n_0-1} \sum_{\substack{  \mathcal{M} \subseteq [n_0 - 1] \\ \mbox{with }\vert \mathcal{M}\vert = m} } \exp \Big[ -c_2 ( m+1 ) \big\{1\vee \log \big( n_0/ (m+1)\big)\big\} \Big]  \nonumber\\
\leq & \sum_{m = 0}^{n_0-1} \binom{n_0 - 1}{m} \exp \Big[ -c_2 ( m+1 ) \big\{1\vee \log \big( n_0/ (m+1) \big)\big\}\Big]   \nonumber  \\
\leq & n_0^{-c_2} +  \sum_{m = 1}^{n_0-1}  \exp \Big[ m  \log\big( e(n_0-1)/m\big)  -c_2 ( m+1 ) \big\{1\vee \log \big( n_0/ (m+1) \big)\big\} \Big]  \nonumber\\
\leq & n_0^{-c_2} +\sum_{m = 1}^{n_0-1}  \exp \Big[  -c_3 ( m+1 ) \big\{1\vee \log \big( n_0/ (m+1) \big)\big\} \Big],  
\end{align}
where $c_3>0$ is an absolute constant and the fourth inequality is based on the fact that for any $m_1 \in \mathbb{N}^{*}$ and $m_2 \in [m_1]$ 
\[
   \binom{m_1}{m_2} \leq \Big(\frac{em_1}{m_2}\Big)^{m_2}. 
\]
The function
\[
    m \mapsto  - c_2  (m+1) \log \big( n_0/ (m+1)\big) 
\]
is convex, so its maximum over $m \in  [n_0-1]$ is attained at either $m=1$ or $m=n_0-1$. Thus, we have with an absolute constant $c_4>0$ that
\begin{align}\label{add-affine-lemma-8}
 & \sum_{m = 1}^{n_0-1} \exp \Big[ -c_3 ( m+1 ) \big(1\vee \log ( n_0/ (m+1))\big) \Big]  \nonumber \\
\leq &  (n_0-1) \max\Big[ \exp \{ -2c_3 \log ( n_0/2 ) \},  \exp \{  -c_3 n_0  \}\Big] \leq n_0^{-c_{4}}.
\end{align}
Combining \eqref{add-affine-lemma-7} and \eqref{add-affine-lemma-8}, it holds with an absolute constant $c_5 >0$ that
\begin{align}
& \P \Big[ \forall \mathcal{M}  \subseteq [n_0-1]\colon \big\| P^{\mathcal{M}}  \widetilde{\epsilon} \big\|_2^2 \leq (C_{2} +1)\|\widetilde{A} \|^2 ( \vert \mathcal{M} \vert +1 ) \big\{ 1 \vee \log \big( n_0/ ( \vert \mathcal{M} \vert +1) \big) \big\} \Big]  \geq 1- n_0^{-c_5}, \nonumber
\end{align}
completing the proof.
\end{proof}

\subsection{Proof of Proposition \ref{theorem_l_0-1_all}}\label{app-tl-1-all}

\begin{proof}[Proof of \Cref{theorem_l_0-1_all}]
This proof consists of four steps. In \textbf{Step 1}, we decompose our target quantity into several terms. We then deal with these terms individually in \textbf{Step 2} and \textbf{Step 3}. In \textbf{Step 4}, we gather all the pieces and conclude the proof.

\medskip
\noindent\textbf{Step 1.}
    It directly follows from the definition of  $\widetilde{f}^{\{ 0, 1\}}$ that
\begin{align}
&  \frac{1}{2n_0} \big\|\widetilde{P}^{n_0, n_1+n_0} \widetilde{y} -  \widetilde{f}^{\{ 0, 1\}}  \big\|_2^2 +\widetilde{\lambda} \|D \widetilde{f}^{\{ 0, 1\}} \|_0    \leq \frac{1}{2n_0} \big\|\widetilde{P}^{n_0, n_1+n_0} \widetilde{y}   -   f  \big\|_2^2 +\widetilde{\lambda} \|D  f \|_0.  \nonumber
\end{align}
Define 
\begin{equation}\label{def-combin-f}
\widetilde{f}_i = 
\begin{cases} 
f_j & \mbox{if } i = \lceil j n_1/n_0 \rceil + j \mbox{ for some } j \in [n_0], \\
f^{(1)}_{i - \vert\{ \lceil j n_1/n_0 \rceil + j\colon  j \in [n_0] \} \cap [i] \vert} & \mbox{otherwise},
\end{cases}
\end{equation}
\begin{equation}\label{def-combin-error}
\widetilde{\epsilon}_i = 
\begin{cases} 
\epsilon_j & \mbox{if } i = \lceil j n_1/n_0 \rceil + j \mbox{ for some } j \in [n_0], \\
\epsilon^{(1)}_{i - \vert\{ \lceil j n_1/n_0 \rceil + j\colon  j \in [n_0] \} \cap [i] \vert} & \mbox{otherwise},
\end{cases}
\end{equation}
and 
\begin{equation}\label{def-tilde-delta}
 \widetilde{\delta} = \widetilde{f} - P^{n_1+n_0, n_0}f.
\end{equation}
Given that $\widetilde{y} = \widetilde{f} + \widetilde{\epsilon}$, we derive that 
\begin{align}
 &  \frac{1}{2n_{0}} \big\|\widetilde{f}^{\{ 0, 1\}}  - \widetilde{P}^{n_0, n_1+n_0} P^{n_1+n_0, n_0} f  \big\|_2^2   \nonumber\\
\leq  &    \frac{1}{2n_{0}} \big\|f - \widetilde{P}^{n_0, n_1+n_0} P^{n_1+n_0, n_0} f  \big\|_2^2  +   \frac{1}{n_0} \big(   \widetilde{P}^{n_0, n_1+n_0} \widetilde{\epsilon} \big)^{\top}\big( \widetilde{f}^{\{ 0, 1\}}  - f \big)   
\nonumber\\
& \hspace{0.5cm} + \widetilde{\lambda} \|D   f \|_0 -  \widetilde{\lambda}   \|D   \widetilde{f}^{\{ 0, 1\}} \|_0 + \frac{1}{n_0}\big(   \widetilde{P}^{n_0, n_1+n_0}  \widetilde{\delta}\big)^{\top} \big( \widetilde{f}^{\{ 0, 1\}}  - f \big), \nonumber 
\end{align}
By \Cref{lemma-P}, it holds that
\begin{align}\label{theorem2_1_all}
 \frac{1}{2n_{0}} \big\|\widetilde{f}^{\{ 0, 1\}} - f  \big\|_2^2    
\leq  &   \frac{1}{n_0}\big(   \widetilde{P}^{n_0, n_1+n_0} \widetilde{\epsilon} \big)^{\top} \big( \widetilde{f}^{\{ 0, 1\}} - f \big) 
  \nonumber \\
  \hspace{0.5cm} & + \widetilde{\lambda} \|D   f \|_0 - \widetilde{\lambda}  \|D   \widetilde{f}^{\{ 0, 1\}} \|_0
+ \frac{1}{n_0}\big(   \widetilde{P}^{n_0, n_1+n_0}  \widetilde{\delta}\big)^{\top} \big(\widetilde{f}^{\{ 0, 1\}}  -   f\big) \nonumber\\
= &  (I.1) + (I.2) + (I. 3) + (II) =(I) + (II).
\end{align}

\medskip
\noindent\textbf{Step 2.} In this step, we consider the term $(I)$ in \eqref{theorem2_1_all}.

 Note that following from that  $\{ \epsilon\}_{i=1}^{n_0} \cup \{ \epsilon^{(1)}\}_{i=1}^{n_1}$ are mutually independent, the definition of $\widetilde{\epsilon}$ in \eqref{def-combin-error} and  \Cref{lemma-alignment-error}, we obtain that 
\begin{align}\label{theorem2_3_all}
\{(\widetilde{P}^{n_0, n_1+n_0}  \widetilde{\epsilon})_i\}_{i=1}^{n_0} \overset{\mbox{ind.}}{\sim} 
  \mbox{mean-zero }C_{\sigma}\{ 2 n_0/(n_1+n_0)  \}^{1/2} \mbox{-sub-Gaussian}.
\end{align}

Let the set $\mathcal{S}$ be defined in \eqref{def-S}  with  cardinality $s_0$ and the set $\widetilde{\mathcal{S}}$ be defined as
\begin{align}\label{theorem2_4_all}
\widetilde{\mathcal{S}} = \big\{i \in [n_{0}-1]\colon \widetilde{f}^{\{ 0, 1\}}_i \neq \widetilde{f}^{\{ 0, 1\}}_{i+1}  \big\}  = \big\{i \in [n_{0}-1]\colon (D\widetilde{f}^{\{ 0, 1\}})_i \neq 0  \big\}. 
\end{align}
Let the orthogonal projection operator $P^{\widetilde{\mathcal{S}} \cup \mathcal{S}}$ be defined in \Cref{lemm_l_0}, then we have that 
\begin{align}\label{theorem2_5_all}
   (I.1) = &  \frac{1}{n_{0}} \big(   \widetilde{P}^{n_0, n_1+n_0} \widetilde{\epsilon} \big)^{\top} \big( P^{\widetilde{\mathcal{S}} \cup \mathcal{S}} ( \widetilde{f}^{\{ 0, 1\}} - f  )  \big)
       =       \frac{1}{n_{0}} \big(P^{ \widetilde{\mathcal{S}} \cup \mathcal{S}} \widetilde{P}^{n_0, n_1+n_0} \widetilde{\epsilon} \big)^{\top} \big( \widetilde{f}^{\{ 0, 1\}} - f  \big)   \nonumber\\
       \leq    &    \frac{1}{n_{0}}  \big\|P^{\widetilde{\mathcal{S}} \cup \mathcal{S}} \widetilde{P}^{n_0, n_1+n_0} \widetilde{\epsilon} \big\|_2
       \big\|\widetilde{f}^{\{ 0, 1\}} - f  \big\|_2
         \leq          \frac{1}{n_{0}}  \big\|P^{\widetilde{\mathcal{S}} \cup \mathcal{S}}\widetilde{P}^{n_0, n_1+n_0} \widetilde{\epsilon} \big\|_2^2+  \frac{1}{4n_{0}} 
       \big\| \widetilde{f}^{\{ 0, 1\}} - f    \big\|_2^2, 
\end{align}
where the first inequality follows from the Cauchy--Schwartz inequality and the last inequality is based on the fact that $\vert ab \vert \leq a^2 + b^2/4$.
By \eqref{theorem2_3_all}, \eqref{theorem2_5_all} and \Cref{lemm_l_0}, we can conclude that  that $\mathbb{P}\{\mathcal{E}\} \geq 1- n_{0}^{-c_{\epsilon}}$ with 
\[
 \quad \mathcal{E} = \bigg\{  (I.1) \leq \frac{1}{4n_{0}} 
       \big\| \widetilde{f}^{\{ 0, 1\}} - f    \big\|_2^2+ C_{\epsilon}  \frac{  \big( \vert \widetilde{\mathcal{S}} \cup \mathcal{S}\vert +1 \big) \big\{1 + \log\big( n_{0}/ (\vert\widetilde{\mathcal{S}} \cup \mathcal{S}\vert+1) \big) \big\} }{n_1+n_0 }\bigg\},
\]
where $C_{\epsilon}, c_{\epsilon} >0$ are absolute constants. From now on we assume that the event $\mathcal{E}$ holds. Then it holds that 
\begin{align}\label{theorem2_7_all}
   (I)
 \leq & \frac{1}{4n_{0}} 
       \big\| \widetilde{f}^{\{ 0, 1\}} - f    \big\|_2^2 +  C_{\epsilon}  \frac{  \big( \vert \widetilde{\mathcal{S}} \cup \mathcal{S}\vert +1 \big) \big\{1 + \log\big( n_{0}/ (\vert\widetilde{\mathcal{S}} \cup \mathcal{S}\vert+1) \big) \big\} }{n_1+n_0 }+  \widetilde{\lambda} \|D   f \|_0 - \widetilde{\lambda}   \|D   \widetilde{f}^{\{ 0, 1\}}\|_0\nonumber\\
= &  \frac{1}{4n_{0}} 
       \big\| \widetilde{f}^{\{ 0, 1\}} - f    \big\|_2^2 +  C_{\epsilon}  \frac{  \big( \vert \widetilde{\mathcal{S}} \cup \mathcal{S}\vert +1 \big) \big\{1 + \log\big( n_{0}/ (\vert\widetilde{\mathcal{S}} \cup \mathcal{S}\vert+1) \big) \big\} }{n_1 +n_0}+  \widetilde{\lambda} \big(s_0 - \vert \widetilde{\mathcal{S}} \vert \big) \nonumber\\
\leq & \frac{1}{4n_{0}} 
       \big\| \widetilde{f}^{\{ 0, 1\}} - f    \big\|_2^2  + 2 \widetilde{\lambda}   (s_0+1) \nonumber\\
= &\frac{1}{4n_{0}} 
       \big\|\widetilde{f}^{\{ 0, 1\}} - f    \big\|_2^2  + 2 C_{\widetilde{\lambda}}\frac{( s_0+1) \big\{ 1+ \log\big(n_{0}/(s_0+1)\big) \big\} }{n_1 +n_0},
\end{align}
where 
\begin{itemize}
\item the first equality follows from the definitions of $\mathcal{S}$ and $\widetilde{\mathcal{S}}$ in \eqref{def-S} and \eqref{theorem2_4_all},
\item and the second inequality and the last equality are due to the choice of $\widetilde{\lambda}$ in \eqref{tuning-parameter-0_all} and $C_{\widetilde{\lambda}} >0$ is a large enough absolute constant.
\end{itemize}

\medskip
\noindent\textbf{Step 3.} In this step, we consider the term $(II)$ in  \eqref{theorem2_1_all}. Note that by applying the Cauchy-Schwartz inequality and utilising the fact that $\vert ab \vert \leq 2a^2 + b^2/8$, we can establish  that 
 \begin{align}\label{theorem2_8_all}
   (II) \leq &     \frac{2\| \widetilde{P}^{n_0, n_1+n_0} \widetilde{\delta}\|_2^2 }{n_{0}}  + \frac{1}{8n_{0}} \big\| \widetilde{f}^{\{ 0, 1\}} - f   \big\|_2^2 
     \leq    \frac{ 4 \| \widetilde{\delta}\|_2^2}{ n_1+n_0} + \frac{1}{8n_{0}} \big\| \widetilde{f}^{\{ 0, 1\}} - f   \big\|_2^2, 
\end{align}
where the second inequality follows from \Cref{lemma-alignment-delta}.

Note that 
\begin{align}\label{theorem2_9_all}
    \| \widetilde{\delta}\|_2^2 = &  \sum_{i = 1}^{n_1 + n_0} (\widetilde{f}  - P^{n_1 + n_0, n_0} f)_i^2
    = \sum_{i = 1}^{n_1 + n_0}  \bigg( \widetilde{f}_i -  \sum_{j=1}^{n_0} f_j \mathbbm{1}_{\{\lceil (j-1) (n_1 + n_0)/ n_0 \rceil +1 \leq i  \leq 
 \lceil j (n_1+n_0)/ n_0 \rceil \}}  \bigg)^2 \nonumber\\
  = &        \sum_{i = 1}^{n_1 + n_0}  \bigg( \widetilde{f}_i -  \sum_{j=1}^{n_0} f_j \mathbbm{1}_{\{\lceil (j-1) n_1 / n_0 \rceil +j \leq i \leq    \lceil jn_1/ n_0 \rceil +j \}}  \bigg)^2 
  \nonumber\\
  = &  
  \sum_{j = 1}^{n_0} \sum_{    \lceil  (j-1) n_1 / n_0 \rceil +j \leq l \leq  \lceil  j n_1 / n_0 \rceil +j } \big( \widetilde{f}_{l} - f_j \big)^2 
  \nonumber\\
  = &  
  \sum_{j = 1}^{n_0} \bigg\{ (f_j -f_j)^2 +  \sum_{    \lceil  (j-1) n_1 / n_0 \rceil +j \leq l \leq \lceil  j n_1 / n_0 \rceil +j -1 } ( f^{(1)}_{l - (j-1)} - f_j)^2 \bigg\}
    \nonumber\\
  = &  
  \sum_{j = 1}^{n_0}   \sum_{    \lceil  (j-1) n_1 / n_0 \rceil +1  \leq l \leq  \lceil  j n_1 / n_0 \rceil } ( f^{(1)}_{l} - f_j)^2, 
\end{align}
where the first equality follows from the definition of $\widetilde{\delta}$ in \eqref{def-tilde-delta}, the second equality follows from the definition of the alignment operator  $P^{n_1 + n_0, n_0}$ in \eqref{def-P}, the fifth equality follows from the definition of $\widetilde{f}$ in \eqref{def-combin-f}. We also have that 
\begin{align}\label{theorem2_10_all}
 \|\delta \|_2^2 = & \| f^{(1)} - P^{n_1, n_0}f  \|_2^2= \sum_{i=1}^{n_1}
\bigg( f^{(1)}_{i} - \sum_{j=1}^{n_0} f_j \mathbbm{1}_{\lceil (j-1) n_1/ n_0 \rceil +1 \leq i  \leq 
 \lceil j n_1/ n_0 \rceil }  \bigg)^2 \nonumber \\
 = &  \sum_{j=1}^{n_0} \sum_{\lceil (j-1) n_1/ n_0 \rceil +1 \leq l  \leq 
 \lceil j n_1/ n_0 \rceil }   \big( f^{(1)}_l -  f_j \big)^2, 
\end{align}
where the first equality follows from the definition of $\delta$ in \eqref{delta-unisource} and the second quality follows the alignment operator  $P^{n_1, n_0}$ in \eqref{def-P}.

By \eqref{theorem2_9_all} and \eqref{theorem2_10_all}, it holds that 
\[
\| \widetilde{\delta}\|_2^2 = \|\delta \|_2^2,
\]
Consequently, by \eqref{theorem2_8_all} we have that 
 \begin{align}\label{theorem2_11_all}
   (II) \leq    \frac{ 4 \|\delta\|_2^2}{ n_1+n_0} + \frac{1}{8n_{0}} \big\| \widetilde{f}^{\{ 0, 1\}} - f   \big\|_2^2.
\end{align}

\medskip
\noindent\textbf{Step 4.} Choosing  $\widetilde{\lambda}$ as \eqref{tuning-parameter-0_all}, and combining \eqref{theorem2_1_all}, \eqref{theorem2_7_all} and \eqref{theorem2_11_all}, we have with an absolute $C_1 > 0$ that  
\[
 \P  \Bigg\{  \big\|  \widetilde{f}^{\{ 0, 1\}} - f  \big\|_{1/n_0}^2
    \leq    C_1\frac{  (s_{0}+1) \big\{1+\log \big(n_1/(s_0+1) \big) \big\}+\|\delta\|_2^2 }{n_1 +n_0} \Bigg\} \geq 1 - n_0^{-c_{\mathcal{E}}},
\]
completing the proof.

\end{proof}

\section[]{Extensions: Using target data for transfer learning in non-selective multisource scenarios}\label{sec:target-multi}

This Appendix is a multisource version of \Cref{sec:target-uni}.  Different from the unisource case in \Cref{sec-one-source}, the main results in the multisource case in \Cref{sec-multiple-source} have already utilised the target data, which are used to select informative sources.  For completeness, we add the counterpart of \Cref{sec:target-uni} here.

We first introduce two alignment operators. For any $n, h \in \mathbb{N}^{*}$ and $\{m_k\}_{k=0}^h \subset \mathbb{N}^{*}$,   let the alignment operators $P^{\{m_k\}_{k=0}^h, n} \in \R^{(\sum_{k=0}^h m_k) \times n}$ and $\widetilde{P}^{n, \{m_k\}_{k=0}^h} \in \R^{n \times (\sum_{k=0}^h m_k)}$   be defined as
 \begin{equation}\label{def-P-all}
   (P^{\{m_k\}_{k=0}^h, n})_{i, j}= 
      \mathbbm{1}_{\{ \sum_{k =0}^{h}\lceil(j-1)  m_k/ n \rceil +1 \leq i  \leq \sum_{k =0}^{h} \lceil j m_k / n \rceil \}  },  \quad (i, j) \in \bigg[\sum_{k=0}^h m_k\bigg] \times [n],
 \end{equation}
and  with $\max_{k\in [0:h]} m_k \geq n$
 \begin{equation}\label{def-P-alt-all}      
      (\widetilde{P}^{n, \{m_k\}_{k=0}^h})_{i, j}  = \frac{\mathbbm{1}_{ \{ \sum_{k =0}^{h} \lceil(i-1)m_k/n  \rceil  + 1 \leq j \leq \sum_{k =0}^{h} \lceil im_k/n \rceil  \} }}{ \sum_{k =0}^{h} \big(\lceil im_k/n \rceil - \lceil(i-1)m_k/n\rceil \big)} ,
   \quad (i, j) \in [n] \times \bigg[ \sum_{k=0}^h m_k\bigg],
 \end{equation}
respectively.

Extending \eqref{estimator_l_0-1_all} from unisource to multisource, 
we introduce the target-multisource-transferred $\ell_0$-penalised estimator 
\begin{equation}\label{estimator_l_0-K_all}
    \widetilde{f}^{[0:K]} =    \widetilde{f}^{[0:K]}(\widetilde{\lambda}) = \argmin_{\theta \in \R^{n_0}} \Big\{  \frac{1}{2n_0} \Big\|\widetilde{P}^{n_0, \{n_k\}_{k=0}^K  } \widetilde{y}^{\mathrm{all}} - \theta \Big\|_2^2   + \widetilde{\lambda} \|D   \theta \|_0  \Big\},  
\end{equation}
where $\widetilde{P}^{n_0, \{n_k\}_{k=0}^K} \in \R^{n_0 \times \sum_{k \in [0:K]} n_k}$ is defined in \eqref{def-P-alt-all},   $\widetilde{\lambda} > 0$ is a tuning parameter, $D \in \R^{(n_0 - 1) \times n_0}$ is defined in \eqref{def-D}, and $\widetilde{y}^{\mathrm{all}} \in \R^{\sum_{k \in [0:K]} n_k}$ with,
\[
\widetilde{y}^{\mathrm{all}}_i = 
\begin{cases} 
y_j & \mbox{if } i =  \widetilde{J}_{j , 1}  \mbox{ for some } j \in [n_0], \\
y^{(k)}_{i - \widetilde{J}_{j , k} }  & \mbox{if } i \in \mathcal{J}_{j, k}   \mbox{ for some } j \in [n_0] \mbox{ and } k \in [K],
\end{cases} 
\quad \mbox{for } i \in \bigg[\sum_{k \in [0:K]} n_k\bigg],
\]
\begin{equation}\label{def-J}
  \widetilde{J}_{j ,k} = \sum_{\tilde{k} \in [0:(k-1)]} \lceil j n_{\tilde{k}} / n_0 \rceil + \sum_{\tilde{k} \in [0:K] \backslash [0:(k-1)]} \lceil (j-1) n_{\tilde{k}} / n_0 \rceil \quad \mbox{and}\quad \mathcal{J}_{j, k} =  \big\{i \in \mathbb{N}^{*} \colon \widetilde{J}_{j ,k} < i 
\leq  \widetilde{J}_{j, k+1} \big\},
\end{equation}
for $j \in [n_0]$ and $k \in [K]$.

The theoretical guarantees for $ \widetilde{f}^{[0:K]}$ are derived below.

\begin{proposition}\label{theorem_l_0-K_all}
Let the target data $\{y_i\}_{i= 1}^{n_0}$ be from \eqref{model-target} and  multisource data $\{y_i^{(k)}\}_{i=1, k=1}^{n_k, K}$ be from~\eqref{model-aux}.  Assume that $\{ \epsilon_i\}_{i=1}^{n_0} \cup \{ \epsilon_i^{(k)}\}_{i=1, k=1}^{n_k, K} $ are mutually independent mean-zero $C_{\sigma}$-sub-Gaussian distributed with an absolute constant $C_{\sigma} >0$. Let $\widetilde{f}^{[0:K]}$ be defined in \eqref{estimator_l_0-K_all}, with tuning parameter 
\begin{align}\label{tuning-parameter-0-K_all} 
  \widetilde{\lambda} = C_{\widetilde{\lambda}}\frac{ 1 + \log \big(n_0/(s_0+1) \big)}{\sum_{k=0}^{K}  n_k  \mathbbm{1}_{\{ n_k \geq n_0\}} },  
 \end{align}
where $C_{\widetilde{\lambda}}> 0$ is an absolute constant.  It holds with probability at least $1 - n_0^{-c}$ that 
 \begin{align}\label{upper_bound_l_0-K_all} 
    \big\|\widetilde{f}^{[0:K]}  - f  \big\|_{1/n_0}^2  
     \leq &    C\frac{  (s_{0}+1) \big\{1+\log \big(n_0/(s_0+1) \big) \big\}+ \sum_{k=1}^{K}   \|\delta^{(k)} \|_2^2}{\sum_{k=0}^{K}  n_k  \mathbbm{1}_{\{ n_k \geq n_0\}} }.
 \end{align}
\end{proposition}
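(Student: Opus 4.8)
The plan is to mirror the four-step argument used for \Cref{theorem_l_0-1_all}, replacing the single-source alignment operators $\widetilde{P}^{n_0,n_1}$ and $P^{n_1,n_0}$ throughout by their multisource counterparts $\widetilde{P}^{n_0,\{n_k\}_{k=0}^K}$ and $P^{\{n_k\}_{k=0}^K, n_0}$ from \eqref{def-P-all} and \eqref{def-P-alt-all}. As in \Cref{theorem_l_0-1}, I would first invoke the optimality of $\widetilde{f}^{[0:K]}$ together with the identity $\widetilde{y}^{\mathrm{all}} = \widetilde{f}^{\mathrm{all}} + \widetilde{\epsilon}^{\mathrm{all}}$, where $\widetilde{f}^{\mathrm{all}}$ and $\widetilde{\epsilon}^{\mathrm{all}}$ are the interleaved signal and noise vectors built from $f,\{f^{(k)}\}$ and $\{\epsilon_i\},\{\epsilon_i^{(k)}\}$ analogously to \eqref{def-combin-f}--\eqref{def-combin-error}. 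Using a right-inverse identity $\widetilde{P}^{n_0,\{n_k\}_{k=0}^K} P^{\{n_k\}_{k=0}^K, n_0} = I_{n_0}$ (the multisource analogue of the last claim in \Cref{lemma-P}, which holds because the target always populates each averaging block), this reduces the basic inequality to the familiar decomposition $\frac{1}{2n_0}\|\widetilde{f}^{[0:K]}-f\|_2^2 \leq (I) + (II)$, where $(I)$ collects the noise cross term and the $\ell_0$-penalty difference and $(II)$ is the bias cross term.

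The bulk of the new work is four generalised alignment lemmas, all driven by a single block-size estimate. For each $i \in [n_0]$ the $i$th averaging block of $\widetilde{P}^{n_0,\{n_k\}_{k=0}^K}$ has size $b_i = \sum_{k=0}^K c_k(i)$ with $c_k(i) = \lceil i n_k/n_0\rceil - \lceil (i-1)n_k/n_0\rceil$. Using $\lceil i n_k/n_0\rceil \geq \lceil(i-1)n_k/n_0\rceil + \lceil n_k/n_0\rceil - 1$ exactly as in the eigenvalue bounds \eqref{lemma_eigen_2_1}--\eqref{lemma_eigen_2_3}, one shows $c_k(i) \geq n_k/(2n_0)$ whenever $n_k \geq n_0$, while $c_k(i) \in \{0,1\}$ when $n_k < n_0$; hence $b_i \geq \frac{1}{2n_0}\sum_{k=0}^K n_k \mathbbm{1}_{\{n_k \geq n_0\}}$ uniformly in $i$. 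This is precisely where the indicator in \eqref{tuning-parameter-0-K_all} and \eqref{upper_bound_l_0-K_all} originates: a low-frequency source may contribute an empty block for some coordinate and therefore cannot be counted toward the guaranteed variance reduction. Because the blocks partition $[\sum_{k=0}^K n_k]$ and the underlying noises are mutually independent, the coordinates of $\widetilde{P}^{n_0,\{n_k\}_{k=0}^K}\widetilde{\epsilon}^{\mathrm{all}}$ are independent mean-zero $C_\sigma\{2n_0/(\sum_{k=0}^K n_k\mathbbm{1}_{\{n_k\geq n_0\}})\}^{1/2}$-sub-Gaussian (the analogue of \Cref{lemma-alignment-error}), and for any $v$ one gets $\|\widetilde{P}^{n_0,\{n_k\}_{k=0}^K}v\|_2^2 \leq 2n_0\|v\|_2^2/(\sum_{k=0}^K n_k\mathbbm{1}_{\{n_k\geq n_0\}})$ (the analogue of \Cref{lemma-alignment-delta}).

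Given these lemmas, Step 2 bounds $(I)$ exactly as in \Cref{theorem_l_0-1}: applying the projection inequality of \Cref{lemm_l_0} to $\widetilde{\mathcal{S}}\cup\mathcal{S}$ with the above sub-Gaussian parameter yields $(I)\leq \frac{1}{4n_0}\|\widetilde{f}^{[0:K]}-f\|_2^2 + C_\epsilon\frac{(|\widetilde{\mathcal{S}}\cup\mathcal{S}|+1)\{1+\log(n_0/(|\widetilde{\mathcal{S}}\cup\mathcal{S}|+1))\}}{\sum_{k=0}^K n_k\mathbbm{1}_{\{n_k\geq n_0\}}}$, and the $\ell_0$-penalty telescoping with $\widetilde{\lambda}$ as in \eqref{tuning-parameter-0-K_all} collapses this to the fluctuation term $(s_0+1)\{1+\log(n_0/(s_0+1))\}/(\sum_{k=0}^K n_k\mathbbm{1}_{\{n_k\geq n_0\}})$. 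Step 3 bounds $(II)$ by Cauchy--Schwarz and the bias-contraction lemma, leaving $\|\widetilde{\delta}^{\mathrm{all}}\|_2^2$ with $\widetilde{\delta}^{\mathrm{all}} = \widetilde{f}^{\mathrm{all}} - P^{\{n_k\}_{k=0}^K,n_0}f$; a block-by-block computation as in \eqref{theorem2_9_all} shows the target contributes zero (its single entry in block $j$ equals $f_j$) while each source block contributes the corresponding entries of $\delta^{(k)}$, so $\|\widetilde{\delta}^{\mathrm{all}}\|_2^2 = \sum_{k=1}^K\|\delta^{(k)}\|_2^2$. Combining the two steps in Step 4 delivers \eqref{upper_bound_l_0-K_all}.

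The main obstacle is the block-size estimate and, relatedly, justifying that low-frequency sources ($n_k < n_0$) must be dropped from the denominator: one must confirm that no uniform-in-$i$ lower bound on $b_i$ can absorb their contribution, which is the genuinely new phenomenon here relative to the unisource result \Cref{theorem_l_0-1_all} (where the analogous denominator $n_1+n_0$ needs no frequency restriction). Once the generalised alignment lemmas are established, the remaining steps are essentially verbatim transcriptions of the proofs of \Cref{theorem_l_0-1} and \Cref{theorem_l_0-1_all}.
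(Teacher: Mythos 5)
Your proposal matches the paper's proof essentially step for step: the same basic-inequality decomposition via the right-inverse identity $\widetilde{P}^{n_0,\{n_k\}_{k=0}^K}P^{\{n_k\}_{k=0}^K,n_0}=I_{n_0}$, the same uniform block-size lower bound $b_i \geq \tfrac{1}{2n_0}\sum_{k=0}^K n_k\mathbbm{1}_{\{n_k\geq n_0\}}$ (the paper's \Cref{lemma-H}) driving both the sub-Gaussian parameter of the aligned noise and the bias contraction, and the same block-by-block identity $\|\widetilde{\delta}^{\mathrm{all}}\|_2^2=\sum_{k=1}^K\|\delta^{(k)}\|_2^2$. This is correct and is the paper's own route.
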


\Cref{theorem_l_0-K_all} presents the estimation error bound for the the target-multisource-transferred $\ell_0$-penalised estimator defined in \eqref{estimator_l_0-K_all}, with the proof provided in \Cref{app-tl-K-all}.  Compared to the results in \Cref{sec-multiple-source}, \Cref{theorem_l_0-K_all} relaxed the condition that $\min_{k \in [K]} n_k \geq n_0$.  To further understand the results, we assume $\min_{k \in [K]}{n_k} \geq n_0$ and regard this multisource scenario as a unisource scenario by defining
$\widetilde{y}^{\mathrm{all}} =  \widetilde{f}^{\mathrm{all}} + \widetilde{\epsilon}^{\mathrm{all}}$, where $\widetilde{f}^{\mathrm{all}}, \widetilde{\epsilon}^{\mathrm{all}} \in R^{\sum_{k\in [0:K]} n_k }$ are constructed similarly to $\widetilde{y}^{\mathrm{all}}$ as using $\{f \} \cup \{f^{(k)}\}_{k=1}^K$ and $\{\epsilon \} \cup \{\epsilon^{(k)}\}_{k=1}^K$, respectively. The discrepancy level between the unisource and the target is then
\[
\widetilde{\delta}^{\mathrm{all}} =   \widetilde{f}^{\mathrm{all}} - P^{\{n_k\}_{k=0}^h, n_0} f,
\]
with $ P^{\{n_k\}_{k=0}^h, n_0} $ defined in  \eqref{def-P-all}.
 Based on our analysis of the proof of \Cref{theorem_l_0-K_all}, it holds with probability at least $1 - n_0^{-c}$ that 
 \[
    \big\|\widetilde{f}^{[0:K]}  - f  \big\|_{1/n_0}^2  
     \leq     C\frac{  (s_{0}+1) \big\{1+\log \big(n_0/(s_0+1) \big) \big\} + \|\widetilde{\delta}^{\mathrm{all}}\|_2^2}{\sum_{k=0}^{K}  n_k}.
 \]
When $(\sum_{k=0}^{K} n_k )^{-1} \|\widetilde{\delta}^{\mathrm{all}}\|_2^2 \leq   (s_{0}+1) \big\{1+\log \big(n_0/(s_0+1) \big) \big\} /n_0 $, the rate is minimax optimal by \Cref{theorem_minimax}.

\subsection{Proof of Proposition \ref{theorem_l_0-K_all}}\label{app-tl-K-all}

The proof of \Cref{theorem_l_0-K_all} can be found in  \Cref{app-tl-k-all-proof}.  Relevant notation is provided in \Cref{app-tl-k-all-notation} and all necessary auxiliary results are in \Cref{app-tl-k-all-aux}.

\subsubsection[]{Proof of \Cref{theorem_l_0-K_all}}\label{app-tl-k-all-proof}

\begin{proof}[Proof of \Cref{theorem_l_0-K_all}]
This proof consists of four steps. In \textbf{Step 1}, we decompose our target quantity into several terms. We then deal with these terms individually in \textbf{Step 2} and \textbf{Step 3}. In \textbf{Step 4}, we gather all the pieces and conclude the proof.

\medskip
\noindent\textbf{Step 1.}
    It directly follows from the definition of  $ \widetilde{f}^{[0:K]}$ that
\begin{align}
&  \frac{1}{2n_0} \big\|\widetilde{P}^{n_0, \{n_k\}_{k=0}^h   } \widetilde{y}^{\mathrm{all}} -  \widetilde{f}^{[0:K]} \big\|_2^2 +\widetilde{\lambda} \|D \widetilde{f}^{[0:K]} \|_0    \leq \frac{1}{2n_0} \big\|\widetilde{P}^{n_0, \{n_k\}_{k=0}^h   } \widetilde{y}^{\mathrm{all}}   -   f  \big\|_2^2 +\widetilde{\lambda} \|D  f \|_0.  \nonumber
\end{align}
Define 
\begin{equation}\label{def-combin-f-all}
\widetilde{f}^{\mathrm{all}}_i = 
\begin{cases} 
f_j & \mbox{if } i =  \widetilde{J}_{j , 1}  \mbox{ for some } j \in [n_0], \\
f^{(k)}_{i - \widetilde{J}_{j , k} }  & \mbox{if } i \in \mathcal{J}_{j, k}   \mbox{ for some } j \in [n_0] \mbox{ and } k \in [K],
\end{cases}
\end{equation}
and
\begin{equation}\label{def-combin-error-all}
\widetilde{\epsilon}^{\mathrm{all}}_i = 
\begin{cases} 
\epsilon_j & \mbox{if } i =  \widetilde{J}_{j , 1}  \mbox{ for some } j \in [n_0], \\
\epsilon^{(k)}_{i - \widetilde{J}_{j , k} }  & \mbox{if } i \in \mathcal{J}_{j, k}   \mbox{ for some } j \in [n_0] \mbox{ and } k \in [K],
\end{cases}
\end{equation}
where for any $j \in [n_0]$ and $k \in [K]$, $\mathcal{J}_{j , k}$   and $\widetilde{J}_{j , k}$ are define in \eqref{def-J}.
Let $ \widetilde{\delta}^{\mathrm{all}} =   \widetilde{f}^{\mathrm{all}} - P^{\{n_k\}_{k=0}^h, n_0} f$, with the alignment operator $P^{\{n_k\}_{k=0}^h, n_0}$ define in \eqref{def-P-all}. Given that $\widetilde{y}^{\mathrm{all}} =  \widetilde{f}^{\mathrm{all}} + \widetilde{\epsilon}^{\mathrm{all}}$,  we derive that 
\begin{align}
 &  \frac{1}{2n_{0}} \big\| \widetilde{f}^{[0:K]}  - \widetilde{P}^{n_0, \{n_k\}_{k=0}^h   } P^{\{n_k\}_{k=0}^h, n_0} f  \big\|_2^2   \nonumber\\
\leq  &    \frac{1}{2n_{0}} \big\|f - \widetilde{P}^{n_0, \{n_k\}_{k=0}^h   } P^{\{n_k\}_{k=0}^h, n_0} f  \big\|_2^2  +   \frac{1}{n_0} \big(   \widetilde{P}^{n_0, \{n_k\}_{k=0}^h   } \widetilde{\epsilon}^{\mathrm{all}} \big)^{\top}\big( \widetilde{f}^{[0:K]}  - f \big)   
\nonumber\\
& \hspace{0.5cm} + \widetilde{\lambda} \|D   f \|_0 -  \widetilde{\lambda}   \|D   \widetilde{f}^{[0:K]} \|_0 + \frac{1}{n_0}\big(   \widetilde{P}^{n_0, \{n_k\}_{k=0}^h   } \widetilde{\delta}^{\mathrm{all}}\big)^{\top} \big(\widetilde{f}^{[0:K]} - f \big), \nonumber 
\end{align}
By \Cref{lemma-P-alt}, it holds that
\begin{align}\label{theorem2_1_all-K}
  \frac{1}{2n_{0}} \big\|\widetilde{f}^{[0:K]} - f  \big\|_2^2     \leq  &   \frac{1}{n_0}\big(   \widetilde{P}^{n_0, \{n_k\}_{k=0}^h   } \widetilde{\epsilon}^{\mathrm{all}} \big)^{\top} \big(\widetilde{f}^{[0:K]} - f \big) 
  + \widetilde{\lambda} \|D   f \|_0 - \widetilde{\lambda}   \|D   \widetilde{f}^{[0:K]} \|_0 \nonumber\\
& \hspace{0.5cm }+ \frac{1}{n_0}\big(   \widetilde{P}^{n_0, \{n_k\}_{k=0}^h   }  \widetilde{\delta}^{\mathrm{all}}\big)^{\top} \big(\widetilde{f}^{[0:K]}  -   f\big) \nonumber\\
= &  (I.1) + (I.2) + (I. 3) + (II) =(I) + (II).
\end{align}

\medskip
\noindent\textbf{Step 2.} In this step, we consider the term $(I)$ in \eqref{theorem2_1_all-K}.

 Note that following from that  $\{ \epsilon\}_{i=1}^{n_0} \cup \{ \epsilon^{(k)}\}_{i, k=1}^{n_k, K}$ are mutually independent, the definition of $\widetilde{\epsilon}^{\mathrm{all}} $ in \eqref{def-combin-error-all} and  \Cref{lemma-alignment-error-alt}, we obtain that 
\begin{align}\label{theorem2_3_all-K}
\{(\widetilde{P}^{n_0, \{n_k\}_{k=0}^h   }  \widetilde{\epsilon}^{\mathrm{all}} )_i\}_{i=1}^{n_0} \overset{\mbox{ind.}}{\sim} 
  \mbox{mean-zero }C_{\sigma}\bigg\{ \frac{2 n_0}{ \sum_{k=0}^{K}  n_k  \mathbbm{1}_{\{ n_k \geq n_0\}} } \bigg\}^{1/2} \mbox{-sub-Gaussian}.
\end{align}

Let the set $\mathcal{S}$ be defined in \eqref{def-S}  with  cardinality $s_0$ and the set $\widetilde{\mathcal{S}}$ be defined as
\begin{align}\label{theorem2_4_all-K}
\widetilde{\mathcal{S}} = \big\{i \in [n_{0}-1]\colon \widetilde{f}^{[0:K]}_i \neq \widetilde{f}^{[0:K]}_{i+1}  \big\}  = \big\{i \in [n_{0}-1]\colon (D\widetilde{f}^{[0:K]})_i \neq 0  \big\}. 
\end{align}
Let the orthogonal projection operator $P^{\widetilde{\mathcal{S}} \cup \mathcal{S}}$ be defined in \Cref{lemm_l_0}, then we have that 
\begin{align}\label{theorem2_5_all-K}
   (I.1) = &  \frac{1}{n_{0}} \big(   \widetilde{P}^{n_0, \{n_k\}_{k=0}^h   } \widetilde{\epsilon}^{\mathrm{all}} \big)^{\top} \big( P^{\widetilde{\mathcal{S}} \cup \mathcal{S}} ( \widetilde{f}^{[0:K]} - f  )  \big) \nonumber\\
       =  &     \frac{1}{n_{0}} \big(P^{ \widetilde{\mathcal{S}} \cup \mathcal{S}} \widetilde{P}^{n_0, \{n_k\}_{k=0}^h   } \widetilde{\epsilon}^{\mathrm{all}} \big)^{\top} \big( \widetilde{f}^{[0:K]} - f  \big)   \nonumber\\
       \leq    &    \frac{1}{n_{0}}  \big\|P^{\widetilde{\mathcal{S}} \cup \mathcal{S}} \widetilde{P}^{n_0, \{n_k\}_{k=0}^h   } \widetilde{\epsilon}^{\mathrm{all}} \big\|_2
       \big\|\widetilde{f}^{[0:K]} - f  \big\|_2 \nonumber\\
         \leq   &       \frac{1}{n_{0}}  \big\|P^{\widetilde{\mathcal{S}} \cup \mathcal{S}}\widetilde{P}^{n_0, \{n_k\}_{k=0}^h   } \widetilde{\epsilon}^{\mathrm{all}} \big\|_2^2+  \frac{1}{4n_{0}} 
       \big\| \widetilde{f}^{[0:K]} - f    \big\|_2^2, 
\end{align}
where the first inequality follows from the Cauchy--Schwartz inequality and the last inequality is based on the fact that $\vert ab \vert \leq a^2 + b^2/4$.
By \eqref{theorem2_3_all-K}, \eqref{theorem2_5_all-K} and \Cref{lemm_l_0}, we can conclude that  that $\mathbb{P}\{\mathcal{E}\} \geq 1- n_{0}^{-c_{\epsilon}}$ with 
\[
 \quad \mathcal{E} = \bigg\{  (I.1) \leq \frac{1}{4n_{0}} 
       \big\|  \widetilde{f}^{[0:K]} - f    \big\|_2^2+ C_{\epsilon}  \frac{  \big( \vert \widetilde{\mathcal{S}} \cup \mathcal{S}\vert +1 \big) \big\{1 + \log\big( n_{0}/ (\vert\widetilde{\mathcal{S}} \cup \mathcal{S}\vert+1) \big) \big\} }{\sum_{k=0}^{K}  n_k  \mathbbm{1}_{\{ n_k \geq n_0\}}}\bigg\},
\]
where $C_{\epsilon}, c_{\epsilon} >0$ are absolute constants. From now on we assume that the event $\mathcal{E}$ holds. Then it holds that 
\begin{align}\label{theorem2_7_all-K}
   (I)
 \leq & \frac{1}{4n_{0}} 
       \big\|  \widetilde{f}^{[0:K]} - f    \big\|_2^2 +  C_{\epsilon}  \frac{  \big( \vert \widetilde{\mathcal{S}} \cup \mathcal{S}\vert +1 \big) \big\{1 + \log\big( n_{0}/ (\vert\widetilde{\mathcal{S}} \cup \mathcal{S}\vert+1) \big) \big\} }{\sum_{k=0}^{K}  n_k  \mathbbm{1}_{\{ n_k \geq n_0\}}   }+  \widetilde{\lambda} \|D   f \|_0 - \widetilde{\lambda}   \|D    \widetilde{f}^{[0:K]}\|_0\nonumber\\
= &  \frac{1}{4n_{0}} 
       \big\|  \widetilde{f}^{[0:K]} - f    \big\|_2^2 +  C_{\epsilon}  \frac{  \big( \vert \widetilde{\mathcal{S}} \cup \mathcal{S}\vert +1 \big) \big\{1 + \log\big( n_{0}/ (\vert\widetilde{\mathcal{S}} \cup \mathcal{S}\vert+1) \big) \big\} }{\sum_{k=0}^{K}  n_k  \mathbbm{1}_{\{ n_k \geq n_0\}}  }+  \widetilde{\lambda} \big(s_0 - \vert \widetilde{\mathcal{S}} \vert \big) \nonumber\\
\leq & \frac{1}{4n_{0}} 
       \big\|  \widetilde{f}^{[0:K]} - f    \big\|_2^2  + 2 \widetilde{\lambda}   (s_0+1) \nonumber\\
= &\frac{1}{4n_{0}} 
       \big\| \widetilde{f}^{[0:K]} - f    \big\|_2^2  + 2 C_{\widetilde{\lambda}}\frac{( s_0+1) \big\{ 1+ \log\big(n_{0}/(s_0+1)\big) \big\} }{\sum_{k=0}^{K}  n_k  \mathbbm{1}_{\{ n_k \geq n_0\}} },
\end{align}
where 
\begin{itemize}
\item the first equality follows from the definitions of $\mathcal{S}$ and $\widetilde{\mathcal{S}}$ in \eqref{def-S} and \eqref{theorem2_4_all-K},
\item and the second inequality and the last equality are due to the choice of $\widetilde{\lambda}$ in \eqref{tuning-parameter-0-K_all} and $C_{\widetilde{\lambda}} >0$ is a large enough absolute constant.
\end{itemize}

\medskip
\noindent\textbf{Step 3.} In this step, we consider the term $(II)$ in  \eqref{theorem2_1_all-K}. Note that by applying the Cauchy-Schwartz inequality and utilising the fact that $\vert ab \vert \leq 2a^2 + b^2/8$, we can establish  that 
 \begin{align}\label{theorem2_8_all-K}
   (II) \leq &     \frac{2\| \widetilde{P}^{n_0, \{n_k\}_{k=0}^h   } \widetilde{\delta}^{\mathrm{all}} \|_2^2 }{n_{0}}  + \frac{1}{8n_{0}} \big\|  \widetilde{f}^{[0:K]} - f   \big\|_2^2 \nonumber\\
   \leq &     \frac{4\| \widetilde{\delta}^{\mathrm{all}} \|_2^2 }{\sum_{k=0}^{K}  n_k  \mathbbm{1}_{\{ n_k \geq n_0\}}}  + \frac{1}{8n_{0}} \big\|  \widetilde{f}^{[0:K]} - f   \big\|_2^2, 
\end{align}
where the first inequality follows from \Cref{lemma-P-alt}.
Let $f^{(0)} = f$, then we have that 
\begin{align}\label{delta-all-1}
    \| \widetilde{\delta}^{\mathrm{all}} \|_2^2 = &  \| \widetilde{f}^{\mathrm{all}} - P^{\{n_k\}_{k=0}^h n_k, n_0}f \| \nonumber\\
    = & \sum_{i=1}^{n_0} \sum_{k=0}^{K} \sum_{ j  =\lceil (i-1)n_k /n_0\rceil +1}^{\lceil i n_k /n_0\rceil} \Big(f_j^{(k)} - \sum_{l=1}^{n_0} f_l \mathbbm{1}_{ \{ \sum_{k=0}^{K} \lceil (l-1) n_k/ n_0 \rceil +1 \leq j  \leq 
\sum_{k=0}^{K} \lceil l n_k/ n_0 \rceil \}} \Big)^{2}  \nonumber\\
= & \sum_{i=1}^{n_0} \sum_{k=0}^{K} \sum_{ j  =\lceil (j-1)n_k /n_0\rceil +1}^{\lceil j n_k /n_0\rceil } \big(f_j^{(k)} -f_i\big)^2 
\end{align}
where the second equality follows the definition of of $\widetilde{f}^{\mathrm{all}}$ in \eqref{def-combin-f-all} and the alignment operator $ P^{\{n_k\}_{k=0}^h n_k, n_0}$ in \eqref{def-P-all}.
Since for any $k \in [K]$,
\begin{align}\label{delta-all-2}
 \|\delta^{(k)} \|_2^2 = & \| f^{(k)} - P^{n_k, n_0}f  \|= \sum_{j=1}^{n_k}
\bigg( f^{(k)}_{j} - \sum_{l=1}^{n_0} f_i \mathbbm{1}_{ \{ \lceil (l-1) n_k/ n_0 \rceil +1 \leq j  \leq 
 \lceil l n_k/ n_0 \rceil \} }  \bigg)^2 \nonumber \\
 = &   \sum_{i=1}^{n_0} \sum_{\lceil (i-1) n_k/ n_0 \rceil +1 \leq j  \leq 
 \lceil i n_k/ n_0 \rceil }  \bigg( f^{(k)}_{j} - \sum_{l=1}^{n_0} f_l \mathbbm{1}_{\lceil (l-1) n_k/ n_0 \rceil +1 \leq j  \leq 
 \lceil l n_k/ n_0 \rceil }  \bigg)^2  \nonumber \\ 
 = & \sum_{i=1}^{n_0} \sum_{\lceil (i-1) n_1/ n_0 \rceil +1 \leq j  \leq 
 \lceil i n_1/ n_0 \rceil }   \big( f^{(k)}_j -  f_i \big)^2,  
\end{align}
where the second inequality follows from the definition of the alignment operator  $P^{n_k, n_0}$ in \eqref{def-P}. Combining \eqref{delta-all-1} and \eqref{delta-all-2}, it holds that 
\begin{align}\label{theorem2_10_all-K}
\|  \widetilde{\delta}^{\mathrm{all}} \|_2^2 
  =    \sum_{k=1}^{K}   \|\delta^{(k)} \|_2^2.
\end{align}

Combining \eqref{theorem2_8_all-K} and  \eqref{theorem2_10_all-K}, it holds that 
 \begin{align}\label{theorem2_11_all-K}
   (II) \leq  \frac{4 \sum_{k=1}^{K}   \|\delta^{(k)} \|_2^2.}{{\sum_{k=0}^{K}  n_k  \mathbbm{1}_{\{ n_k \geq n_0\}} } }     + \frac{1}{8n_{0}} \big\|  \widetilde{f}^{[0:K]} - f   \big\|_2^2.
\end{align}

\medskip
\noindent\textbf{Step 4.} Choosing  $\widetilde{\lambda}$ as \eqref{tuning-parameter-0-K_all}, and combining \eqref{theorem2_1_all-K}, \eqref{theorem2_7_all-K} and \eqref{theorem2_11_all-K}, we have with an absolute $C_1 > 0$ that  
\[
 \P  \Bigg\{  \big\|   \widetilde{f}^{[0:K]} - f  \big\|_{1/n_0}^2
    \leq    C_1\frac{  (s_{0}+1) \big\{1+\log \big(n_1/(s_0+1) \big) \big\}+  \sum_{k=1}^{K}   \|\delta^{(k)} \|_2^2 }{{\sum_{k=0}^{K}  n_k  \mathbbm{1}_{\{ n_k \geq n_0\}} }  } \Bigg\} \geq 1 - n_0^{-c_{\mathcal{E}}},
\]
completing the proof.

\end{proof}

\subsubsection{Additional notation}\label{app-tl-k-all-notation}

For any $i \in [n_0]$, let
\begin{equation}\label{def-mathcal-H}
 \mathcal{H}_i = \bigg\{j \in \mathbb{N}^{*}\colon  \sum_{k=0}^{K} \lceil(i-1)n_k/n_0 \rceil + 1 \leq  j \leq \sum_{k=0}^{K} \lceil in_k/n_0 \rceil  \bigg\},
\end{equation}
with $\widetilde{H}_i = \vert \mathcal{H}_i \vert$.

\subsubsection{Additional lemmas}\label{app-tl-k-all-aux}

\begin{lemma}\label{lemma-H}
Let $\mathcal{H}_i$  be defined in \eqref{def-mathcal-H} with $\widetilde{H}_i = \vert \mathcal{H}_i \vert$. It holds that
\[
\widetilde{H}_i \geq \frac{ \sum_{k =0}^{K} n_k \mathbbm{1}_{\{ n_k \geq n_0\}}}{2n_0}>0.
\]
\end{lemma}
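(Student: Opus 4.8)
The plan is to unpack the definition of $\widetilde{H}_i = |\mathcal{H}_i|$ in \eqref{def-mathcal-H} into a sum of ceiling differences and then bound each summand from below. Directly from \eqref{def-mathcal-H}, the cardinality telescopes into
\[
\widetilde{H}_i = \sum_{k=0}^{K} \Big( \big\lceil i n_k / n_0 \big\rceil - \big\lceil (i-1) n_k / n_0 \big\rceil \Big),
\]
so it suffices to control each term $\lceil i n_k/n_0 \rceil - \lceil (i-1)n_k/n_0 \rceil$ separately, according to whether $n_k \geq n_0$ or $n_k < n_0$.

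First I would dispose of the indices $k$ with $n_k < n_0$: since $x \mapsto \lceil x \rceil$ is nondecreasing, each such term is nonnegative, which is exactly what is needed because the corresponding indicator $\mathbbm{1}_{\{n_k \geq n_0\}}$ in the target bound vanishes. The substantive step is to show that for every $k$ with $n_k \geq n_0$ one has $\lceil i n_k/n_0 \rceil - \lceil (i-1)n_k/n_0 \rceil \geq n_k/(2n_0)$, uniformly in $i \in [n_0]$. For $n_k > n_0$ this follows from the subadditivity inequality $\lceil (i-1)n_k/n_0\rceil + \lceil n_k/n_0\rceil - 1 \leq \lceil i n_k/n_0\rceil$ already used in the proof of \Cref{lemma-P}, which yields the lower bound $\lceil n_k/n_0\rceil - 1$, followed by the elementary case split $n_0 < n_k < 2n_0$ versus $n_k \geq 2n_0$, exactly as in \eqref{lemma_eigen_2_2} and \eqref{lemma_eigen_2_3}.

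The main obstacle --- really the only subtle point --- is the boundary case $n_k = n_0$, where the generic bound $\lceil n_k/n_0\rceil - 1$ collapses to $0$ and is too weak. Here I would instead compute the difference exactly: with $n_k = n_0$ it equals $\lceil i \rceil - \lceil i-1 \rceil = 1 \geq 1/2 = n_k/(2n_0)$, matching the treatment in \eqref{lemma_eigen_2_1}. Summing the per-term lower bounds over $k$ then gives
\[
\widetilde{H}_i \geq \sum_{k=0}^{K} \frac{n_k}{2n_0}\, \mathbbm{1}_{\{n_k \geq n_0\}} = \frac{\sum_{k=0}^{K} n_k\, \mathbbm{1}_{\{n_k \geq n_0\}}}{2n_0}.
\]
Finally, strict positivity is immediate: the index $k=0$ satisfies $n_0 \geq n_0$, so the sum on the right contains the term $n_0 > 0$, forcing $\widetilde{H}_i \geq 1/2 > 0$. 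This completes the argument.
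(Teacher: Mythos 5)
Your proof is correct and follows essentially the same route as the paper's: the same telescoping decomposition of $\widetilde{H}_i$ into ceiling differences and the same three-way case split $n_k = n_0$, $n_0 < n_k < 2n_0$, $n_k \geq 2n_0$ to bound each term below by $n_k/(2n_0)$. You are in fact slightly more explicit than the paper in noting that the terms with $n_k < n_0$ are nonnegative and can simply be dropped.
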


\begin{proof}
For any $i \in [n_0]$, by the definition of   $\mathcal{H}_i$  in \eqref{def-mathcal-H} , it holds that 
\begin{align}\label{lemma-H-1}
\widetilde{H}_i =   \sum_{k=0}^{K} \big( \lceil in_k/n_0 \rceil -  \lceil(i-1) n_k/n_0 \rceil \big).
\end{align}
For any $k \in [0:h]$, if $n_k = n_0$, then it holds that 
\begin{align}\label{lemma-H-2}
 \lceil  in_k/n_0 \rceil - \lceil (i-1) n_k/n_0\rceil =   \lceil  i \rceil - \lceil i-1 \rceil   = 1 > n_k/(2n_0);
\end{align}
if $n_0 < n_k < 2 n_0$, then it holds that 
\begin{align}\label{lemma-H-3}
 \lceil i n_k/n_0 \rceil - \lceil (i-1) n_k/n_0\rceil \geq \lceil n_k/n_0 \rceil - 1 = 1  > n_k/ (2n_0); 
\end{align}
and if $ n_k \geq 2 n_0$, then it holds that 
\begin{align}\label{lemma-H-4}
     \lceil i n_k/n_0 \rceil - \lceil (i-1) n_k/n_0 \geq \lceil n_k/n_0 \rceil - 1  \geq (n_k - n_0)/n_0 \geq  n_k/ (2n_0).
\end{align}
Combining \eqref{lemma-H-1}, \eqref{lemma-H-2}, \eqref{lemma-H-3} and \eqref{lemma-H-4}, it holds that for any $i \in [n_0]$, 
\[
\widetilde{H}_i \geq \frac{ \sum_{k =0}^{K} n_k \mathbbm{1}_{\{ n_k \geq n_0\}}}{2n_0}>0,
\]
completing the proof. 

\end{proof}

\begin{lemma}\label{lemma-P-alt}
Let the alignment operators $P^{\{n_k\}_{k=0}^K, n_0} \in \R^{(\sum_{k=0}^K n_k) \times n_0}$ and $\widetilde{P}^{n_0, \{n_k\}_{k=0}^K} \in \R^{n_0 \times (\sum_{k=0}^K n_k)}$   be defined as
\eqref{def-P-all} and \eqref{def-P-alt-all}, respectively. We have that
\begin{equation}\label{eq-p-alt-1}
 \widetilde{P}^{n_0, \{n_k\}_{k=0}^K} P^{\{n_k\}_{k=0}^K, n_0}= I_{n_0},
\end{equation}
and
\begin{align}\label{eq-p-alt-2}
 \big\| \widetilde{P}^{n_0, \{n_k\}_{k=0}^K}  \big\|  \leq \frac{2n_0}{ \sum_{k =0}^{K} n_k \mathbbm{1}_{\{ n_k \geq n_0\}}}.  
\end{align}
\end{lemma}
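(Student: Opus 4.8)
The plan is to exploit the common block structure shared by the two operators and to reduce everything to the single-source computation already carried out in \Cref{lemma-P}. Setting $n = n_0$, $h = K$ and $m_k = n_k$ in \eqref{def-P-all} and \eqref{def-P-alt-all}, observe that both matrices are organised around the same partition $\{\mathcal{H}_i\}_{i=1}^{n_0}$ of the index set $[\sum_{k=0}^K n_k]$ defined in \eqref{def-mathcal-H}: the $j$th column of $P^{\{n_k\}_{k=0}^K, n_0}$ is exactly the indicator vector of the block $\mathcal{H}_j$, while the $i$th row of $\widetilde{P}^{n_0, \{n_k\}_{k=0}^K}$ is the indicator of $\mathcal{H}_i$ rescaled by $\widetilde{H}_i^{-1} = \vert \mathcal{H}_i \vert^{-1}$. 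Since the $\mathcal{H}_i$ are consecutive integer intervals whose endpoints telescope, namely $\mathcal{H}_i$ ends at $\sum_{k=0}^K \lceil i n_k/n_0 \rceil$ and $\mathcal{H}_{i+1}$ begins at $\sum_{k=0}^K \lceil i n_k/n_0 \rceil + 1$, they are pairwise disjoint and partition $[\sum_{k=0}^K n_k]$, with $\widetilde{H}_i > 0$ for every $i$ by \Cref{lemma-H}; this in particular guarantees that $\widetilde{P}^{n_0, \{n_k\}_{k=0}^K}$ is well defined.

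For \eqref{eq-p-alt-1} I would compute the product entrywise. For $i, j \in [n_0]$,
\[
\big(\widetilde{P}^{n_0, \{n_k\}_{k=0}^K} P^{\{n_k\}_{k=0}^K, n_0}\big)_{i,j} = \sum_{l=1}^{\sum_{k=0}^K n_k} \frac{\mathbbm{1}\{l \in \mathcal{H}_i\}}{\widetilde{H}_i}\, \mathbbm{1}\{l \in \mathcal{H}_j\} = \frac{\vert \mathcal{H}_i \cap \mathcal{H}_j \vert}{\widetilde{H}_i}.
\]
By disjointness this equals $\widetilde{H}_i/\widetilde{H}_i = 1$ when $i = j$ and $0$ otherwise, which is precisely $I_{n_0}$. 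This is the exact analogue of the identity $\widetilde{P}^{n,m} P^{m,n} = I_n$ established in \Cref{lemma-P}.

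For the norm bound \eqref{eq-p-alt-2} I would compute $\widetilde{P}^{n_0, \{n_k\}_{k=0}^K} (\widetilde{P}^{n_0, \{n_k\}_{k=0}^K})^{\top}$, whose $(i,j)$ entry is $\vert \mathcal{H}_i \cap \mathcal{H}_j \vert / (\widetilde{H}_i \widetilde{H}_j)$; by disjointness this is the diagonal matrix $\diag(\widetilde{H}_1^{-1}, \dots, \widetilde{H}_{n_0}^{-1})$. Hence $\|\widetilde{P}^{n_0, \{n_k\}_{k=0}^K}\|^2 = \max_{i \in [n_0]} \widetilde{H}_i^{-1} = (\min_{i \in [n_0]} \widetilde{H}_i)^{-1}$, and the lower bound $\widetilde{H}_i \geq (\sum_{k=0}^K n_k \mathbbm{1}_{\{n_k \geq n_0\}})/(2n_0)$ from \Cref{lemma-H} gives $\|\widetilde{P}^{n_0, \{n_k\}_{k=0}^K}\|^2 \leq 2n_0/(\sum_{k=0}^K n_k \mathbbm{1}_{\{n_k \geq n_0\}})$. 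Note that the quantity that is naturally controlled, and that is in fact invoked in \eqref{theorem2_8_all-K}, is the squared spectral norm, so I would state \eqref{eq-p-alt-2} in terms of $\|\widetilde{P}^{n_0, \{n_k\}_{k=0}^K}\|^2$.

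The computations above are mechanical; the real content sits in \Cref{lemma-H}, which is where the factor $\sum_{k=0}^K n_k \mathbbm{1}_{\{n_k \geq n_0\}}$, rather than $\sum_{k=0}^K n_k$, originates. The main point to get right is therefore the bookkeeping behind the indicator $\mathbbm{1}_{\{n_k \geq n_0\}}$: a source with $n_k < n_0$ need not contribute an element to every target block $\mathcal{H}_i$, so only sources with $n_k \geq n_0$ can be guaranteed to enlarge each block, and the three-case split ($n_k = n_0$, $n_0 < n_k < 2n_0$, $n_k \geq 2n_0$) in \Cref{lemma-H} converts this into the clean per-source lower bound $\lceil i n_k/n_0 \rceil - \lceil (i-1) n_k/n_0 \rceil \geq n_k/(2n_0)$ used above.
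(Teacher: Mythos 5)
Your proof matches the paper's almost line for line: the same entrywise computation of $\widetilde{P}^{n_0, \{n_k\}_{k=0}^K} P^{\{n_k\}_{k=0}^K, n_0}$ via the disjoint blocks $\{\mathcal{H}_i\}_{i=1}^{n_0}$, the same observation that $\widetilde{P}^{n_0, \{n_k\}_{k=0}^K} (\widetilde{P}^{n_0, \{n_k\}_{k=0}^K})^{\top} = \diag(\widetilde{H}_1^{-1}, \dots, \widetilde{H}_{n_0}^{-1})$, and the same appeal to \Cref{lemma-H} for the lower bound on $\widetilde{H}_i$. Your remark about the squared spectral norm is also well taken: the diagonal computation yields $\|\widetilde{P}^{n_0, \{n_k\}_{k=0}^K}\|^2 = \max_{j}\widetilde{H}_j^{-1}$, and it is this squared quantity that is invoked in \eqref{theorem2_8_all-K}, so \eqref{eq-p-alt-2} is indeed most naturally stated and used with $\|\cdot\|^2$ on the left-hand side.
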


\begin{proof}
Let $\widetilde{n} = \{n_k\}_{k=0}^K$. For any $i \in [n_0]$, let $\mathcal{H}_i$  be defined in \eqref{def-mathcal-H} with $\widetilde{H}_i = \vert \mathcal{H}_i \vert$. 

For any $i, j \in [n]$, we have that 
\begin{align}
     \big( \widetilde{P}^{n_0, \{n_k\}_{k=0}^K} P^{\{n_k\}_{k=0}^K, n_0}\big)_{i, j}
    =&  \sum_{l=1}^{\widetilde{n}}   (\widetilde{P}^{n_0, \{n_k\}_{k=0}^K})_{i, l} P^{\{n_k\}_{k=0}^K, n_0})_{l, j}
    =  \sum_{l=1}^{\widetilde{n}}  \frac{ \mathbbm{1}_{\{ l \in \mathcal{H}_i}\}}{\widetilde{H}_i}  \mathbbm{1}_{\{l \in \mathcal{H}_j\}}
    \nonumber \\
    = & \begin{cases}
        1 & \mbox{if } i=j,  \nonumber\\
        0 & \mbox{otherwise}, 
    \end{cases}
\end{align}
where the second equality follows from the definitions of $P^{\{n_k\}_{k=0}^K, n_0} $ and $\widetilde{P}^{n_0, \{n_k\}_{k=0}^K}$  in
\eqref{def-P-all} and \eqref{def-P-alt-all}, respectively, and the final equality follows from that $\{\mathcal{H}_i \}_{i=1}^{n_0}$ are disjoint sets.  Thus, it holds that 
\[
 \widetilde{P}^{n_0, \{n_k\}_{k=0}^K} P^{\{n_k\}_{k=0}^K, n_0}= I_{n_0},
\]
which completes the proof of \eqref{eq-p-alt-1}.

For any $i, j \in [n_0]$, we have that 
\begin{align}
     \big\{ \widetilde{P}^{n_0, \{n_k\}_{k=0}^K} \big(\widetilde{P}^{n, \{n_k\}_{k=0}^K} \big)^{\top} \big\}_{i, j}
    =&  \sum_{l=1}^{\widetilde{n}}   (\widetilde{P}^{n_0, \{n_k\}_{k=0}^K})_{i, l} \widetilde{P}^{n_0, \{n_k\}_{k=0}^K})_{j, l}
    =  \sum_{l=1}^{\widetilde{n}}  \frac{ \mathbbm{1}_{\{ l \in \mathcal{H}_i}\}}{\widetilde{H}_i}  \frac{ \mathbbm{1}_{\{ l \in \mathcal{H}_j}\}}{\widetilde{H}_j} 
    \nonumber \\
    = & \begin{cases}
        \frac{ 1}{\widetilde{H}_j}  & \mbox{if } i=j,  \nonumber\\
        0 & \mbox{otherwise},
    \end{cases}
\end{align}
where the second equality follows from the definitions of $\widetilde{P}^{n_0, \{n_k\}_{k=0}^K}$  in \eqref{def-P-alt-all}, and the final equality follows from that $\{\mathcal{H}_i \}_{i=1}^{n_0}$ are disjoint sets. Thus, it holds that 
\[
 \big\| \widetilde{P}^{n_0, \{n_k\}_{k=0}^K}  \big\| \leq \max_{j \in [n_0]} \frac{ 1}{\widetilde{H}_j} \leq \frac{2n}{ \sum_{k =0}^{K} n_k \mathbbm{1}_{\{ n_k \geq n_0\}}},
\]
where the last inequality follows from \Cref{lemma-H}. We complete the proof of \eqref{eq-p-alt-2}.
\end{proof}

\begin{lemma}\label{lemma-alignment-error-alt}
Let  $\widetilde{n} = \{n_k\}_{k=0}^K$ and the alignment operator $\widetilde{P}^{n_0, \{n_k\}_{k=0}^K} \in \R^{n_0 \times \widetilde{n}}$   be defined in \eqref{def-P-alt-all}. Assume $\{\widetilde{\epsilon}_i\}_{i=1}^{\widetilde{n}}$ are mutually independent mean-zero $C_{\sigma}$-sub-Gaussian variables. We have that $\{ (\widetilde{P}^{n_0, \{n_k\}_{k=0}^K} \widetilde{\epsilon})_i \}_{i=1}^{n_0}$ are mutually independent and for any $i \in [n_0]$
\[
(\widetilde{P}^{n_0, \{n_k\}_{k=0}^K} \widetilde{\epsilon})_i  \sim \mbox{mean-zero }C_{\sigma}\bigg\{ \frac{2 n_0}{ \sum_{k=0}^{K}  n_k  \mathbbm{1}_{\{ n_k \geq n_0\}} } \bigg\}^{1/2} \mbox{-sub-Gaussian}.
\]
\end{lemma}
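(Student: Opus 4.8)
The plan is to mirror the single-source argument in \Cref{lemma-alignment-error}, replacing the per-row block size by $\widetilde{H}_i$ and then invoking the uniform lower bound on $\widetilde{H}_i$ supplied by \Cref{lemma-H}. First I would unfold the definition \eqref{def-P-alt-all} with $n = n_0$ and $m_k = n_k$: recalling the index set $\mathcal{H}_i$ in \eqref{def-mathcal-H} together with $\widetilde{H}_i = |\mathcal{H}_i|$, the $i$th coordinate of the aligned noise is the normalised block average
\[
 \big(\widetilde{P}^{n_0, \{n_k\}_{k=0}^K} \widetilde{\epsilon}\big)_i = \frac{1}{\widetilde{H}_i} \sum_{j \in \mathcal{H}_i} \widetilde{\epsilon}_j, \quad i \in [n_0].
\]

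Next I would establish mutual independence and the mean-zero property. As already verified within the proof of \Cref{lemma-P-alt}, the index sets $\{\mathcal{H}_i\}_{i=1}^{n_0}$ are pairwise disjoint, so each coordinate $(\widetilde{P}^{n_0,\{n_k\}_{k=0}^K}\widetilde{\epsilon})_i$ is a measurable function of the disjoint sub-collection $\{\widetilde{\epsilon}_j : j \in \mathcal{H}_i\}$. Since $\{\widetilde{\epsilon}_j\}_{j=1}^{\widetilde{n}}$ are mutually independent, functions of disjoint index sets are mutually independent, which yields the claimed independence; mean-zeroness is immediate by linearity of expectation.

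For the sub-Gaussian parameter, I would argue exactly as in \Cref{lemma-alignment-error}: each coordinate is a normalised sum of $\widetilde{H}_i$ independent mean-zero $C_\sigma$-sub-Gaussian variables, so by Proposition 2.6.1 in \cite{vershynin2018high} it is mean-zero $C_\sigma \widetilde{H}_i^{-1/2}$-sub-Gaussian. The final step is to render this bound uniform in $i$ by applying \Cref{lemma-H}, which gives $\widetilde{H}_i \geq \{\sum_{k=0}^K n_k \mathbbm{1}_{\{n_k \geq n_0\}}\}/(2n_0)$, hence
\[
 \widetilde{H}_i^{-1/2} \leq \bigg\{ \frac{2n_0}{\sum_{k=0}^K n_k \mathbbm{1}_{\{n_k \geq n_0\}}} \bigg\}^{1/2}.
\]
By monotonicity of the sub-Gaussian parameter (a $\sigma$-sub-Gaussian variable is $\sigma'$-sub-Gaussian for every $\sigma' \geq \sigma$), every coordinate is $C_\sigma\{2n_0 / \sum_{k=0}^K n_k \mathbbm{1}_{\{n_k \geq n_0\}}\}^{1/2}$-sub-Gaussian, which completes the proof.

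The only genuinely delicate point is the appearance of the indicator $\mathbbm{1}_{\{n_k \geq n_0\}}$ in the denominator, and this is precisely what \Cref{lemma-H} handles. A source with $n_k < n_0$ need not contribute any index to a given row $\mathcal{H}_i$, since the increment $\lceil i n_k/n_0 \rceil - \lceil (i-1) n_k/n_0\rceil$ may vanish; such sources therefore cannot be relied upon to shrink the per-coordinate variance proxy, and only the blocks with $n_k \geq n_0$ guarantee at least an $n_k/(2n_0)$ contribution to $\widetilde{H}_i$. Everything else is a routine transcription of the unisource computation, so I anticipate no substantive obstacle beyond ensuring that the block-size bound is applied uniformly over $i \in [n_0]$.
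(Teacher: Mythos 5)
Your proposal is correct and follows essentially the same route as the paper's own proof: write each coordinate as the block average $\widetilde{H}_i^{-1}\sum_{l \in \mathcal{H}_i}\widetilde{\epsilon}_l$, deduce independence from the disjointness of the $\mathcal{H}_i$, obtain the $C_\sigma \widetilde{H}_i^{-1/2}$ parameter from Proposition 2.6.1 of \cite{vershynin2018high}, and make the bound uniform via \Cref{lemma-H}. Your closing remark on why only the sources with $n_k \geq n_0$ can be counted in the denominator is an accurate reading of the role of \Cref{lemma-H}.
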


\begin{proof}

Note that by the definition of the alignment operator $\widetilde{P}^{n_0, \{n_k\}_{k=0}^K}$ in \eqref{def-P-alt-all}, we have that for any $i \in [n_0]$,  
\begin{align}
 ( \widetilde{P}^{n_0, \{n_k\}_{k=0}^K} \widetilde{\epsilon} )_i 
  =&  \sum_{l=1}^{\widetilde{n}}   (\widetilde{P}^{n_0, \{n_k\}_{k=0}^K})_{i, l} \widetilde{\epsilon}_l
    =  \sum_{l=1}^{\widetilde{n}}  \frac{ \mathbbm{1}_{\{ l \in \mathcal{H}_i}\}}{\widetilde{H}_i}  \widetilde{\epsilon}_l =  \frac{ 1}{\widetilde{H}_i}   \sum_{ l \in \mathcal{H}_i} \widetilde{\epsilon}_l. \nonumber
\end{align}
Since  $\{\mathcal{H}_i  \}_{i=1}^{n_0}$ are disjoint sets,  we have that $\{ (\widetilde{P}^{n_0, \{n_k\}_{k=0}^K} \widetilde{\epsilon} )_i\}_{i=1}^{n_0}$ are mutually independent.
According to Proposition 2.6.1 in \cite{vershynin2018high}, we derive that for any $i \in [n_0]$,
\begin{align}
 (\widetilde{P}^{n_0, \{n_k\}_{k=0}^K} \widetilde{\epsilon})_i \sim
   \mbox{mean-zero } C_{\sigma} \widetilde{H}_i^{-1/2}\mbox{-sub-Gaussian}.  \nonumber
\end{align}

By \Cref{lemma-H},  we have that 
\[
\{(\widetilde{P}^{n_0, \{n_k\}_{k=0}^K} \widetilde{\epsilon})_i\}_{i=1}^{n_0}  \overset{\mbox{ind.}}{\sim} \mbox{mean-zero }C_{\sigma}\bigg\{ \frac{2 n_0}{ \sum_{k=0}^{K}  n_k  \mathbbm{1}_{\{ n_k \geq n_0\}} } \bigg\}^{1/2} \mbox{-sub-Gaussian}.
\]
which completes the proof.
\end{proof}

\section[]{Additional details in \Cref{sec-num}} \label{sec-add-num}

\subsection{Permutation-based algorithm}\label{app-permu-alg}
We propose a permutation-based algorithm, detailed in \Cref{alg_Per}, to choose the threshold levels for \Cref{alg_isd}. 

\begin{algorithm}[ht] 
\caption{Permutation-based algorithm for choosing the threshold level} 
    \begin{algorithmic}
        \INPUT{Target data $y \in \R^{n_0}$, source data $y^{(k)} \in \R^{n_k}, k \in [K]$, screening width $\widehat{t}^{k} \in [n_k],  k \in [K]$,  a fitting algorithm $\mathcal{A}(\cdot)$, number of permutations $B \in \mathbb{N}^*$ and quantile level $q \in (0, 1)$}
        \For{$k \in [K]$ }
        \State {$\widehat{\Delta}^{(k)} \leftarrow n_k^{-1/2} y^{(k)} - n_k^{-1/2}  P^{n_k, n_0} y $} \Comment{See \eqref{def-P} for $P^{n_k, n_0}$}
                 \EndFor
        \State{ $\widehat{k} \leftarrow \argmin_{k \in [K]}  
        \big\| \widehat{\Delta}^{(k)}  \big\|_2^2$,  
        $r = y^{(\widehat{k})} - \mathcal{A}\big(y^{(\widehat{k})} \big)$  }
        \For{$b \in [B]$ }
        \State {$r^{b} \leftarrow$ a random permutation of $r$,  $y^{(\widehat{k}), b} =  \mathcal{A}\big(y^{(\widehat{k})} \big) + r^b$,
        $\widehat{\Delta}^{b} \leftarrow n_{\widehat{k}}^{-1/2} y^{(\widehat{k}), b}  - n_{\widehat{k}}^{-1/2}  P^{n_k, n_0} y $} 
        \State{$\widehat{T}_b \leftarrow \Big\{ i \in [n_{\widehat{k}}]\colon \big\vert \widehat{\Delta}_{i}^b\big\vert \mbox{ is among the first } \widehat{t}_{\widehat{k}} \mbox{ largest of } \{|\widehat{\Delta}_j^{b}|\}_{j \in [n_{\widehat{k}}]}\Big\}$, $\tau^b \leftarrow \big\| \big( \widehat{\Delta}^b \big)_{\widehat{T}_b} \big\|_2^2  $}
        \EndFor
        \OUTPUT{The level $q$ quantile of the collection $\{\tau^b\}_{b \in [B]}$}
    \end{algorithmic}\label{alg_Per}
\end{algorithm}

\subsection{Additional results in Section \ref{sec-simulation}}\label{app-simu}

\noindent \textbf{Varying $n_0$.}  Consider the target dataset size $n_0 \in \{ 200, 400, 600, 800\}$, with corresponding $n_k = 2 n_0$ for all $k \in [K]$.  The simulation results are presented in \Cref{fig:large_size}, which demonstrates that as $n_0$ increases, the estimation errors of all methods decrease when the observational frequencies of sources are uniform and the ratio between the target dataset size and the source dataset size remains consistent.

\begin{figure}[t] 
        \centering
        \includegraphics[width=0.45 \textwidth]{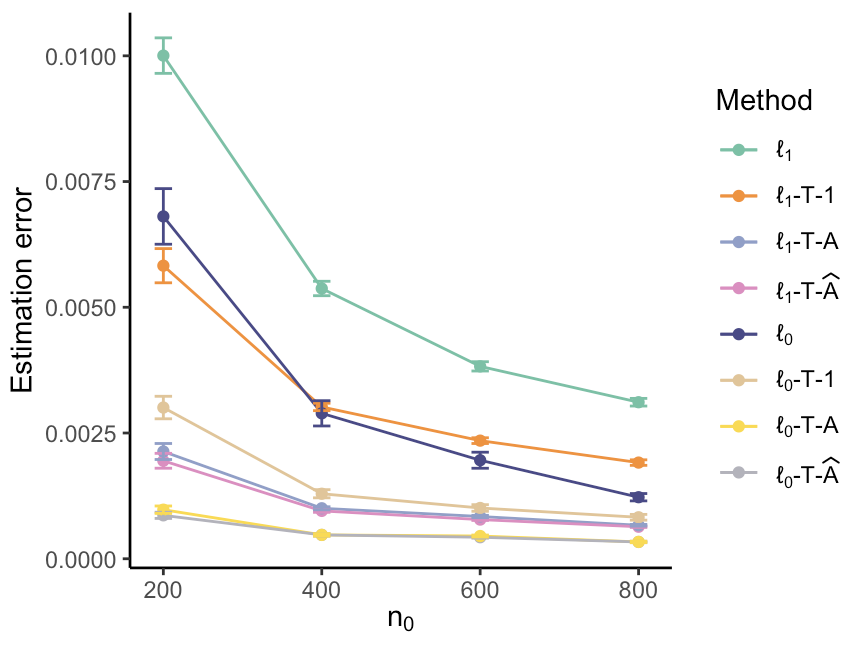}
        \caption{Estimation results for Configuration 1 and Scenario 1 with target data size $n_0 \in \{ 200, 400, 600, 800\}$ and source dataset sizes $n_k = 2 n_0$ for all $k \in [K]$.}
        \label{fig:large_size}
\end{figure}

\medskip
\noindent \textbf{Dependent data.}  We consider two types of dependence: 
\begin{itemize}
    \item Dependence across noise variables.  Let
        \begin{equation}\label{error-dep-1}
        \epsilon_{i} = \rho_1 \epsilon_{i-1} + (1-\rho_1) \widetilde{\epsilon}_{i},  \quad  \mbox{for } i \in \{2, \dots, n_0\},
        \end{equation}
        and 
        \begin{equation}\label{error-dep-2}
        \epsilon_{i}^{(k)} = \rho_1 \epsilon_{i-1}^{(k)} + (1-\rho_1) \widetilde{\epsilon}_{i}^{(k)},  \quad  \mbox{for } i \in \{2, \dots, n_0\}, \,  k \in [K],
        \end{equation}
        with $\{\epsilon_1 \} \cup \{\epsilon^{(k)}_1 \}_{k=1}^K \cup \{\widetilde{\epsilon}_i\}_{i=1}^{n_0}\cup \{\widetilde{\epsilon}^{(k)}_i\}_{i = 1, k=1}^{n_k, K} \overset{\mbox{i.i.d.}}{\sim} \mathcal{N}(0, \sigma^2)$ and $\rho_1 \in \{0.1, 0.2, 0.3, 0.4 \}$. 
    \item Dependence imposed on the discrepancy vectors.  For \textbf{Configuration 2}, 
\begin{equation}\label{dependence-delta}
\delta_{k,j} = \rho_2 \delta_{k, j-1} +(1-\rho_2) \widetilde{\delta}_{k,j}, \quad \mbox{for } j \in \{2, \ldots, n_k\}, \ k \in [K] 
\end{equation}
where
\[
\{ \delta_{k, 1}\}_{k = 1}^K  \cup\{\widetilde{\delta}_{k,j}\}_{j = 1, k=1}^{n_k, K} \stackrel{\mbox{i.i.d.}}{\sim} \mathcal{N}(0, \kappa)\mathbbm{1}\{k \in \mathcal{A}\} + \mathcal{N}(0, \widetilde{\kappa}) \mathbbm{1}\{k \notin \mathcal{A}\},
\] with $\kappa  = 0.2$, $\widetilde{\kappa} = 5$ and $\rho_2 \in \{ 0.1, 0.2, 0.3, 0.4\}$. 
\end{itemize}
The simulation results can be found in \Cref{fig:temp}.  In the left panel of \Cref{fig:temp}, we show that as the dependence level across errors increases, the estimation errors of both $\ell_1$- and $\ell_0$-penalised estimators solely using the target data, also increase. All transfer learning methods, however, remain robust against the error dependence level. In the right panel of \Cref{fig:temp}, we observe that all transfer learning methods maintain robustness against the dependence level of the discrepancy vector between the target data and sources. 

\begin{figure}[t]
        \centering
        \includegraphics[width=0.9 \textwidth]{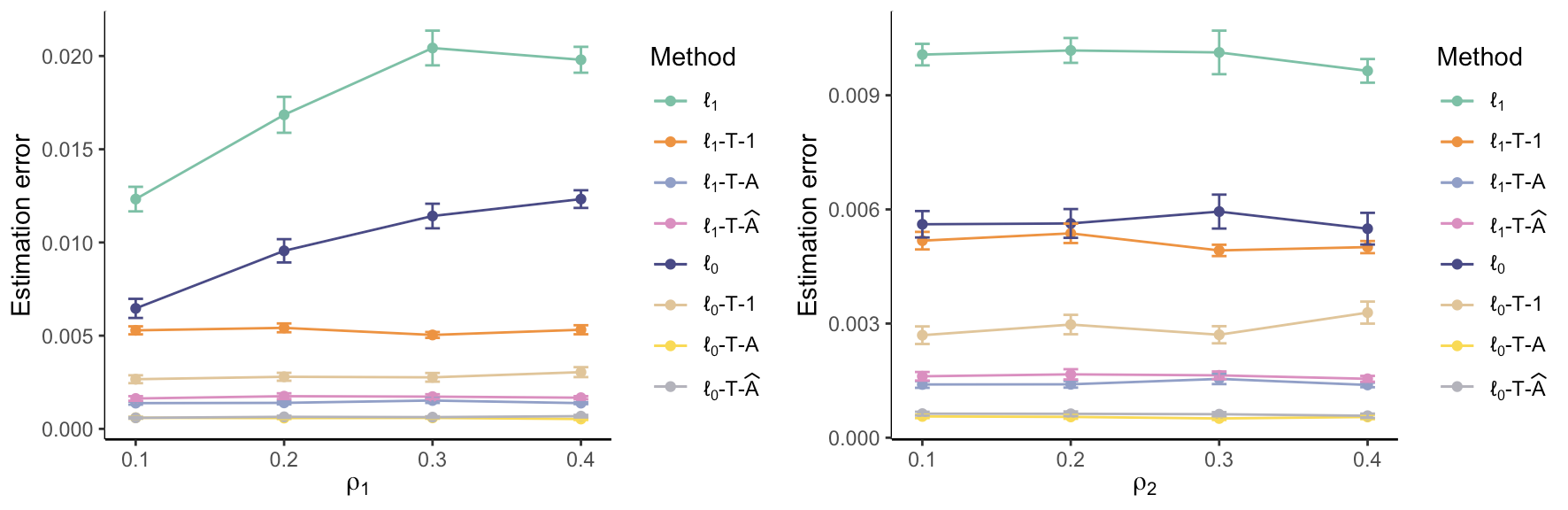}
        \caption{Estimation results for Scenario 1 with dependence. Left panel: Configuration 1 with dependence across errors defined in \eqref{error-dep-1} and \eqref{error-dep-2} and the dependence level $\rho_1 \in \{0.1, 0.2, 0.3, 0.4\}$.  Right panel: Configuration 2 with dependence across the discrepancy vector between the target and sources defined in \eqref{dependence-delta}, and the dependence level $\rho_2 \in \{0.1, 0.2, 0.3, 0.4\}$.}
        \label{fig:temp}
\end{figure}

\medskip
\noindent \textbf{Sensitivity of $\widehat{t}_k$.}  Consider a range of values $\widehat{t}_k \in \{25, 50, 75, 100\}$. The simulation results can be found in \Cref{fig:tk}, from which it is evident that the performance of all transfer learning estimators remains relatively stable across different values of screening size $\widehat{t}_k$, indicating that transferred estimators are robust to the choice of $\widehat{t}_k$. 

\begin{figure}[t]
        \centering
        \includegraphics[width=0.45 \textwidth]{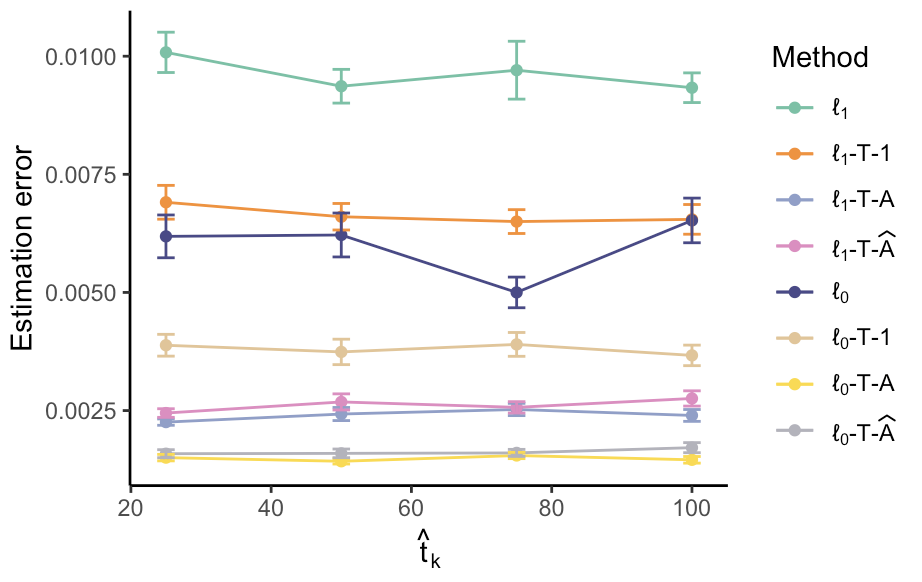}
        \caption{Estimation results for Configuration 1 and Scenario 1 with the screening size $\hat{t}_k \in \{25, 50, 75, 100 \}$.}
        \label{fig:tk}
\end{figure}

\medskip
\noindent \textbf{Scenario 2.} The simulation results of Scenarios 2 in \Cref{sec-simulation} are shown in \Cref{Fig_simulation_se_2}.

\begin{figure}[ht] 
\centering 
\includegraphics[width=0.8 \textwidth]{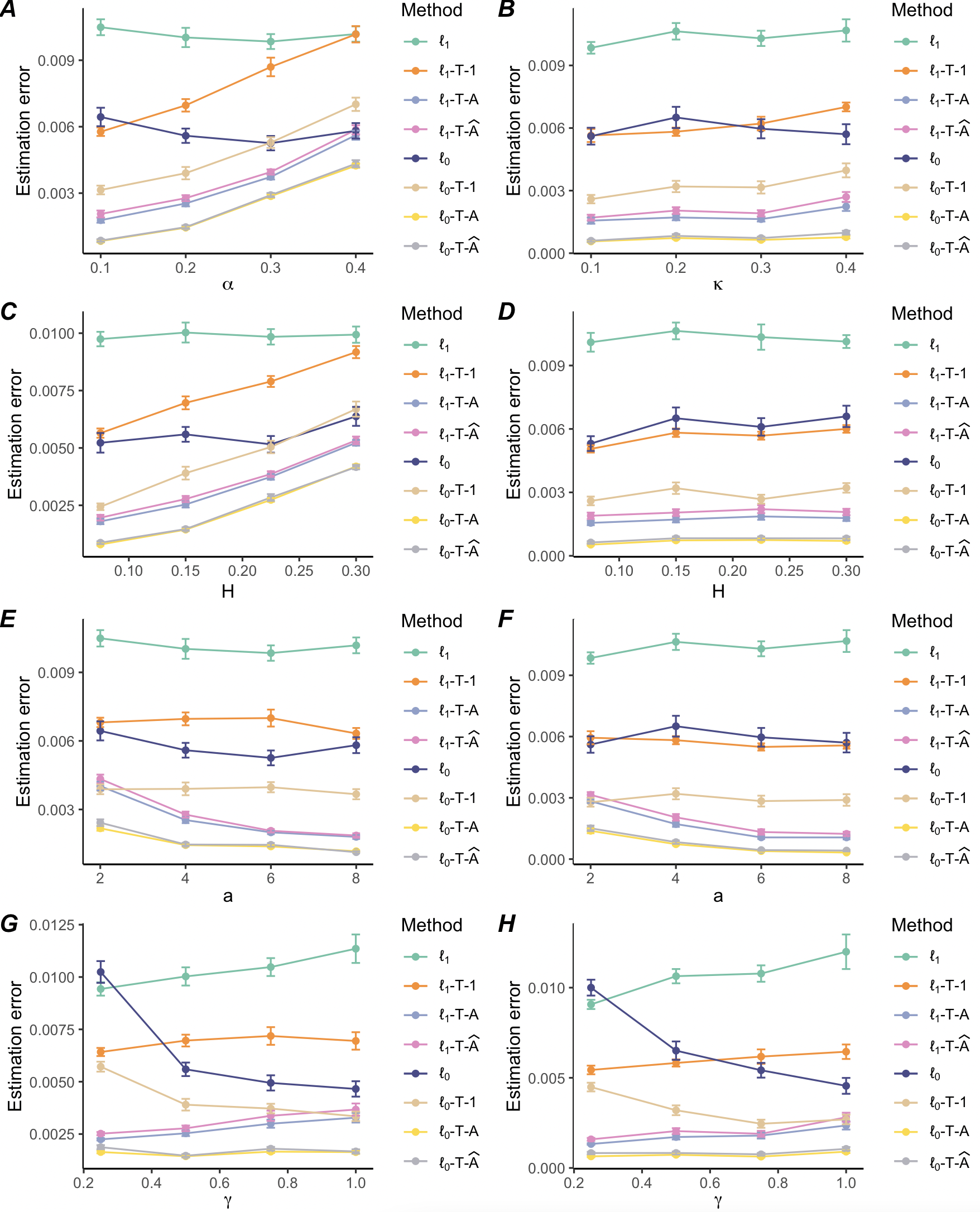}
\caption{Estimation results in Scenario 2. From left to right: Configurations 1 and 2 with dependence across the discrepancy vector between the target and source $\rho_2 = 0$.   From top to bottom: performances with varying discrepancy levels ($\alpha$ and $\kappa$), difference vector changing frequencies ($H$), cardinalities of the informative set ($a$) and change magnitudes~($\gamma$), respectively.}  \label{Fig_simulation_se_2}
\end{figure}

\subsection{Additional results in Section \ref{sec-real-data}}\label{app-real-data}

In this subsection, we provide additional results of the analyses on the U.S.~electric power operations and the air quality datasets.

In \Cref{sec-real-data}, for the U.S.~electric power operations dataset,
we only selected the Mohawk Valley sub-region as the unisource for the Central sub-region being the target, due to their geographical alignment.
We also conduct additional analyses using different sub-regions as unisource when the Central sub-region is the target.
The results can be found in \Cref{table:estimation_results-signular}.

\Cref{table:estimation_results-signular}  indicates that for both $\ell_1$-penalised and $\ell_0$-penalised estimators, the performance of using different sub-regions as unisource is generally worse than using the estimated informative sources  $\widehat{\mathcal{A}}$.
Among the different sub-regions, using the Mohawk Valley sub-region consistently outperforms other single sources, supporting our initial selection.

\begin{table}[ht]
\centering
\caption{Results for Central sub-region in the U.S. electric power operations dataset using different sub-regions as unisource. Note that the prediction errors of $\ell_1$-T-$\widehat{\mathcal{A}}$ and $\ell_0$-T-$\widehat{\mathcal{A}}$, i.e.~multisource-transferred $\ell_1$- and $\ell_0$-penalised estimator with informative sources learned by \Cref{alg_isd}, are  $0.2789$ and $0.4731$, respectively.} 
\begin{tabular}{lcc}
 \hline\hline
Unisource      & $\ell_1$-T-1     & $\ell_0$-T-1  \\ 
\hline
Mohawk Valley     & 0.3026 &   0.4298      \\
Genesee           & 0.3737 &   0.5491    \\
Capital           & 0.4075 &   0.5167    \\
Dunwoodie         & 0.7264 &   0.8262    \\
Hudson Valley     & 0.5038 &   0.6452    \\
Long Island       & 0.8662 &   0.9810    \\
Millwood          & 0.4010 &   0.5408    \\
New York City     & 0.8685 &   0.9832    \\
North             & 1.1349 &   1.2473    \\
West              & 0.4705 &   0.6438    \\
\hline
\end{tabular}
\label{table:estimation_results-signular}
\end{table}

\medskip
\noindent \textbf{The air quality dataset} collects daily air quality measurements (e.g.~$\mbox{PM}_{2.5}$, $\mbox{PM}_{10}$, $\mbox{O}_{3}$, $\mbox{NO}_{2}$ and $\mbox{CO}$) from various cities worldwide. As urban areas and industrial activities continue to grow, exploring these metrics becomes important for both policy-making and public health implications.

Two separate analyses are conducted on $\mbox{PM}_{2.5}$ data from every Saturday between 2nd July 2020 to 1st July 2023 ($156$ days), selecting Paris and London as the target datasets. For both analyses, daily $\mbox{PM}_{2.5}$ measurements from $17$ different cities (Amsterdam, Bangkok, Beijing, Chongqing, Dalian, Hamburg, Harbin, Hefei, Hong Kong, Kunming, Los Angeles, Sanya, Seoul, Shanghai, Singapore, Tianjin and Xi'an) within the same duration ($1,092$ days) serve as the multisource data.  For transfer learning estimators utilising unisource data, Paris is selected as the source for London and vice versa, due to their geographical proximity and similar urban structures.  The target dataset is split into training (even-week Saturdays),  denoted as  $y^{\mathrm{train}}$, and test datasets  (odd-week Saturdays), denoted as $y^{\mathrm{test}}$.   Using  $y^{\mathrm{train}}$, we obtain the estimated mean vector $f^{\mathrm{est}}$ and then report the mean squared prediction errors $\| f^{\mathrm{est}} - y^{\mathrm{train}} \|_{2/n_0}^2$. Results can be found in \Cref{table:estimation_results_app}.

Similar observations and conclusions as those from the previous dataset can be drawn, demonstrating the superiority of transfer learning methods especially when using informative multisources for transfer. Furthermore, the estimated informative sets from \Cref{alg_isd} present interesting city-to-city connections. For instance, data from Paris share the same patterns with those from the cities including Beijing, Hong Kong, Kunming and London, suggesting common pollution patterns.  London's air quality patterns resonate closely with those of Amsterdam, Beijing, Paris and Singapore. These interconnected trends not only highlight the similarities between these cities but also suggest collaborative strategies and interventions to address air quality issues.

\begin{table}[t]
\centering
\caption{Results for London and Paris in the air quality dataset.} 
\begin{tabular}{lccccccccc}
 \hline\hline
City & $\ell_1$ & $\ell_1$-T-1 & $\ell_1$-T-$\widehat{\mathcal{A}}$ & $\ell_1$-T-$[K]$ & $\ell_0$ & $\ell_0$-T-1 & $\ell_0$-T-$\widehat{\mathcal{A}}$ & $\ell_0$-T-$[K]$ \\
\hline
Paris & 0.9872 & 0.9043 & 0.9157 & 0.9110 & 0.9872 & 0.9872 & 0.9123 & 0.9196 \\
London & 0.9872 & 0.9179 & 0.8608 & 0.9749 & 0.9872 & 0.9785 & 0.8860 & 0.9940 \\
\hline
\end{tabular}
\label{table:estimation_results_app}
\end{table}

\end{document}